\newtheorem{theorem}{Theorem}[section]
\newtheorem{fact}[theorem]{Fact}
\newtheorem{lemma}[theorem]{Lemma}
\newtheorem{definition}[theorem]{Definition}
\newtheorem{corollary}[theorem]{Corollary}
\newtheorem{claim}[theorem]{Claim}
\newtheorem{question}[theorem]{Question}
\newtheorem{problem}[theorem]{Problem}
\newtheorem*{problem*}{Problem}
\newtheorem{remark}[theorem]{Remark}
\newtheorem*{remark*}{Remark}
\newtheorem{example}[theorem]{Example}
\newcommand{\opt}[0]{{\ensuremath{\sf{opt}}}}
\numberwithin{equation}{section}
\numberwithin{table}{section}
\renewcommand{\preceq}{\preccurlyeq}
\renewcommand{\succeq}{\succcurlyeq}
\renewcommand{\tilde}{\widetilde}
\newcommand{\R}{\ensuremath{\mathbb R}}
\newcommand{\Z}{\ensuremath{\mathbb Z}}
\newcommand{\E}[1]{{\mathbb{E}}\left[#1\right]}
\newcommand{\junk}[1]{}
\newcommand{\norm}[1]{\left\lVert#1\right\rVert}
\newcommand{\vertiii}[1]{{\left\vert\kern-0.25ex\left\vert\kern-0.25ex\left\vert #1 \right\vert\kern-0.25ex\right\vert\kern-0.25ex\right\vert}}
\newenvironment{proofof}[1]{{\medbreak\noindent \em Proof of #1.  }}{\hfill\qed\medbreak}
\def\b1{{\bf 1}}
\def\eps{{\epsilon}}
\def\R{\mathbb{R}}
\def\diag{\operatorname{diag}} 
\def\opt{\sf{opt}} 
\def\Reff{\text{Reff}}
\def\tr{\operatorname{tr}}
\global\long\def\E{\mathbb{E}}
\global\long\def\R{\mathbb{R}}
\newcommand{\inner}[2]{\langle #1, #2 \rangle} 
\DeclareMathOperator{\argmin}{argmin}
\title{A Spectral Approach to Network Design}
\author{Lap Chi Lau\footnote{School of Computer Science, University of Waterloo. Supported by NSERC Discovery Grant 2950-120715 and NSERC Accelerator Supplement 2950-120719. Email: \href{mailto:lapchi@uwaterloo.ca}{lapchi@uwaterloo.ca}},~~~~~
Hong Zhou\footnote{School of Computer Science, University of Waterloo. Supported by NSERC Discovery Grant 2950-120715 and NSERC Accelerator Supplement 2950-120719. Email: \href{mailto:h76zhou@uwaterloo.ca}{h76zhou@uwaterloo.ca}}}
\date{}
\begin{document}

\begin{titlepage}
\def\thepage{}
\thispagestyle{empty}

\maketitle

\begin{abstract}
We present a spectral approach to design approximation algorithms for network design problems.
We observe that the underlying mathematical questions are the spectral rounding problems,
which were studied in spectral sparsification and in discrepancy theory.
We extend these results to incorporate additional non-negative linear constraints, and show that they can be used to significantly extend the scope of network design problems that can be solved.
Our algorithm for spectral rounding is an iterative randomized rounding algorithm based on the regret minimization framework.
In some settings, this provides an alternative spectral algorithm to achieve constant factor approximation for the classical survivable network design problem, and partially answers a question of Bansal about survivable network design with concentration property.
We also show many other applications of the spectral rounding results, including weighted experimental design and additive spectral sparsification.
\end{abstract}

\end{titlepage}

\thispagestyle{empty}


\newpage

\section{Introduction}

Network design is a central topic in combinatorial optimization, approximation algorithms and operations research.
The general setting of network design is to find a minimum cost subgraph satisfying certain requirements.
The most well-studied problem is the survivable network design problem~\cite{GGP+94,AKR95,GW95,GGT+09}, where the requirement is to have at least a specified number $f_{u,v}$ of edge-disjoint paths between every pair of vertices $u,v$. 
A seminal work of Jain~\cite{Jai01} introduced the iterative rounding method for linear programming to design a $2$-approximation algorithm for the survivable network design problem,
and this method has been extended to various more general settings~\cite{FJW01,Gab05,CVV06,LNSS09,LRS11,EV14,FNR15,LZ15,Ban19}.
There are also other linear programming based algorithms such as randomized rounding~\cite{Sri01,GKR00,BGRS13,AGM+14,GLL19} to obtain important algorithmic results for network design.
It is widely recognized that linear programming is the most general and powerful approach in designing approximation algorithms for network design problems.

In the past decade, spectral techniques have been developed to make significant progress in designing graph algorithms~\cite{SS11,CKM+11,BSS12,AZLO15,AO15,Sch18}.
One striking example is the spectral sparsification problem introduced by Spielman and Teng~\cite{ST11}, where the objective is to find a sparse edge-weighted graph $H$ to approximate the input graph $G$ so that $(1-\eps)L_G \preceq L_H \preceq (1+\eps)L_G$ where $L_G$ and $L_H$ are the Laplacian matrices of the graph $G$ and $H$.
The spectral condition $(1-\eps)L_G \preceq L_H \preceq (1+\eps)L_G$ implies that $H$ is also a cut sparsifier of $G$ such that the total weight on each cut in $H$ is approximately the same as that in $G$.
Batson, Spielman, Srivastava~\cite{BSS12} proved that every graph $G$ has a spectral sparsifier $H$ with only $O(n/\eps^2)$ edges.
This improves upon the influential result of Bencz\'ur and Karger~\cite{BK96} that every graph $G$ has a cut sparsifier $H$ with $O(n \log n / \eps^2)$ edges, which has many applications in designing fast algorithms for graph problems.
From a technical perspective, the spectral approach introduces linear algebraic concepts and continuous optimization techniques in solving graph problems, and the results in spectral sparsification~\cite{BSS12,AZLO15,AO15} show that it is algorithmically more convenient to control the spectral properties of the graph in order to control its combinatorial properties.
 
Inspired by these developments, we are motivated to study whether there is a spectral approach to design approximation algorithms for network design problems.  
The general way to designing approximation algorithms is to solve a convex program to obtain a fractional solution $x$ in polynomial time, and then to round $x$ into an integral solution $z$ that well approximates $x$ (with respect to the constraints and the objective function) as an approximate solution.
We observe that the following spectral rounding question, where the objective is to approximate the spectral properties of $x$, underlies a large class of problems including the survivable network design problem.

\begin{question}[Spectral Rounding]
For each edge $e$ in a graph, let $L_e$ be the Laplacian matrix of $e$ and $c_e$ be its cost.  Given $x_e \in \R_{+}$ for each edge $e$, characterize when we can find $z_e \in \Z_{+}$ for each $e$ such that
\[
\sum_{e} x_e L_e \approx \sum_e z_e L_e 
\quad {\rm and} \quad 
\sum_e c_e x_e \approx \sum_e c_e z_e.
\]
\end{question}

When spectral rounding is possible, we notice that the integral solution $z$ not only approximately preserves the cost and the pairwise edge connectivity properties of $x$ as required by the survivable network design problem, but also many other properties of $x$ including pairwise effective resistances, the graph expansion, and degree constraints. 
This would significantly extend the scope of useful properties that a network designer could control simultaneously to design better networks.

\subsection{General Survivable Network Design} \label{ss:network-intro}

The main conceptual contribution of this paper is to show that the techniques in spectral graph theory and discrepancy theory can be used to significantly extend the scope of network design problems that can be solved.

In network design, we are given a graph $G=(V,E)$ where each edge has a cost $c_e$, and the objective is to find a minimum cost subgraph that satisfies certain requirements.
In survivable network design~\cite{GGP+94,Jai01}, 
the requirements are pairwise edge-connectivities, that every pair of vertices $u,v$ should have at least $f_{uv}$ edge-disjoint paths for $u,v \in V$. 
This captures several classical problems as special cases, including minimum Steiner tree~\cite{BGRS13}, minimum Steiner forest~\cite{AKR95,GW95}, and minimum $k$-edge-connected subgraph~\cite{GGT+09}.
Jain introduced the iterative rounding method for linear programming to design a $2$-approximation algorithm for the survivable network design problem~\cite{Jai01}.
His proof exploits the nice structures of the connectivity constraints to show that there is always a variable $x_e$ with value at least $\frac12$ in any extreme point solution to the linear program.
His work leads to many subsequent developments in network design~\cite{FJW01,CVV06,Gab05,GGT+09,CV14},
and the iterative rounding algorithm is still the only known constant factor approximation algorithm for the survivable network design problem.

Motivated by the need of more realistic models for the design of practical networks, researchers study generalizations of survivable network design problems where we can incorporate additional useful constraints.
One well-studied problem is the degree-constrained survivable network design problem, where there is a degree upper bound $d_v$ on each vertex $v$ to control its workload.
There is a long line of work on this problem~\cite{Rag96,RMR+01,Goe06,LNSS09,EV14,FNR15,LZ15} and the iterative rounding method has been extended to incorporate degree constraints into survivable network design successfully.
In the general setting~\cite{LNSS09,LV10,LZ15}, there is a polynomial time algorithm to find a subgraph that violates the cost and the degree constraints by a multiplicative factor of at most $2$.
For interesting special cases such as finding a spanning tree~\cite{Goe06,SL15} or a Steiner tree~\cite{LS13,LZ15}, there is a polynomial time algorithm that returns a solution that violates the degree constraint by an additive constant.

More generally, one can consider to add linear packing constraints and linear covering constraints into survivable network design~\cite{BGRS04,BKK+13,OZ18,LSwa18}, but not as much is known about how to approximately satisfy these constraints simultaneously especially when the linear constraints are unstructured.

Another natural constraint is to control the shortest path distance between pairs of vertices, but unfortunately this is proved to be computationally hard~\cite{DK99} to incorporate into network design.

In~\cite{CLSWZ19}, together with Chan, Schild, and Wong, we propose to incorporate the effective resistance metric into network design, as an interpolation of shortest path distance and edge-connectivity between vertices.
Incorporating effective resistances can also allow one to control some natural quantities about random walks on the resulting subgraph, such as the commute time between vertices~\cite{CRR+96} and the cover time~\cite{Mat88,DLP11}.  
We note that effective resistances have interesting connections to many other graph problems, including spectral sparsification~\cite{SS11}, maximum flow computation~\cite{CKM+11}, asymmetric traveling salesman problem~\cite{AO15}, and random spanning tree generation~\cite{MST15,Sch18}.
We believe that it is a useful property to be incorporated into network design.

There are many other natural constraints that could help in designing better networks, including total effective resistances~\cite{GBS08}, algebraic connectivity (and graph expansion)~\cite{GB06}, and the mixing time of random walks~\cite{BDX04}.
These constraints are also well-motivated and were studied individually before (without taking other constraints together into consideration, e.g.~connectivity requirements), but not much is known about approximation algorithms with nontrivial approximation guarantees for these constraints (see Section~\ref{s:spectral}). 

It would be ideal if a network designer can control all of these properties simultaneously to design a good network that suits their need.
We can write a convex programming relaxation for this general network design problem incorporating all these constraints.
\begin{alignat}{2} \label{eq:P-intro}
    & {\sf cp} := \min_x & &~ \inner{c}{x} \nonumber \\
    &  & & \begin{aligned}
          & x(\delta(S)) \geq f(S)  & & \quad \forall S \subseteq V &  & \text{\quad (connectivity constraints)} \\
          & x(\delta(v)) \leq d_v  & & \quad \forall v \in V & & \text{\quad (degree constraints)}\\
          & A x \leq a & & \quad A \in \R_+^{p \times m}, a \in \R_+^p & & \text{\quad (linear packing constraints)}\\
          & B x \geq b & & \quad B \in \R_+^{q \times m}, b \in \R_+^q & & \text{\quad (linear covering constraints)}\\
        & \Reff_{x}(u,v) \leq r_{uv} & & \quad \forall u,v \in V& & \text{\quad (effective resistance constraints)}\\
        & L_{x} \succcurlyeq M  & & \quad M \succcurlyeq 0 & & \text{\quad (spectral constraints)}\\
        & \lambda_2(L_x) \geq \lambda & & & & \text{\quad (algebraic connectivity constraint)} \\
        & 0 \leq x_e \leq 1 & & \quad \forall e \in E & & \text{\quad (capacity constraints)}
    \end{aligned} \tag{CP}
\end{alignat}

The connectivity constraints are specified by a function $f$ on vertex subsets,
e.g.~in survivable network design $f(S) := \max_{u,v} \{ f_{u,v} \mid u\in S, v\notin S\}$.
The matrix $L_x$ is the Laplacian matrix of the fractional solution $x$.
More explanations about this convex program can be found in Section~\ref{ss:convex}.

Our first result for network design is the following approximation algorithm for this general problem.
We remark that the degree constraints are not handled in the following result.

\begin{theorem}[Informal] \label{t:network-zero-one-intro}
Suppose we are given an optimal solution $x$ to the convex program~\eqref{eq:P-intro}.
There is a polynomial time randomized algorithm 
to return an integral solution $z$ to \eqref{eq:P-intro} that simultaneously satisfies all the connectivity constraints, the effective resistance constraints, the spectral constraints, the algebraic connectivity constraint and the capacity constraints exactly with high probability.
The objective value of the integral solution $z$ is
\[
\inner{c}{z} \leq (1+O(\eps)) \cdot {\sf cp} + O \left(\frac{n c_{\infty}}{\eps} \right)
\]
with high probability, 
where $n$ is the number of vertices in the graph and 
$c_{\infty} := \norm{c}_{\infty}$ is the maximum cost of an edge.
Furthermore, the linear packing constraints and the linear covering constraints are satisfied approximately with high probability (see Theorem~\ref{t:zero-one} and Theorem~\ref{t:network-zero-one} for the approximation guarantees for these constraints).
\end{theorem}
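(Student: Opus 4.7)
My plan is to reduce the theorem to a single one-sided spectral rounding task: given the fractional $x$, produce an integral $z\in\{0,1\}^m$ with $L_z \succeq L_x$, cost $\langle c,z\rangle \le (1+O(\eps))\langle c,x\rangle + O(nc_\infty/\eps)$, and with $Az, Bz$ concentrated around $Ax, Bx$. Every constraint in~\eqref{eq:P-intro} except the degree constraints (which are excluded by hypothesis) is then preserved. Indeed, the cut constraint $x(\delta(S)) \ge f(S)$ is exactly the quadratic form $\chi_S^\top L_x \chi_S \ge f(S)$, so $L_z \succeq L_x$ forces $z(\delta(S)) \ge f(S)$; the matrix constraint $L_x \succeq M$ transfers by transitivity; the algebraic-connectivity lower bound follows from Courant--Fischer; and since the graph of $x$ is connected (because $\lambda_2(L_x) > 0$), both Laplacians share the kernel $\mathrm{span}(\mathbf{1})$, so $L_z \succeq L_x$ implies $L_z^+ \preceq L_x^+$, which preserves every effective-resistance upper bound $(e_u-e_v)^\top L_z^+ (e_u-e_v) \le r_{uv}$. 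The capacity constraints hold trivially because $z$ is $\{0,1\}$-valued.

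\textbf{The rounding algorithm.} I would perform the rounding by an iterative randomized algorithm driven by matrix regret minimization, in the spirit of the spectral sparsifier constructions of~\cite{AZLO15,AO15}. The algorithm maintains a matrix potential on the range of $L_x$ that tracks the remaining spectral deficit $L_x - L_z$, coupled with scalar multiplicative-weights potentials for each linear inequality in the packing/covering system and for the cost. At each iteration it samples a single not-yet-included edge $e$ with probability proportional to $x_e$ times an inner product of $L_e$ with the current matrix potential, sets $z_e \gets 1$, and updates all potentials. The per-step expected increments are calibrated so that after polynomially many rounds $\mathbb{E}[L_z] \succeq (1+\Omega(\eps))L_x$; matrix Freedman-type concentration upgrades this to $L_z \succeq L_x$ with high probability, and scalar martingale concentration gives the approximate preservation of $Ax, Bx$ and the cost bound up to the factor $(1+O(\eps))$. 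The additive $O(nc_\infty/\eps)$ slack arises from a BSS-style correction: at most $O(n/\eps)$ ``heavy'' edges whose spectral contribution is not well-dominated by the random sample must be forced into $z$, contributing at most $O(nc_\infty/\eps)$ to the total cost.

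\textbf{Main obstacle.} The central difficulty is coupling the spectral potential with the side potentials so that a single sampling distribution controls all of them simultaneously, without paying a union-bound cost over the exponentially many implicit cut constraints or the many linear constraints. A pure spectral sparsification argument controls only a matrix potential; a pure discrepancy-style rounding handles only linear side constraints. The technical heart of the proof is therefore to verify that the regret-minimization update admits a sampling distribution whose one-step expected increments simultaneously (i)~match $L_x$ in the spectral coordinate, (ii)~match $Ax$ and $Bx$ in the linear coordinates, and (iii)~incur cost proportional to $\langle c,x\rangle$ per unit of placed mass. Once such a coupled distribution is constructed and its combined regret bounded, the theorem follows by combining it with the reduction above and the appropriate matrix/scalar concentration inequalities, with the detailed packing/covering guarantees being those recorded in Theorem~\ref{t:zero-one} and Theorem~\ref{t:network-zero-one}.
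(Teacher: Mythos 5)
Your reduction step is correct and matches the paper: passing to $v_e := L_x^{\dagger/2}\Pi b_e$ so that $\sum_e x_e v_ev_e^T = I_{n-1}$, obtaining $L_z \succeq L_x$, and then reading off the cut, spectral, algebraic-connectivity and effective-resistance constraints from the quadratic form and the pseudo-inverse is exactly Lemma~\ref{l:spectral-constraints} combined with the transformation in Theorem~\ref{t:network-zero-one}. The gap is in the rounding step, and it is the gap you yourself flag as ``the main obstacle'' without resolving: you propose to maintain a separate multiplicative-weights potential for each linear packing/covering constraint and for the cost, and to build a single sampling distribution whose one-step increments simultaneously match the spectral coordinate and all of these linear coordinates. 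You never construct such a distribution, and this route has a structural problem: it requires the linear constraints to be known in advance and folded into the sampler, whereas the theorem promises concentration for the cost and for each of possibly very many packing/covering rows. The paper's resolution is different in kind. Its sampling distribution depends only on $x$ and the spectral potential: starting from a random $S_0$ with each $i$ included independently with probability $x_i$, it \emph{swaps}, removing $i$ with probability proportional to $(1-x_i)\bigl(1-2\alpha\inner{v_iv_i^T}{A_t^{1/2}}\bigr)$ and adding $j$ with probability proportional to $x_j\bigl(1+2\alpha\inner{v_jv_j^T}{A_t^{1/2}}\bigr)$. For \emph{any} fixed $c\in\R_+^m$ (not seen by the algorithm), the quantity $c(S_t)-\inner{c}{x}$ then has a negative drift proportional to its current value (Lemma~\ref{l:cost-1st-moment}), and a new Freedman-type inequality for such self-adjusting, non-martingale processes (Theorem~\ref{t:adjusting}) gives the concentration. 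No coupled regret bound over the linear coordinates is needed.

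Two further inaccuracies in your sketch. First, your algorithm only adds edges; without the removal moves and the random initialization $S_0\sim x$ there is no reason for $c(S_t)$ to track $\inner{c}{x}$ rather than drift upward, and the paper's self-adjusting drift comes precisely from the $(1-x_i)$ versus $x_j$ asymmetry of the two swap distributions. Second, your explanation of the additive $O(nc_\infty/\eps)$ term as the cost of $O(n/\eps)$ forced ``heavy'' edges is not where it comes from: in the paper it is the accumulated bias introduced by the spectral adjustment factors $1\pm 2\alpha\inner{v_iv_i^T}{A_t^{1/2}}$, bounded via $\alpha\,\tr(A_t^{1/2}) \le n/\eps$ and the drift bound $\beta_u/\gamma = 14n/\eps$; the ``big'' coordinates $y_i>1$ handled separately in the exact-rounding preprocessing contribute only a multiplicative $(1+O(\eps))$ factor. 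The exact spectral lower bound $L_z\succeq L_x$ (rather than $(1-O(\eps))L_x$) is likewise obtained by rescaling $x$ up by $1/(1-2\eps)$ before rounding, not by a matrix-Freedman upgrade.
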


Note that this provides a $(1+O(\eps))$-approximation algorithm if ${\sf cp} \gtrsim nc_{\infty}/\eps^2$, and a constant factor approximation algorithm if ${\sf cp} \gtrsim nc_{\infty}$.
We remark that, for survivable network design, the $(1+O(\eps))$-approximation algorithm does not improve on the $2$-approximation algorithm of Jain's result, as Jain's algorithm always returns a solution with cost at most ${\sf cp} + 2nc_{\infty}$.

The main advantage of the spectral approach is that it significantly extends the scope of useful properties that can be incorporated into network design, while previously there are no known non-trivial approximation algorithms even for some individual constraints.
We demonstrate the use of Theorem~\ref{t:network-zero-one-intro} with one concrete setting.

\begin{example} \label{e:example}
Suppose the connectivity requirement satisfies $f_{u,v} \geq k$ for all $u,v \in V$ (e.g.~to find a $k$-edge-connected subgraph).
Assume the cost $c_e$ of each edge $e$ satisfies $1 \leq c_e \leq O(k)$.
Then Theorem~\ref{t:network-zero-one-intro} provides a constant factor approximation algorithm for this survivable network design problem.
 To our knowledge, the only known constant factor approximation algorithm even restricted to this special case is Jain's iterative rounding algorithm.
 The algorithm in Theorem~\ref{t:network-zero-one-intro} provides a completely different spectral algorithm to achieve constant factor approximation in this special case.

Furthermore, the constant factor approximation algorithm can be achieved while incorporating additional effective resistance constraints (e.g.~to upper bound commute times between pairs of vertices), spectral constraints (e.g.~to dominate another graph/topology in terms of the number of edges in cuts), algebraic connectivity constraint (e.g.~to lower bound graph expansion).
Also, additional linear packing and covering constraints can be satisfied approximately, even when they are unstructured.
See Section~\ref{s:network} for a more in-depth discussion.
\end{example}

Recently, Bansal~\cite{Ban19} designed a rounding technique that achieves the guarantees by iterative rounding and randomized rounding simultaneously, and he showed various interesting applications of his techniques.
However, he left it as an open question whether there is an $O(1)$-approximation algorithm for survivable network design while satisfying some concentration property of the output.
Theorem~\ref{t:network-zero-one-intro} provides some progress towards his question (e.g.~in the setting in Example~\ref{e:example}), as the guarantees on the linear packing and linear covering constraints satisfy some concentration property as shown in Theorem~\ref{t:network-zero-one}. 
We defer to Section~\ref{ss:Bansal} for details.

Our second result for network design is a strong upper bound on the integrality gap of the convex program that incorporates degree constraints as well, assuming the fractional solution $x$ satisfies some additional properties.

\begin{theorem}[Informal] \label{t:network-two-sided-intro}
Suppose we are given a solution $x$ to the convex program~\eqref{eq:P-intro}.
Assume that $\Reff_x(u,v) \leq \eps^2$ for every $uv \in E$ and $c_{\infty} \leq \eps^2 \inner{c}{x}$ for some $\eps \in [0,1]$.
Then, there exists an integral solution $z$ that approximately satisfies all the connectivity constraints, degree constraints, effective resistance constraints, spectral constraints, algebraic connectivity constraints, and capacity constraints with  
$\inner{c}{z} \leq (1+O(\eps))\inner{c}{x}$.
\end{theorem}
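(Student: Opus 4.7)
The plan is to invoke a two-sided spectral rounding theorem, analogous to the rounding used for Theorem~\ref{t:network-zero-one-intro} but producing a $(1 \pm O(\eps))$ Laplacian approximation rather than only the one-sided preservation obtained there. The assumption $\Reff_x(u,v) \le \eps^2$ for every edge $uv \in E$ is exactly the leverage-score bound that enables two-sided Laplacian approximation (in the spirit of Batson--Spielman--Srivastava and Allen-Zhu--Li--Orecchia), since for a unit-capacity edge the relevant leverage $x_e \Reff_x(u,v)$ is then $O(\eps^2)$. The second assumption $c_\infty \le \eps^2 \inner{c}{x}$ plays the same role for the scalar cost potential, allowing a matrix-Chernoff / regret-style concentration to yield a multiplicative $(1+O(\eps))$ rather than additive bound on $\inner{c}{z}$.

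Concretely, I would reformulate the degree upper bounds $x(\delta(v)) \le d_v$ and the capacity constraints $x_e \le 1$ as additional rows of the packing matrix, and feed the combined packing/covering system together with the Laplacian matrices $\{L_e\}$ and the cost vector $c$ into the spectral rounding routine. The output is an integral $z$ satisfying
\begin{align*}
  (1-O(\eps)) L_x \preccurlyeq L_z \preccurlyeq (1+O(\eps)) L_x, \qquad \inner{c}{z} \le (1+O(\eps)) \inner{c}{x},
\end{align*}
together with $(1 \pm O(\eps))$ approximation of every packing/covering inequality. From the two-sided spectral inclusion each remaining constraint follows almost for free: testing against $\chi_S$ gives $z(\delta(S)) = \chi_S^\top L_z \chi_S \ge (1-O(\eps)) \chi_S^\top L_x \chi_S \ge (1-O(\eps)) f(S)$, so connectivity is preserved; inverting the Loewner order on the image of $L_x$ yields $L_z^+ \preccurlyeq (1+O(\eps)) L_x^+$ and hence $\Reff_z(u,v) \le (1+O(\eps)) \Reff_x(u,v) \le (1+O(\eps)) r_{uv}$; the spectral constraint and algebraic connectivity follow from $L_z \succcurlyeq (1-O(\eps)) L_x \succcurlyeq (1-O(\eps)) M$ and $\lambda_2(L_z) \ge (1-O(\eps)) \lambda_2(L_x) \ge (1-O(\eps))\lambda$; and degree and capacity approximation are inherited from the linear-constraint guarantee of the rounding.

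The main obstacle is establishing the two-sided spectral rounding theorem in the presence of a nontrivial family of linear constraints. Without the effective-resistance assumption, the additive $n c_\infty / \eps$ error of Theorem~\ref{t:network-zero-one-intro} is essentially unavoidable, because two-sided spectral control requires every edge to carry a small share of every spectral direction. I expect the cleanest route is to extend the regret-minimization algorithm underlying Theorem~\ref{t:network-zero-one-intro} by tracking, in parallel with the matrix potential that enforces the spectral upper bound, an additional scalar potential for each linear constraint and for the cost; the hypothesis $\Reff_x(u,v) \le \eps^2$ caps each matrix-potential update at $O(\eps^2)$, while $c_\infty \le \eps^2 \inner{c}{x}$ plays the analogous role for the cost potential, so a standard concentration argument closes the analysis with a $(1+O(\eps))$ multiplicative loss everywhere. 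Verifying that the capacity constraints $z_e \le 1$ are genuinely maintained (and not merely approximately) by the iterative procedure, given that $x_e$ can already be as large as $1$, is the most delicate implementation detail.
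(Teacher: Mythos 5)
Your reduction and the downstream derivations are exactly right: setting $v_e := L_x^{\dagger/2}\Pi b_e$ turns the hypothesis $\Reff_x(u,v)\le \eps^2$ into the short-vector condition $\norm{v_e}^2 = \Reff_x(u,v) \le \eps^2$ with $\sum_e x_e v_e v_e^T = I_{n-1}$, and once one has $(1-O(\eps))L_x \preceq L_z \preceq (1+O(\eps))L_x$ together with $\inner{c}{z}\le (1+O(\eps))\inner{c}{x}$, every listed constraint follows by testing against $\chi_S$, $\chi_v$, inverting the Loewner order for effective resistances, etc.\ (this is precisely Lemma~\ref{l:spectral-constraints}). The capacity constraints are also a non-issue: the rounding selects a subset, so $z\in\{0,1\}^m$ by construction.

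The genuine gap is in how you propose to obtain the two-sided rounding theorem itself. You suggest extending the regret-minimization algorithm behind Theorem~\ref{t:network-zero-one-intro} with extra scalar potentials and closing the analysis with ``a standard concentration argument.'' That will not work with these parameters: independent rounding plus matrix concentration loses a $\sqrt{\log n}$ factor, and the regret/potential machinery in this paper only controls the spectral \emph{lower} bound when rounding a given fractional $x$ to a zero-one vector. The paper instead invokes the non-constructive four-deviations theorem of Kyng--Luh--Song (Theorem~\ref{t:4std}, an interlacing-polynomials result in the Kadison--Singer lineage), which is what converts $\norm{v_e}\le\eps$ into a two-sided $O(\eps)$ discrepancy bound; the cost constraint is then smuggled in as a single extra coordinate via a random signing trick (Lemma~\ref{l:Akshay}), since the rank-one hypothesis of that theorem forbids appending a genuine block. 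Two consequences you miss: the resulting Theorem~\ref{t:network-two-sided-intro} is purely existential (no polynomial-time algorithm — making this constructive is stated as a major open problem), and the signing trick accommodates only \emph{one} linear functional, which is why the theorem drops the packing and covering constraints entirely rather than preserving ``every packing/covering inequality'' as you claim. Your plan to fold the degree bounds into a packing system is unnecessary anyway — they follow from the spectral upper bound alone — but the claim of $(1\pm O(\eps))$ preservation of an arbitrary family of linear constraints in the two-sided setting is not supported by any result in the paper.
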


We remark that Theorem~\ref{t:network-two-sided-intro} does not provide a polynomial time algorithm to find such an integral solution, as it is proved using the non-constructive results in discrepancy theory.
Also, we note that Theorem~\ref{t:network-two-sided-intro} does not handle linear covering and packing constraints.
The assumption $\Reff_x(u,v) \leq \eps^2$ for every $uv \in E$ may not be satisfied in applications, and we will explain in Section~\ref{ss:two-sided} when it will be satisfied and show that it is not too restrictive.

\subsection{Previous Work on Spectral Rounding}

The most relevant works for spectral rounding are from spectral sparsification and discrepancy theory.
There are two previous theorems that imply non-trivial results for spectral rounding.

\subsubsection{Spectral Sparsification}

There are various algorithms for spectral sparsifications, by random sampling~\cite{SS11}, by barrier functions~\cite{BSS12}, by regret minimization~\cite{AZLO15,SHS16}, and by some combinations of these ideas~\cite{LS18,LS17}.
Most of these algorithms need to work with arbitrary weights and cannot guarantee that the output subgraph has only integral weights.
There are some algorithms which guarantee that the output has only integral weights, but they only achieve considerably weaker spectral approximation~\cite{AGM14,AZLO15,BST19}.

Allen-Zhu, Li, Singh, and Wang~\cite{AZLSW20} formulated and proved the following spectral rounding theorem, using the framework of regret minimization developed for spectral sparsification~\cite{AZLO15}.

\begin{theorem}[\cite{AZLSW20}] \label{t:swap}
Let $v_1, v_2, \ldots, v_m \in \R^n$, $x \in [0,1]^m$ and $k = \sum_{i=1}^m x_i$.
Suppose $\sum_{i=1}^m x_i v_i v_i^T = I_n$ and $k \geq 5n/\eps^2$ for some $\eps \in (0,\frac13]$. 
Then there is a polynomial time algorithm to return a subset $S \subseteq [m]$ with
\[|S| \leq k
\quad {\rm and} \quad 
\sum_{i \in S} v_i v_i^T \succeq (1-3\eps)I_n.\]
\end{theorem}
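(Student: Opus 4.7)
The approach I would take follows the matrix multiplicative weights / regret minimization framework that Allen-Zhu, Liao, and Orecchia developed for optimal spectral sparsification, and which is referenced immediately before the statement. The idea is to build $S$ by an iterative random process, at each step picking an index $i$ from a distribution guided by a smooth spectral barrier on the partial sum $A^{(t)} = \sum_{i\in S^{(t)}} v_i v_i^\top$, so that the final matrix dominates $(1-3\epsilon)I_n$ with constant probability.

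Concretely, I would work with the smooth inverse lower-barrier
\[
\Phi_\lambda(A) \;=\; \mathrm{tr}\bigl((A - \lambda I)^{-1}\bigr), \qquad A \succ \lambda I,
\]
(or its matrix-exponential version $\mathrm{tr}(\exp(-\eta(A - \lambda I)))$), with $\lambda$ taken slightly below the target $1 - 3\epsilon$, and increasing $\lambda$ gradually across the $k$ steps. At step $t$, sample an index $i$ (to add to $S^{(t)}$) with probability proportional to $x_i \cdot \bigl\langle (A^{(t)} - \lambda I)^{-2},\, v_i v_i^\top\bigr\rangle$, which is the natural choice for which the first-order drift of $\Phi_\lambda$ telescopes against the constraint $\sum_i x_i v_i v_i^\top = I_n$. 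After $k$ steps the martingale structure gives $|S| \leq k$ and $\mathbb{E}[A^{(k)}] = I_n$; the second-order (variance) term in the Taylor expansion of $\Phi_\lambda$ after a rank-one update, using Sherman--Morrison, takes the form
\[
\sum_i x_i \,\bigl\langle (A^{(t)} - \lambda I)^{-2}, v_i v_i^\top\bigr\rangle \cdot v_i^\top (A^{(t)} - \lambda I)^{-1} v_i.
\]
Using $\sum_i x_i v_i v_i^\top = I_n$ together with the hypothesis $k \geq 5n/\epsilon^2$, one wants to show this term is absorbed into an $\epsilon$-slack at each step, so $\Phi_\lambda$ stays finite in expectation throughout, which (by Markov and $O(\log n)$ independent repetitions) certifies $A^{(k)} \succeq (1-3\epsilon)I_n$ with high probability.

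The main obstacle is the drift analysis: the variance term carries a factor of $\sup_i v_i^\top (A^{(t)} - \lambda I)^{-1} v_i$, which can blow up as eigenvalues of $A^{(t)}$ approach $\lambda$ from above. Taming this quantity is precisely what forces the parameter choice $k \geq 5n/\epsilon^2$ and the constant $3$ in the conclusion. The standard trick to handle it is to calibrate the barrier parameter $\lambda$ (equivalently, the inverse-temperature $\eta$ in the exponential version) so that the local curvature of $\Phi_\lambda$ matches the induced variance, then telescope across the $k$ steps with an initial budget $\Phi_\lambda(0) = n/(-\lambda)$ that must be shown to remain $O(n/\epsilon)$ at termination. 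I expect this calibration, together with the careful choice of the initial slack between the starting $\lambda$ and the target $1-3\epsilon$, to be the technically delicate part of the proof; once it is carried out, the deterministic bound $|S| \leq k$ follows by running exactly $k$ sampling steps with replacement prohibited on already-chosen indices.
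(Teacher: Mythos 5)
There is a genuine gap, and it is precisely the point on which the theorem is delicate. Your process only ever \emph{adds} vectors: you run $k$ sampling steps ``with replacement prohibited on already-chosen indices.'' In the without-repetition setting this add-only strategy is known to fall short: once indices leave the pool, the certificate $\sum_i x_i v_i v_i^T = I_n$ no longer controls the vectors that remain available, and a greedy/potential argument of this type only yields $\sum_{i\in S} v_iv_i^T \succeq \Omega(1)\cdot I_n$ rather than $(1-3\eps)I_n$ (this is exactly the $\Theta(1)$-approximation barrier discussed after Theorem~\ref{t:regret-psd-nsd}). The proof in~\cite{AZLSW20}, as recalled in Section~\ref{s:regret-min}, instead runs a \emph{local search}: it starts from an arbitrary set $S_0$ of $k$ vectors, so that $|S_t|=k$ throughout, and in each iteration swaps a pair, removing an $i_t\in S_{t-1}$ and adding a $j_t\notin S_{t-1}$ chosen so that
\[
\frac{\inner{v_{j_t}v_{j_t}^T}{A_t}}{1+2\alpha\inner{v_{j_t}v_{j_t}^T}{A_t^{1/2}}}-\frac{\inner{v_{i_t}v_{i_t}^T}{A_t}}{1-2\alpha\inner{v_{i_t}v_{i_t}^T}{A_t^{1/2}}}\ \geq\ \frac{\eps}{k},
\]
and it needs the rank-two regret bound (Theorem~\ref{t:regret-rank-two}) for feedback matrices $v_{j_t}v_{j_t}^T - v_{i_t}v_{i_t}^T$ to convert the accumulated gain into a lower bound on $\lambda_{\min}$. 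The removal step, with its separate denominator $1-2\alpha\inner{\cdot}{A_t^{1/2}}$, and the argument that a good \emph{pair} (not merely a good vector to add) always exists while $\lambda_{\min}\leq 1-3\eps$, are the ingredients your sketch has no analogue of.

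Two further problems. First, the claim $\E[A^{(k)}]=I_n$ does not follow from your sampling rule: you reweight $x_i$ by $\inner{(A^{(t)}-\lambda I)^{-2}}{v_iv_i^T}$, so the per-step expected increment is not proportional to $\sum_i x_i v_iv_i^T$, and the telescoping you invoke is not available. Second, the entire second half of your argument (``one wants to show,'' ``I expect this calibration \ldots to be the technically delicate part'') defers rather than supplies the key estimate; in the no-repetition regime that calibration is not a routine verification but exactly where the add-only approach breaks down.
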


Theorem~\ref{t:swap} can be understood as a one-sided spectral rounding result, where the fractional solution $x$ is rounded to a zero-one solution while the budget constraint is satisfied and the spectral lower bound is approximately satisfied.
Through a general reduction, this theorem implies near-optimal approximation algorithms for a large class of experimental design problems~\cite{AZLSW20}.

We remark that Theorem~\ref{t:swap} can be modified to prove similar but more restrictive results for network design as in Theorem~\ref{t:network-zero-one-intro}, when the objective function $c$ is the all-one vector and there are no linear covering and packing constraints.
This already extends the scope of unweighted network design significantly, but this connection was not made before.
For network design, it is desirable to have different costs on edges, and these weighted problems are usually more difficult to solve than the unweighted problems (e.g. minimum $k$-edge-connected subgraphs~\cite{GGT+09} vs \cite{Jai01}, minimum bounded degree spanning trees~\cite{FR94} vs \cite{Goe06}, etc).

\subsubsection{Discrepancy Theory}

The techniques in spectral sparsification have been extended greatly to prove discrepancy theorems in spectral settings~\cite{MSS15b,AO15,KLS19}, most notably in the solution to Weaver's conjecture that resolves the Kadison-Singer problem~\cite{MSS15a,MSS15b} and its extension and surprising application to the asymmetric traveling salesman problem~\cite{AO15}.
The following recent result by Kyng, Luh, and Song~\cite{KLS19} provides the most refined formulation in the discrepancy setting,
using the method of interlacing polynomials and the barrier arguments developed in~\cite{MSS15b,AO15}.

\begin{theorem}[\cite{KLS19}] \label{t:4std}
Let $v_1, ..., v_m \in \R^n$, and $\xi_1, ..., \xi_m$ be independent random scalar variables with finite support.
There exists a choice of outcomes $\eps_1, ..., \eps_m$ in the support of $\xi_1, ..., \xi_m$ such that
\[
\left\| \sum_{i=1}^m \E[\xi_i] v_i v_i^T - \sum_{i=1}^m \eps_i v_i v_i^T \right\|_{{\rm op}} \leq 4 \left\| \sum_{i=1}^m {\bf Var}[\xi_i] (v_i v_i^T)^2 \right\|_{{\rm op}}^{1/2}.
\]
\end{theorem}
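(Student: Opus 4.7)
The plan is to follow the method of interlacing polynomials developed by Marcus, Spielman, and Srivastava~\cite{MSS15b} and refined by Anari and Oveis Gharan~\cite{AO15}. For any outcomes $\eps = (\eps_1,\ldots,\eps_m)$ in the supports of $\xi_1,\ldots,\xi_m$, write $X(\eps) := \sum_{i=1}^m (\eps_i - \E[\xi_i]) v_i v_i^T$, so that the goal is to exhibit $\eps$ with $\|X(\eps)\|_{{\rm op}} \le 4\sigma$, where $\sigma := \|\sum_i {\bf Var}[\xi_i](v_i v_i^T)^2\|_{{\rm op}}^{1/2}$. Since $X(\eps)$ is symmetric, this amounts to simultaneously bounding $\lambda_{\max}(X(\eps))$ and $-\lambda_{\min}(X(\eps))$ by $4\sigma$.

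First I would set up an interlacing family. Consider the polynomials $q_\eps(x) := \det(xI_n - X(\eps))$, indexed by the discrete tree of outcomes (one level per $\xi_i$, with children weighted by the distribution of $\xi_i$). I would verify the interlacing property: for every partial assignment $\eps_1,\ldots,\eps_{k-1}$, the polynomials obtained by fixing this prefix and varying $\eps_k$ in the support of $\xi_k$ share a common interlacer. This follows from the Cauchy interlacing theorem applied to the rank-one differences $(\eps_k - \eps_k')v_k v_k^T$, together with the standard MSS argument that convex combinations of pairwise interlacing real-rooted polynomials are themselves real-rooted. The interlacing family framework then yields outcomes $\eps^*$ such that $\lambda_{\max}(q_{\eps^*}) \le \lambda_{\max}(\E_\xi[q_\xi])$, reducing the problem to analyzing the largest root of the expected characteristic polynomial.

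Next I would analyze the expected polynomial via multivariate barrier functions. Using independence, $\E_\xi[q_\xi(x)] = \prod_i T_i \cdot \det(xI)$ where $T_i$ is the univariate mixing operator corresponding to the mean-zero variable $\xi_i - \E[\xi_i]$ along direction $v_i v_i^T$. Introduce the upper-barrier $\Phi^u(p) := \sum_j (u - \lambda_j(p))^{-1}$ and a matching lower-barrier, and show that each application of $T_i$ advances both barriers by a controlled amount. The key quantitative lemma would be that, when the current barrier value is below a fixed threshold, applying $T_i$ admits a shift $\Delta_i$ of the barrier location whose square is dominated by the contribution of ${\bf Var}[\xi_i]$ through $v_i v_i^T$. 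Aggregating these shifts using Cauchy--Schwarz together with the spectral identity $\sum_i {\bf Var}[\xi_i] (v_i v_i^T)^2 \preceq \sigma^2 I$ yields a total shift of at most $4\sigma$, matching the claimed constant. For the two-sided bound I would either replay the argument for $-X(\eps)$ or work with the product interlacing family $\det(xI - X(\eps))\det(xI + X(\eps)) = \det(x^2 I - X(\eps)^2)$, whose largest root equals $\|X(\eps)\|_{{\rm op}}$.

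The main obstacle will be the barrier analysis in the third step. Unlike the MSS/AO setting where the rank-one increments are deterministic or Bernoulli, here each $\xi_i - \E[\xi_i]$ is a general mean-zero variable with finite support, and the sharp bound must extract only the second-moment contribution rather than a cruder quantity depending on the range of $\xi_i$. Calibrating the barrier parameters so that the cumulative drift is controlled by $\sigma = \|\sum_i {\bf Var}[\xi_i](v_i v_i^T)^2\|_{{\rm op}}^{1/2}$ rather than by a weaker scalar surrogate such as $(\sum_i {\bf Var}[\xi_i]\|v_i\|^4)^{1/2}$, and doing so with a clean constant of $4$, is the heart of the Kyng--Luh--Song refinement over earlier interlacing arguments.
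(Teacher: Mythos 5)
This theorem is not proved in the paper at all; it is quoted verbatim from Kyng, Luh, and Song~\cite{KLS19}, and the paper only records that their proof uses the method of interlacing polynomials and the barrier arguments of~\cite{MSS15b,AO15}. Your outline correctly identifies that method, so at the level of strategy you are aligned with the cited source. But what you have written is a roadmap rather than a proof, and the two places where it is thinnest are exactly where the content of~\cite{KLS19} lies. The central quantitative step --- that each mixing operator $T_i$ admits a barrier shift whose square is controlled by the contribution of ${\bf Var}[\xi_i](v_iv_i^T)^2$, and that these shifts aggregate to $4\bigl\|\sum_i {\bf Var}[\xi_i](v_iv_i^T)^2\bigr\|_{\rm op}^{1/2}$ rather than to a scalar surrogate such as $\bigl(\sum_i {\bf Var}[\xi_i]\|v_i\|^4\bigr)^{1/2}$ --- is deferred and explicitly flagged by you as ``the main obstacle.'' That lemma \emph{is} the theorem; without it nothing quantitative has been established, and in particular the constant $4$ is not derived anywhere in your argument.

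The second gap is the two-sided bound. The interlacing-family machinery produces \emph{one} outcome $\eps^*$ whose largest root is at most the largest root of the expected polynomial; to control $\|X(\eps^*)\|_{\rm op}$ you need $\lambda_{\max}(X(\eps^*))$ and $-\lambda_{\min}(X(\eps^*))$ bounded simultaneously for the \emph{same} outcome. Replaying the argument for $-X(\eps)$ does not accomplish this, since it yields a possibly different leaf of the tree. Your alternative --- working with $\det(xI-X(\eps))\det(xI+X(\eps))=\det(x^2I-X(\eps)^2)$ --- is the right kind of fix, but you must then verify that this product (or the equivalent $2n\times 2n$ dilation) still forms an interlacing family whose conditional mixtures are real-rooted with a common interlacer; this does not follow automatically from Cauchy interlacing of rank-one updates, because the increments acting on $X(\eps)^2$ are no longer rank one. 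Any complete write-up must address both points explicitly.
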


We note that Theorem~\ref{t:4std} implies the following two-sided spectral rounding result, which is very similar to Corollary~1.7 in~\cite{KLS19} but with a weaker assumption, where we only need $\norm{\sum_{i=1}^m x_i v_i v_i^T}_{\rm op} \leq 1$ instead of $\norm{\sum_{i=1}^m v_i v_i^T}_{\rm op} \leq 1$ as in~\cite{KLS19}.
The proof will be presented in Section~\ref{s:discrepancy} in a more general setting.

\begin{corollary} \label{c:4std}
Let $v_1, ..., v_m \in \R^n$ and $x \in [0,1]^m$. 
Suppose $\sum_{i=1}^m x_i v_i v_i^T = I_n$ and $\|v_i\| \leq \eps$ for all $i \in [m]$.
Then there exists a subset $S \subseteq [m]$ satisfying 
\[(1-O(\eps))I_n \preceq \sum_{i \in S} v_i v_i^T \preceq (1+O(\eps)) I_n.\]
\end{corollary}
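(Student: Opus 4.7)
The plan is to apply Theorem~\ref{t:4std} with a natural choice of independent Bernoulli variables. For each $i\in[m]$, let $\xi_i$ be the $\{0,1\}$-valued random variable with $\E\xi_i = x_i$, so that $\sum_{i=1}^m \E[\xi_i]\, v_i v_i^T = \sum_{i=1}^m x_i\, v_i v_i^T = I_n$ by hypothesis, while the variance satisfies ${\bf Var}[\xi_i] = x_i(1-x_i) \le x_i$. Theorem~\ref{t:4std} then produces outcomes $\eps_i \in \{0,1\}$ for which, setting $S := \{i : \eps_i = 1\}\subseteq[m]$, one has
$$\Big\| I_n - \sum_{i\in S} v_i v_i^T \Big\|_{\rm op} \;\le\; 4\, \Big\| \sum_{i=1}^m x_i(1-x_i)\,(v_i v_i^T)^2 \Big\|_{\rm op}^{1/2}.$$

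The key step is then to bound this right-hand side by $O(\eps)$. Here I would use the rank-one squaring identity $(v_i v_i^T)^2 = v_i (v_i^T v_i) v_i^T = \|v_i\|^2\, v_i v_i^T$, which combined with the hypothesis $\|v_i\| \le \eps$ gives $(v_i v_i^T)^2 \preceq \eps^2\, v_i v_i^T$ in the Loewner order. Summing and invoking $\sum_i x_i v_i v_i^T = I_n$ yields
$$\sum_{i=1}^m x_i(1-x_i)\,(v_i v_i^T)^2 \;\preceq\; \eps^2 \sum_{i=1}^m x_i\, v_i v_i^T \;=\; \eps^2 I_n,$$
so the square-rooted operator norm on the right is at most $\eps$. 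Plugging back gives $\|I_n - \sum_{i\in S} v_i v_i^T\|_{\rm op} \le 4\eps$, i.e.\ $(1-4\eps) I_n \preceq \sum_{i\in S} v_i v_i^T \preceq (1+4\eps) I_n$, which is exactly the claim with the explicit constant $4$.

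I do not foresee any real obstacle: the argument is essentially a one-shot application of Theorem~\ref{t:4std} once the rank-one squaring identity is in hand. The only subtle point, which accounts for the remark in the excerpt that the assumption here is weaker than the one in Corollary~1.7 of \cite{KLS19}, is that the variance term already carries a factor of $x_i$, so the sum telescopes against $\sum_i x_i v_i v_i^T = I_n$ and one never needs a uniform bound of the form $\|\sum_i v_i v_i^T\|_{\rm op} \le 1$.
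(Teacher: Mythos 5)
Your proposal is correct and is essentially the paper's own argument: the paper proves Corollary~\ref{c:4std} as the cost-free special case of Theorem~\ref{t:two-sided-cost}, which likewise applies Theorem~\ref{t:4std} to independent Bernoulli variables with means $x_i$, bounds ${\bf Var}[\xi_i]\le x_i$, and uses $(v_iv_i^T)^2=\|v_i\|^2 v_iv_i^T\preceq \eps^2 v_iv_i^T$ to telescope the variance term against $\sum_i x_i v_iv_i^T=I_n$. Your closing observation about why only $\norm{\sum_i x_i v_iv_i^T}_{\rm op}\le 1$ is needed (rather than the stronger hypothesis of Corollary~1.7 in \cite{KLS19}) matches the paper's remark exactly.
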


Comparing to Theorem~\ref{t:swap}, the advantage of Corollary~\ref{c:4std} is that it provides a two-sided spectral approximation.
On the other hand, Corollary~\ref{c:4std} requires the assumption that all vectors are short, and it has no guarantee on the size of $S$. 
Also, it is important to point out that the proof of Corollary~\ref{c:4std} does not provide a polynomial time algorithm to find such a subset.

\subsection{Our Technical Contributions}

We extend the previous results on spectral rounding to incorporate non-negative linear constraints and to satisfy the requirements for network design problems.
These results have interesting applications in many other problems besides network design; see Section~\ref{s:applications-intro} and Section~\ref{s:applications}.

Our main technical result considers one-sided spectral rounding.

\begin{theorem} \label{t:zero-one}
Suppose we are given $v_1, ..., v_m \in \R^n$ and $x \in [0,1]^m$ such that $\sum_{i=1}^m x_i v_i v_i^T = I_n$. 
For any $\eps \in (0, \frac14)$, there is a polynomial time randomized algorithm that returns a solution $z \in \{0,1\}^m$ such that
\[
\sum_{i=1}^m z_i v_i v_i^T \succcurlyeq I_n
\]
with probability at least $1-\exp\left( -\Omega(n) \right)$. 
Furthermore, for any $c \in \R^m_+$, the solution $z$ satisfies the upper bound
\[
\inner{c}{z} \leq (1+6\eps) \inner{c}{x} + \frac{15 n c_{\infty}}{\eps}
\]
with probability at least $1 - \exp( - \Omega(n) )$, and the solution $z$ satisfies the lower bound
\[
\inner{c}{z} \geq \inner{c}{x} - \delta n c_{\infty}
\]
with probability at least $1-\exp\left(-\Omega\left(\min\{\eps \delta, \eps \delta^2\} \cdot n\right)\right)$ for $\delta > 0$.
\end{theorem}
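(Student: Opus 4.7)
The plan is to build on the regret minimization framework of~\cite{AZLO15} and the one-sided spectral rounding of~\cite{AZLSW20} (Theorem~\ref{t:swap} here), extending the algorithm to handle an arbitrary non-negative cost vector $c$ and to produce two-sided high-probability bounds on $\inner{c}{z}$. First, I would scale $x' := (1+\Theta(\eps))\,x$ so that $\sum_i x'_i v_i v_i^T = (1+\Theta(\eps))\, I_n$, and run the algorithm with spectral target $(1-\Theta(\eps))\,I_n$; the product of the two factors is at least $I_n$, which yields the exact spectral lower bound claimed in the theorem. When $\sum_i x'_i$ falls below the threshold $\Theta(n/\eps^2)$ required by the regret minimization analysis, I would pad the instance with dummy copies of the $v_i v_i^T$ that contribute total extra cost at most $O(nc_\infty/\eps)$; this is the source of the additive $15 n c_\infty/\eps$ term in the upper bound.

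The core algorithm runs iteratively, maintaining the partial sum $Y_t = \sum_{s \le t} v_{i_s} v_{i_s}^T$ together with a barrier parameter $\lambda_t$ and the lower-barrier potential $\Phi_t = \tr\bigl((\lambda_t I_n - Y_t)^{-1}\bigr)$. At each step it samples the next index $i_{t+1}$ with probability proportional to $x_i \cdot v_i^T (\lambda_t I_n - Y_t)^{-2} v_i$, and advances $\lambda_t$ via the Sherman--Morrison formula chosen so that $\E[\Phi_{t+1} \mid \mathcal{F}_t] \le \Phi_t$. This supermartingale property, combined with the boundary-barrier analysis of~\cite{AZLO15,AZLSW20}, implies that after $T = \Theta(n/\eps^2)$ iterations the final matrix satisfies $Y_T \succeq I_n$ with failure probability $\exp(-\Omega(n))$. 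The zero-one structure of $z$ is preserved by a swapping variant of the sampling rule, analogous to the one in~\cite{AZLSW20}, that never selects the same coordinate twice.

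The new part of the analysis is the cost control. Using $\sum_i x_i v_i v_i^T = I_n$, the per-step conditional expectation simplifies to
\[
\E[c_{i_{t+1}} \mid \mathcal{F}_t] \;=\; \frac{\sum_i c_i x_i \inner{M_t}{v_i v_i^T}}{\tr(M_t)}, \qquad M_t := (\lambda_t I_n - Y_t)^{-2},
\]
which I would show is within a $(1+O(\eps))$ factor of $\inner{c}{x}/T$; summing over iterations yields the in-expectation upper bound. Applying Freedman's martingale inequality to the cost deviations, using $c_{i_t} \le c_\infty$ as the bound on each increment, upgrades this to the high-probability upper bound $\exp(-\Omega(n))$, since the slack $\Omega(nc_\infty/\eps)$ dominates the cumulative variance. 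The lower tail follows from the same Bernstein-type estimate with deviation $\delta n c_\infty$, producing the stated exponent $\exp(-\Omega(\min\{\eps\delta, \eps\delta^2\} n))$, where the quadratic-in-$\delta$ regime dominates when $\delta$ is small and the linear regime takes over when $\delta$ is large.

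The main obstacle is the per-step expected cost bound: unlike the unweighted setting where every chosen coordinate has cost $1$, here the sampling distribution is biased by $v_i^T M_t v_i$, which can in principle over-sample expensive coordinates. The heart of the argument is therefore a spectral estimate showing that $\{x_i \inner{M_t}{v_i v_i^T}/\tr(M_t)\}_i$ deviates from the base distribution $\{x_i/\sum_j x_j\}_i$ by only a $(1+O(\eps))$ multiplicative factor. The standard lower-barrier analysis of~\cite{AZLO15} keeps $M_t$ sufficiently close to a scalar multiple of $I_n$ throughout the run, and this closeness is exactly what yields the $(1+6\eps)$ multiplicative constant in the theorem.
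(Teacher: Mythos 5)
The central step of your cost analysis does not hold. You claim that the lower-barrier analysis keeps $M_t = (\lambda_t I_n - Y_t)^{-2}$ close to a scalar multiple of $I_n$, so that the sampling weights $\{x_i \inner{M_t}{v_i v_i^T}/\tr(M_t)\}_i$ deviate from $\{x_i/\sum_j x_j\}_i$ by only a $(1+O(\eps))$ multiplicative factor, and hence that the per-step expected cost is $(1+O(\eps))\inner{c}{x}/T$. This is false: the barrier/regret analysis only bounds a normalized trace (or potential) of $M_t$, not its condition number, and when some direction is poorly covered by $Y_t$ the corresponding eigenvalue of $M_t$ is much larger than average --- that bias toward under-covered directions is exactly what makes the spectral bound work. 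If your multiplicative estimate were true, one would obtain $\inner{c}{z} \leq (1+O(\eps))\inner{c}{x}$ with no additive term, contradicting Example~\ref{ex:bad} (and the quantitative lower bound from Example~\ref{ex:Mohit}), which show the additive $\Theta(n c_{\infty}/\eps)$ error is unavoidable when the spectral lower bound must hold exactly. Relatedly, your Freedman-based tail argument presumes the cost deviations form a martingale with controlled drift; in the algorithm the paper actually uses, the drift of $c(S_t) - \inner{c}{x}$ is \emph{not} zero but mean-reverting (removal probabilities proportional to $1-x_i$, addition probabilities proportional to $x_j$, as in Lemma~\ref{l:cost-1st-moment}), the process is not a martingale, and the paper has to prove a new concentration inequality for such self-adjusting processes (Theorem~\ref{t:adjusting}) precisely because Freedman does not apply. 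Your proposal contains no mechanism that would replace this.

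Two further structural problems. First, your additive error is attributed to ``padding with dummy copies of the $v_i v_i^T$,'' but selecting both an original and its dummy yields $z_i = 2$, violating $z \in \{0,1\}^m$; the brief mention of a ``swapping variant that never selects the same coordinate twice'' does not reconcile this. Second, the scaling $x' = (1+\Theta(\eps))x$ can push coordinates above $1$, which breaks both the $[0,1]^m$ hypothesis of the rounding and the zero-one output; the paper handles this by splitting off $S_{\rm big} = \{i : x_i/(1-2\eps) > 1\}$, taking those vectors deterministically, and whitening the remaining instance by $(I_n - Z_{\rm big})^{-1/2}$ before rounding. As it stands, the proposal reproduces the known spectral part of the argument but has a genuine gap on the cost guarantees, which are the new content of the theorem.
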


The main advantage of Theorem~\ref{t:zero-one} over Theorem~\ref{t:swap} is that we can prove that $\inner{c}{z}$ is not too far from $\inner{c}{x}$ for an arbitrary vector $c \in \R^m_{+}$ with high probability.
This allows us to bound the cost of the returned solution to network design problems, and when $nc_{\infty} \lesssim \inner{c}{x}$ we can conclude that $z$ is a constant factor approximate solution.
Note that the guarantee on linear constraints can be applied to up to exponentially many constraints.
This allows us to incorporate additional linear packing and covering constraints into network design and have some non-trivial guarantees.
Another advantage is that we construct a solution that satisfies the spectral lower bound exactly, by allowing the solution to choose more than $k = \sum_{i=1}^m x_i$ vectors.
This is important in network design problems where we would like to construct a solution that satisfies all the constraints (instead of approximately satisfying all the constraints), by allowing the cost of the solution to be higher than the cost of the optimal solutions.

We note that there are examples showing that the additive error term $O(nc_{\infty}/\eps)$ in Theorem~\ref{t:zero-one} is tight up to a constant factor (see Section~\ref{ss:integrality}).

Using the proof techniques in Theorem~\ref{t:zero-one}, we can strengthen a recent deterministic algorithm by Bansal, Svensson and Trevisan~\cite{BST19} to construct unweighted spectral sparsifiers, to ensure that there will be no parallel edges in the sparsifier.
See Section~\ref{s:additive} for details.

For two-sided spectral rounding, we show that Corollary~\ref{c:4std} can be extended to incorporate one given non-negative linear constraint. 

\begin{theorem} \label{t:two-sided}
Let $v_1, ..., v_m \in \R^n$, $x \in [0,1]^m$ and $c \in \R^m_{+}$.
Suppose $\sum_{i=1}^m x_i v_i v_i^T = I_n$,  
$\|v_i\| \leq \eps < \frac{1}{8}$ for all $i \in [m]$ and
$c_{\infty} \leq \eps^2 \inner{c}{x}$.
Then there exists $z \subseteq \{0,1\}^m$ such that 
\[(1-8\eps)I_n \preceq \sum_{i = 1}^m z_i v_i v_i^T \preceq (1 + 8\eps)I_n 
\quad  \text{and} \quad 
(1-8\eps)\inner{c}{x} \leq \inner{c}{z} \leq (1 + 8\eps) \inner{c}{x}.\]
\end{theorem}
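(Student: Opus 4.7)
The plan is to reduce Theorem~\ref{t:two-sided} to a matrix-valued strengthening of Theorem~\ref{t:4std}, by augmenting each vector $v_i$ with one extra coordinate that records the cost $c_i$. Write $C := \inner{c}{x}$ and, for each $i \in [m]$, define the block-diagonal $(n{+}1)\times(n{+}1)$ matrix
\[
M_i \;=\; \begin{pmatrix} v_i v_i^T & 0 \\ 0 & c_i/C \end{pmatrix} \;=\; V_i V_i^T,
\qquad V_i \;=\; \begin{pmatrix} v_i & 0 \\ 0 & \sqrt{c_i/C} \end{pmatrix} \in \R^{(n+1)\times 2}.
\]
The hypotheses $\sum_i x_i v_i v_i^T = I_n$ and $\sum_i x_i c_i = C$ give $\sum_i x_i M_i = I_n \oplus 1 = I_{n+1}$. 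So if I can find $z \in \{0,1\}^m$ with $\| I_{n+1} - \sum_i z_i M_i \|_{\rm op} \le 8\eps$, then reading off the top-left $n \times n$ block yields the spectral bound $(1-8\eps) I_n \preceq \sum_i z_i v_i v_i^T \preceq (1+8\eps) I_n$, and reading off the bottom-right scalar block yields $|\,1 - \inner{c}{z}/C\,| \le 8\eps$, which is exactly the multiplicative cost bound. The proof thus reduces to producing such a $z$.

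To produce $z$, I invoke the matrix-valued generalization of Theorem~\ref{t:4std} that the paper promises in Section~\ref{s:discrepancy} (the ``more general setting'' mentioned after Corollary~\ref{c:4std}). Taking $\xi_i \in \{0,1\}$ to be independent Bernoullis with $\E[\xi_i] = x_i$, so that ${\bf Var}[\xi_i] = x_i(1-x_i) \le x_i$, this generalization should produce outcomes $z_i \in \{0,1\}$ with
\[
\Big\| \sum_i x_i M_i - \sum_i z_i M_i \Big\|_{\rm op}
\;\le\; 4 \Big\| \sum_i x_i(1-x_i)\, M_i^2 \Big\|_{\rm op}^{1/2}.
\]
Because each $M_i$ is block-diagonal, $M_i^2$ is block-diagonal with blocks $\|v_i\|^2 v_i v_i^T$ and $(c_i/C)^2$, and the two hypotheses of the theorem are set up precisely to bound each block by $\eps^2$. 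The short-vector hypothesis $\|v_i\| \le \eps$ gives $\sum_i x_i \|v_i\|^2 v_i v_i^T \preceq \eps^2 \sum_i x_i v_i v_i^T = \eps^2 I_n$; the small-weight hypothesis $c_\infty \le \eps^2 C$ gives $\sum_i x_i (c_i/C)^2 \le (c_\infty/C)\sum_i x_i(c_i/C) \le \eps^2$. Hence $\|\sum_i x_i M_i^2\|_{\rm op} \le \eps^2$, the discrepancy above is at most $4\eps$, and the block reading from the first paragraph concludes the proof (with the $4\eps$ absorbed comfortably into $8\eps$).

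The main obstacle I anticipate lies not in this block-diagonal reduction, which is essentially bookkeeping, but in the matrix-valued extension of Theorem~\ref{t:4std} that the reduction invokes: the quoted statement of \cite{KLS19} applies only to rank-one outer products $v_i v_i^T$, whereas my $M_i = V_i V_i^T$ has rank two. Thus the interlacing-polynomial and barrier-function machinery of \cite{MSS15b,AO15,KLS19} has to be rerun with matrix-valued atoms, while still controlling the discrepancy by the operator norm of the second-moment functional $\sum {\bf Var}[\xi_i]\,(V_i V_i^T)^2$. Once this generalization is in hand, the block-diagonal construction above yields Theorem~\ref{t:two-sided} immediately, and the same device would accommodate additional non-negative linear constraints by appending further diagonal coordinates at the cost of mild degradation in the $\eps^2$ upper bound on the variance.
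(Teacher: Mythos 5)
Your argument has a genuine gap, and it is exactly the one you flag at the end: the block-diagonal reduction with $M_i = \begin{pmatrix} v_i v_i^T & 0 \\ 0 & c_i/\inner{c}{x} \end{pmatrix}$ requires Theorem~\ref{t:4std} to hold for rank-two atoms, and no such generalization is available — the paper states explicitly at the start of Section~\ref{s:discrepancy} that the rank-one assumption is crucial to the interlacing-polynomial proof of \cite{KLS19} and that extending it to higher-rank matrices is an open problem. The phrase ``in a more general setting'' after Corollary~\ref{c:4std} refers only to weakening the hypothesis from $\norm{\sum_i v_iv_i^T}_{\rm op}\le 1$ to $\norm{\sum_i x_iv_iv_i^T}_{\rm op}\le 1$ (and to allowing general $\lambda$ and vector lengths $l$), not to a matrix-valued version of Theorem~\ref{t:4std}. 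So as written, your proof is conditional on an unresolved conjecture rather than complete.

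The missing idea is a signing trick (Lemma~\ref{l:Akshay}) that realizes essentially your reduction while keeping every atom rank one. Instead of the block-diagonal $M_i$, one augments each vector to $u_i := \begin{pmatrix} v_i \\ s_i\sqrt{c_i/\inner{c}{x}}\end{pmatrix} \in \R^{n+1}$ with signs $s_i\in\{\pm1\}$. Then $\sum_i x_i u_iu_i^T$ equals your target $\sum_i x_i M_i = I_n \oplus 1$ plus an off-diagonal perturbation whose operator norm is $\norm{\sum_i s_i x_i \sqrt{c_i/\inner{c}{x}}\, v_i}$; a first-moment computation over uniformly random signs (the cross terms vanish in expectation) shows some signing makes this at most $\eps$, so $\norm{\sum_i x_iu_iu_i^T}_{\rm op}\le 1+\eps$ and $\norm{u_i}^2 \le \norm{v_i}^2 + c_i/\inner{c}{x} \le 2\eps^2$. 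Now Theorem~\ref{t:4std} applies directly to the rank-one atoms $u_iu_i^T$ with Bernoulli $\xi_i$, giving discrepancy $4\bigl(2\eps^2(1+\eps)\bigr)^{1/2}\le 8\eps$, and your block-reading step (top-left $n\times n$ block for the spectral bound, bottom-right entry for the cost bound) then goes through verbatim. Your variance calculation and the way the two hypotheses $\norm{v_i}\le\eps$ and $c_\infty\le\eps^2\inner{c}{x}$ enter are correct and match the paper; only the device for staying within rank one is missing. Note also that this trick consumes the single extra coordinate per constraint together with a sign, so extending to many linear constraints is not as immediate as your closing remark suggests.
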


Note that the linear constraint $c$ in Theorem~\ref{t:two-sided} is required to be given as part of the input, while it is not required so in Theorem~\ref{t:zero-one}.
Theorem~\ref{t:two-sided} is useful in bounding the integrality gap for convex programs for network design problems, showing strong approximation results when the assumptions are satisfied (see Section~\ref{ss:two-sided}).
Also, we will show in Section~\ref{s:additive} that it can be used in the study of additive unweighted spectral sparsification~\cite{BST19}, proving an optimal existential result.

\subsubsection{Techniques}

The main technical contribution is an iterative randomized rounding algorithm for Theorem~\ref{t:zero-one}. 
Our algorithms is based on the regret minimization framework developed in~\cite{AZLO15,AZLSW20} for spectral sparsification and one-sided spectral rounding.
Let us first review the previous work.
To prove Theorem~\ref{t:swap}, 
Allen-Zhu, Li, Singh, and Wang~\cite{AZLSW20} analyzed a local search algorithm where they start from an arbitrary subset $S_0$ of $k$ vectors, and in each iteration $t \geq 1$ they find a pair of vectors $i \in S_{t-1}$ and $j \notin S_{t-1}$ so that roughly speaking $\lambda_{\min}(\sum_{l \in S_{t-1} - i + j} v_l v_l^T) > \lambda_{\min}(\sum_{l \in S_{t-1}} v_l v_l^T)$, and then they set $S_t = S_{t-1} - i_{t} + j_{t}$.
Using the framework of regret minimization, with the $l_{1/2}$-regularizer introduced in~\cite{AZLO15}, they proved that the task of finding a pair to improve the minimum eigenvalue can be reduced to finding a pair $i_t \in S_{t-1}$ and $j_t \notin S_{t-1}$ so that 
\begin{equation} \label{eq:progress}
\frac{\inner{v_{j_t}v_{j_t}^T}{A_t}}{1+2\alpha\inner{v_{j_t}v_{j_t}^T}{A_t^{1/2}}} - 
\frac{\inner{v_{i_t}v_{i_t}^T}{A_t}}{1-2\alpha\inner{v_{i_t}v_{i_t}^T}{A_t^{1/2}}} \geq \Delta > 0,
\end{equation}
where $A_t$ is the matrix defined in~(\ref{e:closed-form}) based on the current solution $S_{t-1}$.
Using a delicate argument, they proved that if $i_t \in S_{t-1}$ (subjecting to the restriction that $2\alpha \inner{v_i v_i^T}{A_t^{1/2}} < 1$) is chosen to minimize $\inner{v_iv_i^T}{A_t}/\big(1-2\alpha\inner{v_iv_i^T}{A_t^{1/2}}\big)$ and $j_t \notin S_{t-1}$ is chosen to maximize $\inner{v_jv_j^T}{A_t}/\big(1+2\alpha\inner{v_jv_j^T}{A_t^{1/2}}\big)$, then this pair $i_t,j_t$ satisfies the above inequality with $\Delta = \eps/k$ as long as $\lambda_{\min}(\sum_{l \in S_{t-1}} v_l v_l^T) \leq 1 - 3\eps$.
This implies, by the regret minimization framework, that the local search algorithm will succeed to find a solution $S_\tau$ with $\lambda_{\min}(\sum_{l \in S_\tau} v_l v_l^T) \geq 1-3\eps$ within $\tau \leq k/\eps$ iterations. 
We will review more about the regret minimization framework in Section~\ref{s:regret-min}.

To incorporate non-negative linear constraints, our idea is to turn the deterministic local search algorithm into an iterative randomized rounding algorithm.  
In this randomized rounding algorithm, we first construct an initial solution $S_0$ by adding each $i$ into $S_0$ with probability $x_i$ independently.
This will ensure that $c(S_0) \approx \inner{c}{x}$ with high probability.
In each iteration $t \geq 1$, based on the current solution $S_{t-1}$, we construct a probability distribution to sample a vector $v_{i_t}$ to be removed from $S_{t-1}$, and a probability distribution to sample a vector $v_{j_t}$ to be added to $S_{t-1}$.
To maintain $c(S_t) \approx \inner{c}{x}$, the basic idea is to remove a vector $v_i$ with probability proportional to $1-x_i$ and add a vector $v_j$ with probability proportional to $x_j$, but doing so will not satisfy the spectral lower bound with high probability.
Instead, we prove that if we recompute the sampling probability so that a vector $v_i$ is removed with probability proportional to $(1-x_i)(1-2\alpha\inner{v_iv_i^T}{A_t^{1/2}})$ and a vector $v_j$ is added with probability proportional to $x_j(1+2\alpha\inner{v_iv_i^T}{A_t^{1/2}})$, then \eqref{eq:progress} is satisfied with expected progress $\E[\Delta] \geq \eps/k$ as long as $\lambda_{\min}(\sum_{l \in S_{t-1}} v_l v_l^T) \leq 1-2\eps$.
Informally, a vector pointing to a direction that is not well covered by the current solution is more likely to be added and less likely to be removed, to ensure that the spectral lower bound will be satisfied.
However, this changes the expectation on the linear constraint, 
but we can bound the error by the additive term $O(n c_{\infty}/\eps)$.
Note that there are examples showing that this additive error is unavoidable if our goal is to satisfy the spectral lower bound exactly (see Section~\ref{ss:integrality}), so our analysis is tight up to a constant factor.
Compared to the deterministic approach in~\cite{AZLSW20}, this randomized approach uses the fractional solution $x$ more crucially in the rounding procedure, and we note that it can be used to give a simpler proof of the deterministic local search algorithm in~\cite{AZLSW20} (see Remark~\ref{r:simple}).

The advantage of the randomized approach is that we can prove that the random variables are concentrated around their expected values, so that we can handle multiple non-negative linear constraints simultaneously.
Since the sampling probabilities change over time based on the previous samples, the random variables that we consider are not a sum of independent random variables and thus Chernoff type bounds cannot be applied.
For the spectral lower bound, we will define a martingale and use Freedman's inequality to prove that the total progress we make in \eqref{eq:progress} is concentrated around its expected value.
For the non-negative linear constraints, we show that they satisfy an interesting ``self-adjusting'' property, such that if $c(S_t) - \inner{c}{x}$ is (more) positive then $\E[c(S_{t+1})] - c(S_t)$ is (more) negative and vice versa, so intuitively $c(S_t) \approx \inner{c}{x}$ with high probability for any $t$.
This sequence of random variables is not a martingale and so Freedman's inequality cannot be applied.
Instead, we prove a new concentration inequality for this self-adjusting process that provides a quantitative bound similar to that in Freedman's inequality.
We note that the iterative randomized rounding algorithm does not even need to know the linear constraint $c$ in advance in order to return a solution $S$ with $c(S) \approx \inner{c}{x}$.
This property is quite similar to that of a recent rounding algorithm by Bansal~\cite{Ban19} combining iterative rounding and randomized rounding as we will discuss in Section~\ref{ss:Bansal}.

We remark that our approach to turn a deterministic algorithm into a randomized algorithm is inspired by the fast algorithm for spectral sparsification by Lee and Sun~\cite{LS18}, where they turned the deterministic algorithm by Batson, Spielman and Srivastava~\cite{BSS12} into a randomized algorithm that recomputes the sampling probabilities in different phases.
In their algorithm, the advantage of the randomized algorithm is to sample many vectors in parallel instead of carefully choosing one vector at a time as in~\cite{BSS12}.
In our algorithm, the advantage of the randomized algorithm is to approximately preserves many linear constraints simultaneously using arguments about expectation and concentration, while it is not clear how to modify the proofs in the deterministic local search algorithm in~\cite{AZLSW20} to prove that there is always a pair of vectors $v_i, v_j$ which makes enough progress in \eqref{eq:progress} and at the same time $c_j-c_i$ is small, even if there is just have one constraint $c$ and it is given in advance.
We believe that this probabilistic approach will be useful in designing algorithms using the regret minimization framework.

\subsection{Other Applications} \label{s:applications-intro}

The spectral rounding results are quite general and have many other applications besides network design.
We mention some of these results and defer the details to Section~\ref{s:applications}.

\subsubsection{Weighted Experimental Design} \label{ss:experimental-intro}

Experimental design is an important class of problems in statistics and has found new applications in machine learning~\cite{Ang88,Puk06}.
The one-sided spectral rounding result of Allen-Zhu, Li, Singh and Wang~\cite{AZLSW20} was used to give near optimal approximation algorithms for many well-known experimental design problems.
We will explain these previous work in Section~\ref{s:experimental},
and show that our results can be used to design approximation algorithms for the more general setting where different experiments may have different costs
while incorporating some additional linear constraints;
see Theorem~\ref{t:exact-design} and the discussions thereafter.

\begin{theorem}[Informal] \label{t:exact-design-intro}
We are given $m$ design points that are represented by $n$-dimensional vectors $v_1, ..., v_m \in \R^n$, a cost vector $c \in \R^m_+$ and a cost budget $C \in \R_+$. 
For any $\eps \in (0, \frac12]$, if $C \geq 15 nc_{\infty} / \eps^2$,
there is a randomized polynomial time algorithm that returns a subset of vectors with total cost at most $C$ so that the objective value of A/D/E/V/G-design is at most $(1+O(\eps))$ times of that of the optimal solution.
\end{theorem}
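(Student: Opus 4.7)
The plan is to reduce the theorem to the one-sided spectral rounding result of Theorem~\ref{t:zero-one}, following the template of the reduction from experimental design to spectral rounding used by Allen-Zhu, Li, Singh and Wang. For each of the A/D/E/V/G designs, the design criterion $\Phi(M)$ is a convex function of the information matrix $M(x) := \sum_i x_i v_i v_i^T$ that is monotone non-increasing in the Loewner order. I would first solve the standard convex relaxation
\[
\min\bigl\{\, \Phi(M(x)) \;:\; \inner{c}{x} \leq C,\ x \in [0,1]^m \,\bigr\}
\]
in polynomial time to obtain a fractional optimum $x^*$ and information matrix $M^* := M(x^*)$; then $\Phi(M^*)$ lower-bounds the integral optimum $\opt$.

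To absorb the multiplicative and additive slack from spectral rounding, I would then shrink the fractional solution. Set $y := (1-7\eps)\, x^*$ and whiten via $u_i := \bigl((1-7\eps)\, M^*\bigr)^{-1/2}\, v_i$, so that $\sum_i y_i\, u_i u_i^T = I_n$ and $y \in [0,1]^m$. Feeding $(u_i)_{i\in[m]}$, $y$, and the original cost vector $c$ into Theorem~\ref{t:zero-one} yields, in polynomial time and with high probability, a vector $z\in\{0,1\}^m$ with
\[
\sum_i z_i v_i v_i^T \;\succeq\; (1-7\eps)\, M^* \qquad\text{and}\qquad \inner{c}{z} \;\leq\; (1+6\eps)(1-7\eps)\, C + \frac{15\, n c_\infty}{\eps}.
\]
The hypothesis $C \geq 15\, n c_\infty / \eps^2$ bounds the additive term by $\eps C$, and the identity $(1+6\eps)(1-7\eps) = 1 - \eps - 42\eps^2$ then gives $\inner{c}{z} \leq C$, so $z$ is feasible.

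Finally, each of A-, E-, V-, and G-design is a trace or maximum of quadratic forms in $M^{-1}$, while the normalized D-objective $\det(M)^{-1/n}$ is also monotone non-increasing in $M$ in the PSD order and positively homogeneous of degree $-1$. Hence
\[
\Phi\bigl(\textstyle\sum_i z_i v_i v_i^T\bigr) \;\leq\; \Phi\bigl((1-7\eps)\, M^*\bigr) \;=\; (1-7\eps)^{-1}\,\Phi(M^*) \;\leq\; (1+O(\eps))\,\opt,
\]
which is the claimed approximation guarantee. The main obstacle, and the reason for the precise threshold $C\geq 15\, n c_\infty / \eps^2$, is the tension between the spectral and the cost guarantees of Theorem~\ref{t:zero-one}: the cost may overshoot by a $(1+6\eps)$ multiplicative factor \emph{and} by an additive $O(n c_\infty / \eps)$, so one must both pre-shrink $x^*$ by $1-\Theta(\eps)$ to absorb the multiplicative slack and assume $C \gtrsim n c_\infty / \eps^2$ to absorb the additive slack. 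A secondary point is that D-design must be treated via its normalized form $\det(M)^{-1/n}$ so as to fit the same monotonicity-plus-homogeneity framework as the other four objectives uniformly.
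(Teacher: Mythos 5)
Your proposal is correct and follows essentially the same route as the paper: the paper's Theorem~\ref{t:budget} likewise pre-shrinks the fractional solution (by $1-2\eps$), whitens, invokes the one-sided spectral rounding guarantee, uses $C \geq 15nc_\infty/\eps^2$ to absorb the additive cost overshoot, and then applies monotonicity plus degree-$(-1)$ homogeneity of the A/D/E/V/G objectives. The only caveat is parameter bookkeeping: Theorem~\ref{t:zero-one} requires $\eps < \tfrac14$ and your shrink factor $1-7\eps$ must be positive, so for $\eps$ up to $\tfrac12$ you should first replace $\eps$ by a constant fraction of itself, which the $(1+O(\eps))$ guarantee absorbs.
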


\subsubsection{Spectral Network Design}

There are several previous work on network design problems with spectral requirements, including maximizing algebraic connectivity~\cite{GB06,KMS+10}, minimizing total effective resistances~\cite{GBS08}, and network design for $s$-$t$ effective resistances~\cite{CLSWZ19}.
In Section~\ref{s:spectral}, we will see that these problems are special cases of the general network design problem and the weighted experimental design problem in Section~\ref{s:network} and Section~\ref{s:experimental}, and our results provide improved approximation algorithms for these problems and also generalize these problems to incorporate many additional constraints.

We provide the first non-trivial approximation algorithm for the problem of maximizing algebraic connectivity subject to a knapsack constraint, proposed by Ghosh and Boyd~\cite{GB06}.

\begin{theorem} \label{t:lambda1-intro}
Let $G=(V,E)$ be a graph where each edge has cost $c_e$
and $C$ be a given cost budget.
Suppose $C \geq 15|V|c_{\infty}/\eps^2$ for some $\eps \leq 1/2$.
There is a randomized polynomial time algorithm which returns a subgraph $H$ of $G$ with 
\[\sum_{e \in H} c_e \leq C 
\quad {\rm and} \quad
\lambda_2(L_H) \geq (1-O(\eps)) \lambda_{\opt},
\]
where $\lambda_{\opt}$ is the maximum $\lambda_2$ that can be achieved by a solution with cost at most $C$.
\end{theorem}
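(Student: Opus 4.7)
The plan is to reduce to the one-sided spectral rounding theorem (Theorem~\ref{t:zero-one}) via an SDP relaxation, using a whitened basis of $\b1^\perp$ and a slightly shrunk budget to absorb the rounding blow-up. Set $C' := (1-7\eps) C$ and solve the relaxation
\begin{equation*}
\max \bigl\{\, \lambda : L_x \succeq \lambda \Pi,\ \inner{c}{x} \leq C',\ 0 \leq x_e \leq 1\ \forall e \in E \,\bigr\},
\end{equation*}
where $\Pi := I_n - \tfrac{1}{n}\b1\b1^\top$ is the projector onto $\b1^\perp$; this SDP is solvable in polynomial time. Let $(x^*, \lambda^*)$ be the optimum. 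If $H^* \subseteq E$ attains $\lambda_2(L_{H^*}) = \lambda_{\opt}$ with $c(H^*) \leq C$, then $(C'/C) \cdot \mathbf{1}_{H^*}$ is feasible for the relaxation with objective $(C'/C)\lambda_{\opt}$, hence $\lambda^* \geq (1-7\eps)\lambda_{\opt}$. (We may assume $\lambda_{\opt} > 0$, else the claim is trivial.)

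To apply Theorem~\ref{t:zero-one}, I whiten in $\b1^\perp$. Let $U \in \R^{n \times (n-1)}$ have orthonormal columns spanning $\b1^\perp$, let $b_e$ be the signed incidence vector of edge $e$ so that $L_{x^*} = \sum_e x^*_e b_e b_e^\top$, and set $\tilde L := U^\top L_{x^*} U \succeq \lambda^* I_{n-1} \succ 0$, $T := \tilde L^{-1/2}$, and $u_e := T U^\top b_e \in \R^{n-1}$. Then $\sum_e x^*_e u_e u_e^\top = T \tilde L T = I_{n-1}$, and Theorem~\ref{t:zero-one} applied with cost vector $c$ returns $z \in \{0,1\}^E$ such that, with probability $1 - \exp(-\Omega(n))$, both
\begin{equation*}
\sum_e z_e u_e u_e^\top \succeq I_{n-1} \quad \text{and} \quad \inner{c}{z} \leq (1+6\eps) \inner{c}{x^*} + \frac{15 n c_\infty}{\eps}.
\end{equation*}
Undoing the whitening, $U^\top L_z U \succeq T^{-2} = \tilde L = U^\top L_{x^*} U$; since $L_z$ and $L_{x^*}$ both vanish on $\b1$, this lifts to $L_z \succeq L_{x^*}$ on $\R^n$, so $\lambda_2(L_z) \geq \lambda^* \geq (1-7\eps)\lambda_{\opt}$. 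The hypothesis $C \geq 15 n c_\infty/\eps^2$ gives $15 n c_\infty/\eps \leq \eps C$, so
\begin{equation*}
\inner{c}{z} \leq (1+6\eps)(1-7\eps) C + \eps C = (1 - 42\eps^2) C \leq C,
\end{equation*}
as required. (The constant $7$ in the definition of $C'$ forces $\eps$ sufficiently small; for the remaining range $\eps$ is a constant and the approximation is absorbed into the $O(\eps)$.)

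The main technical subtlety is the whitening step. Theorem~\ref{t:zero-one} requires an exact identity $\sum x^*_e v_e v_e^\top = I$ in the ambient space, but the Laplacian $L_{x^*}$ always has $\b1$ in its kernel and satisfies $L_{x^*} \succeq \lambda^* \Pi$ with generic slack on $\b1^\perp$. The transformation $u_e = \tilde L^{-1/2} U^\top b_e$ simultaneously projects onto $\b1^\perp$ and absorbs that slack, so that the rounded spectral inequality pulls back to $L_z \succeq L_{x^*}$ and recovers the full value $\lambda^*$, not merely $\lambda^*\Pi$. The budget-shrinking trick $C' = (1-7\eps)C$ is then calibrated so that the multiplicative $(1+6\eps)$ factor and additive $15 n c_\infty/\eps$ error of Theorem~\ref{t:zero-one} both fit into the original budget $C$, the hypothesis $C \geq 15 n c_\infty/\eps^2$ being precisely what makes the additive term absorbable into $\eps C$.
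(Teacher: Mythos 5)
Your proof is correct and takes essentially the same approach as the paper. The paper phrases the reduction through E-optimal experimental design (Theorem~\ref{t:exact-design}, whose engine Theorem~\ref{t:budget} scales the fractional solution down by $1-2\eps$ before applying the randomized swapping algorithm), whereas you shrink the budget in the relaxation to $(1-7\eps)C$ and invoke Theorem~\ref{t:zero-one} directly; the whitening on $\b1^\perp$ and the calibration of the budget slack against the $(1+6\eps)$ multiplicative and $15nc_\infty/\eps$ additive errors are the same mechanism in both arguments.
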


We also provide a similar result for the problem of minimizing total effective resistance, proposed by Ghosh, Boyd and Saberi~\cite{GBS08}.

\begin{theorem} \label{t:Reff-intro}
Let $G=(V,E)$ be a graph where each edge has cost $c_e$ 
and $C$ be a given cost budget.
Suppose $C \geq 15 |V|c_{\infty}/\eps^2$ for some $\eps \leq 1/2$.
There is a randomized polynomial time algorithm which returns a subgraph $H$ of $G$ with 
\[\sum_{e \in H} c_e \leq C 
\quad {\rm and} \quad 
\sum_{u,v} \Reff_H(u,v) \leq (1+O(\eps)) \opt,
\] 
where $\opt$ is the minimum total effective resistance that can be achieved by a solution with cost at most $C$.
\end{theorem}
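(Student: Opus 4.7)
The plan is to combine a natural convex relaxation with the one-sided spectral rounding result (Theorem~\ref{t:zero-one}) applied to edge-Laplacian vectors. Consider the convex program
\[
\min_x \;\; \sum_{u,v} \Reff_x(u,v) \qquad \text{subject to} \qquad \inner{c}{x} \leq C', \;\; 0 \leq x_e \leq 1 \;\; \forall e,
\]
where $C' := C/(1+O(\eps))$ is a slightly reduced budget chosen to absorb the additive cost slack of Theorem~\ref{t:zero-one}. The objective equals $n\tr(L_x^+)$ on the subspace $\mathbf{1}^\perp$, which is a convex function of $x$ wherever $L_x$ is positive definite on $\mathbf{1}^\perp$, so the program is solvable in polynomial time. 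Let $x^*$ be an optimal fractional solution with value $\opt_{\rm cp}$. If $\tilde{x}$ is an optimal integer solution of cost at most $C$, then $(C'/C)\tilde{x}$ is feasible for the reduced-budget program and, by the scaling $L_{\alpha x} = \alpha L_x$, has objective $(C/C')\opt = (1+O(\eps))\opt$; hence $\opt_{\rm cp} \leq (1+O(\eps))\opt$.

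To round $x^*$, identify $\mathbf{1}^\perp \subseteq \R^n$ with $\R^{n-1}$ via any orthonormal basis and, for each edge $e = uv$, define $v_e := (L_{x^*}^+)^{1/2} b_e \in \R^{n-1}$ where $b_e = e_u - e_v$ and $(L_{x^*}^+)^{1/2}$ is the positive semidefinite square root of the pseudoinverse of $L_{x^*}$. Then
\[
\sum_e x^*_e\, v_e v_e^T \;=\; (L_{x^*}^+)^{1/2} L_{x^*} (L_{x^*}^+)^{1/2} \;=\; I_{n-1},
\]
so Theorem~\ref{t:zero-one} invoked with the cost vector $c$ and a suitably rescaled parameter returns $z \in \{0,1\}^m$ satisfying $\sum_e z_e v_e v_e^T \succeq I_{n-1}$ and $\inner{c}{z} \leq (1+6\eps)C' + 15(n-1)c_\infty/\eps$ with high probability. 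The spectral inequality is equivalent to $L_z \succeq L_{x^*}$ on $\mathbf{1}^\perp$, and the choice of $C'$ together with the assumption $C \geq 15nc_\infty/\eps^2$ gives $\inner{c}{z} \leq C$ as required.

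Finally, the ordering $L_z \succeq L_{x^*}$ on $\mathbf{1}^\perp$ implies $L_z^+ \preceq L_{x^*}^+$ by monotonicity of pseudoinverse under the Loewner order, so
\[
\sum_{u,v} \Reff_z(u,v) \;=\; n\tr(L_z^+) \;\leq\; n\tr(L_{x^*}^+) \;=\; \opt_{\rm cp} \;\leq\; (1+O(\eps))\opt.
\]
The main obstacle is essentially bookkeeping rather than conceptual: one must ensure that $L_{x^*}$ is invertible on $\mathbf{1}^\perp$ so that $v_e$ and the identity $\sum_e x^*_e v_e v_e^T = I_{n-1}$ are well-defined, and one must absorb the additive $O(nc_\infty/\eps)$ term of Theorem~\ref{t:zero-one} into the hard budget constraint. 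The first is automatic since any $x^*$ with $\opt_{\rm cp} < \infty$ must support a connected weighted graph, and the second is handled by the rescaling $C' = C/(1+O(\eps))$, which inflates the fractional optimum by only a $(1+O(\eps))$ factor thanks to the linear scaling of $L_x$.
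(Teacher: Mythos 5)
Your argument is correct, and at its core it is the same mechanism the paper uses: put the edge vectors $L_{x^*}^{\dagger/2}b_e$ in isotropic position, apply one-sided spectral rounding, and pay a $(1+O(\eps))$ factor somewhere to absorb the additive $O(nc_\infty/\eps)$ cost slack, exploiting the homogeneity $L_{\alpha x}=\alpha L_x$. The packaging differs slightly. The paper routes this theorem through the weighted experimental design framework (Theorem~\ref{t:exact-design} via Theorem~\ref{t:budget}): there one scales the fractional solution \emph{down} by $1-2\eps$, obtains only the approximate guarantee $\sum_e z_e v_ev_e^T\succeq(1-4\eps)I_{n-1}$ together with $\inner{c}{z}\leq C$, and then uses the monotonicity and reciprocal sub-linearity of the A-design objective $f_A(\Sigma)=\tr(\Sigma^{-1})/n$ to convert the approximate spectral domination into a $(1+O(\eps))$ bound on the objective. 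You instead shrink the \emph{budget} to $C'=C/(1+O(\eps))$, invoke the exact rounding of Theorem~\ref{t:zero-one} so that $L_z\succeq L_{x^*}$ holds outright, and conclude by Loewner monotonicity of the pseudoinverse; the $(1+O(\eps))$ loss then appears in comparing the reduced-budget fractional optimum to $\opt$. For this particular objective your version is a bit more self-contained, since it avoids quoting the general reduction of~\cite{AZLSW20}; the paper's version has the advantage of handling all of A/D/E/V/G-design uniformly. Your bookkeeping checks out: $15nc_\infty/\eps\leq\eps C$ under the hypothesis, so $(1+6\eps)C'+15nc_\infty/\eps\leq C$ for $C'=\frac{1-\eps}{1+6\eps}C$, and $(C'/C)\tilde{x}$ certifies $\opt_{\rm cp}\leq(C/C')\opt$. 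The only cosmetic caveat is that Theorem~\ref{t:zero-one} requires $\eps<\frac14$ while the statement allows $\eps\leq\frac12$, which your ``suitably rescaled parameter'' remark already covers.
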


These results can be extended to incorporate additional constraints (e.g. connectivity constraints).
See Section~\ref{s:spectral} for details about these results,
including the related work~\cite{KMS+10,NST19}.

\subsubsection{Additive Spectral Sparsification}

Recently, Bansal, Svensson and Trevisan~\cite{BST19} study whether there is a non-trivial notion of unweighted spectral sparsification with which linear-sized spectral sparsification is always possible.
They provide randomized and deterministic algorithms to construct ``additive'' unweighted spectral sparsifiers, a notion suggested by Oveis Gharan.
In Section~\ref{s:additive}, we will explain their results and show that our spectral rounding results can be applied to this problem.
Using Theorem~\ref{t:two-sided}, we prove an optimal existential result for the problem.

\begin{theorem} \label{t:additive-cost-intro}
Suppose we are given a graph $G=(V,E)$ with $n$ vertices, $m$ edges, and maximum degree $d$.
Let $\tilde{m} = n/\eps^2$.
For any $\eps \in (0,1)$, there exists a subset of edges $F \subseteq E$ with $|F| \leq 8n/\eps^2 $ such that
\[
-8\sqrt{2} \eps d I_n \preceq L_G - \frac{m}{\tilde{m}} \sum_{e \in F} b_e b_e^T \preceq 8\sqrt{2} \eps d I_n.
\]
\end{theorem}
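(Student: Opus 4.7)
The plan is to reduce the additive spectral sparsification to a direct application of Theorem~\ref{t:two-sided} via the standard effective-resistance rescaling, which converts the additive statement into a multiplicative spectral approximation. Assume $G$ is connected (the disconnected case splits into components) and work in the $(n-1)$-dimensional subspace $\Pi := I_n - \tfrac{1}{n}\mathbf{1}\mathbf{1}^T$ orthogonal to $\mathbf{1}$, on which $L_G$ is positive definite. Define, for each $e \in E$,
\[
\tilde v_e \;:=\; \sqrt{m/\tilde m}\,(L_G^+)^{1/2}\, b_e,\qquad x_e := \tilde m/m,\qquad c_e := 1.
\]
A direct computation yields $\sum_e x_e\, \tilde v_e \tilde v_e^T = (L_G^+)^{1/2} L_G (L_G^+)^{1/2} = \Pi$, which plays the role of the identity in the effective $(n-1)$-dimensional space.

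Next, set the rounding parameter $\eps_0 := \eps/\sqrt{2}$ and verify the remaining hypotheses of Theorem~\ref{t:two-sided}. The norm condition $\|\tilde v_e\|\le \eps_0$ becomes $(m/\tilde m)\Reff(e)\le \eps^2/2$, which using $m \le nd/2$ reduces to $\Reff(e)\le 1/d$. The cost condition $c_\infty \le \eps_0^2\langle c, x\rangle$ reduces to $1 \le n/2$, which is trivial. Applying the theorem produces $z \in \{0,1\}^m$ with $|F| := \sum_e z_e \le (1+8\eps_0)\tilde m \le 8n/\eps^2$ and
\[
(1-8\eps_0)\,\Pi \;\preceq\; \sum_{e \in F} \tilde v_e \tilde v_e^T \;\preceq\; (1+8\eps_0)\,\Pi.
\]
Conjugating by $L_G^{1/2}$ turns this into $(1-8\eps_0) L_G \preceq (m/\tilde m)\sum_{e\in F} b_e b_e^T \preceq (1+8\eps_0) L_G$, so the operator-norm deviation is at most $8\eps_0 \|L_G\|_{\rm op} \le 16\eps_0\, d = 8\sqrt{2}\,\eps d$, exactly the stated bound.

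The main obstacle is the per-edge hypothesis $\Reff(e) \le 1/d$, which need not hold in max-degree-$d$ graphs: a bridge edge has $\Reff(e) = 1$. I would handle this by splitting $E$ into a heavy set $E_H := \{e : \Reff(e) > 1/d\}$ and a light set $E_L := E\setminus E_H$, committing all heavy edges to $F$ deterministically and applying the rescaling-and-rounding above only to $E_L$ with the shifted target $\Pi - \sum_{e\in E_H}\tilde v_e\tilde v_e^T$ in place of $\Pi$. Foster's identity $\sum_{e\in E}\Reff(e)=n-1$ bounds $|E_H| \le d(n-1)$, which combined with the cost-based bound on the rounded light portion fits inside the $8n/\eps^2$ budget in the main sparsification regime; the boundary regime $m < \tilde m$ is handled by taking $F = E$ and invoking $\|L_G\|_{\rm op}\le 2d$.
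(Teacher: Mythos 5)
Your reduction to the isotropic two-sided theorem is a genuinely different route from the paper's, and you have correctly located its obstruction — but the patch you propose for that obstruction does not work, so there is a real gap. First, the heavy set $E_H=\{e:\Reff(e)>1/d\}$ can be essentially all of $E$: take $n/(d+1)$ copies of $K_{d+1}$ joined in a path by bridges; every clique edge has $\Reff(e)=2/(d+1)>1/d$, so $|E_H|=m=\Theta(nd)$. Foster's identity only gives $|E_H|\leq d(n-1)$, and $d(n-1)$ does not fit inside the budget $8n/\eps^2$ once $d\gg 1/\eps^2$. Second, even ignoring cardinality, an edge committed to $F$ carries weight $m/\tilde{m}$ in the sparsifier but weight $1$ in $L_G$, so the committed heavy edges alone contribute an error term $\bigl(\tfrac{m}{\tilde{m}}-1\bigr)\sum_{e\in E_H}b_eb_e^T$ of operator norm up to $\Theta(\eps^2 d^2)$, which exceeds the target $O(\eps d)$ whenever $d\gg 1/\eps$. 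Third, the shifted target $\Pi-\sum_{e\in E_H}\tilde{v}_e\tilde{v}_e^T$ is neither PSD in general nor equal to $\sum_{e\in E_L}x_e\tilde{v}_e\tilde{v}_e^T$ (those sums differ by the factor $m/\tilde{m}$ on $E_H$), so Theorem~\ref{t:two-sided} cannot be invoked on the light edges in the form you describe. (Your boundary claim that $F=E$ suffices when $m<\tilde{m}$ also fails — the error there is $(1-m/\tilde{m})\norm{L_G}_{\rm op}$, which can be close to $2d$ — though the paper implicitly restricts to $m\geq\tilde{m}$ as well, since otherwise $x_e=\tilde{m}/m>1$.)

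The fix is to abandon the isotropic normalization entirely: the paper proves and applies the two-sided rounding in the non-isotropic form of Theorem~\ref{t:two-sided-cost}, which only requires $\norm{\sum_e x_e v_ev_e^T}_{\rm op}\leq\lambda$ and $\norm{v_e}\leq l$ and returns additive error $8l\sqrt{\lambda}$. Taking the raw vectors $v_e=b_e$, $x_e=\tilde{m}/m$, $c=\vec{1}_m$, one has $l=\sqrt{2}$ and $\lambda=2d\tilde{m}/m$; no per-edge effective resistance condition ever arises because the Kyng--Luh--Song bound (Theorem~\ref{t:4std}) only needs $\norm{v_e}$ small relative to $\sqrt{\lambda}$, not relative to $\eps$. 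Rescaling by $m/\tilde{m}$ then gives operator-norm error $16\sqrt{dm/\tilde{m}}\leq 8\sqrt{2}\eps d$ via $m\leq dn/2$, and the cost coordinate gives $|F|\leq(1+4\sqrt{2}\eps)\tilde{m}\leq 8n/\eps^2$. Your post-hoc conjugation step and constants are fine; the failure is only in forcing the problem through the $\norm{v_i}\leq\eps$ hypothesis of Theorem~\ref{t:two-sided}.
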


Using the proof techniques in Theorem~\ref{t:zero-one}, we provide an improved deterministic algorithm to construct additive unweighted spectral sparsifiers with no parallel edges (where the result in~\cite{BST19} may produce parallel edges).

\begin{theorem} \label{t:additive-deterministic-intro}
Given a graph $G=(V,E)$ with $n$ vertices, $m$ edges, maximum degree $d$, and $\eps \in (0,1)$, there is a polynomial time deterministic algorithm that finds a subset $F$ of edges with size $\tilde{m} = |F| = O(n/\eps^2)$ such that $\tilde{G} = (V,F)$ satisfies
\[
\frac{2m}{\tilde{m}} D_{\tilde{G}} - 2 D_G -\eps d I \preceq \frac{m}{\tilde{m}} L_{\tilde{G}} - L_G \preceq \eps d I,
\]
where $D_G$ and $D_{\tilde{G}}$ are the diagonal degree matrix of $G$ and $\tilde{G}$ respectively.
\end{theorem}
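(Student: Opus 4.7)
The plan is to run a deterministic local-search / regret-minimization algorithm derived from the proof of Theorem~\ref{t:zero-one} (equivalently, from the AZLSW20 swap algorithm underlying Theorem~\ref{t:swap}). Because that algorithm manipulates an actual subset $F \subseteq E$ of size $\tilde m$ throughout, the output automatically has no parallel edges. The key preparation is to cast the asymmetric displayed inequality as two simultaneous one-sided spectral upper bounds. Writing $L_H^+ := D_H + A_H = 2D_H - L_H$ for the signless Laplacian of a graph $H$, a short calculation gives
\[
\tfrac{2m}{\tilde m}D_{\tilde G}-2D_G-\eps d I \;\preceq\; \tfrac{m}{\tilde m}L_{\tilde G}-L_G
\quad\Longleftrightarrow\quad
\tfrac{m}{\tilde m}L_{\tilde G}^+ - L_G^+ \;\preceq\; \eps d I,
\]
so the theorem is equivalent to finding $F \subseteq E$ of size $\tilde m$ satisfying both $\sum_{e \in F} b_e b_e^T \preceq \tfrac{\tilde m}{m}(L_G + \eps d I_n)$ and $\sum_{e \in F} B_e B_e^T \preceq \tfrac{\tilde m}{m}(L_G^+ + \eps d I_n)$, where $B_e := e_u + e_v$ for an edge $e=uv$. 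Stacking into a block-diagonal condition, define the rank-two PSD matrix $M_e := b_e b_e^T \oplus B_e B_e^T \in \R^{2n\times 2n}$ with $\|M_e\|_{\rm op}\le 2$, and set $M := \sum_{e \in E} M_e = L_G \oplus L_G^+$; both upper bounds collapse into the single spectral condition $\sum_{e \in F} M_e \preceq \tfrac{\tilde m}{m}(M + \eps d I_{2n})$.

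Next I would run the deterministic local-search procedure adapted to control the largest eigenvalue rather than the smallest. Starting from an arbitrary subset $F_0\subseteq E$ of size $\tilde m$, at each iteration $t$ I track the excess $\sum_{e \in F_t} M_e - \tfrac{\tilde m}{m} M$ through the analogue of the $\ell_{1/2}$-regularizer potential matrix $A_t$ from~\eqref{e:closed-form}, identify a pair $(i \in F_t,\, j \notin F_t)$ satisfying the swap criterion dual to~\eqref{eq:progress}, and perform the swap. By the analysis used in the proof of Theorem~\ref{t:zero-one}, each swap decreases the potential by $\Omega(\eps/\tilde m)$, so $O(\tilde m / \eps) = O(n/\eps^3)$ iterations suffice when $\tilde m = \Theta(n/\eps^2)$. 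Since every $F_t$ remains a true subset of $E$, so does the output.

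The main obstacle is verifying that the per-swap progress argument behind~\eqref{eq:progress} still yields $\Omega(\eps/\tilde m)$ when (i) the PSD contributions $M_e$ have rank two instead of one, (ii) the target is $\tfrac{\tilde m}{m} M$ rather than $I_{2n}$, and (iii) the tracked quantity is $\lambda_{\max}$ rather than $\lambda_{\min}$. I would resolve these by observing that the $\ell_{1/2}$-regularizer analysis depends on each $M_e$ only through its operator norm $\|M_e\|_{\rm op}\le 2$ and its inner product with $A_t$ (so it is rank-insensitive), by recentering the analysis around the fractional excess $M_e - \tfrac{\tilde m}{m} M \cdot \mathbf 1[e \in F_t]$ so that the non-identity target enters only as an additive shift, and by invoking the standard duality between the $\ell_{1/2}$ lower-bound and $\ell_{-1/2}$ upper-bound regret-minimization analyses. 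The factor $d$ in the final additive slack $\eps d I$ is then exactly the scaling produced by $\|L_G\|_{\rm op},\|L_G^+\|_{\rm op}\le 2d$ combined with $\|M_e\|_{\rm op}\le 2$; everything else in the AZLSW20-style analysis passes through unchanged.
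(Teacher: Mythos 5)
Your reformulation is exactly the one the paper uses: rewriting the left-hand inequality as $\frac{m}{\tilde m}L_{\tilde G}^+-L_G^+\preceq \eps dI$ via $L^+=2D-L$ and stacking the Laplacian and signless Laplacian into a block-diagonal $2n\times 2n$ condition is the correct reduction, and the restriction to a genuine subset $F\subseteq E$ is indeed what distinguishes this theorem from Theorem~\ref{t:BST-deterministic}. The gap is in the algorithmic core. You propose a swap-based local search whose per-step feedback is a difference $M_{j}-M_{i}$ of two rank-two PSD matrices, tracked against $\lambda_{\max}$; but the only swap regret bound available (Theorem~\ref{t:regret-rank-two}, i.e.\ Lemmas~2.5 and~2.7 of~\cite{AZLSW20}) is proved specifically for feedback of the form $v_{j}v_{j}^T-v_{i}v_{i}^T$ with rank-one summands, and its proof exploits that structure; the general-rank bound (Theorem~\ref{t:regret-psd-nsd}) requires each $F_t$ to be one-signed, which a swap feedback is not. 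Your claim that the analysis ``depends on each $M_e$ only through its operator norm'' and that a ``standard duality'' converts the $\lambda_{\min}$ analysis into a $\lambda_{\max}$ one is precisely the content that would need to be proved, and it is not available off the shelf; the paper itself emphasizes elsewhere that escaping the rank-one hypothesis is nontrivial.

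The paper's actual proof sidesteps all of this by exploiting that the target error is \emph{additive} ($\eps dI$, not a relative bound): it runs a pure greedy \emph{addition} algorithm with feedback matrices
\[
F_t=\begin{pmatrix} L_G-mL_{e_t} & \\ & L_G^+-mL_{e_t}^+\end{pmatrix}-2dI_{2n}\preceq 0,
\]
so the rank-insensitive NSD case of Theorem~\ref{t:regret-psd-nsd} applies directly, and no swaps, no $\lambda_{\max}$ dual, and no whitening against $L_G$ are needed. The only new ingredient beyond~\cite{BST19} is Lemma~\ref{l:select}: when the greedy choice is restricted to $e_t\in E\setminus S_{t-1}$ (which is what guarantees $F$ is a subset rather than a multiset), an averaging argument combined with the co-spectrality bound of Lemma~\ref{l:cospectral} shows some admissible edge still has $\langle A_t,\cdot\rangle\geq -2\sqrt n/(\alpha m)=-O(\eps)d$. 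If you want to salvage your route, the honest to-do list is a swap regret bound for differences of rank-two PSD matrices together with an upper-barrier analogue of Lemma~\ref{l:cospectral}; neither is in the paper or its references, so as written your argument does not close.
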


\section{Preliminaries}

We review some basic linear algebra and spectral graph theory in Section~\ref{s:algebra} and Section~\ref{s:graphs}.
Then we review the regret minimization framework for one-sided spectral rounding in Section~\ref{s:regret-min}, and state some concentration inequalities for the analysis of our randomized algorithm in Section~\ref{s:martingale}.

\subsection{Linear Algebra} \label{s:algebra}

We write $\R$ and $\R_+$ as the sets of real numbers and non-negative real numbers, and $\Z$ and $\Z_+$ as the sets of integers and non-negative integers.

All the vectors in this paper only have real entries.
Let $\R^n$ denote the $n$-dimensional Euclidean space. 
We write $\vec{1}_n$ as the $n$-dimensional all-one vector.
Given a vector $x$, we write $\norm{x}$ as its $\ell_2$-norm, $\norm{x}_1$ as its $\ell_1$-norm, and $\norm{x}_\infty$ as its $\ell_\infty$-norm. 
A vector $v \in \R^n$ is a column vector, and its transpose is denoted by $v^T$.
Given two vectors $x,y \in \R^n$, the inner product is defined as $\inner{x}{y} := \sum_{i=1}^n x_i y_i$.
The Cauchy-Schwarz inequality says that $\inner{x}{y} \leq \norm{x} \norm{y}$.

We write $I_n$ as the $n \times n$ identity matrix, and $J_n$ as the $n \times n$ all-one matrix.
All matrices considered in this paper are real symmetric matrices.
It is a fundamental result that any $n \times n$ real symmetric matrix has $n$ real eigenvalues $\lambda_1 \leq \ldots \leq \lambda_n$ and an orthonormal basis of eigenvectors.
We write $\lambda_{\max}(M)$ and $\lambda_{\min}(M)$ as the maximum and the minimum eigenvalue of a matrix $M$.
The trace of a matrix $M$, denoted by $\tr(M)$, is defined as the sum of the diagonal entries of $M$.
It is well-known that $\tr(M) = \sum_{i=1}^n \lambda_i(M)$ where $\lambda_i(M)$ denotes the $i$-th eigenvalue of $M$.

A matrix $M$ is a positive semidefinite (PSD) matrix, denoted as $M \succeq 0$, if $M$ is symmetric and all the eigenvalues are nonnegative, or equivalently, the quadratic form $x^T M x \geq 0$ for any vector $x$. 
We use $A \succeq B$ to denote $A - B \succeq 0$ for matrices $A$ and $B$. 
We write $\mathbb{S}^n_+$ as the set of all $n$-dimensional PSD matrices.
Let $M \succeq 0$ be a PSD matrix with eigendecomposition $M = \sum_i \lambda_i v_i v_i^T$, where $\lambda_i \geq 0$ is the $i$-th eigenvalue and $v_i$ is the corresponding eigenvector. 
The square root of $M$ is $M^{1/2}:= \sum_i \sqrt{\lambda_i} v_i v_i^T$.

Given two matrices $A$ and $B$ of the same size, the Frobenius inner product of $A, B$ is denoted as $\langle A, B \rangle := \sum_{i,j} A_{ij} B_{ij} = \tr(A^T B)$. The following are two standard facts
\[
A, B \succeq 0 \quad \Longrightarrow \quad \langle A, B \rangle \geq 0 \qquad \text{and} \qquad A \succeq 0, B \succeq C \succeq 0 \quad \Longrightarrow \quad \langle A, B \rangle \geq \langle A, C \rangle.
\]

We write $\norm{M}_{\rm op} := \max_{\norm{x}=1} \norm{Mx}$ as the operator norm of a matrix $M$. 
For symmetric matrices, the operator norm is just the largest absolute value of its eigenvalues.
For positive semidefinite matrices, the operator norm is just its largest eigenvalue.

\subsection{Graphs and Laplacian Matrices} \label{s:graphs}

Let $G=(V,E)$ be an undirected graph with edge weight $x_e\geq 0$ on each edge $e \in E$.
The number of vertices and the number of edges are denoted by $n:=|V|$ and $m:=|E|$.
For a subset of edges $F \subseteq E$, the total weight of edges in $F$ is $x(F) := \sum_{e \in F} x_e$.
For a subset of vertices $S \subseteq V$, the set of edges with one endpoint in $S$ and one endpoint in $V-S$ is denoted by $\delta(S)$.
For a vertex $v$, the set of edges incident on a vertex $v$ is $\delta(v):=\delta(\{v\})$, and the weighted degree of $v$ is $\deg(v) := x(\delta(v))$.
The expansion of a set $\phi(S) := |\delta(S)|/|S|$ is defined as the ratio of the number of edges on the boundary of $S$ to the size of $S$. The expansion of a graph $G$ is defined as $\phi(G) := \min_{0 \leq |S| \leq \frac{n}{2}} \phi(S)$.

The adjacency matrix $A \in \R^{n \times n}$ of the graph is defined as $A_{u,v} = x_{u,v}$ for all $u,v \in V$.
The Laplacian matrix $L \in \R^{n \times n}$ of the graph is defined as $L = D - A$ where $D \in \R^{n \times n}$ is the diagonal degree matrix with $D_{u,u} = \deg(u)$ for all $u \in V$. Similarly, the signless Laplacian matrix $L^+ \in \R^{n \times n}$ is defined as $L^+ = D + A$.
For each edge $e = uv \in E$, let $b_e := \chi_u - \chi_v$ where $\chi_u \in \R^n$ is the vector with one in the $u$-th entry and zero otherwise. 
The Laplacian matrix with respect to weights $x$ can be written as 
\[
L_x := \sum_{e \in E} x_e b_e b_e^T.
\]
Let $\lambda_1 \leq \lambda_2 \leq \ldots \leq \lambda_n$ be the eigenvalues of $L$ with corresponding orthonormal eigenvectors $v_1, v_2, \ldots, v_n$
so that $L = \sum_{i=1}^n \lambda_i v_i v_i^T$.
It is well-known that the Laplacian matrix is positive semidefinite, 
$\lambda_1=0$ with $v_1 = \vec{1}/\sqrt{n}$ as the corresponding eigenvector,
and $\lambda_2 > 0$ if and only if $G$ is connected.
The following fact is useful for eigenvalue maximization.
\begin{fact}[\cite{GB06}]
$\lambda_2(L_x)$ is a concave function with respect to $x$ for $x \geq 0$.
\end{fact}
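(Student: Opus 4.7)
The plan is to use the variational (Courant--Fischer) characterization of $\lambda_2$ together with the fact that for any Laplacian matrix the all-ones vector is automatically in the kernel. First I would observe that $L_x = \sum_{e \in E} x_e b_e b_e^T$ is a linear function of $x$, so for any fixed vector $y \in \R^n$ the quadratic form $y^T L_x y = \sum_{e \in E} x_e (y^T b_e)^2$ is a linear (in particular affine) function of $x$.

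Next I would use the fact that $L_x \vec{1} = 0$ for every $x \geq 0$, so $\vec{1}$ always lies in the kernel of $L_x$ and is an eigenvector of eigenvalue $0$, which equals $\lambda_1(L_x)$ since $L_x$ is positive semidefinite. By the Courant--Fischer min-max characterization applied to the symmetric matrix $L_x$, the second smallest eigenvalue can be written as
\begin{equation*}
\lambda_2(L_x) = \min_{\substack{y \in \R^n,\ \|y\|=1 \\ y \perp \vec{1}}} y^T L_x y.
\end{equation*}
The minimization set on the right (the unit sphere in the hyperplane $\vec{1}^\perp$) does not depend on $x$, and for each fixed $y$ in this set the function $x \mapsto y^T L_x y$ is linear in $x$. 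Thus $\lambda_2(L_x)$ is expressed as the pointwise infimum of a family of linear (hence concave) functions of $x$, and such an infimum is concave.

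The argument works uniformly on the domain $x \geq 0$; no special care is needed at points where the multiplicity of $\lambda_2$ jumps, because concavity follows directly from the variational formula without differentiating. The only subtlety to double check is that $\lambda_1(L_x) = 0$ really is attained by $\vec{1}$ for all $x \geq 0$ (so that restricting to $\vec{1}^\perp$ indeed yields $\lambda_2$ and not some other eigenvalue), which holds because $b_e^T \vec{1} = 0$ for every edge $e$ and $L_x \succeq 0$. This is a routine verification rather than a genuine obstacle, so I would expect the proof to be essentially the three-line argument above.
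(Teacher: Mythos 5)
Your proof is correct. The paper states this fact as a citation to~\cite{GB06} without giving a proof, and your argument --- writing $\lambda_2(L_x)=\min\{y^T L_x y : \|y\|=1,\ y\perp \vec{1}\}$ as a pointwise minimum over an $x$-independent set of functions that are linear in $x$, using that $\vec{1}$ is always a $0$-eigenvector and $L_x\succeq 0$ for $x\geq 0$ --- is exactly the standard variational argument used in that reference.
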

The pseudo-inverse of the Laplacian matrix $L$ of a connected graph is defined as 
\[L^\dagger = \sum_{i=2}^n \frac{1}{\lambda_i} v_i v_i^T,\]
which maps every vector $b$ orthogonal to $v_1$ to a vector $y$ such that $Ly=b$.
The effective resistance between two vertices $s$ and $t$ on a graph $G$ with weight $x$ is defined as
\[
\Reff_x(s,t) := b_{st}^T L_x^{\dagger} b_{st}.
\]
We will use the following fact for the formulation of the convex programming relaxation in \eqref{eq:P-intro}.
\begin{fact}[\cite{GBS08}]
$\Reff_x(s,t)$ is a convex function with respect to the weights $x$ for $x \geq 0$.
\end{fact}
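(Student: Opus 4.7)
The plan is to express $\Reff_x(s,t)$ as a pointwise supremum of affine functions of $x$, which is automatically convex. The key tool is the standard variational characterization of quadratic forms involving a pseudo-inverse: for any PSD matrix $M$ and any $b$ in the range of $M$, we have
\[
b^T M^\dagger b = \sup_{\phi} \bigl( 2\, b^T \phi - \phi^T M \phi \bigr),
\]
where the supremum is over $\phi$ in the ambient Euclidean space. I would first recall/derive this identity by differentiating the concave quadratic in $\phi$: the first-order condition is $M\phi = b$, which is solvable precisely because $b \in \range(M)$, and substituting back yields $b^T M^\dagger b$ (using $M M^\dagger b = b$ for such $b$). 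For vectors $b \notin \range(M)$ the right-hand side equals $+\infty$, so the identity extends consistently with the convention $b^T M^\dagger b = +\infty$ in that case.

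Next I would specialize to $M = L_x = \sum_{e} x_e b_e b_e^T$ and $b = b_{st} = \chi_s - \chi_t$. Using $\phi^T L_x \phi = \sum_{e=uv} x_e (\phi_u - \phi_v)^2$ and $b_{st}^T \phi = \phi_s - \phi_t$, the above variational formula becomes
\[
\Reff_x(s,t) \;=\; \sup_{\phi \in \R^V} \Biggl\{ 2(\phi_s - \phi_t) - \sum_{e = uv \in E} x_e (\phi_u - \phi_v)^2 \Biggr\}.
\]
For every fixed $\phi$, the inner expression is an affine function of $x$ (linear in each $x_e$ plus the constant $2(\phi_s - \phi_t)$). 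A pointwise supremum of affine functions is convex on the domain where it is finite, so $\Reff_x(s,t)$ is convex in $x$ on $\R_+^E$.

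The one subtlety I want to handle cleanly is the boundary behavior of the pseudo-inverse: $b_{st} \in \range(L_x)$ exactly when $s$ and $t$ lie in the same connected component of the support of $x$. When they do not, the natural convention is $\Reff_x(s,t) = +\infty$, which is exactly the value the supremum formula gives (choose $\phi$ constant on each component with $\phi_s - \phi_t$ arbitrarily large, making the quadratic term vanish while the linear term diverges). So the extended-real-valued function $\Reff_x(s,t) : \R_+^E \to [0, +\infty]$ is convex, which is the statement needed for the convex programming relaxation~\eqref{eq:P-intro}. I do not expect a real obstacle here; the only care needed is in invoking the variational identity cleanly over the correct range, and the convexity then falls out formally.
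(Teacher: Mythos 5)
Your proof is correct: the variational identity $b^T M^\dagger b = \sup_{\phi}\bigl(2b^T\phi - \phi^T M \phi\bigr)$ is valid for PSD $M$ with $b \in \range(M)$, the inner expression is affine in $x$ once $M = L_x$, and the boundary/disconnected case is handled consistently via the extended-real convention. The paper states this fact without proof, citing~\cite{GBS08}, and your argument is exactly the standard Dirichlet-principle derivation used there, so nothing further is needed.
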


\subsection{Regret Minimization and Spectral Rounding} \label{s:regret-min}

We use the regret minimization framework developed by Allen-Zhu, Liao and Orecchia for spectral sparsification~\cite{AZLO15} and present the results in~\cite{AZLO15,AZLSW20}.
This is an online optimization setting.
In each iteration $t$, the player chooses an action matrix $A_t$ from the set of density matrices $\Delta_n = \{A \in \R^{n \times n} \mid A \succeq 0, \tr(A)=1\}$.
We can intrepret the player action as choosing a probability distribution over the set of unit vectors.
The player then observes a feedback matrix $F_t$ and incurs a loss of $\inner{A_t}{F_t}$.
After $\tau$ iterations, the regret of the player is defined as 
\[
R_\tau := \sum_{t=1}^\tau \inner{A_t}{F_t} - \inf_{B \in \Delta_n} \sum_{t=1}^\tau \inner{B}{F_t} 
= \sum_{t=1}^\tau \inner{A_t}{F_t} - \lambda_{\min}\Bigg(\sum_{t=1}^\tau F_t\Bigg),
\]
which is the difference between the loss of the player actions and the loss of the best fixed action $B$, that can be assumed to be a rank one matrix $vv^T$.
The objective of the player is to minimize the regret.
A well-known algorithm for regret minimization is Follow-The-Regularized-Leader which plays the action
\[A_t = \argmin_{A \in \Delta_n} \left\{ w(A) + \alpha \cdot \sum_{l=0}^{t-1} \inner{A}{F_l} \right\},
\]
where $w(A)$ is a regularization term and $\alpha$ is a parameter called the learning rate that balances the loss and the regularization. Note that $F_0$ is an initial feedback which is given before the game started.
Different choice of regularization gives different algorithm for regret minimization.
One choice is the entropy regularizer $w(A) = \inner{A}{\log A - I}$ and this gives the well-known matrix multiplicative update algorithm.
The choice that we will use is the $\ell_{1/2}$-regularizer $w(A) = -2\tr(A^{1/2})$ introduced in~\cite{AZLO15}, which plays the action
\begin{equation} \label{e:closed-form}
A_t = \left( l_t I + \alpha \sum_{l=0}^{t-1} F_l \right)^{-2},
\end{equation}
where $l_t$ is the unique constant that ensures $A_t \in \Delta_n$.
Allen-Zhu, Liao and Orecchia~\cite{AZLO15} prove upper bounds on the regret of this algorithm for positive or negative semidefinite feedback matrices.

\begin{theorem}[Theorem 3.2 and 3.3 in \cite{AZLO15}] \label{t:regret-psd-nsd}
Suppose $F_0 = 0$ and each feedback matrix $F_t \in \R^{n \times n}$ is either a positive or negative semidefinite matrix with $\alpha A_t^{1/4} F_t A_t^{1/4} \succcurlyeq - \frac14 I$ for all $t \geq 1$, and the action matrix $A_t \in \R^{n \times n}$ is of the form in~(\ref{e:closed-form}) for some $\alpha > 0$.
Then
\[
R_\tau \leq O(\alpha) \sum_{t=1}^\tau \inner{A_t}{|F_t|} \cdot \left\| A_t^{1/4} F_t A_t^{1/4} \right\|_{\rm op} + \frac{2\sqrt{n}}{\alpha}.
\]
When each feedback matrix $F_t$ is of the form $u_t u_t^T$ for some $u_t \in \R^n$ for all $t \geq 1$, it holds that
\begin{equation} \label{e:regret-rank-one}
\lambda_{\min}\left(\sum_{t=1}^\tau u_t u_t^T\right) 
\geq \sum_{t=1}^\tau \frac{\inner{u_t u_t^T}{A_t}}{1+\alpha \inner{u_t u_t^T}{A_t^{1/2}}} - \frac{2\sqrt{n}}{\alpha}.
\end{equation}
\end{theorem}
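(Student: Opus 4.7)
The plan is to analyze the Follow-The-Regularized-Leader algorithm with the $\ell_{1/2}$-regularizer $w(A)=-2\tr(A^{1/2})$ on the density simplex $\Delta_n$ via the standard ``be-the-leader'' telescoping argument. The regret of any FTRL scheme decomposes into a \emph{range} term bounded by $(\sup_{B\in\Delta_n} w(B)-\inf_{B\in\Delta_n} w(B))/\alpha$ plus a \emph{stability} term of the form $\sum_{t=1}^{\tau}\inner{A_t-A_{t+1}}{F_t}$. For $w(A)=-2\tr(A^{1/2})$, since the eigenvalues of $A\in\Delta_n$ sum to $1$ and $\sqrt{\cdot}$ is concave, $\tr(A^{1/2})$ ranges between $1$ (rank-one $A$) and $\sqrt{n}$ (at $A=I/n$), so the range contribution is at most $2\sqrt{n}/\alpha$, giving the additive term in the bound.

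Next I would quantify the stability term using the closed-form expression $A_t=(l_t I+\alpha\sum_{l<t} F_l)^{-2}$ from (2.5). Writing $B_t := l_tI+\alpha\sum_{l<t}F_l=A_t^{-1/2}$, the one-step update is $B_{t+1}=B_t+\alpha F_t + (l_{t+1}-l_t)I$, and one must expand $A_{t+1}=B_{t+1}^{-2}$ around $A_t=B_t^{-2}$ to first order in $\alpha F_t$. The key matrix identity is the integral representation $B^{-2}=\int_0^\infty e^{-sB^2}\,ds\cdot\text{(}\ldots\text{)}$ (or equivalently the Daleckii--Krein formula for $f(B)=B^{-2}$), which after careful bookkeeping yields the per-step increment $\inner{A_t-A_{t+1}}{F_t}\le O(\alpha)\inner{A_t}{|F_t|}\cdot\|A_t^{1/4}F_tA_t^{1/4}\|_{\rm op}$; the hypothesis $\alpha A_t^{1/4}F_tA_t^{1/4}\succeq -\tfrac14 I$ is precisely what keeps $B_{t+1}\succ 0$ and lets us linearize the square-root cleanly. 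Summing these contributions yields the first inequality of the theorem.

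For the rank-one specialization $F_t=u_tu_t^T$, I would plug the structure directly into the stability bound. Since $A_t^{1/4}u_tu_t^TA_t^{1/4}$ is rank one and PSD, its operator norm equals its trace, namely $u_t^{T}A_t^{1/2}u_t=\inner{u_tu_t^T}{A_t^{1/2}}$, and $\inner{A_t}{|F_t|}=\inner{u_tu_t^T}{A_t}$. The regret bound then reads
\[
\sum_{t=1}^{\tau}\inner{u_tu_t^T}{A_t}-\lambda_{\min}\Big(\sum_{t=1}^{\tau}u_tu_t^T\Big)\le c\alpha\sum_{t=1}^{\tau}\inner{u_tu_t^T}{A_t}\cdot\inner{u_tu_t^T}{A_t^{1/2}}+\frac{2\sqrt n}{\alpha}
\]
for an explicit constant $c$. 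Rearranging and using the elementary identity $(1-c\alpha y)(1+\alpha y)\ge 1$ for the range of $y$ forced by the hypothesis (so that $(1-c\alpha y)\ge 1/(1+\alpha y)$) transforms each term into the claimed denominator form $\inner{u_tu_t^T}{A_t}/(1+\alpha\inner{u_tu_t^T}{A_t^{1/2}})$, giving (2.6).

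I expect two main obstacles. The first is the non-commutative matrix calculus: unlike $\tr(A\log A)$, the function $\tr(A^{1/2})$ has no product-rule style derivative because $A$ and $A^{1/2}$ do not generally commute with perturbations, so one must invoke operator-monotonicity of $t\mapsto\sqrt t$ (or the Ando/Lieb machinery) to control the second-order expansion; getting the sharp constant $O(\alpha)$ rather than something depending on $\|A_t\|$ is the delicate point. The second obstacle is tightening the constant $c$ in the rank-one per-step estimate to exactly $1$, so that the denominator-form rearrangement is valid without losing constants; this typically requires redoing the telescoping argument directly for rank-one updates instead of quoting the general PSD bound as a black box.
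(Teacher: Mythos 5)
This theorem is quoted verbatim from Allen-Zhu, Liao and Orecchia (Theorems 3.2 and 3.3 of \cite{AZLO15}); the paper you are reading gives no proof of it, so the only comparison available is with the proof in that reference, and your sketch does follow the same route: the Follow-The-Regularized-Leader telescoping decomposition into a range term and a stability term, the computation that $\tr(A^{1/2})$ ranges over $[1,\sqrt n]$ on $\Delta_n$ (yielding the additive $2\sqrt n/\alpha$), and a perturbation analysis of $A_{t+1}=B_{t+1}^{-2}$ around $A_t=B_t^{-2}$ for the stability term. Two remarks. First, the heart of the matter --- the per-step bound $\inner{A_t-A_{t+1}}{F_t}\le O(\alpha)\inner{A_t}{|F_t|}\cdot\|A_t^{1/4}F_tA_t^{1/4}\|_{\rm op}$ under the hypothesis $\alpha A_t^{1/4}F_tA_t^{1/4}\succeq-\tfrac14 I$ --- is only gestured at in your write-up; this is where essentially all of the non-commutative work lives in \cite{AZLO15}, and an ``integral representation after careful bookkeeping'' is not yet a proof. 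Second, you are right that inequality \eqref{e:regret-rank-one} cannot be recovered from the general PSD bound: the rearrangement you propose needs $(1-c\alpha y)(1+\alpha y)\ge 1$, which forces $c(1+\alpha y)\le 1$ and fails for any constant $c\ge 1$. The denominator $1+\alpha\inner{u_tu_t^T}{A_t^{1/2}}$ comes out exactly only from a direct rank-one computation (Sherman--Morrison applied to $(A_t^{-1/2}+\alpha u_tu_t^T)^{-1}$, where $u_t^TA_t^{1/2}u_t$ appears naturally), which is how \cite{AZLO15} proves Theorem 3.3 separately from Theorem 3.2. So your plan is sound and correctly self-diagnoses its weak point, but as written it establishes neither half of the statement; both halves require the detailed matrix analysis you defer.
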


For one-sided spectral rounding, the goal is to choose a subset $S$ of vectors to maximize $\lambda_{\min}(\sum_{i \in S} v_i v_i^T)$.
Using this regret minimization framework,
the second part of Theorem~\ref{t:regret-psd-nsd} reduces this problem to the simpler task of finding a vector $u_t$ that maximizes $\inner{u_t u_t^T}{A_t} / (1+\alpha \inner{u_t u_t^T}{A_t^{1/2}})$.
Using the condition that $\sum_{i=1}^m x_i v_i v_i^T = I_n$ and $\sum_{i=1}^m x_i = k$, it can be shown~\cite{AZLSW17c} that there is always a vector $v_j$ with $\inner{v_j v_j^T}{A_t} / (1+\alpha \inner{v_j v_j^T}{A_t^{1/2}}) \geq 1/(k + \alpha \sqrt{n})$.
Setting $\alpha = \sqrt{n}/\eps$ and $\tau=k$ and using the assumption that $k \geq n/\eps^2$, this gives $\lambda_{\min}(\sum_{t=1}^k u_t u_t^T) \geq 1-3\eps$ and proves Theorem~\ref{t:swap} in the easier setting when a vector can be chosen more than once (i.e.~the with repetition setting in experimental design).
This greedy algorithm can be extended to the more difficult setting when every vector can be chosen at most once, but only achieving a $\Theta(1)$-approximation~\cite{AZLSW17c}.

To prove Theorem~\ref{t:swap} when the output must be a zero-one solution, Allen-Zhu, Li, Singh and Wang~\cite{AZLSW20} derived the following regret minimization bound for rank two feedback matrices.

\begin{theorem}[Lemma 2.5 and 2.7 in~\cite{AZLSW20}] \label{t:regret-rank-two}
Suppose the action matrix $A_t \in \R^{n \times n}$ is of the form in \eqref{e:closed-form} for some $\alpha > 0$. 
Suppose the initial feedback matrix $F_0 \in {\mathbb S}^n$ is a symmetric matrix, and for all $t \geq 1$ each feedback matrix $F_t$ is of the form $v_{j_t} v_{j_t}^T - v_{i_t} v_{i_t}^T$ for some $v_{j_t}, v_{i_t} \in \R^n$ such that $\alpha \inner{v_{i_t} v_{i_t}^T}{A_t^{1/2}} < \frac12$, then
\begin{equation*} 
\lambda_{\min}\left( \sum_{t=0}^\tau F_t \right) 
\geq \sum_{t=1}^\tau \left( 
\frac{\inner{v_{j_t} v_{j_t}^T}{A_t} }{1 + 2\alpha \inner{v_{j_t} v_{j_t}^T}{A_t^{1/2}} } 
- \frac{\inner{v_{i_t} v_{i_t}^T }{A_t}}{1 - 2\alpha \inner{v_{i_t} v_{i_t}^T}{A_t^{1/2}}} \right)
- \frac{2\sqrt{n}}{\alpha}.
\end{equation*}
\end{theorem}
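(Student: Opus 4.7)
The plan is to reduce the rank-two case to the rank-one bound \eqref{e:regret-rank-one} in Theorem~\ref{t:regret-psd-nsd} by splitting each rank-two feedback $F_t = v_{j_t} v_{j_t}^T - v_{i_t} v_{i_t}^T$ into a PSD update $v_{j_t} v_{j_t}^T$ followed by an NSD update $-v_{i_t} v_{i_t}^T$. Concretely, I would consider an auxiliary FTRL game with $2\tau + 1$ rounds that inserts these two rank-one components as intermediate iterations; the cumulative feedback at every integer time matches the original game, so at the positive sub-round the auxiliary action equals $A_t$, while at the negative sub-round it is a rank-one perturbation of $A_t$. Applying Theorem~\ref{t:regret-psd-nsd} to this split sequence and then re-expressing the auxiliary actions in terms of the original $A_t$ is what will produce the claimed bound.

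The main computation is to track how the closed form \eqref{e:closed-form} changes under a rank-one perturbation of the cumulative feedback. Using Sherman--Morrison on $A_t^{-1/2} = l_t I + \alpha \sum_{l < t} F_l$ together with the matching adjustment of the Lagrange multiplier $l_t$ that preserves $\tr(A) = 1$, a second-order expansion of the perturbed action yields a correction factor $1 \pm 2\alpha \inner{v v^T}{A_t^{1/2}}$ in the denominator of the per-iteration loss. The factor of $2$, as opposed to the $1$ appearing in \eqref{e:regret-rank-one}, arises because the regularizer $w(A) = -2\tr(A^{1/2})$ has exponent $1/2$, so a first-order perturbation of $A^{-1/2}$ doubles the first-order sensitivity of $A$ itself. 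Summing the positive contribution $\inner{v_{j_t} v_{j_t}^T}{A_t}/(1 + 2\alpha \inner{v_{j_t} v_{j_t}^T}{A_t^{1/2}})$ and subtracting $\inner{v_{i_t} v_{i_t}^T}{A_t}/(1 - 2\alpha \inner{v_{i_t} v_{i_t}^T}{A_t^{1/2}})$ across $t$, together with the $2\sqrt{n}/\alpha$ regret penalty inherited from Theorem~\ref{t:regret-psd-nsd}, gives the claimed lower bound on $\lambda_{\min}\bigl(\sum_{t=0}^{\tau} F_t\bigr)$. The hypothesis $\alpha \inner{v_{i_t} v_{i_t}^T}{A_t^{1/2}} < 1/2$ is precisely what keeps the second denominator strictly positive so that the second-order expansion remains valid, and the initial feedback $F_0$ enters only implicitly through the action matrices $A_t$, which explains its absence from the right-hand side.

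The main obstacle will be the matrix second-order expansion: carefully tracking the Lagrange multiplier shift needed to maintain $\tr(A_{t+1}) = 1$ after each rank-one perturbation, and showing that the cross terms (both between the multiplier shift and the rank-one update, and between the positive and negative pieces if the rank-two update is treated in one shot) contribute only at lower order. An alternative route that may avoid the splitting entirely is to analyze the rank-two update directly by computing $\tr(A_{t+1}^{1/2}) - \tr(A_t^{1/2})$ from the closed form \eqref{e:closed-form} and factoring out the appropriate correction; the core algebraic challenge is the same, and either route gives a clean reduction to the rank-one argument already present in Theorem~\ref{t:regret-psd-nsd}.
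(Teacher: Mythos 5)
This statement is quoted from \cite{AZLSW20} (Lemmas 2.5 and 2.7 there); the paper you are working from does not prove it, so there is no in-paper proof to compare against. Your high-level strategy --- split each rank-two feedback into a PSD rank-one step followed by an NSD rank-one step, and track how the FTRL action \eqref{e:closed-form} moves under a rank-one perturbation together with the shift of the normalizing constant $l_t$ --- is indeed the strategy of the original proof, and your observations about why $F_0$ may be an arbitrary symmetric matrix and why the condition $\alpha\inner{v_{i_t}v_{i_t}^T}{A_t^{1/2}} < \frac12$ is needed only on the removed vector are correct.

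However, there is a genuine gap at the center of the argument. The claimed reduction to Theorem~\ref{t:regret-psd-nsd} cannot be made black-box: the sharp rank-one bound \eqref{e:regret-rank-one} is stated only for sequences in which \emph{every} feedback is of the form $u_tu_t^T$ (all PSD) and produces the denominator $1+\alpha\inner{\cdot}{A_t^{1/2}}$, not $1\pm 2\alpha\inner{\cdot}{A_t^{1/2}}$; the general bound in the first display has an unspecified $O(\alpha)$ constant and a width hypothesis $\alpha A_t^{1/4}F_tA_t^{1/4}\succeq -\frac14 I$ that Theorem~\ref{t:regret-rank-two} does not assume (for the negative steps it would force $\alpha\inner{v_{i_t}v_{i_t}^T}{A_t^{1/2}}\le\frac14$ rather than $<\frac12$, and it would have to be verified for the \emph{perturbed} intermediate action, not for $A_t$). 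So the per-step inequalities
\[
\inner{v_{j_t}v_{j_t}^T}{\widetilde A_t}\;\geq\;\frac{\inner{v_{j_t}v_{j_t}^T}{A_t}}{1+2\alpha\inner{v_{j_t}v_{j_t}^T}{A_t^{1/2}}}
\qquad\text{and}\qquad
\inner{v_{i_t}v_{i_t}^T}{\widetilde A_t}\;\leq\;\frac{\inner{v_{i_t}v_{i_t}^T}{A_t}}{1-2\alpha\inner{v_{i_t}v_{i_t}^T}{A_t^{1/2}}}
\]
(where $\widetilde A_t$ denotes the relevant intermediate action) must be proved as exact matrix inequalities valid for the stated range of parameters, not obtained from a ``second-order expansion'' whose remainder is uncontrolled. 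Your heuristic for the factor of $2$ --- that the exponent $1/2$ in the regularizer doubles the sensitivity --- is also not where the $2$ comes from; it arises because the closed form $A_t = (l_tI+\alpha\sum_{l<t}F_l)^{-2}$ carries a \emph{square}, so the Sherman--Morrison correction to $A_t^{-1/2}$ enters the quadratic form $v^T\widetilde A_t v$ twice. Until those two inequalities (together with the monotone effect of the $l_t$-shift that restores $\tr(A)=1$) are established, the proposal is a plan rather than a proof.
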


With Theorem~\ref{t:regret-rank-two},
they analyzed a deterministic local search algorithm where they start from an arbitrary solution $S_0$ of $k$ vectors, and in each iteration $t \geq 1$ they find a $j_t \notin S_{t-1}$ that maximizes $\inner{v_jv_j^T}{A_t}/\big(1+2\alpha\inner{v_jv_j^T}{A_t^{1/2}}\big)$ and an $i_t \in S_{t-1}$ that minimizes $\inner{v_iv_i^T}{A_t}/\big(1-2\alpha\inner{v_iv_i^T}{A_t^{1/2}}\big)$ subjecting to the restriction that $2\alpha \inner{v_i v_i^T}{A_t^{1/2}} < 1$,
and define $S_t := S_{t-1} - i_t + j_t$ as the new solution.
Using a delicate argument, they proved that so long as $\lambda_{\min}(\sum_{l \in S_{t-1}} v_l v_l^T) \leq 1 - 3\eps$, the pair $i_t,j_t$ always satisfies
\begin{equation*}
\frac{\inner{v_{j_t}v_{j_t}^T}{A_t}}{1+2\alpha\inner{v_{j_t}v_{j_t}^T}{A_t^{1/2}}} - 
\frac{\inner{v_{i_t}v_{i_t}^T}{A_t}}{1-2\alpha\inner{v_{i_t}v_{i_t}^T}{A_t^{1/2}}} \geq \frac{\eps}{k}.
\end{equation*}
This implies by Theorem~\ref{t:regret-rank-two} that the local search algorithm will succeed in finding a solution $S_t$ with $\lambda_{\min}(\sum_{l \in S_t} v_l v_l^T) \geq 1-3\eps$ within $k/\eps$ iterations.

One technical point used in~\cite{AZLSW17c,AZLSW20} is that the partial solution $Z_{t-1} := \sum_{l=0}^{t-1} F_l$ at time $t$ and the action matrix $A_t$ at time $t$ have the same eigenbasis due to (\ref{e:closed-form}).
This allows one to bound $\inner{Z_{t-1}}{A_t}$ and $\inner{Z_{t-1}}{A_t^{1/2}}$ as follows.

\begin{lemma}[Claim 2.11 in~\cite{AZLSW20}] \label{l:cospectral}
Let $Z \succeq 0$ be an $n \times n$ positive semidefinite matrix
and $A = (\alpha Z + lI)^{-2}$ for some $\alpha>0$ where $l$ is the unique constant such that $A$ is a density matrix.
Then, it holds that
\[
\inner{Z}{A} \leq \frac{\sqrt{n}}{\alpha} + \lambda_{\min}(Z) \qquad \text{and} \qquad \alpha \inner{Z}{A^{1/2}} \leq n + \alpha \sqrt{n} \cdot \lambda_{\min}(Z).
\]
\end{lemma}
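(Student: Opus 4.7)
The plan is to diagonalize everything in the common eigenbasis of $Z$ and $A$, shift out the minimum eigenvalue of $Z$, and then apply Cauchy--Schwarz with the density-matrix constraint $\tr(A)=1$. Concretely, let $\lambda_1 \leq \cdots \leq \lambda_n$ be the eigenvalues of $Z$. Since $A=(\alpha Z + lI)^{-2}$ is a polynomial expression in $Z$, the eigenvalues of $A$ are $(\alpha \lambda_i + l)^{-2}$ in the same basis, and the constraint becomes
\[
\tr(A) \;=\; \sum_{i=1}^n \frac{1}{(\alpha \lambda_i + l)^2} \;=\; 1.
\]
Writing $\mu_i := \lambda_i - \lambda_{\min}(Z) \geq 0$ and $l' := l + \alpha \lambda_{\min}(Z)$, the constraint reads $\sum_i (\alpha\mu_i + l')^{-2} = 1$ with $\mu_i \geq 0$ and (necessarily) $l' > 0$.

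For the first inequality, I decompose $\inner{Z}{A} = \sum_i \lambda_i (\alpha \lambda_i + l)^{-2} = \sum_i \mu_i (\alpha \mu_i + l')^{-2} + \lambda_{\min}(Z) \cdot \tr(A)$. So it suffices to bound $\sum_i \mu_i /(\alpha\mu_i + l')^2$ by $\sqrt{n}/\alpha$. By Cauchy--Schwarz,
\[
\sum_{i=1}^n \frac{\mu_i}{(\alpha\mu_i+l')^2} \;\leq\; \left(\sum_{i=1}^n \frac{1}{(\alpha\mu_i+l')^2}\right)^{\!1/2} \left(\sum_{i=1}^n \frac{\mu_i^2}{(\alpha\mu_i+l')^2}\right)^{\!1/2},
\]
and the first factor equals $\sqrt{\tr(A)}=1$, while the second factor is at most $\sqrt{n}/\alpha$ because $\alpha \mu_i/(\alpha \mu_i + l') \leq 1$ for each $i$. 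This yields $\inner{Z}{A} \leq \sqrt{n}/\alpha + \lambda_{\min}(Z)$.

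For the second inequality, I use $\alpha\inner{Z}{A^{1/2}} = \sum_i \alpha\lambda_i /(\alpha\lambda_i + l)$ and split $\alpha\lambda_i = \alpha\mu_i + \alpha\lambda_{\min}(Z)$ to get
\[
\alpha \inner{Z}{A^{1/2}} \;=\; \sum_{i=1}^n \frac{\alpha\mu_i}{\alpha\mu_i + l'} \;+\; \alpha\lambda_{\min}(Z)\sum_{i=1}^n \frac{1}{\alpha\mu_i + l'}.
\]
The first sum is bounded by $n$ term-by-term. For the second, Cauchy--Schwarz against $\tr(A) = 1$ gives $\sum_i (\alpha\mu_i + l')^{-1} \leq \sqrt{n}\cdot\sqrt{\tr(A)} = \sqrt{n}$, so the whole expression is at most $n + \alpha\sqrt{n}\cdot\lambda_{\min}(Z)$.

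There is no serious obstacle here: both inequalities follow by the same two-step recipe (shift and Cauchy--Schwarz). The only point worth checking is that $l'>0$, which holds because $l$ must keep $\alpha Z + lI$ positive definite for $A$ to be well-defined and sum to a finite trace, and this ensures the denominators are strictly positive so the Cauchy--Schwarz applications are legitimate.
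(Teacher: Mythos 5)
Your proof is correct, and since the paper states this lemma without proof (citing Claim 2.11 of~\cite{AZLSW20}), the relevant comparison is to the standard argument, which yours essentially is: diagonalize $Z$ and $A$ in their common eigenbasis, use $\tr(A)=1$, and apply Cauchy--Schwarz (equivalently $\tr(A^{1/2})\le\sqrt{n}$), with the observation $l'=l+\alpha\lambda_{\min}(Z)>0$ correctly justified by the choice of $l$ making $\alpha Z+lI\succ 0$. The only cosmetic difference is that the usual write-up evaluates the sums exactly via the identity $\alpha\lambda_i=(\alpha\lambda_i+l)-l$ (as the paper itself does inside the proof of Lemma~\ref{l:select}), whereas you shift by $\lambda_{\min}(Z)$ first and bound the shifted sums; both routes are equally short and valid.
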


This lemma will be used in constructing a zero-one solution for Theorem~\ref{t:zero-one}.
It will also be used in strengthening the result in~\cite{BST19} to guarantee that the unweighted additive spectral sparsifier returned by the regret minimization algorithm has no parallel edges.

\subsection{Martingale and Concentration Inequalities} \label{s:martingale}

A sequence of random variables $Y_1, \ldots, Y_\tau$ is a martingale with respect to a sequence of random variables $Z_1, \ldots, Z_\tau$ if for all $t > 0$, it holds that 
\begin{enumerate}
\item $Y_t$ is a function of $Z_1, \ldots, Z_{t-1}$;
\item $\E[|Y_t|] < \infty$; 
\item $\E[Y_{t+1} | Z_1, \ldots, Z_t] = Y_t$. 
\end{enumerate}

We will use the following theorem by Freedman to bound the probability that $Y_\tau$ is large.

\begin{theorem}[\cite{Fre75,Tro11}] \label{t:Freedman}
Let $\{Y_t\}_t$ be a real-valued martingale with respect to $\{Z_t\}_t$,
and $\{X_t=Y_t - Y_{t-1}\}_t$ be the difference sequence. 
Assume that $X_t \leq R$ deterministically for $1 \leq t \leq \tau$. 
Let $W_t := \sum_{j=1}^t \E[X_j^2 | Z_1, ..., Z_{j-1}]$ for $1 \leq t \leq \tau$.
Then, for all $\delta \geq 0$ and $\sigma^2 > 0$,
\[
\Pr\left( \exists t \in [\tau]: Y_t \geq \delta~{\rm and}~W_t \leq \sigma^2 \right) \leq \exp\left( \frac{-\delta^2/2}{\sigma^2 + R\delta/3} \right).
\]
\end{theorem}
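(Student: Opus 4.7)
The plan is to prove this by the exponential method (the standard route to Bernstein-type concentration), combined with an optional-stopping argument tailored to make the bound uniform over all $t \in [\tau]$. The reason this approach is natural is that the conclusion involves both the martingale deviation $Y_t \geq \delta$ and the conditional-variance proxy $W_t \leq \sigma^2$ simultaneously, so we need to construct a single scalar supermartingale that mixes these two quantities and then bound its probability of being large by Markov's inequality.

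Concretely, I would first fix a parameter $\theta \in (0, 3/R)$ (to be optimized later) and define
\[
g(\theta) := \frac{e^{\theta R} - 1 - \theta R}{R^2}, \qquad S_t := \exp\bigl(\theta Y_t - g(\theta)\, W_t\bigr),
\]
with $S_0 = \exp(\theta Y_0)$. The key analytic inequality is that for any real $x \leq R$ one has $e^{\theta x} \leq 1 + \theta x + g(\theta)\, x^2$, which follows because the function $h(x) = (e^{\theta x} - 1 - \theta x)/x^2$ is nondecreasing in $x$. Using this pointwise on the increments $X_{t+1}$ (which satisfy $X_{t+1} \leq R$ almost surely) and taking conditional expectation given $\mathcal F_t := \sigma(Z_1, \ldots, Z_t)$, the martingale property $\E[X_{t+1} \mid \mathcal F_t] = 0$ yields
\[
\E\bigl[e^{\theta X_{t+1}} \mid \mathcal F_t\bigr] \leq 1 + g(\theta)\, \E[X_{t+1}^2 \mid \mathcal F_t] \leq \exp\bigl(g(\theta)\, \E[X_{t+1}^2 \mid \mathcal F_t]\bigr).
\]
Multiplying through shows $\E[S_{t+1} \mid \mathcal F_t] \leq S_t$, i.e.\ $\{S_t\}$ is a nonnegative supermartingale.

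Next I would introduce the stopping time $T := \min\{t \in [\tau] : Y_t \geq \delta \text{ and } W_t \leq \sigma^2\}$, with $T = \tau$ on the complementary event. Since $T$ is bounded, the optional stopping theorem gives $\E[S_T] \leq \E[S_0] = 1$. On the event $\mathcal E$ in question, we have at time $T$ both $Y_T \geq \delta$ and $W_T \leq \sigma^2$, so $S_T \geq \exp(\theta \delta - g(\theta) \sigma^2)$. Markov's inequality then yields
\[
\Pr(\mathcal E) \leq \exp\bigl(-\theta \delta + g(\theta) \sigma^2\bigr).
\]
Finally I would optimize over $\theta$; using the standard bound $g(\theta) \leq \theta^2/(2(1 - \theta R/3))$ for $0 < \theta < 3/R$ and choosing $\theta = \delta/(\sigma^2 + R\delta/3)$ gives exactly the stated Bernstein-type bound $\exp(-\delta^2 / (2(\sigma^2 + R\delta/3)))$.

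The main obstacle is the supermartingale construction: one must pick $g(\theta)$ so that the bound holds deterministically for every possible value of $X_{t+1} \leq R$ (so that it survives conditioning on $\mathcal F_t$), not just in expectation. The monotonicity argument for $h(x)$ is what makes this work, and once the supermartingale is in place the rest is routine optional stopping plus an elementary one-variable optimization.
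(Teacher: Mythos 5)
This theorem is quoted from Freedman~\cite{Fre75} (see also Tropp~\cite{Tro11}) and the paper gives no proof of it, so there is nothing internal to compare against; your argument is precisely the standard proof from those references, and it is correct. The exponential supermartingale $S_t = \exp(\theta Y_t - g(\theta) W_t)$ with $g(\theta) = (e^{\theta R}-1-\theta R)/R^2$, justified by the monotonicity of $x \mapsto (e^{\theta x}-1-\theta x)/x^2$, together with optional stopping at the first time both $Y_t \geq \delta$ and $W_t \leq \sigma^2$ and the bound $g(\theta) \leq \theta^2/(2(1-\theta R/3))$ optimized at $\theta = \delta/(\sigma^2 + R\delta/3)$, is exactly Freedman's route; the only thing worth making explicit is the normalization $Y_0 = 0$ (equivalently $\E[S_0]=1$), which is implicit in the statement.
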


Recently, some variants of Freedman's inequality for martingales have been used to obtain algorithmic discrepancy results~\cite{BG17,BDGL19}. 
In this paper, for the analysis of the non-negative linear constraints,
we prove another variant which applies to non-martingales with a ``self-adjusting'' property, 
that if $Y_t$ is (more) positive then $E[Y_{t+1}]-Y_t$ is (more) negative and vice versa.
With this self-adjusting property, intuitively $Y_t$ cannot be too far away from zero, and the following theorem provides a quantitative bound that is similar to that in Freedman's inequality. 

\begin{theorem} \label{t:adjusting}
Let $\{Y_t\}_t$ be a sequence of random variables,
and $X_t := Y_t - Y_{t-1}$ be the difference sequence.
Suppose that there exist $\gamma \in (0,\frac12]$, $\beta_u, \beta_l \geq 0$ and  $\sigma > 0$ such that the following properties hold for all $t \geq 1$.
\begin{enumerate}
\item 
{\em (Bounded difference:)} $|X_t| \leq 1$ with probability one.
\item 
{\em (Self adjusting:)} 
$-\gamma Y_{t-1} - \beta_l \leq \E[X_t \mid Y_0, ..., Y_{t-1}] \leq -\gamma Y_{t-1} + \beta_u$.
\item 
{\em (Bounded variance:)} 
$\E[X_t^2 \mid Y_0, \ldots, Y_{t-1}] \leq \gamma Y_{t-1} + \sigma$.
\item 
{\em (Initial concentration:)} For any $a \in [-1,1]$, 
the random variable $Y_0$ satisfies
$\E\left[e^{a Y_0}\right] \leq e^{a^2 \sigma/\gamma}$. 
\end{enumerate}
Then, for any $\eta > 0$ and any $t \geq 0$, it holds that
\[
\Pr\left[ Y_t \geq \frac{\beta_u}{\gamma} + \eta\right]
\leq \exp\left[ -\frac{\eta^2}{4(\sigma+\beta_u)/\gamma + 2\eta}\right]
\]
and
\[
\Pr\left[ Y_t \leq -\frac{\beta_l}{\gamma} - \eta\right]
\leq \exp\left[ -\frac{\eta^2}{4\sigma/\gamma + \eta}\right].
\]
\end{theorem}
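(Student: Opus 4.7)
I would prove both tail bounds by the moment generating function (MGF) method combined with an exponential drift argument. The self-adjusting condition makes the process behave like a discrete Ornstein--Uhlenbeck chain, so one expects a Bernstein-type tail of scale $\sqrt{\sigma/\gamma}$ with a sub-exponential correction in $\eta$. For the upper tail I first absorb the bias by defining $U_t := Y_t - \beta_u/\gamma$, which turns hypothesis~2 into the clean one-sided inequality $\E[U_t - U_{t-1} \mid \mathcal{F}_{t-1}] \leq -\gamma U_{t-1}$ (writing $\mathcal{F}_{t-1}$ for the $\sigma$-algebra generated by $Y_0,\ldots,Y_{t-1}$). The goal then reduces to bounding $\Pr[U_t \geq \eta]$; the lower-tail bound will follow from the analogous shift $V_t := -Y_t - \beta_l/\gamma$.

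\textbf{Main calculation.} Starting from the Bennett-type inequality $e^{\lambda x} \leq 1 + \lambda x + g(\lambda) x^2$ for $|x|\leq 1$ and $\lambda \geq 0$, with $g(\lambda) := e^\lambda - 1 - \lambda$, apply it to $X_t = U_t - U_{t-1}$ (using hypothesis~1 to ensure $|X_t|\leq 1$) and take conditional expectation. Plugging in hypotheses 2 and 3 while writing $Y_{t-1} = U_{t-1} + \beta_u/\gamma$ gives
\[
    \E\!\left[e^{\lambda U_t} \mid \mathcal{F}_{t-1}\right]
    \;\leq\;
    \exp\!\left(\bigl[\lambda(1-\gamma) + \gamma g(\lambda)\bigr]\,U_{t-1} + g(\lambda)(\beta_u + \sigma)\right).
\]
The key elementary fact is that $g(\lambda)/\lambda$ is increasing and equals $2e^{1/2}-3 < 1/2$ at $\lambda=1/2$, so for $\lambda \in (0,\tfrac{1}{2}]$ one has $g(\lambda) \leq \lambda/2$ and hence the geometric contraction $\lambda(1-\gamma) + \gamma g(\lambda) \leq \lambda(1-\gamma/2)$. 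Iterating the recursion with the decaying parameter $\lambda_s := \lambda(1-\gamma/2)^s$ and bounding $g(\lambda_s) \leq \lambda_s^2$ (valid since $\lambda_s \in [0,1]$), the geometric sum $\sum_{s=0}^{t-1} g(\lambda_s)$ is $O(\lambda^2/\gamma)$; hypothesis~4 (applied with $a = \lambda_t \in [0,1]$) controls the surviving initial factor by $\E[e^{\lambda_t U_0}] \leq e^{\lambda^2 \sigma/\gamma}$. Combining these yields $\E[e^{\lambda U_t}] \leq \exp\!\bigl(C\lambda^2(\sigma+\beta_u)/\gamma\bigr)$ uniformly in $t$ for an absolute constant $C$, and a Chernoff bound optimizing $\lambda$ over $(0,\tfrac12]$ then produces the stated Bernstein-type tail. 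Tightening the geometric-sum and initial-condition estimates pins down the constants $4$ and $2$ in the denominator.

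\textbf{Lower tail and main obstacle.} For the lower tail the shifted process $V_t = -Y_t - \beta_l/\gamma$ satisfies the same drift inequality, and its variance is bounded by $\E[X_t^2\mid\mathcal{F}_{t-1}] \leq -\gamma V_{t-1} + \sigma - \beta_l$; since $\beta_l \geq 0$ enters with a negative sign it simply drops out of the final bound, which is exactly why the lower-tail denominator involves only $4\sigma/\gamma$ rather than $4(\sigma+\beta_l)/\gamma$. The main technical obstacle I anticipate is the coupling introduced by hypothesis~3: the conditional variance depends linearly on $Y_{t-1}$ (not on the shifted variable or a constant), so both the drift $-\gamma Y_{t-1}$ and the variance $\gamma Y_{t-1}+\sigma$ grow without bound as $|Y_{t-1}|$ grows. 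The entire proof hinges on the quantitative fact that in the one-step MGF these two contributions appear with coefficients $-\lambda\gamma$ and $+\gamma g(\lambda)$, so for small $\lambda$ the drift strictly dominates; this is precisely what turns a merely self-adjusting process into one that contracts on the exponential scale.
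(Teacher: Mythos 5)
Your proposal is essentially the paper's own proof: a one-step Bennett-type MGF bound using hypotheses 1--3, iterated with a geometrically decaying sequence of exponents, the initial-concentration hypothesis to close the recursion at $t=0$, and a Chernoff/Markov step; the lower tail is handled by the sign flip exactly as you describe, with $\beta_l$ dropping out of the variance for the same reason.

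The one place I would push back is your final claim that ``tightening the geometric-sum and initial-condition estimates pins down the constants $4$ and $2$.'' With a contraction rate fixed at $1-\gamma/2$ and $\lambda$ restricted to $(0,\tfrac12]$, your MGF bound has the form $\exp\bigl(C\lambda^2(\sigma+\beta_u)/\gamma\bigr)$ with a constant $C>1$ (the geometric sum alone contributes about $\lambda^2\nu$ and the surviving initial term another $\lambda^2\sigma/\gamma$), so the Gaussian regime yields $-\eta^2/(4C\nu)$ and the capped choice $\lambda=\tfrac12$ yields roughly $-\eta/4$ in the linear regime --- a bound of the right form but with strictly worse constants than $-\eta^2/(4\nu+2\eta)$. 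The paper avoids this by tracking the $a$-dependent recursion $a_{(i+1)}=a_{(i)}(1-(1-a_{(i)})\gamma)$, which is dominated by the geometric sequence with ratio $1-(1-a)\gamma$; summing gives $\E[e^{aY_t}]\le\exp\bigl[(a^2\sigma+a\beta_u)/((1-a)\gamma)\bigr]$, the linear $a\beta_u$ term cancels against the shift $\beta_u/\gamma$ in the Markov step, and optimizing $a=1-\sqrt{\nu/(\nu+\eta)}$ over the full range $a\in[0,1)$ lands exactly on the stated exponent. So your argument proves the theorem up to the values of the absolute constants (which is all the applications in the paper use), but recovering $4$ and $2$ exactly requires the finer, $a$-dependent contraction rate rather than the uniform $1-\gamma/2$.
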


The proof of Theorem~\ref{t:adjusting} will be presented in Section~\ref{ss:concentration}.

\section{Spectral Rounding} \label{s:rounding}

We will first present the iterative randomized rounding algorithm for one-sided spectral rounding in Section~\ref{s:swap}. 
Then we will present the proof of Theorem~\ref{t:two-sided} for two-sided spectral rounding in Section~\ref{s:discrepancy}, and some examples showing the tightness of our results in Section~\ref{ss:integrality}.

\subsection{Iterative Randomized Rounding for One-Sided Spectral Rounding} \label{s:swap}

We modify the deterministic local search algorithm in~\cite{AZLSW20} to an iterative randomized rounding algorithm so as to approximately satisfy arbitrary non-negative linear constraints.
In this randomized algorithm, we first construct an initial solution $S_0$ by adding each vector $v_i$ into $S_0$ with probability $x_i$ independently.
In each iteration $t \geq 1$, based on the current solution $S_{t-1}$, we construct a probability distribution to sample a vector $v_{i_t}$ to be removed from $S_{t-1}$, and a probability distribution to sample a vector $v_{j_t}$ to be added to $S_{t-1}$.
The basic idea is that a vector $v_i$ is removed with probability proportional to $1-x_i$ and a vector $v_j$ is added with probability proportional to $x_j$, but the probability is also adjusted based on the vector's contribution to the minimum eigenvalue of the current solution. 
We remark that it is possible that no vector is removed and/or no vector is added in an iteration. 
The algorithm stops when the minimum eigenvalue of the current solution is at least $1-2\eps$.
The following is the formal description of the algorithm.

\begin{framed}
{\bf Iterative Randomized Swapping Algorithm}

Input: $v_1, ..., v_m \in \R^n$ and $x \in [0, 1]^m$ with $\sum_{i=1}^m x_i v_i v_i^T = I_n$, and an error parameter $\eps \in (0, \frac12)$.

Output: a subset $S \subseteq [m]$ such that $\sum_{i \in S} v_i v_i^T \succcurlyeq (1-2\eps) I_n$ and $c(S) \approx \inner{c}{x}$ for any $c \in \R_+^m$ with high probability.

    \begin{enumerate}
    
    \item Initialization: $t:= 1$, $S_0:=\emptyset$, $\alpha := \sqrt{n}/\eps$, $k := m+2n/\eps$.

    \item Add $i$ into $S_0$ independently with probability $x_i$ for each $i \in [m]$. Let $Z_0 := \sum_{i \in S_0} v_i v_i^T$.

    \item While $\lambda_{\min}(Z_{t-1}) < 1 - 2\eps$ do
    \begin{enumerate}

	\item Compute the action matrix $A_t := (\alpha Z_{t-1} - l_t I_n)^{-2}$, where $l_t \in \R$ is the unique value such that $A_t \succ 0$ and $\tr(A_t) = 1$.

        \item Define $S'_{t-1} := \{i \in S_{t-1}: 2\alpha \inner{v_i v_i^T}{ A_t^{1/2}} < \frac12 \}$.
	
	\item Sample $i_t$ from the following probability distribution:
	\[
	   \Pr\left( i_t = i \right) = \frac{1}{k} (1- x_i) (1 - 2 \alpha \langle v_i v_i^T, A_t^{1/2} \rangle) \quad {\rm for~} i \in S_{t-1}',
	\]
	and $\Pr\left( i_t = \emptyset \right) = 1 - \sum_{i \in S'_{t-1}} \Pr(i_t = i)$.

        \item Sample $j_t$ from the following probability distribution:
	    \[
           \Pr\left( j_t = j \right) = \frac{x_j}{k} (1 + 2 \alpha \langle v_j v_j^T, A_t^{1/2} \rangle) \quad {\rm for~} j \in [m] \backslash S_{t-1},
	    \]
	    and $\Pr\left( j_t = \emptyset \right) = 1 - \sum_{j \in [m] \backslash S_{t-1}} \Pr(j_t = j)$.

        \item Set $S_{t} := S_{t-1} \cup \{ j_t\} \backslash \{i_t\}$, $Z_t := \sum_{i \in S_t} v_i v_i^T$ and $t := t+1$.
    \end{enumerate}
    \item Return $S = S_{t-1}$ as the solution. 
\end{enumerate}
\end{framed}

Before we state the main result of this algorithm, we first check that the algorithm is well-defined.

\begin{claim}
The probability distributions in each iteration of the iterative randomized swapping algorithm are well-defined.
\end{claim}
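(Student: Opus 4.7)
My plan is to verify four things in turn: (i) each individual probability in the removal distribution is non-negative, (ii) the sum of the removal probabilities over $S'_{t-1}$ is at most $1$ so the ``empty'' mass is non-negative, (iii) each individual probability in the addition distribution is non-negative, and (iv) the sum of the addition probabilities over $[m] \setminus S_{t-1}$ is at most $1$. Items (i) and (iii) are immediate from the definitions: for removal, $1 - x_i \geq 0$ because $x_i \in [0,1]$, and $1 - 2\alpha \langle v_i v_i^T, A_t^{1/2}\rangle \geq \tfrac{1}{2} > 0$ by the very definition of $S'_{t-1}$; for addition, $x_j \geq 0$ and $1 + 2\alpha \langle v_j v_j^T, A_t^{1/2}\rangle \geq 1$ since $A_t^{1/2} \succeq 0$ and $v_j v_j^T \succeq 0$.

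For (ii), I would use the trivial bound $(1-x_i)(1 - 2\alpha \langle v_i v_i^T, A_t^{1/2}\rangle) \leq 1$ for $i \in S'_{t-1}$, so $\sum_{i \in S'_{t-1}} \Pr(i_t = i) \leq |S'_{t-1}|/k \leq m/k$. Since $k = m + 2n/\eps \geq m$, this is at most $1$, as desired.

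Step (iv) is the only one that uses the normalization $\sum_i x_i v_i v_i^T = I_n$ in a crucial way, and this is the step I expect to carry the whole point of the choice of $k$. I would expand
\begin{align*}
\sum_{j \in [m] \setminus S_{t-1}} \Pr(j_t = j)
&\leq \frac{1}{k} \sum_{j=1}^m x_j + \frac{2\alpha}{k} \sum_{j=1}^m x_j \langle v_j v_j^T, A_t^{1/2}\rangle \\
&= \frac{\|x\|_1}{k} + \frac{2\alpha}{k} \bigg\langle \sum_{j=1}^m x_j v_j v_j^T, A_t^{1/2} \bigg\rangle
= \frac{\|x\|_1}{k} + \frac{2\alpha}{k} \tr(A_t^{1/2}).
\end{align*}
Then $\|x\|_1 \leq m$ since $x \in [0,1]^m$, and Cauchy--Schwarz gives $\tr(A_t^{1/2}) \leq \sqrt{n \cdot \tr(A_t)} = \sqrt{n}$ because $A_t$ is a density matrix. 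Plugging in $\alpha = \sqrt{n}/\eps$ makes the bound $(m + 2n/\eps)/k = 1$ by the definition of $k$.

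The only mildly subtle point, and the one I would flag as the ``obstacle,'' is making sure the two parameters $\alpha$ and $k$ are coupled correctly: the calculation in (iv) is what forces $k \geq m + 2\alpha\sqrt{n} = m + 2n/\eps$, so the choice in the initialization is tight up to a constant. Everything else is an immediate consequence of the construction and of $\sum_i x_i v_i v_i^T = I_n$.
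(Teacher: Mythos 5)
Your proof is correct and follows essentially the same route as the paper: non-negativity of each mass is immediate from $x\in[0,1]^m$, $A_t \succ 0$, and the definition of $S'_{t-1}$; the removal masses are each at most $1/k$ so their sum is at most $m/k \le 1$; and the addition masses sum to at most $\frac{1}{k}\big(\|x\|_1 + 2\alpha\tr(A_t^{1/2})\big) \le \frac{1}{k}(m + 2n/\eps) = 1$ using $\sum_j x_j v_jv_j^T = I_n$, Cauchy--Schwarz for $\tr(A_t^{1/2})\le\sqrt n$, and the choices of $\alpha$ and $k$. Your closing observation that step (iv) is exactly what dictates $k = m + 2n/\eps$ matches the paper's remark as well.
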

\begin{proof}
To verify that the probability distribution for sampling $i_t$ is well-defined, we need to show that $\Pr(i_t = i) \geq 0$ for $i \in S_{t-1}'$ and $\sum_{i \in S_{t-1}'} \Pr(i_t = i) \leq 1$.
Since $A_t \succ 0$ and $x_i \in [0,1]$ and $2\alpha \inner{v_iv_i^T}{A_t^{1/2}}\leq 1/2$ for $i \in S_{t-1}'$, it follows that for $i \in S_{t-1}'$ we have
\[
0 \leq \Pr(i_t = i) = \frac{1}{k} (1-x_i)(1-2\alpha \inner{v_iv_i^T}{A_t^{1/2}}) \leq \frac{1}{k},
\]
and this implies that $\sum_{i \in S_{t-1}'} \Pr(i_t = i) \leq |S_{t-1}'|/k \leq m/k < 1$ by the definition of $k$.

Next we verify that the probability distribution for sampling $j_t$ is well-defined.
It is clear that $\Pr(j_t = j) \geq 0$ as $A_t \succ 0$ and $x_j \in [0,1]$.
We claim that 
$\sum_{j \in [m] \setminus S_{t-1}} \Pr(j_t = j) \leq \sum_{j \in [m]} \Pr(j_t = j) \leq 1$ as
\[\sum_{j \in [m]} \Pr(j_t = j)
= \frac{1}{k} \sum_{j=1}^m x_j (1+2\alpha \inner{v_j v_j^T}{A_t^{1/2}})
= \frac{1}{k} \left( \sum_{j=1}^m x_j + 2\alpha \tr(A_t^{1/2})\right)
\leq \frac{1}{k} \left(m  + \frac{2n}{\eps}\right) = 1,
\]
where the second equality is by the assumption that $\sum_{j=1}^m x_j v_j v_j^T = I_n$, the last equality is by the definition of $k$,
and the inequality uses that $x_j \in [0,1]$, $\alpha = \sqrt{n}/\eps$ and the bound that $\tr(A_t^{1/2}) \leq \sqrt{n}$. 
To see that $\tr(A_t^{1/2}) \leq \sqrt{n}$,
let $\lambda_1, \ldots, \lambda_n$ be the eigenvalues of $A_t$, then 
\begin{equation} \label{e:trace}
\tr(A_t^{1/2}) = \sum_{i=1}^n \sqrt{\lambda_i} \leq \sqrt{n} \sum_{i=1}^n \lambda_i = \sqrt{n} \tr(A_t) = \sqrt{n},
\end{equation} 
where the inequality is by Cauchy-Schwarz and the last equality is by the definition of $A_t$.
\end{proof}

\begin{remark} \label{r:probability}
The reader may wonder why we do not define the probability distribution for sampling $i_t$ by
\[
\Pr\left( i_t = i \right) 
= \frac{(1- x_i) (1 - 2 \alpha \langle v_i v_i^T, A_t^{1/2} \rangle)}{\sum_{j \in S_{t-1}'}(1- x_j) (1 - 2 \alpha \langle v_j v_j^T, A_t^{1/2} \rangle)} \quad {\rm for~} i \in S_{t-1}',
\]
so that $\sum_{i \in S_{t-1}'} \Pr(i_t=i) = 1$ and likewise for sampling $j_t$, so that we always remove a vector from $S_{t-1}$ and add another vector to $S_{t-1}$ in each iteration.
This is our initial approach and we believe that this should also work, 
but it turns out that the calculations for the linear constraints simplify considerably by having a common denominator $k$ for these two probability distributions.
\end{remark}

The following is the main technical result for one-sided spectral rounding.

\begin{theorem} \label{t:approx}
Suppose we are given $v_1, ..., v_m \in \R^n$, $x \in [0,1]^m$ such that $\sum_{i=1}^m x_i v_i v_i^T = I_n$. 
For any $\eps \in (0, \frac12)$, the iterative randomized swapping algorithm returns a subset $S \subseteq [m]$ satisfying
\[
\sum_{i \in S} v_i v_i^T \succcurlyeq (1-2\eps) I_n
\]
within $qk/\eps$ iterations with probability at least $1-\exp\left( -\Omega(q\sqrt{n}) \right)$ for $q \geq 2$. 
Furthermore, for any $c \in \R^m_+$ and any $\delta_1 \in [0,1]$, $\delta_2 \in [0,1]$ and $\delta_3 > 0$, the probability that the returned solution $S$ satisfies the cost upper bound is
\[
\Pr\left[ c(S) \leq (1+\delta_1) \inner{c}{x} + \frac{15 n c_{\infty}}{\eps} \right] \geq 1 - \exp\left[-\Omega\left( \frac{\delta_1 n}{\eps} \right) \right],
\]
and the probability that the returned solution $S$ satisfies the cost lower bound is
\[
\Pr\Big[ c(S) \geq (1-\delta_2) \inner{c}{x} - \delta_3 n c_{\infty} \Big]
\geq 1-\exp\Big[-\Omega\big(\min\{\delta_2 \delta_3, \eps \delta_3^2\} \cdot n\big)\Big].
\]
\end{theorem}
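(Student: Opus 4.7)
I would prove the theorem in three stages mirroring the three separate claims.

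\emph{Stage 1 (spectral lower bound and iteration count).} Set $F_0 := Z_0$ and, for $s \geq 1$, $F_s := v_{j_s} v_{j_s}^T \mathbf{1}[j_s \neq \emptyset] - v_{i_s} v_{i_s}^T \mathbf{1}[i_s \neq \emptyset]$, and write $D_s$ for the bracketed difference of rank-one ratios in Theorem~\ref{t:regret-rank-two}. That theorem gives $\lambda_{\min}(Z_t) \geq \sum_{s=1}^t D_s - 2\eps$, so it suffices to produce $\sum_s D_s \geq 1$ within $\tau = qk/\eps$ iterations w.h.p.\ (otherwise the while loop terminates earlier, which is fine). The sampling distributions are engineered so that the factors $(1 \pm 2\alpha \inner{v v^T}{A_s^{1/2}})$ in the probabilities cancel the denominators in $D_s$, yielding, after using $S'_{s-1} \subseteq S_{s-1}$ to obtain a clean lower bound,
\[
\E[D_s \mid \mathcal{F}_{s-1}] \geq \tfrac{1}{k}\Bigl(\sum_{j} x_j \inner{v_j v_j^T}{A_s} - \sum_{i \in S_{s-1}} \inner{v_i v_i^T}{A_s}\Bigr) = \tfrac{1 - \inner{Z_{s-1}}{A_s}}{k} \geq \tfrac{\eps}{k},
\]
where the last inequality uses $\sum_i x_i v_i v_i^T = I_n$, Lemma~\ref{l:cospectral}, and $\alpha = \sqrt{n}/\eps$. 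For concentration, I would observe that $\tr(A_s) = 1$ and $A_s \succeq 0$ force $A_s \preceq I$, which implies $v^T A_s v \leq v^T A_s^{1/2} v$ for every $v$, so each ratio in $D_s$ is at most $1/(2\alpha) = \eps/(2\sqrt{n})$; this also gives the variance estimate $\E[D_s^2 \mid \mathcal{F}_{s-1}] = O(\eps/(k\sqrt{n}))$ by the trivial inequality $\E[\text{ratio}^2] \leq (1/(2\alpha)) \E[\text{ratio}]$. Applying Freedman (Theorem~\ref{t:Freedman}) to $\sum_{s \leq t} (D_s - \E[D_s \mid \mathcal{F}_{s-1}])$ with total variance $O(q/\sqrt{n})$ and deviation $q-1$ gives the advertised $\exp(-\Omega(q\sqrt{n}))$ failure probability.

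\emph{Stage 2 (self-adjusting setup for the cost).} Normalize $Y_t := (c(S_t) - \inner{c}{x})/c_\infty$, so $|Y_t - Y_{t-1}| \leq 1$. Expanding $\E[Y_t - Y_{t-1} \mid \mathcal{F}_{t-1}]$ from the sampling probabilities, the $c$-linear part telescopes via $\sum_{j \notin S_{t-1}} x_j c_j - \sum_{i \in S_{t-1}} (1-x_i) c_i = \inner{c}{x} - c(S_{t-1})$ to the main term $-Y_{t-1}/k$. Two non-negative correction terms remain: (a) the restriction of the removal distribution to $S'_{t-1}$, controlled via $|S_{t-1}\setminus S'_{t-1}| \leq 4n/\eps$ (because each such $i$ contributes at least $1/(4\alpha)$ to $\inner{Z_{t-1}}{A_t^{1/2}} \leq \sqrt{n}$, the latter by Lemma~\ref{l:cospectral}); and (b) the $2\alpha$-weighted terms, bounded using $\tr(A_t^{1/2}) \leq \sqrt{n}$ and the same bound on $\inner{Z_{t-1}}{A_t^{1/2}}$. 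Each correction contributes at most $4nc_\infty/(k\eps)$, giving the self-adjusting hypothesis of Theorem~\ref{t:adjusting} with $\gamma = 1/k$, $\beta_u = 8n/(k\eps)$, $\beta_l = 0$. For the variance, using $(c_j/c_\infty)^2 \leq c_j/c_\infty$ and computing directly gives
\[
\E[(Y_t - Y_{t-1})^2 \mid \mathcal{F}_{t-1}] \leq \gamma Y_{t-1} + \sigma, \qquad \sigma = O\Bigl(\frac{\inner{c}{x}}{kc_\infty} + \frac{n}{k\eps}\Bigr),
\]
so $\sigma/\gamma = O(\inner{c}{x}/c_\infty + n/\eps)$. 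The initial-concentration hypothesis $\E[e^{aY_0}] \leq e^{a^2 \sigma/\gamma}$ on $|a|\leq 1$ follows from the Bernstein MGF bound for the centered sum of independent $[-1,1]$-valued Bernoullis $c_i(\mathbf{1}[i \in S_0] - x_i)/c_\infty$, whose variance sum is $\leq \inner{c}{x}/c_\infty$.

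\emph{Stage 3 (extracting the tail bounds from Theorem~\ref{t:adjusting}).} For the cost upper bound, set $\eta := 7n/\eps + \delta_1 \inner{c}{x}/c_\infty$, so $\beta_u/\gamma + \eta = 15n/\eps + \delta_1 \inner{c}{x}/c_\infty$ matches the claim. A two-way split handles the exponent: when $\inner{c}{x}/c_\infty \geq n/\eps$ the denominator $4(\sigma+\beta_u)/\gamma + 2\eta$ is dominated by $\inner{c}{x}/c_\infty$ and the exponent is $\Omega(\delta_1^2 \inner{c}{x}/c_\infty) \geq \Omega(\delta_1 n/\eps)$; otherwise $\eta \geq 7n/\eps$ and the denominator is $O(n/(\delta_1\eps))$, again giving exponent $\Omega(\delta_1 n/\eps)$. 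The cost lower bound is symmetric: setting $\eta := \delta_2 \inner{c}{x}/c_\infty + \delta_3 n$ and splitting first on whether $\delta_2 \inner{c}{x}/c_\infty$ or $\delta_3 n$ dominates $\eta$, and then on whether $\sigma/\gamma$ or $\eta$ dominates the denominator, recovers $\exp(-\Omega(\min\{\delta_2 \delta_3,\, \eps \delta_3^2\} \cdot n))$ in every case (the worst case uses $\inner{c}{x}/c_\infty \geq \delta_3 n/\delta_2$ to convert a $(\inner{c}{x}/c_\infty)^2$ term into $\delta_3^2 n^2/\delta_2^2$).

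\emph{Main obstacle.} The essential difficulty is that $\{Y_t\}$ is \emph{not} a martingale --- the sampling probabilities depend on $A_t$ and hence on the full history --- so Freedman is inapplicable to the cost sequence. Theorem~\ref{t:adjusting} is designed to take its place, but invoking it requires simultaneously (i) showing that the bias forced by the $\alpha$-weighting (which is indispensable for Stage 1) contributes only $O(n/(k\eps))$ per step, so that $\beta_u/\gamma \leq 8n/\eps$ aligns with the stated $15nc_\infty/\eps$ slack, and (ii) obtaining the tight variance $\sigma/\gamma = O(\inner{c}{x}/c_\infty + n/\eps)$ rather than the naive $O(k)$ bound --- only the tighter scaling produces an exponent linear, rather than quadratic, in $\delta_1$, and likewise delivers the $\min\{\delta_2\delta_3, \eps\delta_3^2\}$ form rather than a weaker uniform bound.
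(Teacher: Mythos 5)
Your proposal is correct and follows essentially the same route as the paper's proof: regret minimization (Theorem~\ref{t:regret-rank-two}) plus Freedman's inequality for the eigenvalue/iteration bound, and the first/second-moment estimates feeding into the self-adjusting concentration inequality (Theorem~\ref{t:adjusting}) for the cost bounds, with the same choice $\gamma = 1/k$, $\beta_l = 0$, $\sigma/\gamma = O(\inner{c}{x}/c_\infty + n/\eps)$ and the same splitting of $\eta$ into a $\delta_1\inner{c}{x}/c_\infty$ part and an $n/\eps$ part. One small repair in Stage 3: in the case $\inner{c}{x}/c_\infty \geq n/\eps$ the chain $\Omega(\delta_1^2 \inner{c}{x}/c_\infty) \geq \Omega(\delta_1 n/\eps)$ does not follow for small $\delta_1$; instead use $\eta^2 \geq (7n/\eps)\cdot \delta_1 \inner{c}{x}/c_\infty$ against the denominator $O(\inner{c}{x}/c_\infty)$, which is exactly how the paper's unified computation obtains the exponent $\Omega(\delta_1 n/\eps)$.
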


\begin{remark}
If we set $\delta_1 = \delta_2 = \eps$ and $\delta_3 = 1/\eps$,
then Theorem~\ref{t:approx} states that the returned solution $S$ satisfies
\[
(1-\eps)\inner{c}{x} - \frac{nc_{\infty}}{\eps} \leq c(S) \leq (1+\eps)\inner{c}{x} + \frac{15nc_{\infty}}{\eps}
\]
with probability at least $1-\exp(-\Omega(n))$ for any $c \in \R^m_+$.
We introduce $\delta_1, \delta_2, \delta_3$ to have a more refined control of the failure probability of the lower bound, and this will be relevant in showing that linear covering constraints can be almost satisfied.
\end{remark}

{\bf Organization:}
The remainder of this subsection is organized as follows.
We will first prove that the spectral lower bound will be approximately satisified with high probability within polynomial time in Section~\ref{ss:time}, and then prove the guarantees on the linear constraints in Section~\ref{ss:cost}.
Then, we will use Theorem~\ref{t:approx} to prove the exact one-sided spectral rounding result in Theorem~\ref{t:zero-one} in Section~\ref{ss:exact}.
Finally, we provide a proof of the concentration inequality in Theorem~\ref{t:adjusting} in Section~\ref{ss:concentration}.

\subsubsection{Bounding the Minimum Eigenvalue} \label{ss:time}

The goal in this subsection is to prove that the probability that the algorithm does not terminate within $\tau \geq qk/\eps$ iterations is at most $\exp(-\Omega(q\sqrt{n}))$ for $q \geq 2$.

We will bound the minimum eigenvalue of the solution using the regret minimization framework developed in~\cite{AZLO15,AZLSW20}.
The initial feedback matrix is $F_0 = Z_0$, which is constructed randomly using $x$.
In each iteration $t \geq 1$, after computing the action matrix $A_t$,
the algorithm responds with the feedback matrix $F_t = v_{j_t}v_{j_t}^T - v_{i_t}v_{i_t}^T$.
Note that $Z_{\tau} = \sum_{t=0}^\tau F_t$.
Define 
\[\Delta_t^+ :=  \frac{\inner{v_{j_t} v_{j_t}^T}{A_t} }{1 + 2\alpha \inner{v_{j_t} v_{j_t}^T}{A_t^{1/2}} }
\quad {\rm and} \quad
\Delta_t^- := \frac{\inner{v_{i_t} v_{i_t}^T }{A_t}}{1 - 2\alpha \inner{v_{i_t} v_{i_t}^T}{A_t^{1/2}}}
\quad {\rm and} \quad
\Delta_t := \Delta_t^+ - \Delta_t^-.
\]
Note that $2\alpha\inner{v_{i_t}v_{i_t}^T}{A_t^{1/2}} < \frac12 < 1$ for $1 \leq t \leq \tau$ by the definition of $S_{t-1}'$, and so $\Delta_t^-$ is well-defined for $1 \leq t \leq \tau$.
The regret minimization Theorem~\ref{t:regret-rank-two} proves that
\begin{equation} \label{e:regret-rank-two}
\lambda_{\min}\left( Z_{\tau} \right) 
= \lambda_{\min} \Big( \sum_{t=0}^{\tau} F_t \Big)
\geq \sum_{t=1}^\tau \Delta_t - \frac{2\sqrt{n}}{\alpha}
= \sum_{t=1}^\tau \Delta_t - 2 \eps.
\end{equation}
To lower bound the minimum eigenvalue,
we will prove that $\sum_{t=1}^\tau \Delta_t \geq 1$ with high probability.
In the following, we bound the expected value of $\sum_{t=1}^\tau \Delta_t$, 
and then use Freeman's martingale inequality to bound the probability that $\sum_{t=1}^\tau \Delta_t$ deviates significantly from its expected value.

\begin{lemma} \label{l:spec-exp-t}
Let $\lambda := \max_{0 \leq t \leq \tau} \lambda_{\min}(Z_t)$.
Then
\[
\sum_{t=1}^\tau 
\E \left[ \Delta_t \mid S_{t-1}\right] 
\geq \sum_{t=1}^\tau \frac{1}{k} (1-\eps-\lambda_{\min}(Z_{t-1}))
\geq \frac{\tau}{k} (1-\eps-\lambda).
\]
\end{lemma}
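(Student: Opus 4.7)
The plan is to compute $\E[\Delta_t \mid S_{t-1}]$ directly from the sampling rule. The key observation is that the denominator $1+2\alpha\inner{v_{j_t}v_{j_t}^T}{A_t^{1/2}}$ of $\Delta_t^+$ will cancel exactly against the corresponding factor in $\Pr(j_t = j)$, and similarly the denominator $1-2\alpha\inner{v_{i_t}v_{i_t}^T}{A_t^{1/2}}$ of $\Delta_t^-$ cancels against the factor in $\Pr(i_t = i)$—this is the whole design motivation for those sampling probabilities. Treating the outcomes $i_t=\emptyset$ and $j_t=\emptyset$ as contributing zero (since we can set $v_\emptyset = 0$), the cancellation yields the clean identity
\[
\E[\Delta_t \mid S_{t-1}] \;=\; \frac{1}{k}\left(\sum_{j \in [m]\setminus S_{t-1}} x_j \inner{v_jv_j^T}{A_t} \;-\; \sum_{i \in S'_{t-1}}(1-x_i)\inner{v_iv_i^T}{A_t}\right).
\]

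Next I would apply the input assumption $\sum_{j=1}^m x_j v_jv_j^T = I_n$ together with $\tr(A_t) = 1$ to rewrite $\sum_{j \in [m]\setminus S_{t-1}} x_j \inner{v_jv_j^T}{A_t} = 1 - \sum_{j \in S_{t-1}} x_j \inner{v_jv_j^T}{A_t}$. Combining the two sums over $S_{t-1}$ and splitting $S_{t-1} = S'_{t-1} \sqcup (S_{t-1}\setminus S'_{t-1})$, the contribution from each $i \in S'_{t-1}$ collapses to $x_i\inner{v_iv_i^T}{A_t} + (1-x_i)\inner{v_iv_i^T}{A_t} = \inner{v_iv_i^T}{A_t}$, while the contribution from each $i \in S_{t-1}\setminus S'_{t-1}$ is only $x_i \inner{v_iv_i^T}{A_t}$, which is at most $\inner{v_iv_i^T}{A_t}$ since $x_i \leq 1$ and $A_t \succeq 0$ gives $\inner{v_iv_i^T}{A_t} \geq 0$. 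Summing over $S_{t-1}$ then bounds the combined term by $\inner{Z_{t-1}}{A_t}$, and so $k \cdot \E[\Delta_t \mid S_{t-1}] \geq 1 - \inner{Z_{t-1}}{A_t}$.

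To conclude, I would invoke Lemma~\ref{l:cospectral} with $Z = Z_{t-1}$ and $A = A_t$—note they share an eigenbasis by the closed form~\eqref{e:closed-form}—to obtain $\inner{Z_{t-1}}{A_t} \leq \sqrt{n}/\alpha + \lambda_{\min}(Z_{t-1}) = \eps + \lambda_{\min}(Z_{t-1})$, using the algorithm's choice $\alpha = \sqrt{n}/\eps$. This delivers the per-step bound $\E[\Delta_t \mid S_{t-1}] \geq (1 - \eps - \lambda_{\min}(Z_{t-1}))/k$; summing $t = 1, \ldots, \tau$ gives the first inequality in the claim, and the pointwise bound $\lambda_{\min}(Z_{t-1}) \leq \lambda$ valid for every $t \leq \tau$ gives the second.

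The one point requiring care is the treatment of the ``heavy'' vectors in $S_{t-1}\setminus S'_{t-1}$—those excluded from the removal distribution precisely because $2\alpha\inner{v_iv_i^T}{A_t^{1/2}} \geq 1/2$ would make $\Delta_t^-$ badly behaved. Each such $i$ contributes only $x_i\inner{v_iv_i^T}{A_t}$ rather than the full $\inner{v_iv_i^T}{A_t}$, but because $x_i \leq 1$ it is still absorbed cleanly into $\inner{Z_{t-1}}{A_t}$, which is all we need. No separate case analysis on heavy vectors is required; beyond this the argument is mechanical algebra combined with the cospectral lemma.
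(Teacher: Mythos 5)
Your proposal is correct and follows essentially the same route as the paper: the same cancellation of the denominators against the sampling probabilities, the same use of $\sum_i x_i v_iv_i^T = I_n$ with $\tr(A_t)=1$, and the same appeal to Lemma~\ref{l:cospectral} with $\alpha = \sqrt{n}/\eps$. The only difference is cosmetic — you merge the two sums over $S_{t-1}$ before invoking the cospectral bound, whereas the paper bounds $\E[\Delta_t^-\mid S_{t-1}]$ separately first — and your handling of the heavy vectors in $S_{t-1}\setminus S'_{t-1}$ matches the paper's.
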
 
\begin{proof}
We first consider the expected gain of adding the vector $j_t$.
By the definition of the probability distribution of $j_t$,
\begin{equation} \label{eq:add-spec-exp}
\begin{aligned}
    \E\left[ \Delta_t^+ \mid S_{t-1} \right] 
& = \frac{1}{k} \sum_{j \in [m] \backslash S_{t-1}} x_j (1+2\alpha \inner{v_j v_j^T}{A^{1/2}_t}) \cdot \frac{\inner{v_{j} v_{j}^T}{A_t}}{1 + 2 \alpha \inner{v_{j} v_{j}^T}{A_t^{1/2}}} 
\\
& = \frac{1}{k} \sum_{j \in [m] \backslash S_{t-1}} x_j \inner{v_j v_j^T}{A_t} 
\\ & = \frac{1}{k} \Big( 1 - \sum_{j \in S_{t-1}} x_j \inner{v_j v_j^T}{A_t} \Big),
\end{aligned}
\end{equation}
where the last equality is by $\sum_{j=1}^m x_j v_j v_j^T = I_n$ and $\tr(A_t)=1$ by the definition of $A_t$.

Then we consider the expected loss of removing the vector $i_t$.
By the definition of the probability distribution of $i_t$,
\begin{equation} \label{eq:del-spec-exp}
\begin{aligned} 
    \E \left[ \Delta_t^- \mid S_{t-1} \right] 
& = \sum_{i \in S'_{t-1}} \frac{1}{k} (1-x_i) (1-2\alpha \inner{v_i v_i^T}{A_t^{1/2}}) \cdot \frac{\inner{v_i v_i^T}{A_t}}{1-2\alpha \inner{v_i v_i^T}{A_t^{1/2}}} \\
    & = \frac{1}{k} \sum_{i \in S'_{t-1}} (1-x_i) \inner{v_i v_i^T}{A_t} \\
    & \leq \frac{1}{k} \sum_{i \in S_{t-1}} (1-x_i) \inner{v_i v_i^T}{A_t} \\
    & \leq \frac{1}{k} \Big( \lambda_{\min}(Z_{t-1}) + \eps - \sum_{i \in S_{t-1}} x_i \inner{v_i v_i^T}{A_t} \Big),
\end{aligned}
\end{equation}
where the first inequality is because $x_i \in [0,1]$ and $\inner{v_iv_i^T}{A_t} \geq 0$ as $A_t \succ 0$,
and the last inequality follows from Lemma~\ref{l:cospectral} that $\inner{Z_{t-1}}{A_t} \leq \sqrt{n}/\alpha + \lambda_{\min}(Z_{t-1})$ and $\alpha = \sqrt{n}/\eps$.

The lemma follows by combining (\ref{eq:add-spec-exp}) and (\ref{eq:del-spec-exp}) and summing over $t$ and using $\lambda = \max_{t} \lambda_{\min}(Z_t)$.
\end{proof}

\begin{remark} \label{r:simple}
If we use the probability distributions stated in Remark~\ref{r:probability},
then we can start with a solution with $l:=\sum_{i=1}^m x_i + O(n/\eps)$ vectors and guarantee that the solution at each iteration still has exactly $l$ vectors.
A similar statement about the expected progress as in Lemma~\ref{l:spec-exp-t} can be proved.
This implies that there exists a good pair $i_t \in S_{t-1}$ and $j_t \in S_{t-1}$, which gives a solution set of size $l$ satisfying the spectral lower bound approximately. 
Together with a preprocessing step as in Section~\ref{ss:exact}, 
this gives a simpler proof of the deterministic algorithm of~\cite{AZLSW20}.
\end{remark}

\begin{lemma} \label{l:spec-whp-t}
Let $\lambda := \max_{0 \leq t \leq \tau} \lambda_{\min}(Z_t)$.
Then, for any $\eta > 0$,
\[
\Pr \left[ \sum_{t=1}^\tau \Delta_t \leq \left( \sum_{t=1}^\tau \E[\Delta_t \mid S_{t-1}]\right) - \eta \right] 
\leq \exp\left( -\frac{\eta^2 k\sqrt{n}/2}{\tau \eps(1+\lambda+\eps) + \eta k\eps/3} \right).
\]
\end{lemma}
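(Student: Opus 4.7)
The plan is to apply Freedman's inequality (Theorem~\ref{t:Freedman}) to a suitable martingale. Define $Y_t := \sum_{s=1}^{t} \bpar{\E[\Delta_s \mid S_{s-1}] - \Delta_s}$ for $0 \leq t \leq \tau$ with $Y_0 := 0$, so that $\{Y_t\}$ is a martingale with respect to the natural filtration generated by $S_0, S_1, \ldots$, and the event whose probability we want to bound is exactly $\{Y_\tau \geq \eta\}$. The increments $X_t := Y_t - Y_{t-1} = \E[\Delta_t \mid S_{t-1}] - \Delta_t$ form a martingale difference sequence. Applying Freedman requires a deterministic upper bound $R$ on $X_t$ and a bound $\sigma^2$ on $W_\tau$; I expect to take $R = \eps/\sqrt{n}$ and $\sigma^2 = \tau\eps(1+\lambda+\eps)/(2k\sqrt{n})$, after which the claim reduces to plugging these into Freedman's inequality.

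For the deterministic bound, I would first show that $\Delta_t^+, \Delta_t^- \in [0, \eps/(2\sqrt{n})]$. Writing $p := \inner{vv^T}{A_t^{1/2}}$ and $q := \inner{vv^T}{A_t}$ for the relevant vector $v$, the key observation is $q = \|A_t^{1/2}v\|^2 \leq \|A_t^{1/2}\|_{\rm op} \cdot \|A_t^{1/4}v\|^2 = \|A_t^{1/2}\|_{\rm op}\cdot p \leq p$, where the last step uses $\tr(A_t) = 1$ and $A_t \succeq 0$ to conclude $\|A_t^{1/2}\|_{\rm op} \leq 1$. Combined with the constraint $2\alpha p < 1/2$ for $i_t \in S'_{t-1}$ and $\alpha = \sqrt{n}/\eps$, monotonicity of $p \mapsto p/(1 \pm 2\alpha p)$ gives $\Delta_t^+, \Delta_t^- \leq 1/(2\alpha) = \eps/(2\sqrt{n})$. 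Hence $|\Delta_t| \leq \eps/(2\sqrt{n})$, and therefore $X_t \leq \eps/\sqrt{n}$.

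For the conditional variance, I would substitute the sampling distribution of $j_t$ into $\E[(\Delta_t^+)^2 \mid S_{t-1}] = \frac{1}{k}\sum_{j \notin S_{t-1}} \frac{x_j q_j^2}{1 + 2\alpha p_j}$, then use $q_j/(1+2\alpha p_j) \leq \eps/(2\sqrt{n})$ to replace one factor, yielding at most $\frac{\eps}{2k\sqrt{n}} \inner{\sum_{j \notin S_{t-1}} x_j v_jv_j^T}{A_t} \leq \frac{\eps}{2k\sqrt{n}}$ via $\sum_j x_j v_jv_j^T = I_n$ and $\tr(A_t) = 1$. The analogous calculation for $\Delta_t^-$ yields $\E[(\Delta_t^-)^2 \mid S_{t-1}] \leq \frac{\eps}{2k\sqrt{n}}\inner{Z_{t-1}}{A_t} \leq \frac{\eps(\lambda+\eps)}{2k\sqrt{n}}$, where the second inequality follows from Lemma~\ref{l:cospectral} together with $\alpha = \sqrt{n}/\eps$ and $\lambda_{\min}(Z_{t-1}) \leq \lambda$. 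Since $i_t$ and $j_t$ are sampled independently given $S_{t-1}$ and $\Delta_t^+, \Delta_t^- \geq 0$, we obtain $\E[X_t^2 \mid S_{t-1}] \leq \E[\Delta_t^2 \mid S_{t-1}] \leq \E[(\Delta_t^+)^2 \mid S_{t-1}] + \E[(\Delta_t^-)^2 \mid S_{t-1}] \leq \eps(1+\lambda+\eps)/(2k\sqrt{n})$; summing gives $W_\tau \leq \tau\eps(1+\lambda+\eps)/(2k\sqrt{n})$.

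The main obstacle is the variance computation: it requires cleanly exploiting $\|A_t^{1/2}\|_{\rm op} \leq 1$ to convert one of the two $q$-factors in $q^2$ into $\eps/(2\sqrt{n})$ via the uniform upper bound on $\Delta_t^\pm$, and then invoking the identities $\sum_i x_i v_iv_i^T = I_n$, $\tr(A_t)=1$, and Lemma~\ref{l:cospectral} to cancel out the remaining matrix inner products. A secondary subtlety is that $\lambda$ is itself a random variable; however, the bound on $W_\tau$ holds pathwise with the realized value of $\lambda$, and since Freedman's inequality is applied to the joint event $\{Y_\tau \geq \eta\text{ and }W_\tau \leq \sigma^2\}$, one can view the statement as a pathwise bound on each sample path (or, equivalently, condition on any upper bound $\lambda \leq L$ supplied by the downstream application, which in practice is $\lambda < 1-2\eps$ from the termination criterion). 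After substitution, Freedman's inequality delivers $\exp\bpar{-\eta^2 k\sqrt{n}/\bpar{\tau\eps(1+\lambda+\eps) + 2\eta k \eps/3}}$, which is stronger than the stated bound.
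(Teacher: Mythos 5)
Your proof is correct and follows essentially the same route as the paper: the same martingale $Y_t=\sum_{s\le t}\big(\E[\Delta_s\mid S_{s-1}]-\Delta_s\big)$, the same deterministic increment bound $R=\eps/\sqrt{n}$ obtained from $\Delta_t^{\pm}\le 1/(2\alpha)$, and an application of Freedman's inequality, with the random nature of $\lambda$ handled exactly as the joint-event form of Freedman permits. The only difference is the variance step: the paper uses the cruder bound $\E[X_t^2\mid S_{t-1}]\le R\cdot\E[|X_t|\mid S_{t-1}]$ combined with the first-moment estimates $\E[\Delta_t^+\mid S_{t-1}]\le 1/k$ and $\E[\Delta_t^-\mid S_{t-1}]\le(\lambda+\eps)/k$, whereas you compute the conditional second moments directly from the sampling distributions, gaining a factor of $2$ in $\sigma^2$ and hence a slightly stronger bound that implies the stated one.
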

\begin{proof}
We define the following sequences of random variables where
$X_t := \E[\Delta_t \mid S_{t-1}] - \Delta_t$ and $Y_t := \sum_{l=1}^t X_l$.
Observe that $\{Y_t\}_t$ is a martingale with respect to $\{S_t\}_t$.
We use Freedman's inequality to bound $\Pr(Y_\tau \geq \eta)$.
To apply Freedman's inequality, we need to upper bound $X_t$ and $E[X_t^2 \mid S_{t-1}]$.
Note that
\[0 \leq \Delta_t^+ = \frac{\inner{v_{j_t} v_{j_t}^T}{A_t} }{1 + 2\alpha \inner{v_{j_t} v_{j_t}^T}{A_t^{1/2}} }
\leq \frac{\inner{v_{j_t} v_{j_t}^T}{A_t} }{2\alpha \inner{v_{j_t} v_{j_t}^T}{A_t^{1/2}} } \leq \frac{1}{2\alpha},
\]
where the last inequality holds as $0 \prec A_t \preceq I$.
Also,
\[
0 \leq \Delta_t^- = \frac{\inner{v_{i_t} v_{i_t}^T }{A_t}}{1 - 2\alpha \inner{v_{i_t} v_{i_t}^T}{A_t^{1/2}}}
\leq \frac{\inner{v_{i_t} v_{i_t}^T }{A_t^{1/2}}}{1 - 2\alpha \inner{v_{i_t} v_{i_t}^T}{A_t^{1/2}}} 
\leq \frac{1}{2\alpha},
\]
where the second last inequality is by $0 \prec A_t \preceq I$,
and the first and last inequality are because $i_t$ is chosen from the set $S'_{t-1} := \{i \mid 4\alpha\inner{v_iv_i^T}{A_t^{1/2}} < 1\}$.
(We remark that this upper bound on $\Delta_t^-$ is exactly the reason for the definition of $S_{t-1}'$.)
As these lower and upper bounds on $\Delta_t^+$ and $\Delta_t^-$ hold with probability one, we have the deterministic upper bound $X_t \leq R := \eps/\sqrt{n}$ as
\[
X_t = \E[\Delta_t \mid S_{t-1}] - \Delta_t
\leq \E[\Delta_t^+ \mid S_{t-1}] + \Delta_t^- \leq \frac{1}{\alpha} = \frac{\eps}{\sqrt{n}} = R.
\]
Next, we upper bound
\begin{equation*} 
\begin{aligned} 
\E[X^2_t \mid S_{t-1}] & \leq R \cdot \E[|X_t| \mid S_{t-1}]
\leq \frac{\eps}{\sqrt{n}} \Big(\E[\Delta^+_t \mid S_{t-1}] + \E[\Delta^-_t \mid S_{t-1}]\Big)
\leq \frac{\eps}{k\sqrt{n}} (1+\lambda+\eps),
\end{aligned}
\end{equation*}
where the last inequality follows from \eqref{eq:add-spec-exp} and \eqref{eq:del-spec-exp} that $\E[\Delta_t^+ \mid S_{t-1}] \leq 1/k$ and $\E[\Delta_t^- \mid S_{t-1}] \leq (\lambda+\eps)/k$.
Therefore, $W_\tau := \sum_{t=1}^\tau \E[X^2_t \mid S_{t-1}] \leq \tau \eps (1+\lambda+\eps) / (k\sqrt{n})$.
Applying Theorem~\ref{t:Freedman} with $R=\eps/\sqrt{n}$ and $\sigma^2 = \tau \eps (1+\lambda+\eps) / (k\sqrt{n})$,
it follows that
\[
\Pr(Y_\tau \geq \eta)
\leq \exp\left(-\frac{\eta^2/2}{\sigma^2 + R\eta/3}\right)
= \exp\left( -\frac{\eta^2 k\sqrt{n}/2}{\tau \eps(1+\lambda+\eps) + \eta k\eps/3} \right).
\]
The lemma follows as $Y_\tau \geq \eta$ is equivalent to 
$\sum_{t=1}^\tau \Delta_t \leq \left( \sum_{t=1}^\tau \E[\Delta_t \mid S_{t-1}]\right) - \eta$. 
\end{proof}

We are ready to prove that the algorithm terminates in a polynomial number of iterations with high probability.

\begin{theorem} \label{t:eigenvalue}
The probability that the iterative randomized swapping algorithm does not terminate in $qk/\eps$ iterations for $q \geq 2$ is at most $\exp(-\Omega(q\sqrt{n}))$.
\end{theorem}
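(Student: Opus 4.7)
The plan is to combine the regret bound~\eqref{e:regret-rank-two}, the expected-progress Lemma~\ref{l:spec-exp-t}, and the concentration Lemma~\ref{l:spec-whp-t}. Let $\tau := qk/\eps$, and let $\mathcal{B}$ denote the ``bad'' event that the algorithm has not terminated after $\tau$ iterations. On $\mathcal{B}$ the while-loop condition forces $\lambda_{\min}(Z_t) < 1 - 2\eps$ for all $0 \leq t \leq \tau$, so in particular $\lambda := \max_{0 \leq t \leq \tau} \lambda_{\min}(Z_t) < 1 - 2\eps$. My goal is to show $\Pr[\mathcal{B}] \leq \exp(-\Omega(q\sqrt{n}))$.

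The first step is to observe that $\mathcal{B}$ forces a large gap between the actual and the expected total progress. On $\mathcal{B}$, Lemma~\ref{l:spec-exp-t} with $\lambda < 1 - 2\eps$ gives
\[
\sum_{t=1}^\tau \E\bigl[\Delta_t \,\big|\, S_{t-1}\bigr] \;\geq\; \frac{\tau}{k}\bigl(1 - \eps - \lambda\bigr) \;>\; \frac{\tau \eps}{k} \;=\; q,
\]
while the regret inequality~\eqref{e:regret-rank-two}, rearranged with $\lambda_{\min}(Z_\tau) < 1 - 2\eps$, yields $\sum_{t=1}^\tau \Delta_t < 1$. Combining, the event $\mathcal{B}$ is contained in the event that $\sum_{t=1}^\tau \Delta_t$ deviates below its conditional expectation by at least $q - 1 \geq q/2$, using $q \geq 2$.

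The second step is to apply Lemma~\ref{l:spec-whp-t} with $\eta := q - 1$. The parameter $\lambda$ appearing in that bound is only used through the variance estimate $1 + \lambda + \eps$; on $\mathcal{B}$ this is at most $2$ (since $\lambda < 1 - 2\eps$ and $\eps < 1/2$), a deterministic constant. Substituting $\tau = qk/\eps$, the denominator of the Freedman exponent becomes $O(qk)$ while the numerator is $\Omega(q^2 k \sqrt{n})$, so the probability of the downward deviation is $\exp(-\Omega(q\sqrt{n}))$. Since $\mathcal{B}$ is contained in this deviation event, the theorem follows.

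The main obstacle is being careful about the fact that $\lambda$ is itself a random variable, so Lemma~\ref{l:spec-whp-t} cannot be invoked naively. I would resolve this via a stopping-time version of Freedman's inequality, exactly as in the proof of Lemma~\ref{l:spec-whp-t}: the quadratic variation $W_t$ is bounded on $\mathcal{B}$ using the deterministic upper bound $\lambda \leq 1 - 2\eps$, and the ``there exists $t$'' formulation of Theorem~\ref{t:Freedman} then controls $\Pr[\mathcal{B}]$ directly. This is essentially bookkeeping, since the sampling probabilities at each step are measurable with respect to $S_{t-1}$, but it is the one technical point that requires attention.
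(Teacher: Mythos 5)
Your proposal is correct and follows essentially the same route as the paper's proof: containing the non-termination event in the event that $\sum_{t=1}^\tau \Delta_t$ falls at least $q-1$ below its conditional expectation (via Lemma~\ref{l:spec-exp-t} and the regret bound~\eqref{e:regret-rank-two}), then invoking Lemma~\ref{l:spec-whp-t} with $\eta = q-1$ and $\tau = qk/\eps$. Your added remark about handling the random variable $\lambda$ through the ``there exists $t$'' form of Freedman's inequality is a valid and slightly more careful treatment of a point the paper passes over quickly.
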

\begin{proof}
Let $\tau = qk/\eps$.
Suppose $\lambda = \max_{0 \leq t \leq \tau} \lambda_{\min}(Z_t) < 1-2\eps$.
Then, Lemma~\ref{l:spec-exp-t} implies that 
\[
\sum_{t=1}^{\tau} \E \left[ \Delta_t \mid S_{t-1}\right] 
\geq \frac{\tau}{k}(1-\eps-\lambda)
= \frac{q}{\eps}(1-\eps-\lambda) > q,
\]
and the regret minimization bound in (\ref{e:regret-rank-two}) implies that 
\[
1 - 2\eps > \lambda_{\min}(Z_{\tau}) 
\geq \left(\sum_{t=1}^\tau \Delta_t\right) - 2\eps
\quad \implies \quad 
\sum_{t=1}^\tau \Delta_t < 1.
\]
Therefore,
\begin{equation*}
\begin{aligned}
\Pr\left[ \bigcap_{t=0}^{\tau} \Big(\lambda_{\min}(Z_t) < 1-2\eps \Big)\right]
& \leq
\Pr\left[ \sum_{t=1}^{\tau} \Delta_t < \left( \sum_{t=1}^{\tau} \E \left[ \Delta_t \mid S_{t-1}\right] \right) -(q-1) \right]
\\
& \leq \exp\left( -\frac{(q-1)^2 k\sqrt{n}/2}{(qk/\eps) \eps(1+(1-2\eps)+\eps) + (q-1) k\eps/3} \right)
\\
& \leq \exp(-\Omega(q\sqrt{n})),
\end{aligned}
\end{equation*}
where the second inequality is by Lemma~\ref{l:spec-whp-t} with $\eta=q-1$ and $\tau=qk/\eps$ and the last inequality is by the assumption that $q \geq 2$.
\end{proof}

So, for example, the probability that the algorithm does not terminate in $2k/\eps$ iterations is at most $\exp(-\Omega(\sqrt{n}))$ and the probability that it does not terminate in $k\sqrt{n}/\eps$ iterations is at most $\exp(-\Omega(n))$.

\subsubsection{Bounding the Linear Constraints} \label{ss:cost}

For an arbitrary non-negative linear constraint $c \in \R^m_+$,
the goal in this subsection is to prove that $c(S_t) \approx \inner{c}{x}$ with high probability for any $t$, where we recall that $c(S_t) := \sum_{i \in S_t} c_i$ is the ``cost'' of the solution at time $t$.
We first bound the expected change of the cost in an iteration.

\begin{lemma} \label{l:cost-1st-moment}
Suppose $\lambda_{\min}(Z_{t-1}) < 1$.
Then
\[
\frac{1}{k} \Big(\inner{c}{x} - c(S_{t-1})\Big) 
\leq \E[c_{j_t} - c_{i_t} \mid S_{t-1}]
\leq \frac{1}{k} \Big(\inner{c}{x} - c(S_{t-1}) + \frac{14nc_{\infty}}{\eps}\Big).
\]
\end{lemma}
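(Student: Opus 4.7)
The plan is to expand $\E[c_{j_t}-c_{i_t}\mid S_{t-1}]$ using the sampling distributions, split into a ``main'' part (without $2\alpha$) that will give the telescoping term $\inner{c}{x}-c(S_{t-1})$, and an ``$\alpha$-correction'' part that will supply both the positive slack needed for the lower bound and the additive $O(nc_\infty/\eps)$ error for the upper bound. Concretely, I would write
\[
k\cdot\E[c_{j_t}-c_{i_t}\mid S_{t-1}] = \underbrace{\sum_{j\notin S_{t-1}} x_j c_j - \sum_{i\in S_{t-1}'} (1-x_i)c_i}_{\text{main}} + 2\alpha\underbrace{\sum_{j\notin S_{t-1}} x_j c_j \inner{v_jv_j^T}{A_t^{1/2}}}_{E_1} + 2\alpha\underbrace{\sum_{i\in S_{t-1}'}(1-x_i)c_i\inner{v_iv_i^T}{A_t^{1/2}}}_{E_2},
\]
where the sign in front of $E_2$ comes from the minus sign outside the $(1-2\alpha\langle\cdot\rangle)$ factor in the $i_t$ distribution.

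For the main term, I would use $S_{t-1}'\subseteq S_{t-1}$ and rewrite
\[
\sum_{j\notin S_{t-1}} x_j c_j - \sum_{i\in S_{t-1}'}(1-x_i)c_i = \inner{c}{x} - c(S_{t-1}) + \sum_{j\in S_{t-1}\setminus S_{t-1}'}(1-x_j)c_j,
\]
which is manifestly $\geq \inner{c}{x} - c(S_{t-1})$ since $x_j\in[0,1]$. Combining this with $E_1,E_2\geq 0$ (since $A_t^{1/2}\succ 0$ and $c,x\geq 0$) immediately yields the lower bound.

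For the upper bound I need to control three error contributions. The extra main-term sum is dominated by $c_\infty\,|S_{t-1}\setminus S_{t-1}'|$, and by the definition of $S_{t-1}'$ every such $j$ has $\inner{v_jv_j^T}{A_t^{1/2}}\geq 1/(4\alpha)$, so
\[
|S_{t-1}\setminus S_{t-1}'| \leq 4\alpha \inner{Z_{t-1}}{A_t^{1/2}} \leq 4\alpha\bigl((n/\alpha)+\sqrt{n}\,\lambda_{\min}(Z_{t-1})\bigr)\cdot (\tfrac{1}{\sqrt n})\cdot\sqrt n
\]
via Lemma~\ref{l:cospectral}; since the algorithm is still running we have $\lambda_{\min}(Z_{t-1})<1$, yielding $|S_{t-1}\setminus S_{t-1}'|\leq 4(1+\eps)n/\eps\leq 8n/\eps$. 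For $E_1$ I would pull out $c_\infty$ and use $\sum_j x_j v_jv_j^T=I_n$ together with $\tr(A_t^{1/2})\leq\sqrt{n}$ (from~\eqref{e:trace}) to get $E_1\leq c_\infty\sqrt{n}$, hence $2\alpha E_1\leq 2nc_\infty/\eps$. For $E_2$ I would bound $(1-x_i)\leq 1$, extend the sum to all of $S_{t-1}$, and apply Lemma~\ref{l:cospectral} exactly as above to obtain $2\alpha E_2 \leq 4nc_\infty/\eps$. Summing $8nc_\infty/\eps + 2nc_\infty/\eps + 4nc_\infty/\eps = 14nc_\infty/\eps$ gives the claimed upper bound.

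The main (mild) obstacle is booking the three sources of error carefully so the constants add up to $14$; the key technical ingredient in both directions is the ``cospectral'' bound on $\inner{Z_{t-1}}{A_t^{1/2}}$ from Lemma~\ref{l:cospectral}, without which one could not control either $|S_{t-1}\setminus S_{t-1}'|$ or the $E_2$ term. Everything else is bookkeeping that exploits $S_{t-1}'\subseteq S_{t-1}$, $x\in[0,1]^m$, $\sum_i x_iv_iv_i^T=I_n$, and $\alpha=\sqrt{n}/\eps$.
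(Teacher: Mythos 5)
Your proof is correct and follows essentially the same route as the paper's: the same expansion into a main term plus the two $2\alpha$-weighted correction sums, the same bound $|S_{t-1}\setminus S_{t-1}'|\leq 8n/\eps$ and $E_2$ bound via Lemma~\ref{l:cospectral}, and the same trace bound $\tr(A_t^{1/2})\leq\sqrt{n}$ for $E_1$, with the identical $8+2+4=14$ error budget. The only difference is cosmetic — you combine $\E[c_{j_t}]$ and $\E[c_{i_t}]$ into one expression before bounding, while the paper bounds each expectation separately and then subtracts.
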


\begin{proof}
We first bound the conditional expectation of $c_{j_t}$.
By the probability distribution of $j_t$,
\begin{eqnarray*}
\E[c_{j_t} \mid S_{t-1}] & = & 
\frac{1}{k} \sum_{j \in [m] \setminus S_{t-1}} c_j x_j (1+2\alpha \inner{v_j v_j^T}{A_t^{1/2}}) 
\\
& = & \frac{1}{k} \bigg(\inner{c}{x} - \sum_{j \in S_{t-1}} c_j x_j + 2\alpha \sum_{j \in [m] \backslash S_{t-1}} c_j x_j\inner{v_j v_j^T}{A_t^{1/2}} \bigg).
\end{eqnarray*}
Note that
\[
0 \leq  2\alpha \sum_{j \in [m] \backslash S_{t-1}} c_j x_j\inner{v_j v_j^T}{A_t^{1/2}} 
\leq 2\alpha c_{\infty} \sum_{j=1}^m x_j \inner{v_j v_j^T}{A_t^{1/2}} 
= 2\alpha c_{\infty} \tr(A_t^{1/2}) 
\leq \frac{2nc_{\infty}}{\eps},
\]
where the equality holds as $\sum_{j=1}^m x_j v_j v_j^T = I_n$ and the last inequality is by~\eqref{e:trace} and $\alpha=\sqrt{n}/\eps$. 
Therefore, 
\begin{equation} \label{eq:add-cost-exp}
\frac{1}{k} \Big( \inner{c}{x} - \sum_{i \in S_{t-1}} c_ix_i \Big)
\leq \E[c_{j_t} \mid S_{t-1}] 
\leq \frac{1}{k} \Big( \inner{c}{x}  - \sum_{i \in S_{t-1}} c_ix_i + \frac{2nc_{\infty}}{\eps}\Big).
\end{equation}
Next we bound the expectation of $c_{i_t}$.
By the probability distribution of $i_t$,
\begin{eqnarray*}
\E[c_{i_t} \mid S_{t-1}] 
& = & \frac{1}{k} \sum_{i \in S'_{t-1}} c_i (1-x_i)(1-2\alpha \inner{v_i v_i^T}{A_t^{1/2}})
\\ 
& = & \frac{1}{k} \Big(\sum_{i \in S'_{t-1}} c_i(1-x_i) - 2\alpha \sum_{i \in S'_{t-1}} c_i (1-x_i) \inner{v_i v_i^T}{A_t^{1/2}} \Big)
\\
& = & \frac{1}{k} \Big( c(S_{t-1}) - \sum_{i \in S_{t-1}} c_i x_i - \sum_{i \in S_{t-1} \backslash S'_{t-1}} c_i (1-x_i) - 2\alpha \sum_{i \in S'_{t-1}} c_i (1-x_i) \inner{v_i v_i^T}{A_t^{1/2}} \Big).
\end{eqnarray*}
We would like to bound the last two terms of the right hand side.
Recall that $S'_{t-1} := \{ i \in S_{t-1} \mid 4\alpha\inner{v_iv_i^T}{A_t^{1/2}} < 1\}$.
This implies that
\[
|S_{t-1} \setminus S'_{t-1}|
\leq \sum_{i \in S_{t-1} \setminus S'_{t-1}} 4\alpha \inner{v_i v_i^T}{A_t^{1/2}}
\leq 4\alpha \sum_{i \in S_{t-1}} \inner{v_i v_i^T}{A_t^{1/2}}
\leq 4\big(n + \alpha \sqrt{n} \cdot \lambda_{\min}(Z_{t-1})\big)
\leq \frac{8n}{\eps},
\]
where the second last inequality uses Lemma~\ref{l:cospectral} and the last inequality is by $\alpha = \sqrt{n}/\eps$ and the assumption that $\lambda_{\min}(Z_{t-1}) \leq 1$.
Since $x \in [0,1]^m$ and $c \geq 0$, it follows that the second last term is 
\[
0 \leq \sum_{i \in S_{t-1} \backslash S'_{t-1}} c_i (1-x_i) \leq c_{\infty} \cdot |S_{t-1} \backslash S'_{t-1}| \leq \frac{8nc_{\infty}}{\eps}.
\]
Similarly, for the last term,
\[
0 \leq 2\alpha \sum_{i \in S'_{t-1}} c_i (1-x_i) \inner{v_i v_i^T}{A_t^{1/2}} 
\leq 2 c_{\infty} \cdot \alpha \sum_{i \in S_{t-1}} \inner{v_i v_i^T}{A_t^{1/2}}
\leq 2c_{\infty}(n+\alpha\sqrt{n}\cdot \lambda_{\min}(Z_{t-1})) 
\leq \frac{4nc_{\infty}}{\eps}.
\]
Plugging back these upper and lower bounds for the last two terms, we obtain
\begin{equation} \label{eq:del-cost-exp}
\frac{1}{k} \Big( c(S_{t-1}) - \sum_{i \in S_{t-1}} c_ix_i - \frac{12n c_{\infty}}{\eps} \Big)
\leq \E[c_{i_t} \mid S_{t-1}] 
\leq \frac{1}{k} \Big( c(S_{t-1}) - \sum_{i \in S_{t-1}} c_ix_i \Big).
\end{equation}
The lemma follows by combining the bounds for the expectations of $c_{i_t}$ and $c_{j_t}$ in (\ref{eq:add-cost-exp}) and (\ref{eq:del-cost-exp}).
\end{proof}

To bound the difference between $c(S_t)$ and $\inner{c}{x}$, 
we consider the following sequences of random variables where
\begin{equation} \label{eq:XY}
Y_t := \frac{c(S_t) - \inner{c}{x}}{c_{\infty}}  {\rm~for~} t \geq 0
\quad {\rm and} \quad
X_t := Y_t - Y_{t-1} = \frac{c_{j_t} - c_{i_t}}{c_{\infty}} {\rm~for~} t \geq 1.
\end{equation}
Note that Lemma~\ref{l:cost-1st-moment} shows that the sequence $\{Y_t\}_t$ has the ``self-adjusting'' property that if $Y_t$ is (more) positive then $E[Y_{t+1}] - Y_t$ is (more) negative and vice versa, so intuitively $Y_t$ cannot be too far away from zero.
The sequence $\{Y_t\}_t$ is not a martingale, and so we cannot apply Freedman's inequality to prove concentration.
Instead, we will use Theorem~\ref{t:adjusting} to prove that the absolute value of $Y_t$ is small with high probability.
To apply Theorem~\ref{t:adjusting},
we need to bound the conditional second moment of $X_t$ and the moment generating function of the initial solution $S_0$.

\begin{lemma} \label{l:cost-2nd-moment}
Suppose $\lambda_{\min}(Z_{t-1}) < 1$.
Then
\[
\E[(c_{j_t}-c_{i_t})^2 \mid S_{t-1}] 
\leq \frac{c_{\infty}}{k} \cdot \Big( \inner{c}{x} + c(S_{t-1}) + \frac{2nc_{\infty}}{\eps} \Big).
\]
\end{lemma}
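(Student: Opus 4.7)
The plan is to bound the conditional second moment by reducing it to the first-moment bounds already obtained in the proof of Lemma~\ref{l:cost-1st-moment}, exploiting two structural facts: first, that conditional on $S_{t-1}$ the random choices $i_t$ and $j_t$ are sampled independently (the two probability distributions in the algorithm are defined using only $S_{t-1}$ and $A_t$, and the samples are drawn independently); and second, that every coordinate of $c$ lies in $[0,c_\infty]$, with the convention $c_\emptyset = 0$.

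First I would expand $(c_{j_t}-c_{i_t})^2 = c_{j_t}^2 - 2c_{i_t}c_{j_t} + c_{i_t}^2$ and use independence of $i_t$ and $j_t$ given $S_{t-1}$ to write $\E[c_{i_t}c_{j_t}\mid S_{t-1}] = \E[c_{i_t}\mid S_{t-1}]\cdot \E[c_{j_t}\mid S_{t-1}] \geq 0$, since $c \geq 0$. Dropping the cross term gives
\[
\E[(c_{j_t}-c_{i_t})^2 \mid S_{t-1}] \leq \E[c_{j_t}^2 \mid S_{t-1}] + \E[c_{i_t}^2 \mid S_{t-1}].
\]
Then, since $0 \leq c_{i_t}, c_{j_t} \leq c_\infty$ pointwise, I would apply $c_{\cdot}^2 \leq c_\infty \cdot c_{\cdot}$ to each term, obtaining
\[
\E[(c_{j_t}-c_{i_t})^2 \mid S_{t-1}] \leq c_\infty \cdot \Big( \E[c_{j_t}\mid S_{t-1}] + \E[c_{i_t}\mid S_{t-1}]\Big).
\]

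Finally, I would plug in the upper bounds on the first moments established inside the proof of Lemma~\ref{l:cost-1st-moment}, namely the right-hand inequalities in \eqref{eq:add-cost-exp} and \eqref{eq:del-cost-exp}:
\[
\E[c_{j_t}\mid S_{t-1}] \leq \frac{1}{k}\Big(\inner{c}{x} - \sum_{i \in S_{t-1}} c_i x_i + \frac{2nc_\infty}{\eps}\Big), \qquad \E[c_{i_t}\mid S_{t-1}] \leq \frac{1}{k}\Big(c(S_{t-1}) - \sum_{i \in S_{t-1}} c_i x_i\Big).
\]
Summing these two bounds, the nonnegative term $2\sum_{i\in S_{t-1}} c_i x_i$ can be dropped, yielding the claimed inequality.

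There is no real obstacle here; the only mild subtlety is the need for independence of $i_t$ and $j_t$ to kill the cross-term $\E[c_{i_t}c_{j_t}\mid S_{t-1}]$ (which is necessary because we do not have a useful upper bound on it otherwise), together with the convention that $c_{\emptyset} = 0$ so that the pointwise estimate $c_{\cdot}^2 \leq c_\infty c_{\cdot}$ remains valid when no swap occurs.
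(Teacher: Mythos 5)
Your proof is correct and follows essentially the same route as the paper: both arguments reduce $\E[(c_{j_t}-c_{i_t})^2 \mid S_{t-1}]$ to $c_\infty \cdot \E[c_{j_t}+c_{i_t}\mid S_{t-1}]$ (the paper via $(c_{j_t}-c_{i_t})^2 \leq \max|c_{j_t}-c_{i_t}|\cdot|c_{j_t}-c_{i_t}| \leq c_\infty(c_{j_t}+c_{i_t})$, you via expanding the square) and then plug in the first-moment bounds \eqref{eq:add-cost-exp} and \eqref{eq:del-cost-exp}. One small simplification: you do not actually need independence of $i_t$ and $j_t$ to discard the cross term, since $c_{i_t}c_{j_t}\geq 0$ pointwise already forces $\E[c_{i_t}c_{j_t}\mid S_{t-1}]\geq 0$.
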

\begin{proof}
Since $c_i \geq 0$ for all $1 \leq i \leq m$,
\begin{eqnarray*}
\E[ (c_{j_t} - c_{i_t})^2 \mid S_{t-1}]  
& \leq & \max_{i_t, j_t} | c_{j_t} - c_{i_t} | \cdot \E[ |c_{j_t} - c_{i_t}| \mid S_{t-1}] 
\\
& \leq & c_{\infty} \cdot \E[ c_{j_t} + c_{i_t} \mid S_{t-1}]
\\
& \leq & \frac{c_{\infty}}{k} \Big( \inner{c}{x} + c(S_{t-1}) + \frac{2nc_{\infty}}{\eps}\Big),
\end{eqnarray*}
where the last inequality is by \eqref{eq:add-cost-exp} and \eqref{eq:del-cost-exp}.
\end{proof}

We use the fact that the initial solution $S_0$ is generated randomly to bound its moment generating function.

\begin{lemma} \label{l:mgf}
For $a \in [-1,1]$, 
\[
\E\left[ e^{a Y_0} \right] \leq e^{a^2 \inner{c}{x} / c_{\infty}}.
\]
\end{lemma}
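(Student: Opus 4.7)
\textbf{Proof plan for Lemma~\ref{l:mgf}.} The key observation is that $Y_0$ is a centered sum of independent bounded random variables. Writing $\xi_i := (c_i/c_\infty)(\mathbf{1}[i \in S_0] - x_i)$ for each $i \in [m]$, the construction of $S_0$ (include each $i$ independently with probability $x_i$) gives
\[
Y_0 \;=\; \frac{c(S_0) - \inner{c}{x}}{c_\infty} \;=\; \sum_{i=1}^m \xi_i,
\]
where the $\xi_i$ are mutually independent, each satisfies $\E[\xi_i]=0$, and $|\xi_i| \leq c_i/c_\infty \leq 1$ since $c_i \leq c_\infty$. Therefore
\[
\E\left[e^{aY_0}\right] \;=\; \prod_{i=1}^m \E\left[e^{a\xi_i}\right]
\]
by independence, and it remains to bound each factor.

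For the per-coordinate MGF I would use the elementary bound $e^{y} \leq 1 + y + y^2$ valid for all $|y| \leq 1$. Since $|a| \leq 1$ and $|\xi_i| \leq 1$, we have $|a\xi_i| \leq 1$, so
\[
\E\left[e^{a\xi_i}\right] \;\leq\; 1 + a\,\E[\xi_i] + a^2\, \E[\xi_i^2] \;=\; 1 + a^2 \,\E[\xi_i^2] \;\leq\; \exp\!\left(a^2 \,\E[\xi_i^2]\right),
\]
using $\E[\xi_i]=0$ and $1+u \leq e^u$. This is the standard sub-Gaussian style estimate for bounded centered variables, and it is the only non-routine ingredient.

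Finally I would bound the total variance. Since $\mathbf{1}[i \in S_0]$ is Bernoulli$(x_i)$,
\[
\E[\xi_i^2] \;=\; \frac{c_i^2}{c_\infty^2}\, x_i(1-x_i) \;\leq\; \frac{c_i^2}{c_\infty^2}\, x_i \;\leq\; \frac{c_i}{c_\infty}\, x_i,
\]
where the last step again uses $c_i \leq c_\infty$. Summing gives $\sum_{i} \E[\xi_i^2] \leq \inner{c}{x}/c_\infty$, and combining with the product bound yields the claim
\[
\E\left[e^{aY_0}\right] \;\leq\; \exp\!\left(a^2 \sum_{i=1}^m \E[\xi_i^2]\right) \;\leq\; \exp\!\left(\frac{a^2 \inner{c}{x}}{c_\infty}\right).
\]
There is no real obstacle here: the structure is a Bennett/Bernstein-style MGF estimate adapted to the weighted Bernoulli sampling of $S_0$, and the two mild uses of $c_i \leq c_\infty$ are what replace the usual Var-bound by the cleaner quantity $\inner{c}{x}/c_\infty$ that later feeds into the initial-concentration hypothesis of Theorem~\ref{t:adjusting}.
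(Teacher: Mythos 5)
Your proof is correct and follows essentially the same route as the paper's: factor the moment generating function by the independence of the Bernoulli inclusions, apply $e^p \leq 1+p+p^2$ for $p \leq 1$ together with $1+u \leq e^u$, and use $c_i \leq c_\infty$ to convert the second-moment sum into $\inner{c}{x}/c_\infty$. The only (cosmetic) difference is that you center each summand before bounding its MGF, whereas the paper bounds $\E[e^{a c(S_0)/c_\infty}]$ directly and cancels the linear term $a\inner{c}{x}/c_\infty$ at the end.
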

\begin{proof}
Let $\chi_i$ be the indicator variable where $\chi_i=1$ if $i \in S_0$ and $\chi_i=0$ otherwise.
Since the algorithm constructs $S_0$ by sampling each vector independently with probability $x_i$, it follows that
\[
\E\left[e^{ac(S_0)/c_{\infty}}\right]
= \E\left[e^{a \sum_{i=1}^m \chi_i c_i /c_{\infty}}\right]
= \prod_{i=1}^m \E\left[e^{a \chi_i c_i /c_{\infty}}\right]
= \prod_{i=1}^m \left( 1-x_i + x_i e^{a c_i / c_{\infty}} \right).
\]
Note that $a c_i / c_{\infty} \leq 1$ as $a \in [-1,1]$ and $c_i/c_{\infty} \leq 1$, and thus $e^{a c_i / c_{\infty}} \leq 1 + a c_i/c_{\infty} + a^2 c_i^2 / c_{\infty}^2$ as $e^p \leq 1 + p + p^2$ for $p \leq 1$. Therefore,
\[
\E\left[e^{a c(S_0)/c_{\infty}} \right] 
\leq \prod_{i=1}^m \left(1 + \frac{a c_i x_i}{c_{\infty}} + \frac{a^2 c_i^2 x_i}{c_{\infty}^2} \right) 
\leq \exp\left( \sum_{i=1}^m \bigg(\frac{a c_i x_i}{c_{\infty}} + \frac{a^2 c_i x_i}{c_{\infty}} \bigg) \right) 
= \exp\left(\frac{(a+a^2)\inner{c}{x}}{c_{\infty}}\right),
\]
where the second inequality uses $1+p \leq e^p$ for $p \in \R$ and $c_i \leq c_{\infty}$ for $1 \leq i \leq m$. 
The claim follows as $Y_0 = (c(S_0) - \inner{c}{x})/c_{\infty}$.
\end{proof}

We are ready to apply Theorem~\ref{t:adjusting} to bound the cost.

\begin{theorem} \label{t:cost}
Suppose the iterative randomized swapping algorithm terminates at the $\tau$-th iteration.
Let $c \in \R^m_+$.
For any $\delta_1 \in [0,1]$, 
\[
\Pr\left[c(S_\tau) \leq (1+\delta_1) \inner{c}{x} + \frac{15nc_{\infty}}{\eps}\right] \geq 1- \exp\left[- \Omega\Big(\frac{\delta_1 n}{\eps}\Big)\right].
\]
Also, for any $\delta_2 \in [0,1]$ and $\delta_3 > 0$, 
\[
\Pr\Big[ c(S_\tau) \geq (1-\delta_2)\inner{c}{x} - \delta_3 n c_{\infty} \Big] \geq 1 - \exp\Big(-\Omega\left( \min\{\delta_2 \delta_3, \eps \delta_3^2\} \cdot n \right)\Big).
\]
\end{theorem}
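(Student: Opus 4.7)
The plan is to apply the self-adjusting concentration inequality Theorem~\ref{t:adjusting} to the sequence $\{Y_t\}$ in \eqref{eq:XY}, since Lemmas~\ref{l:cost-1st-moment}, \ref{l:cost-2nd-moment}, and \ref{l:mgf} have been engineered to supply exactly its four hypotheses. First I would verify them: the bounded-difference condition $|X_t| \leq 1$ is immediate from $0 \leq c_i \leq c_\infty$; after substituting $c(S_{t-1})/c_\infty = Y_{t-1} + \inner{c}{x}/c_\infty$, Lemma~\ref{l:cost-1st-moment} gives the self-adjusting condition with $\gamma = 1/k$, $\beta_u = 14n/(k\eps)$, $\beta_l = 0$; Lemma~\ref{l:cost-2nd-moment} gives the variance condition with the same $\gamma$ and $\sigma = 2(\inner{c}{x}/c_\infty + n/\eps)/k$; and since $\sigma/\gamma = 2\inner{c}{x}/c_\infty + 2n/\eps \geq \inner{c}{x}/c_\infty$, Lemma~\ref{l:mgf} yields the initial-concentration condition.

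Next I would apply Theorem~\ref{t:adjusting} twice, at $t = \tau$. Write $A := \inner{c}{x}/c_\infty$ and $B := n/\eps$, so $\beta_u/\gamma = 14B$ and $\sigma/\gamma = 2A + 2B$. For the upper tail, I would set $\eta := \delta_1 A + B$ so that $\beta_u/\gamma + \eta = \delta_1 A + 15B$ matches the target $\bigl((1+\delta_1)\inner{c}{x}+15nc_\infty/\eps-\inner{c}{x}\bigr)/c_\infty$; the exponent $\eta^2 / \bigl(4(\sigma+\beta_u)/\gamma + 2\eta\bigr)$ becomes $(\delta_1 A + B)^2 / \bigl((8+2\delta_1)A + 66B\bigr)$, and using $(\delta_1 A + B)^2 \geq \max\{2\delta_1 AB,\,B^2\}$ together with a two-case split on whether $A$ or $B$ dominates the denominator shows this is $\Omega(\delta_1 B) = \Omega(\delta_1 n/\eps)$. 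For the lower tail, I would set $\eta := \delta_2 A + \delta_3 n$ so that $-\beta_l/\gamma - \eta = -\delta_2 A - \delta_3 n$ matches the target; the denominator $4\sigma/\gamma + \eta$ is $O(A + B + \delta_3 n)$, and using $(\delta_2 A + \delta_3 n)^2 \geq \max\{2\delta_2\delta_3 nA,\,\delta_3^2 n^2\}$ with a three-case split on which of $A$, $B$, or $\delta_3 n$ dominates the denominator yields an exponent of $\Omega\bigl(\min\{\delta_2\delta_3,\,\eps\delta_3^2\}\cdot n\bigr)$, as claimed.

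The one subtlety is that Theorem~\ref{t:adjusting} is phrased for deterministic $t$ while $\tau$ is a random stopping time. However, its hypotheses hold for all $t' \leq \tau$ (since $\lambda_{\min}(Z_{t'-1}) < 1-2\eps < 1$ throughout the run of the algorithm, so Lemmas~\ref{l:cost-1st-moment} and \ref{l:cost-2nd-moment} apply), so the pointwise bound transfers to $Y_\tau$ via the standard stopping-time argument underlying the proof of Theorem~\ref{t:adjusting}. Beyond this minor bookkeeping, the proof is essentially parameter-matching: the real content has been isolated in the preceding lemmas and in the concentration inequality itself, and I do not expect any substantive obstacle.
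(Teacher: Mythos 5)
Your proposal is correct and follows essentially the same route as the paper: it invokes Theorem~\ref{t:adjusting} with exactly the parameters $\gamma = 1/k$, $\beta_u = 14n/(k\eps)$, $\beta_l = 0$, $\sigma = \frac{2}{kc_\infty}\big(\inner{c}{x} + nc_\infty/\eps\big)$ supplied by Lemmas~\ref{l:cost-1st-moment}, \ref{l:cost-2nd-moment} and \ref{l:mgf}, and the same choices $\eta = \delta_1\inner{c}{x}/c_\infty + n/\eps$ and $\eta = \delta_2\inner{c}{x}/c_\infty + \delta_3 n$ for the two tails, with equivalent asymptotic simplification of the exponents. Your explicit check that $\sigma/\gamma \geq \inner{c}{x}/c_\infty$ and your remark on the random stopping time are, if anything, slightly more careful than the paper's own write-up.
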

\begin{proof}
As the algorithm terminates the first time when the minimum eigenvalue of the solution is at least $1-2\eps$, we can assume that $\lambda_{\min}(Z_t) < 1-2\eps < 1$ for $0 \leq t < \tau$.
We will apply Theorem~\ref{t:adjusting} on the sequences $\{X_t\}_t$ and $\{Y_t\}_t$ as defined in \eqref{eq:XY}.
Firstly, note that $|X_t| \leq 1$ by definition for all $t \geq 1$.
Secondly, as $\E[X_t \mid Y_0,...,Y_{t-1}] = \E[(c_{j_t} - c_{i_t})/c_{\infty} \mid S_{t-1}]$ and $Y_{t-1} = (c(S_{t-1}) - \inner{c}{x} )/c_{\infty}$, 
Lemma~\ref{l:cost-1st-moment} implies that
\[
\E\left[X_t \mid Y_0,...,Y_{t-1} \right] 
\leq \frac{1}{kc_{\infty}} \Big(\inner{c}{x} - c(S_{t-1}) + \frac{14nc_{\infty}}{\eps}\Big)  
= -\frac{Y_{t-1}}{k} + \frac{14n}{k\eps},
\]
and 
\[
\E\left[X_t \mid Y_0,...,Y_{t-1} \right] 
\geq \frac{1}{kc_\infty} \Big(\inner{c}{x} - c(S_{t-1})\Big)
= -\frac{Y_{t-1}}{k}.
\]
Thirdly, since $\E[X_t^2 \mid Y_0,...,Y_{t-1}] = \E[(c_{j_t} - c_{i_t})^2 / c_{\infty}^2 \mid S_{t-1}]$, Lemma~\ref{l:cost-2nd-moment} implies that
\[
\E[X_t^2 \mid Y_0, ..., Y_{t-1}] 
\leq \frac{1}{kc_{\infty}} \Big(\inner{c}{x} + c(S_{t-1}) + \frac{2nc_{\infty}}{\eps}\Big) 
= \frac{Y_{t-1}}{k} + \frac{2}{kc_{\infty}} \Big( \inner{c}{x} + \frac{nc_{\infty}}{\eps} \Big).
\]
Finally, Lemma~\ref{l:mgf} states that $\E[e^{aY_0}] \leq \exp(a^2 \inner{c}{x} / c_{\infty})$ for $a \in [-1,1]$.
By setting
\[
\gamma = \frac{1}{k}, \quad 
\beta_u = \frac{14n}{k\eps}, \quad
\beta_l = 0, \quad
\sigma = \frac{2}{kc_{\infty}} \Big( \inner{c}{x} + \frac{nc_{\infty}}{\eps} \Big),
\]
we can check that all the conditions of Theorem~\ref{t:adjusting} are satisfied.
Applying Theorem~\ref{t:adjusting} with $\eta = \delta_1 \inner{c}{x} / c_{\infty} + n/\eps$ for $\delta_1 \in [0,1]$,
\begin{eqnarray*}
\Pr\left[ c(S_t) \geq (1+\delta_1)\inner{c}{x} + \frac{15nc_{\infty}}{\eps} \right]
& = & \Pr\left[ Y_t \geq \frac{\beta_u}{\gamma} + \eta \right]
\\
& \leq & \exp\left[ - \frac{\eta^2}{4(\sigma+\beta_u)/\gamma +2\eta} \right]
\\
& = & \exp\left[ - \frac{c_{\infty} \eta^2}{8\inner{c}{x} + 64nc_{\infty}/\eps + 2\eta c_{\infty} } \right]
\\
& \leq & \exp\left[- \Omega\Big(\frac{\delta_1 n}{\eps}\Big) \right],
\end{eqnarray*}
where the last inequality is because $\eta c_{\infty} = O(\inner{c}{x} + nc_{\infty}/\eps)$ and thus the denominator is $\Theta(\inner{c}{x} + nc_{\infty}/\eps)$,
and the numerator is $\eta^2 c_{\infty} = \eta (\delta_1 \inner{c}{x} + nc_{\infty}/\eps) \geq (n /\eps) \delta_1(\inner{c}{x} + nc_{\infty}/\eps)$. 

Similarly, for the cost lower bound, we apply Theorem~\ref{t:adjusting} 
with $\eta = \delta_2 \inner{c}{x} / c_{\infty} + \delta_3 n$ for $\delta_2 \in [0,1]$ and $\delta_3 > 0$ to obtain
\begin{eqnarray*}
\Pr\left[ c(S_t) \leq (1-\delta_2)\inner{c}{x} - \delta_3 nc_{\infty} \right]
& = & \Pr\left[ Y_t \leq -\frac{\beta_l}{\gamma} - \eta\right]
\\
& \leq & \exp\left[ -\frac{\eta^2}{4\sigma/\gamma + \eta} \right]
\\
& = & \exp\left[ -\frac{\eta^2 c_{\infty}}{8(\inner{c}{x} + nc_{\infty}/\eps)+\eta c_{\infty}} \right]
\\
& \leq & \exp\left[ -\Omega \left( \frac{\delta_3 n(\delta_2 \inner{c}{x} + \delta_3n c_{\infty})}{\inner{c}{x} + nc_{\infty}/\eps + \delta_3 n c_{\infty}} \right) \right]
\\
& \leq & \exp\left[ -\Omega\left( \min\{\delta_2 \delta_3, \eps \delta_3^2\} \cdot n \right) \right],
\end{eqnarray*}
where the second last inequality 
is by similar calculations as in the previous case.
\end{proof}

\subsubsection{Exact One-Sided Spectral Rounding} \label{ss:exact}

Theorem~\ref{t:approx} follows directly from Theorem~\ref{t:eigenvalue} and Theorem~\ref{t:cost}.
This shows that the iterative randomized swapping algorithm will return a solution $S$ with $\sum_{i \in S} v_i v_i^T \succeq (1-2\eps)I_n$ and $c(S_t) \approx \inner{c}{x}$ with high probability for any $c \in \R^m_+$.

To prove Theorem~\ref{t:zero-one} where the goal is to return a solution $S$ with $\sum_{i \in S} v_i v_i^T \succeq I_n$, our idea is to scale up the fractional solution $x$ and then apply Theorem~\ref{t:approx}.
The following is the detailed description of the algorithm.

\begin{framed}
{\bf Exact One-Sided Spectral Rounding}

Input: $v_1, ..., v_m \in \R^n$ and $x \in [0, 1]^m$ with $\sum_{i=1}^m x_i v_i v_i^T = I_n$, and an error parameter $\eps \in (0, \frac14)$.

Output: a subset $S \subseteq [m]$ such that $\sum_{i \in S} v_i v_i^T \succcurlyeq I_n$ and $c(S) \approx \inner{c}{x}$ for any $c \in \R_+^m$ with high probability.
    \begin{enumerate}

    \item Define $y_i := x_i/(1-2\eps)$ and $u_i := \sqrt{1-2\eps} \cdot v_i$ for $i \in [m]$. Note that $\sum_{i=1}^m y_i  u_i u_i^T = I_n$.

    \item Let $S_{\rm big} := \{i \in [m] : y_i > 1 \}$, $S_{\rm small} := \{ i \in [m] : 0 \leq y_i \leq 1 \}$, and $Z_{\rm big} = \sum_{i \in S_{\rm big}} y_i  u_i u_i^T$.  
    
    \item Define $w_i := (I_n - Z_{\rm big})^{-\frac12} u_i$ for each $i \in S_{\rm small}$, so that $\sum_{i \in S_{\rm small}} y_i w_i w_i^T = I_n$\footnotemark.
    
    \item Apply the iterative randomized swapping algorithm with $\{ w_i \mid i \in S_{\rm small} \}$ and $\{y_i \mid i \in S_{\rm small}\}$ as input to obtain a solution set $S'_{\rm small} \subseteq S_{\rm small}$ with $\sum_{i \in S'_{\rm small}} w_i w_i^T\succeq (1-2\eps)I_n$.

    \item Return $S := S_{\rm big} \cup S'_{\rm small}$ as the solution. 
\end{enumerate}
\end{framed}

\footnotetext{If $I_n-B$ is singular, we first project the vectors to the orthogonal complement of the nullspace before applying the transformation.
We can add dummy coordinates to keep the vectors to have the same dimension $n$ for simplicity of the analysis.}

\begin{proofof}{Theorem~\ref{t:zero-one}}
We first analyze the spectral lower bound.
By the definitions of $w_i$ and $u_i$,
\[
\sum_{i \in S'_{\rm small}} w_i w_i^T \succcurlyeq (1-2\eps) I_n
\implies 
\sum_{i \in S'_{\rm small}} u_i u_i^T \succcurlyeq (1-2\eps)(I_n - Z_{\rm big}) 
\implies 
\sum_{i \in S'_{\rm small}} v_i v_i^T \succcurlyeq I_n - Z_{\rm big}.
\]
For the vectors in $S_{\rm big}$, as $x_i \in [0,1]$,
\[
\sum_{i \in S_{\rm big}} v_i v_i^T \succcurlyeq \sum_{i \in S_{\rm big}} x_i  v_i v_i^T = \sum_{i \in S_{\rm big}} y_i  u_i u_i^T = Z_{\rm big}.
\]
Therefore, it follows that
\[
\sum_{i \in S} v_i v_i^T 
= \sum_{i \in S_{\rm big} \cup S'_{\rm small}} v_i v_i^T 
= \sum_{i \in S'_{\rm small}} v_i v_i^T + \sum_{i \in S_{\rm big}} v_i v_i^T 
\succcurlyeq (I_n - Z_{\rm big}) + Z_{\rm big}
= I_n.
\]

Next, we prove that $c(S) \approx \inner{c}{x}$ with high probability for any vector $c \in \R^m_+$. 
Let $\inner{c}{x}_{\rm small} := \sum_{i \in S_{\rm small}} c_i x_i$ 
and $\inner{c}{x}_{\rm big} := \sum_{i \in S_{\rm big}} c_i x_i$. 
For the vectors in $S_{\rm big}$, as $y_i > 1$ for $i \in S_{\rm big}$ and $y_i = x_i / (1-2\eps)$ for all $i \in [m]$, it follows that
\[
\inner{c}{x}_{\rm big} \leq c(S_{\rm big}) \leq \inner{c}{y}_{\rm big} = \frac{\inner{c}{x}_{\rm big}}{1-2\eps}.
\]
For the vectors in $S_{\rm small}$, by Theorem~\ref{t:approx} with $\delta_1=\eps$, the returned set $S'_{\rm small}$ in Step $4$ satisfies the cost upper bound
\[
c(S'_{\rm small}) \leq (1+\eps)\inner{c}{y}_{\rm small} + \frac{15nc_{\infty}}{\eps}
= \frac{(1+\eps)\inner{c}{x}_{\rm small}}{1-2\eps}  + \frac{15nc_{\infty}}{\eps}
\]
with probability at least $1-\exp(-\Omega(n))$, 
which implies that for $\eps \in (0,\frac{1}{4})$,
\[
c(S) = c(S_{\rm big}) + c(S'_{\rm small}) 
\leq \frac{1+\eps}{1-2\eps} \big( \inner{c}{x}_{\rm big} + \inner{c}{x}_{\rm small} \big)  + \frac{15nc_{\infty}}{\eps}
\leq (1+6\eps)\inner{c}{x} + \frac{15nc_{\infty}}{\eps}.
\]
Similarly, by Theorem~\ref{t:approx} with $\delta_2=\eps$ and $\delta_3=\delta$ for some $\delta>0$, the returned set $S'_{\rm small}$ in Step $4$ satisfies the cost lower bound
\[
c(S'_{\rm small}) \geq (1-\eps)\inner{c}{y}_{\rm small} - \delta n c_{\infty}
= \frac{1-\eps}{1-2\eps} \inner{c}{x}_{\rm small} - \delta n c_{\infty}
\geq \inner{c}{x}_{\rm small} - \delta n c_{\infty}
\]
with probability at least $1-\exp(-\Omega( \min\{\eps \delta, \eps \delta^2\} \cdot n))$, which implies that
\[
c(S) = c(S_{\rm big}) + c(S'_{\rm small}) 
\geq \inner{c}{x}_{\rm big} + \inner{c}{x}_{\rm small} - \delta n c_{\infty}
= \inner{c}{x} - \delta n c_{\infty}.
\]
\end{proofof}

\subsubsection{Proof of the Concentration Inequality for Self-Adjusting Process (Theorem~\ref{t:adjusting})} \label{ss:concentration}

The proof is by computing the moment generating function of $Y_t$ and applying Markov's inequality, which is standard in concentration inequalities.
In the following, we write the conditional expectation as $\E_t[ \cdot ] := \E[ \cdot | Y_0, ..., Y_{t-1}]$ for simplicity. 

{\bf Upper Tail:}
We start with the proof for the upper tail. 
For any $a \in [0,1]$, the conditional moment generating function of $X_t$ with any given $Y_0, ..., Y_{t-1}$ is
\begin{eqnarray*}
\E_t\left[ e^{a X_t} \right] 
= \E_t\left[ \sum_{l=0}^{\infty} \frac{a^l X_t^l}{l!} \right]
& \leq & \E_t \left[ 1 + a X_t + X_t^2 \sum_{l=2}^{\infty} \frac{a^l}{l!} \right] 
\\
& = & 1 + a \E_t[X_t] + \E_t[X_t^2] \cdot (e^{a} - 1 - a)
\\
& \leq & 1 + a \E_t[X_t] + a^2 \E_t[X_t^2] 
\\
& \leq & 1 - a \gamma Y_{t-1} + a \beta_u + a^2\gamma Y_{t-1} + a^2 \sigma 
\\
& \leq & \exp\left( a^2 \sigma + a \beta_u - \gamma(1-a) a Y_{t-1}  \right),
\end{eqnarray*}
where the first inequality is by the bounded difference property that $|X_t| \leq 1$ always, 
the second inequality is because $e^p \leq 1+ p + p^2$ for $p \leq 1$, 
the third inequality is by the self-adjusting property and the bounded variance property and $a \geq 0$, and the last inequality uses $1+p \leq e^p$ for $p \in \R$.
Then we can bound the moment generating function of $Y_t$ as
\begin{eqnarray*}
\E_{Y_0,...,Y_{t}}\left[ e^{a Y_t} \right] 
& = & \E_{Y_0,...,Y_{t-1}}\left[ e^{a Y_{t-1}} \cdot \E_t\left[ e^{a X_t} \right] \right] 
\\
& \leq & \E_{Y_0,...,Y_{t-1}} \left[ \exp\big( a^2 \sigma + a \beta_u + (1 - \gamma(1-a)) a Y_{t-1} \big)\right]
\\ 
& \leq & \exp\left( a^2 \sigma + a \beta_u \right) \cdot \E_{Y_0,...,Y_{t-1}}\left[ \exp\left( a \left( 1 - (1-a)\gamma \right) Y_{t-1}\right) \right] 
\\
& = & \exp\left( a^2 \sigma + a \beta_u \right) \cdot \E_{Y_0,...,Y_{t-1}}\left[ \exp\left( f(a) \cdot Y_{t-1}\right) \right],
\end{eqnarray*}
where we define $f(a) := a( 1- (1-a)\gamma)$.
Note that $f(a) \in [0,a]$ by the assumptions $\gamma \in (0,\frac12]$ and $a \in [0,1]$. 
Define the sequence $a_{(0)} = a$ and $a_{(i)} = f(a_{(i-1)})$ for $i \geq 1$. 
Apply the same argument inductively, it follows that
\begin{equation*}
\E_{Y_0,...,Y_{t}} \left[ e^{a Y_t} \right] 
\leq \exp\left[ \sum_{i=0}^{t-1} \Big( a_{(i)}^2 \sigma + a_{(i)} \beta_u \Big) \right] \cdot \E_{Y_0} \left[ e^{a_{(t)} Y_0} \right]   
\leq \exp\left[ \sum_{i=0}^{t-1} \Big( a_{(i)}^2 \sigma + a_{(i)} \beta_u \Big) + \frac{a^2_{(t)} \sigma}{\gamma} \right],
\end{equation*}
where the last inequality follows from the initial concentration property of $Y_0$.
To bound the moment generating function,
we use the following claim whose proof follows from the definition of the sequence $\{a_{(i)}\}_i$.

\begin{claim} \label{cl:seq-u}
The sequence $\{a_{(i)}\}_{i \geq 0}$ is decreasing and dominated by the geometric sequence $\{ar^{i}\}_{i \geq 0}$ with common ratio $r := 1-(1-a)\gamma$.
The sequence $\{a_{(i)}^2\}_i$ is also decreasing and dominated by the geometric sequence $\{a^2 r^{2i}\}_{i \geq 0}$ with common ratio $r^2$.
Furthermore, $r^2 < r < 1$ when $a \in [0,1)$.
\end{claim}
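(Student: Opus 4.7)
The plan is to verify all parts of the claim by direct induction on the recursion $a_{(i+1)} = f(a_{(i)})$, where $f(x) = x(1-(1-x)\gamma) = x - \gamma x(1-x)$. Essentially all the work is in setting up the right monotonicity observation for $f$, after which each part of the claim drops out.

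First I would establish two preliminary facts about $f$ on the interval $[0,1]$. Since $\gamma \in (0, 1/2]$, for $x \in [0,1]$ we have $(1-x)\gamma \in [0,1/2]$, hence $1-(1-x)\gamma \in [1/2, 1]$. This immediately gives $f(x) \in [0, x] \subseteq [0,1]$, so $f$ preserves $[0,1]$ and $f(x) \leq x$ on $[0,1]$. Consequently the iterates $a_{(i)}$ stay in $[0,1]$ and the sequence $\{a_{(i)}\}$ is (weakly) decreasing, and therefore so is $\{a_{(i)}^2\}$ by non-negativity. The second key observation is that the multiplier $1-(1-x)\gamma$ is itself monotonically increasing in $x$; this is what will let me propagate a geometric bound through the recursion.

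Next I would prove the geometric domination $a_{(i)} \leq a r^i$ by induction, where $r = 1-(1-a)\gamma$. The base case $i=0$ is trivial. For the inductive step, since $a_{(i)} \leq a$ by the induction hypothesis (and the monotonicity of the sequence), the monotonicity of $x \mapsto 1-(1-x)\gamma$ gives $1-(1-a_{(i)})\gamma \leq 1-(1-a)\gamma = r$. Therefore
\[
a_{(i+1)} = a_{(i)}\bigl(1-(1-a_{(i)})\gamma\bigr) \leq a_{(i)} \cdot r \leq a r^{i+1},
\]
completing the induction. Squaring this bound (allowed since $a_{(i)} \geq 0$) yields $a_{(i)}^2 \leq a^2 r^{2i}$.

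Finally, the statement $r^2 < r < 1$ when $a \in [0,1)$ is immediate: $a<1$ and $\gamma>0$ force $r = 1-(1-a)\gamma < 1$, while $\gamma \leq 1/2$ ensures $r \geq 1-\gamma \geq 1/2 > 0$, so $0 < r < 1$ and hence $r^2 < r$. I do not anticipate a real obstacle here; the only place where care is needed is in the inductive step, where one must remember to invoke \emph{both} the monotonicity of the sequence (to get $a_{(i)} \leq a$) \emph{and} the monotonicity of the multiplier (to convert that into a bound by $r$). Once that is in place, the rest is bookkeeping.
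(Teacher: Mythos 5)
Your proof is correct. The paper omits the proof of this claim entirely (stating only that it ``follows from the definition of the sequence''), and your direct induction --- using that $f(x)=x(1-(1-x)\gamma)$ maps $[0,1]$ into $[0,x]$ together with the monotonicity of the multiplier $x \mapsto 1-(1-x)\gamma$ --- is exactly the intended elementary argument.
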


Using Claim~\ref{cl:seq-u}, when $a \in [0,1)$, 
we can upper bound the moment generating function by
\begin{eqnarray*}
\E_{Y_0,...,Y_{t}}\left[ e^{a Y_t} \right] 
& \leq & \exp\left[ \big(a^2 \sigma + a \beta_u \big) \sum_{i=0}^{t-1} r^i + \frac{a^2 \sigma r^t}{\gamma} \right]
\\
& = & \exp\left[ \big(a^2 \sigma + a \beta_u \big) \cdot \frac{1-r^t}{1-r} + \frac{a^2 \sigma r^t}{\gamma} \right]
\\ 
& = & \exp\left[ \frac{a^2 \sigma + a \beta_u}{(1-a) \gamma} \cdot (1-r^t) + \frac{a^2 \sigma r^t}{\gamma}\right] 
\\
& \leq & \exp\left[ \frac{a^2 \sigma + a \beta_u }{(1-a)\gamma} \right],
\end{eqnarray*}
where the last inequality uses $a \in [0,1)$.
By Markov inequality, for any $a \in [0,1)$ and any $\eta > 0$,
\begin{eqnarray*}
\Pr\left[ Y_t \geq \frac{\beta_u}{\gamma} + \eta \right] 
= \Pr\left[ e^{a Y_t} \geq e^{a \left(\frac{\beta_u}{\gamma} + \eta\right)} \right]
& \leq & \E_{Y_0,...,Y_{t}}\left[e^{a Y_t}\right] / e^{a \left(\frac{\beta_u}{\gamma} + \eta\right)}
\\ 
& \leq & \exp\left[ \frac{a^2 \sigma + a \beta_u}{(1-a)\gamma} - a \Big(\frac{\beta_u}{\gamma} + \eta \Big) \right] 
\\
& = & \exp\left[\frac{a^2 (\sigma + \beta_u)}{(1-a)\gamma} - a \eta \right].
\end{eqnarray*}
To prove the best upper bound, we optimize over $a$ and set 
\[
a = 1 - \sqrt{\frac{(\sigma+\beta_u)/\gamma}{(\sigma+\beta_u)/\gamma + \eta}}
= 1 - \sqrt{\frac{\nu}{\nu+\eta}},
\]
where we use $\nu := (\sigma+\beta_u)/\gamma$ as a shorthand.
Notice that $a \in [0,1)$ as $\sigma,\gamma,\eta>0$ and $\beta_u \geq 0$, so the above probability bound applies.
Putting this choice of $a$ back into the exponent on the right hand side,
the exponent is
\begin{eqnarray*}
\frac{a^2 (\sigma + \beta_u)}{(1-a)\gamma} - a \eta 
& = & \frac{(1-\sqrt{\nu/(\nu+\eta)})^2\cdot \nu}{\sqrt{\nu/(\nu+\eta)}} - \left(1-\sqrt{\frac{\nu}{\nu+\eta}}\right) \cdot \eta 
\\
& = & \left(1 + \frac{\nu}{\nu+\eta} - 2 \sqrt{\frac{\nu}{\nu+\eta}}\right)\cdot \sqrt{\nu(\nu+\eta)} - \left(1-\sqrt{\frac{\nu}{\nu+\eta}}\right) \cdot \eta 
\\
& = & \sqrt{\nu(\nu+\eta)} + \nu \sqrt{\frac{\nu}{\nu+\eta}}  -2\nu - \eta + \eta \sqrt{\frac{\nu}{\nu+\eta}} 
\\
& = & -(2\nu+\eta) + 2\sqrt{\nu(\nu+\eta)} 
\\
& = & -(2\nu+\eta) + \sqrt{(2\nu+\eta)^2 - \eta^2} 
\\
& = & -(2\nu+\eta) + (2\nu+\eta) \sqrt{1 - \frac{\eta^2}{(2\nu+\eta)^2}} 
\\
& \leq & -\frac{\eta^2/2}{2\nu+\eta},
\end{eqnarray*}
where we used $\sqrt{1-p} \leq 1 - p/2$ for $p \in [0,1]$ in the last inequality. 
Therefore, we conclude that
\[\Pr(Y_t \geq \eta) 
\leq \exp\left( \frac{a^2 (\sigma + \beta_u)}{(1-a)\gamma} - a \eta \right) 
\leq \exp\left( - \frac{\eta^2 /2}{2(\sigma + \beta_u)/\gamma + \eta} \right),
\]
which completes the proof for the upper tail.

{\bf Lower Tail:}
The proof for the lower tail is quite similar to that for the upper tail. 
The main difference is that we work with the moment generating function $\E[e^{-a Y_t}]$, instead of $\E[e^{a Y_t}]$. 
For any $a \in [0,1]$, the conditional moment generating function of $-X_t$ is
\begin{eqnarray*}
\E_t\left[ e^{-a X_t} \right] 
= \E_t\left[ \sum_{l=0}^{\infty} \frac{(-a)^l X_t^l}{l!} \right] 
& \leq & \E_t \left[ 1 - a X_t + X_t^2 \sum_{l=2}^{\infty} \frac{a^l}{l!} \right] 
\\
& = & 1 - a \E_t[X_t] + \E_t[X_t^2] \cdot (e^{a} - 1 - a) 
\\
& \leq & 1 - a \E_t[X_t] + a^2 \E_t[X_t^2] 
\\
& \leq & 1 + a \gamma Y_{t-1} + a \beta_l + a^2\gamma Y_{t-1} + a^2 \sigma 
\\
& \leq & \exp\left( a^2 \sigma + a \beta_l + \gamma(1+a) a Y_{t-1} \right),
\end{eqnarray*}
where the first inequality is by the bounded difference property $|X_t| \leq 1$ and $a \geq 0$, 
the second inequality is because $e^p \leq 1 + p + p^2$ for $p \leq 1$,
the third inequality is by the self-adjusting property and the bounded variance property and $a \geq 0$,
and the last inequality is by $1+p \leq e^p$ for $p \in \R$.
Then we can bound the moment generating function of $Y_t$ as
\begin{eqnarray*}
\E_{Y_0,...,Y_{t}}\left[ e^{-a Y_t} \right] 
& = & \E_{Y_0,...,Y_{t-1}}\left[ e^{-a Y_{t-1}} \cdot \E_t\left[ e^{-a X_t} \right] \right] 
\\
& \leq & \E_{Y_0,...,Y_{t-1}} \left[ \exp\big( a^2 \sigma + a \beta_l - a(1 - \gamma(1+a)) Y_{t-1} \big)\right]
\\ 
& \leq & \exp\left( a^2 \sigma + a \beta_l \right) \cdot \E_{Y_0,...,Y_{t-1}}\left[ \exp\left( -a \left( 1 - (1+a)\gamma \right) Y_{t-1}\right) \right] 
\\
& = & \exp\left( a^2 \sigma + a \beta_l \right) \cdot \E_{Y_0,...,Y_{t-1}}\left[ \exp\left( -g(a) \cdot Y_{t-1}\right) \right],
\end{eqnarray*}
where we define $g(a) := a( 1- (1+a)\gamma)$.
Note that $g(a) \in [0,a]$ by the assumptions $\gamma \in (0,\frac12]$ and $a \in [0,1]$. 
Define the sequence $a_{(0)} = a$ and $a_{(i)} = g(a_{(i-1)})$ for $i \geq 1$. 
Apply the same argument inductively, it follows that
\begin{equation*}
\E_{Y_0,...,Y_{t}} \left[ e^{-a Y_t} \right] 
\leq \exp\left[ \sum_{i=0}^{t-1} \Big( a_{(i)}^2 \sigma + a_{(i)} \beta_l \Big) \right] \cdot \E_{Y_0} \left[ e^{-a_{(t)} Y_0} \right]   
\leq \exp\left[ \sum_{i=0}^{t-1} \Big( a_{(i)}^2 \sigma + a_{(i)} \beta_l \Big) + \frac{a^2_{(t)} \sigma}{\gamma} \right],
\end{equation*}
where the last inequality follows from the initial concentration property of $Y_0$.
To bound the moment generating function,
we use the following claim whose proof follows from the definition of the sequence $\{a_{(i)}\}_i$.

\begin{claim} \label{cl:seq-l}
The sequence $\{a_{(i)}\}_{i \geq 0}$ is decreasing and dominated by the geometric sequence $\{ar^{i}\}_{i \geq 0}$ with common ratio $r := 1-\gamma$.
The sequence $\{a_{(i)}^2\}_i$ is also decreasing and dominated by the geometric sequence $\{a^2 r^{2i}\}_{i \geq 0}$ with common ratio $r^2$.
Furthermore, $r^2 < r < 1$ when $a \in [0,1]$.
\end{claim}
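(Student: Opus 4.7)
The plan is to prove the three parts of Claim~\ref{cl:seq-l} by a straightforward induction on $i$, using the hypothesis $\gamma \in (0, \frac{1}{2}]$ to control the contraction factor of the map $g(a) = a\bigl(1-(1+a)\gamma\bigr)$.

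First I would establish the invariant $0 \leq a_{(i)} \leq a_{(i-1)}$. The base case $a_{(0)} = a \in [0,1]$ is given. For the inductive step, assuming $a_{(i-1)} \in [0,1]$, the factor $1 - (1+a_{(i-1)})\gamma$ lies in $[1-2\gamma,\, 1-\gamma]$, which is contained in $[0, 1-\gamma]$ because $\gamma \leq \frac{1}{2}$. Hence
$$0 \;\leq\; a_{(i)} \;=\; a_{(i-1)}\bigl(1 - (1+a_{(i-1)})\gamma\bigr) \;\leq\; (1-\gamma)\, a_{(i-1)},$$
which simultaneously gives non-negativity, monotonicity, and --- setting $r := 1-\gamma$ --- the one-step contraction $a_{(i)} \leq r\, a_{(i-1)}$. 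Iterating the contraction yields $a_{(i)} \leq a\, r^i$, proving the first geometric domination. For the squared statement, since $\{a_{(i)}\}$ is non-negative and decreasing, $\{a_{(i)}^2\}$ is also non-negative and decreasing; squaring the bound $a_{(i)} \leq a\, r^i$ (both sides non-negative) gives $a_{(i)}^2 \leq a^2 r^{2i}$.

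Finally, for $r^2 < r < 1$: the assumption $\gamma \in (0, \frac{1}{2}]$ gives $r = 1-\gamma \in [\frac{1}{2}, 1)$, whence $r < 1$ and, because $r > 0$, also $r^2 < r$. The condition $a \in [0,1]$ is needed only to keep the iterates within $[0,1]$, so that the bound $(1+a_{(i-1)})\gamma \leq 2\gamma$ remains valid throughout the induction. No step is substantively difficult; the only point worth flagging is that --- unlike Claim~\ref{cl:seq-u}, where $r$ depends on $a$ via $1-(1-a)\gamma$ --- here the $(1+a)\gamma$ factor in $g$ is minimized at $a=0$, so one obtains a uniform contraction by at least $1-\gamma$ that is independent of the starting value $a$.
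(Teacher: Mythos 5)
Your proof is correct and is exactly the routine induction the paper has in mind: the paper omits a proof of Claim~\ref{cl:seq-l}, stating only that it ``follows from the definition of the sequence,'' and the one-step bound $0 \leq a_{(i)} \leq (1-\gamma)a_{(i-1)}$ you derive from $\gamma \in (0,\tfrac12]$ and the invariant $a_{(i-1)} \in [0,1]$ is precisely the intended argument. Your closing remark contrasting the uniform ratio $1-\gamma$ here with the $a$-dependent ratio $1-(1-a)\gamma$ in Claim~\ref{cl:seq-u} is accurate and matches the paper's statements of the two claims.
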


Using Claim~\ref{cl:seq-l}, when $a \in [0,1]$, 
we can upper bound the moment generating function by
\begin{eqnarray*}
\E_{Y_0,...,Y_{t}}\left[ e^{-a Y_t} \right] 
& \leq & \exp\left[ \big(a^2 \sigma + a \beta_l \big) \sum_{i=0}^{t-1} r^i + \frac{a^2 \sigma r^t}{\gamma} \right]
\\
& = & \exp\left[ \big(a^2 \sigma + a \beta_l \big) \cdot \frac{1-r^t}{1-r} + \frac{a^2 \sigma r^t}{\gamma} \right]
\\ 
& = & \exp\left[ \frac{a^2 \sigma + a \beta_l}{\gamma} \cdot (1-r^t) + \frac{a^2 \sigma r^t}{\gamma}\right] 
\\
& \leq & \exp\left[ \frac{a^2 \sigma + a \beta_l }{\gamma} \right].
\end{eqnarray*}

By Markov inequality, for any $a \in [0,1]$ and any $\eta > 0$,
\begin{eqnarray*}
\Pr\left[ Y_t \leq -\frac{\beta_l}{\gamma} - \eta \right] 
= \Pr\left[ e^{-a Y_t} \geq e^{a \left(\frac{\beta_l}{\gamma} + \eta\right)} \right]
& \leq & \E_{Y_0,...,Y_{t}}\left[e^{-a Y_t}\right] / e^{a \left(\frac{\beta_l}{\gamma} + \eta\right)}
\\ 
& \leq & \exp\left[ \frac{a^2 \sigma + a \beta_l}{\gamma} - a \Big(\frac{\beta_l}{\gamma} + \eta \Big) \right] 
\\
& = & \exp\left[\frac{a^2 \sigma}{\gamma} - a \eta \right].
\end{eqnarray*}
When $\eta \leq 2\sigma/\gamma$, we set $a = (\eta \gamma) / (2\sigma) \in [0,1]$, so the above probability bound applies and gives
\[
\Pr\left[Y_t \leq -\frac{\beta_l}{\gamma} - \tau\right] 
\leq \exp\left[ -\frac{\eta^2 \gamma}{4\sigma} \right] 
\leq \exp\left[ -\frac{\eta^2}{4\sigma/\gamma + \eta} \right].
\]
When $\eta > 2\sigma/\gamma$, we simply set $a = 1$, and the above probability bound gives
\[
\Pr\left[Y_t \leq -\frac{\beta_l}{\gamma} - \eta\right] 
\leq \exp\left[ \frac{\sigma}{\gamma} - \eta \right] 
\leq \exp\left[ -\frac{\tau^2}{4\sigma/\gamma + \eta} \right],
\]
where the last inequality holds by the assumption that $\eta > 2\sigma/\gamma$.
This finishes the proof for the lower tail and thus the proof of Theorem~\ref{t:adjusting}.

\subsection{Two-Sided Spectral Rounding} \label{s:discrepancy}

In this section, we show that the two-sided spectral rounding result in Theorem~\ref{t:4std} can be extended to incorporate one non-negative linear constraint that is given as part of the input.

There is a standard reduction used in~\cite{SHS16} to construct spectral sparsifiers that satisfy additional linear constraints.
Suppose Corollary~\ref{c:4std} were to work for rank two matrices, then we can simply incorporate the linear constraint to the input matrices as $A_i := \begin{pmatrix}v_i v_i^T & 0 \\ 0 & c_i/\inner{c}{x} \end{pmatrix}$ so that $\sum_{i=1}^m x_i A_i = I_{n+1}$, and any $z \in \{0,1\}^m$ so that $\sum_{i=1}^m z_i A_i \approx I_{n+1}$ would have $\inner{c}{z} \approx \inner{c}{x}$.
But the rank one assumption is crucial in the proof of Theorem~\ref{t:4std} and it is an open problem to generalize it to work with higher rank matrices.

Our idea is to use the following signing trick, suggested to us by Akshay Ramachandran, to essentially carry out the same reduction using only rank one matrices.
We state the results in a more general form, where $\sum_{i=1}^m x_i v_i v_i^T$ is not necessarily equal to the identity matrix, so that we can also apply them to additive spectral sparsifiers in Section~\ref{s:additive}.

\begin{lemma} \label{l:Akshay}
Let $v_1, \ldots, v_m \in \R^n$, $x \in [0,1]^m$, and $c \in \R_+^m$.
Suppose $\norm{\sum_{i=1}^m x_i v_i v_i^T}_{\rm op} \leq \lambda$
and $\norm{v_i} \leq l$ for $1 \leq i \leq m$.
Then there exists a signing $s_1, \ldots, s_m \in \{\pm1\}$ such that if we let $u_i := \begin{pmatrix} v_i \\ s_i \sqrt{c_i \lambda/ \inner{c}{x}} \end{pmatrix} \in \R^{n+1}$ then $\norm{\sum_{i=1}^m x_i u_i u_i^T}_{\rm op} \leq \lambda + l \sqrt{\lambda}$.
\end{lemma}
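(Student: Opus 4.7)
Proof plan.

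The natural approach is to compute $\sum_{i=1}^m x_i u_i u_i^T$ explicitly as a $(n{+}1)\times(n{+}1)$ block matrix, bound its operator norm by separating the block-diagonal part from the block-off-diagonal part, and then use a probabilistic argument to choose the signs $s_i$ so that the off-diagonal block is small.

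First, I would write
\[
\sum_{i=1}^m x_i u_i u_i^T \;=\; \begin{pmatrix} M & w \\ w^T & \lambda \end{pmatrix},
\]
where $M := \sum_i x_i v_i v_i^T$ satisfies $\|M\|_{\rm op}\le\lambda$ by assumption, the bottom-right entry equals $\lambda \sum_i x_i c_i/\inner{c}{x} = \lambda$, and the off-diagonal block is the single vector
\[
w \;=\; \sqrt{\lambda/\inner{c}{x}}\cdot \sum_{i=1}^m s_i\, x_i \sqrt{c_i}\, v_i.
\]
Splitting this block matrix as $\begin{pmatrix} M & 0 \\ 0 & \lambda\end{pmatrix} + \begin{pmatrix} 0 & w \\ w^T & 0 \end{pmatrix}$ gives, by the triangle inequality for $\|\cdot\|_{\rm op}$, the bound $\lambda + \|w\|$. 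So it suffices to exhibit signs for which $\|w\|\le l\sqrt{\lambda}$, i.e.\ for which
\[
\Bigl\|\sum_{i=1}^m s_i\, x_i \sqrt{c_i}\, v_i \Bigr\| \;\le\; l\,\sqrt{\inner{c}{x}}.
\]

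For this, I would choose $s_1,\ldots,s_m\in\{\pm 1\}$ uniformly at random and independently, let $a_i := x_i\sqrt{c_i}\, v_i$, and use the standard calculation
\[
\E\Bigl\|\sum_{i=1}^m s_i a_i\Bigr\|^2 \;=\; \sum_{i=1}^m \|a_i\|^2 \;=\; \sum_{i=1}^m x_i^2\, c_i\, \|v_i\|^2 \;\le\; l^2 \sum_{i=1}^m x_i\, c_i \;=\; l^2\,\inner{c}{x},
\]
where the inequality uses $\|v_i\|\le l$ and $x_i\in[0,1]$ (so $x_i^2\le x_i$). Hence there exists a realization $s_1,\ldots,s_m$ achieving $\|\sum_i s_i a_i\|\le l\sqrt{\inner{c}{x}}$, which in turn yields $\|w\|\le l\sqrt{\lambda}$ and the claimed bound $\|\sum_i x_i u_i u_i^T\|_{\rm op}\le \lambda + l\sqrt{\lambda}$.

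There is no real obstacle here: the only slightly non-obvious point is recognizing that the $x_i^2\le x_i$ step (valid because $x_i\in[0,1]$) is what makes the second moment scale with $\inner{c}{x}$ rather than $\sum_i x_i^2 c_i$, which is exactly what is needed so that the $\sqrt{\inner{c}{x}}$ in the denominator of $w$ cancels. The rest is a routine block-matrix norm bound plus a first-moment averaging argument.
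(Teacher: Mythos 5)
Your proof is correct and follows essentially the same route as the paper's: the same block decomposition into a diagonal part of norm at most $\lambda$ plus an off-diagonal part of norm $\|w\|$, and the same uniform random signing with the second-moment computation using $x_i^2 \le x_i$ and $\|v_i\| \le l$ to get $\E\|w\|^2 \le l^2\lambda$. No gaps.
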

\begin{proof}
By the definition of $u_i$,
\begin{eqnarray*}
\sum_{i=1}^m x_i u_i u_i^T & = &
\begin{pmatrix} \sum_{i=1}^m x_i v_i v_i^T & \sum_{i=1}^m s_i x_i \sqrt{\dfrac{c_i \lambda}{\inner{c}{x}}} v_i
\\
\sum_{i=1}^m s_i x_i \sqrt{\dfrac{c_i \lambda}{\inner{c}{x}}} v_i^T & \sum_{i=1}^m \dfrac{c_i x_i\lambda}{\inner{c}{x}}
\end{pmatrix}
\\
& = &
\begin{pmatrix} \sum_{i=1}^m x_i v_i v_i^T & 0
\\
0 & \lambda
\end{pmatrix}
+
\begin{pmatrix} 0 & \sum_{i=1}^m s_i x_i \sqrt{\dfrac{c_i \lambda}{\inner{c}{x}}} v_i
\\
\sum_{i=1}^m s_i x_i \sqrt{\dfrac{c_i \lambda}{\inner{c}{x}}} v_i^T & 0
\end{pmatrix}
\end{eqnarray*}
The operator norm of the second matrix is bounded by $\norm{\sum_{i=1}^m s_i x_i \sqrt{c_i \lambda / \inner{c}{x}} v_i}$. It follows from triangle inequality that
$\norm{\sum_{i=1}^m x_i u_i u_i^T}_{\rm op} \leq \lambda + \norm{\sum_{i=1}^m s_i x_i \sqrt{c_i \lambda / \inner{c}{x}} v_i}$.
We show that there is a signing $s_1, \ldots, s_m \in \{\pm 1\}$ such that $\norm{\sum_{i=1}^m s_i x_i \sqrt{c_i \lambda / \inner{c}{x}} v_i} \leq l \sqrt{\lambda}$ and this will complete the proof.
Take a uniform random signing and consider
\begin{align*}
\E_{s \in \{\pm 1\}^m} \norm{\sum_{i=1}^m s_i x_i \sqrt{\frac{c_i \lambda}{\inner{c}{x}}} v_i}^2
= &~ \sum_{i=1}^m \E_s \left[s_i^2 x_i^2 \norm{v_i}^2 \frac{\lambda c_i}{\inner{c}{x}}\right] + 
\sum_{i \neq j} \E_s\left[s_i s_j x_i x_j \inner{v_i}{v_j} \frac{\lambda \sqrt{c_i c_j}}{\inner{c}{x}} \right]
\\
= &~ \sum_{i=1}^m x_i^2 \norm{v_i}^2 \frac{\lambda c_i}{\inner{c}{x}}
~ \leq ~ l^2 \sum_{i=1}^m \frac{\lambda c_i x_i}{\inner{c}{x}}
~ = ~ l^2 \lambda,
\end{align*}
where the second line uses that $s_i^2=1$, $\E[s_is_j]=\E[s_i] \cdot \E[s_j]=0$, and $x_i \in [0,1]$, $\norm{v_i} \leq l$ in the inequality.
This implies that there exists such a signing.
\end{proof}

We apply the signing in Lemma~\ref{l:Akshay} to incorporate one non-negative linear constraint into the two-sided spectral rounding result of Kyng, Luh and Song~\cite{KLS19}.

\begin{theorem} \label{t:two-sided-cost}
Let $v_1, \ldots, v_m \in \R^n$, $x \in [0,1]^m$, and $c \in \R_{+}^m$.
Suppose $\norm{\sum_{i=1}^m x_i v_i v_i^T}_{\rm op} \leq \lambda$ and
$\norm{v_i} \leq l$ for $1 \leq i \leq m$.  
Suppose further that $c_{\infty} \leq l^2 \inner{c}{x}/\lambda$ and $l \leq \sqrt{\lambda}$.
Then there exists $z \in \{0,1\}^m$ such that
\[
\norm{\sum_{i=1}^m x_i v_i v_i^T - \sum_{i=1}^m z_i v_i v_i^T}_{\rm op} \leq 8l\sqrt{\lambda}
\quad {\rm and} \quad
| \inner{c}{x} - \inner{c}{z} | \leq \frac{8l}{\sqrt{\lambda}} \inner{c}{x}
\]
\end{theorem}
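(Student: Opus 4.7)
The plan is to reduce the two-sided spectral rounding problem with one linear constraint to an application of Theorem~\ref{t:4std} on rank-one matrices in $\R^{n+1}$, using the signing trick in Lemma~\ref{l:Akshay} to encode the linear constraint into an extra coordinate.

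First I would apply Lemma~\ref{l:Akshay} to obtain a signing $s_1,\ldots,s_m \in \{\pm 1\}$ and vectors $u_i = \bigl(v_i;\, s_i\sqrt{c_i\lambda/\inner{c}{x}}\bigr) \in \R^{n+1}$ such that $\bigl\|\sum_{i=1}^m x_i u_i u_i^T\bigr\|_{\rm op} \leq \lambda + l\sqrt{\lambda} \leq 2\lambda$, where the last inequality uses the hypothesis $l \leq \sqrt{\lambda}$. Next I would bound $\|u_i\|^2 = \|v_i\|^2 + c_i\lambda/\inner{c}{x} \leq l^2 + c_\infty\lambda/\inner{c}{x} \leq 2l^2$, where the final step uses the hypothesis $c_\infty \leq l^2\inner{c}{x}/\lambda$.

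Then I would apply Theorem~\ref{t:4std} to the vectors $u_1,\ldots,u_m$ with independent Bernoulli variables $\xi_i$ satisfying $\Pr[\xi_i = 1] = x_i$ and $\Pr[\xi_i = 0] = 1-x_i$, so that $\E[\xi_i] = x_i$ and ${\bf Var}[\xi_i] = x_i(1-x_i) \leq x_i$. Using $(u_iu_i^T)^2 = \|u_i\|^2\, u_iu_i^T$, the variance term in Theorem~\ref{t:4std} satisfies
\[
\Bigl\|\sum_{i=1}^m {\bf Var}[\xi_i](u_iu_i^T)^2\Bigr\|_{\rm op} \leq \max_i\|u_i\|^2 \cdot \Bigl\|\sum_{i=1}^m x_i u_i u_i^T\Bigr\|_{\rm op} \leq 2l^2 \cdot 2\lambda = 4l^2\lambda.
\]
Thus Theorem~\ref{t:4std} produces $z \in \{0,1\}^m$ with $\bigl\|\sum_{i=1}^m (x_i - z_i)\, u_iu_i^T\bigr\|_{\rm op} \leq 4\sqrt{4l^2\lambda} = 8l\sqrt{\lambda}$.

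Finally, I would extract the two conclusions from this single bound by reading off principal submatrices. The top-left $n\times n$ block of $\sum_i (x_i-z_i)u_iu_i^T$ is exactly $\sum_i (x_i-z_i)v_iv_i^T$, whose operator norm is bounded above by that of the full matrix, giving the spectral guarantee. The bottom-right $1\times 1$ entry equals $\sum_i (x_i-z_i)c_i\lambda/\inner{c}{x}$ (note the sign factors $s_i^2 = 1$ disappear on the diagonal), and its absolute value is likewise bounded by $8l\sqrt{\lambda}$, which rearranges to $|\inner{c}{x}-\inner{c}{z}| \leq (8l/\sqrt{\lambda})\inner{c}{x}$. The only nontrivial conceptual step is the signing reduction of Lemma~\ref{l:Akshay}, which is what allows Theorem~\ref{t:4std}'s rank-one framework to simultaneously control the spectral and linear quantities; everything else is bookkeeping of constants under the given hypotheses.
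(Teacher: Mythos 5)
Your proposal is correct and follows essentially the same route as the paper's proof: the signing trick of Lemma~\ref{l:Akshay} to append the cost coordinate, Theorem~\ref{t:4std} applied to Bernoulli variables with means $x_i$, the bound $\norm{u_i}^2 \leq 2l^2$ from the cost hypothesis, and reading off the top-left block and bottom-right entry. The only cosmetic difference is that you invoke $l \leq \sqrt{\lambda}$ early to simplify $\lambda + l\sqrt{\lambda} \leq 2\lambda$, whereas the paper carries the exact expression and applies that hypothesis at the end; both yield the same constant $8l\sqrt{\lambda}$.
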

\begin{proof}
Let $u_i = \begin{pmatrix} v_i \\ s_i \sqrt{c_i \lambda / \inner{c}{x}} \end{pmatrix}$ for $1 \leq i \leq m$, where $s_1, \ldots, s_m$ is the signing given in Lemma~\ref{l:Akshay}.
By the assumption that $c_{\infty} \leq l^2 \inner{c}{x} / \lambda$, it follows that $\norm{u_i}^2 = \norm{v_i}^2 + c_i \lambda / \inner{c}{x} \leq 2l^2$. 
Let $\xi_i$ be a zero-one random variable with probability $x_i$ being one.
Applying Theorem~\ref{t:4std} on $u_1, \ldots, u_m$ and $\xi_1, \ldots, \xi_m$,
there exists $z \in \{0,1\}^m$ such that
\[
\norm{\sum_{i=1}^m x_i u_i u_i^T - \sum_{i=1}^m z_i u_i u_i^T}_{\rm op}
\leq 4\norm{\sum_{i=1}^m {\bf Var}[\xi_i] (u_i u_i^T)^2}_{\rm op}^{1/2}
\leq 4\norm{\sum_{i=1}^m x_i \norm{u_i}^2 u_i u_i^T}_{\rm op}^{1/2}
\leq 4\sqrt{2l^2(\lambda + l \sqrt{\lambda})},
\]
where we use that ${\bf Var}[\xi_i] =x_i(1-x_i) \leq x_i$, $\norm{u_i}^2 \leq 2l^2$ and $\norm{\sum_{i=1}^m x_i u_i u_i^T}_{\rm op} \leq \lambda + l \sqrt{\lambda}$ by Lemma~\ref{l:Akshay}.
By looking at the top left $n \times n$ block, this implies that $\norm{\sum_{i=1}^m x_i v_i v_i^T - \sum_{i=1}^m z_i v_i v_i^T}_{\rm op} \leq 4\sqrt{2l^2(\lambda+l\sqrt{\lambda})} \leq 8 l \sqrt{\lambda}$ where we use the assumption that $l \leq \sqrt{\lambda}$.
By looking at the bottom right entry, we have
\[
\left| \sum_{i=1}^m \frac{x_i c_i \lambda}{\inner{c}{x}} - \sum_{i=1}^m \frac{z_i c_i \lambda}{\inner{c}{x}} \right| \leq 4\sqrt{2l^2(\lambda+l\sqrt{\lambda})} \leq 8 l \sqrt{\lambda}
\quad \implies \quad
\left| \inner{c}{x} - \inner{c}{z} \right| \leq  \frac{8l}{\sqrt{\lambda}} \inner{c}{x}.
\]
\end{proof}

This proves Theorem~\ref{t:two-sided} that incorporates one non-negative linear constraint into Corollary~\ref{c:4std}, by plugging $\lambda=1$ and $l=\eps$ into Theorem~\ref{t:two-sided-cost}.

\subsection{Tight Examples} \label{ss:integrality}

We provide two examples showing the tightness of Theorem~\ref{t:zero-one}.

First, consider the following simple example, which shows the $n c_{\infty}$ additive error term is necessary.

\begin{example} \label{ex:bad}
There are $m=2n$ vectors $v_{11}, v_{12}, ..., v_{n1}, v_{n2} \in \R^n$, a vector $x \in [0,1]^m$, a vector $c \in \R_+^m$, and a parameter $\eps$.
They are defined as follows
\begin{align*}
& x_{i1} = 1, \quad v_{i1} = \sqrt{1-\eps} \cdot e_i, \quad c_{i1} = 0 \qquad \text{and} \\
& x_{i2} = \eps, \quad v_{i2} = e_i, \quad c_{i2} = c_{\infty}, \quad \forall i \in \{1,...,n\}.
\end{align*}
Note that $\inner{c}{x} = \eps n c_{\infty}$ and $\sum_{i=1}^n \sum_{j = 1,2} x_{ij} v_{ij} v_{ij}^T = I_n$. 
\end{example}

\begin{claim}
For any constant $\alpha > 1$, any $z \in \{0,1\}^m$ satisfying the spectral lower bound in Example~\ref{ex:bad} must have $\inner{c}{z} \geq \alpha \inner{c}{x} + \Omega(n c_{\infty})$.
\end{claim}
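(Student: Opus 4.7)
The plan is to extract a per-coordinate constraint from the spectral lower bound, show it forces all the unit-cost vectors to be picked, and then compare the resulting cost to $\alpha \inner{c}{x}$ for $\eps$ small.

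First I would observe that for any PSD matrix $M \succeq 0$ the diagonal entries satisfy $M_{ii} \geq 0$, so the spectral condition $\sum_{i,j} z_{ij} v_{ij} v_{ij}^T \succeq I_n$ implies, by reading off the $(i,i)$ entry (and using that $v_{i1}v_{i1}^T = (1-\eps) e_i e_i^T$ and $v_{i2}v_{i2}^T = e_i e_i^T$, so coordinate $i$ is decoupled from the others),
\[
(1-\eps) z_{i1} + z_{i2} \geq 1 \qquad \text{for every } i \in \{1,\dots,n\}.
\]
Since $z_{i1},z_{i2} \in \{0,1\}$ and $1-\eps < 1$, the choice $z_{i2}=0$ forces $(1-\eps)z_{i1} \geq 1$, which is impossible; hence $z_{i2}=1$ for every $i$.

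Next I would plug this back into the cost. Because $c_{i1}=0$ and $c_{i2}=c_\infty$ for all $i$, we get
\[
\inner{c}{z} = \sum_{i=1}^n c_{i2} z_{i2} + \sum_{i=1}^n c_{i1} z_{i1} = n c_\infty.
\]
Recall from the construction that $\inner{c}{x} = \eps n c_\infty$.

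Finally I would pick $\eps$ small relative to the given constant $\alpha > 1$. Concretely, choosing any $\eps \leq 1/(2\alpha)$ gives $\alpha \inner{c}{x} = \alpha \eps n c_\infty \leq \tfrac12 n c_\infty$, so
\[
\inner{c}{z} - \alpha \inner{c}{x} \geq n c_\infty - \tfrac12 n c_\infty = \tfrac12 n c_\infty = \Omega(n c_\infty),
\]
which is the stated bound. There is no real obstacle here: the argument is essentially a one-line diagonal inspection combined with the per-coordinate decoupling built into the example. The only subtle point is recognizing that the claim is implicitly parametrized by choosing $\eps$ small enough depending on $\alpha$, so that the multiplicative slack $(1+O(\eps))$ cannot absorb the $n c_\infty$ gap between the fractional cost and the integral cost.
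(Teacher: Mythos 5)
Your proof is correct and follows essentially the same route as the paper: identify which $z$ can satisfy the spectral lower bound (forcing $z_{i2}=1$ for every $i$, hence $\inner{c}{z}=nc_\infty$), compare against $\alpha\inner{c}{x}=\alpha\eps nc_\infty$, and choose $\eps$ small relative to $\alpha$. Your diagonal-entry argument is in fact slightly more careful than the paper's, which asserts that $z=\vec{1}_m$ is the unique feasible solution when in truth the $z_{i1}$ coordinates are unconstrained once $z_{i2}=1$ (the cost is unaffected either way since $c_{i1}=0$).
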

\begin{proof}
Note that the only vector $z \in \{0,1\}^m$ that satisfies the spectral lower bound exactly is $z = \vec{1}_m$.
This implies that $\inner{c}{z} - \alpha \inner{c}{x} = nc_{\infty} - \alpha \eps n c_{\infty} = (1-\alpha \eps) nc_{\infty}$.
For any $\alpha > 1$, there exists $\eps$ such that $\inner{c}{z} - \alpha \inner{c}{x}$ is at least say $nc_{\infty}/2$.
\end{proof}

Next, we modify an integrality gap example in~\cite{NST19} to show that, even if $c = \vec{1}$ and we are allowing integral-solution instead of zero-one solution, the additive error $O(nc_{\infty}/\eps)$ in Theorem~\ref{t:zero-one} is best possible.

\begin{example} \label{ex:Mohit}
The example contains $m=\binom{n}{2}$ vectors $v_1, ..., v_m \in \R^{n-1}$, a vector $x \in [0,1]^m$ and a vector $c = \vec{1}_m$. 
Let $\Pi \in \R^{(n-1) \times n}$ be the orthogonal projection onto the $(n-1)$-dimensional subspace orthogonal to the all-one vector. 
Given some parameter $k$, we define
\[
v_{ij} = \sqrt{\frac{n-1}{2k}} \cdot \Pi (\chi_i - \chi_j) \qquad \text{and} \qquad x_{ij} = \frac{2k}{n(n-1)}, \qquad \forall 1 \leq i < j \leq n.
\]
Note that $\inner{c}{x} = k$ and $\sum_{i<j} x_{ij} v_{ij} v_{ij}^T = I_{n-1}$ and $x$ has the smallest $\norm{x}_1$ among all vectors satisfying $\sum x_{ij} v_{ij} v_{ij}^T \succeq I_{n-1}$.
\end{example}

We will use the following result from~\cite{NST19}.

\begin{theorem}[Theorem C.2 in \cite{NST19}] \label{t:GeneralizedAB}
Let $G = (V , E)$ be a graph with average degree $d_{\rm avg} = 2m/n$, and let $L_G$ be its unnormalized
Laplacian matrix. Then, as long as $d_{\rm avg}$ is large enough, and $n$ is large enough with respect to $d_{\rm avg}$,
\[
\lambda_2(L_G) \leq d_{\rm avg} - \rho \sqrt{d_{\rm avg}},
\]
where $\lambda_2(L_G)$ is the second smallest eigenvalue of $L_G$, and $\rho > 0$ is an absolute constant. Furthermore, the upper bound for $\lambda_2(L_G)$ still holds for graphs with parallel edges.
\end{theorem}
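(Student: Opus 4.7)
The plan is to use the Courant--Fischer characterization
\[
\lambda_2(L_G) = \min_{f \perp \vec{1}_n,\, f \neq 0} \frac{\sum_{\{u,v\} \in E} (f(u)-f(v))^2}{\|f\|_2^2}
\]
together with an explicit Alon--Boppana test vector, adapted to the non-regular multigraph setting. Write $d := d_{\rm avg}$. The first step is to locate two ``typical'' vertices with disjoint, approximately tree-like $k$-neighborhoods. Since $\sum_v d_v = nd$, Markov's inequality gives at least $n/2$ vertices of degree $\leq 2d$; call these typical. Set $k = \lfloor c \log_d n \rfloor$ for a small constant $c$, so that each typical ball has at most $(2d)^k = n^{O(c)}$ vertices, which is $o(n)$. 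A double-counting argument on the number of edges inside typical balls shows that most typical vertices have ``excess'' (edges beyond the $|B_k(u)| - 1$ tree edges) of $o(1)$ within their $k$-neighborhood. Because the individual balls are sublinear in size, one can then select two such good vertices $u_1, u_2$ with $\mathrm{dist}(u_1,u_2) \geq 2k+2$, guaranteeing $B_k(u_1) \cap B_k(u_2) = \emptyset$.

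Next I would construct the Kesten-style test vector $f(x) = (-1)^{i-1}(d-1)^{-j/2}$ whenever $x \in B_k(u_i)$ lies at distance $j \leq k$ from $u_i$, and $f(x)=0$ elsewhere. By the antisymmetry between the two balls, $|\sum_x f(x)|$ is at most the volume gap between them; since this is negligible against $\|f\|_2$, subtracting a tiny multiple of $\vec{1}_n$ enforces $f \perp \vec{1}_n$ without materially changing the Rayleigh quotient. In each ball an edge between levels $j$ and $j+1$ contributes $(d-1)^{-j}(1 - (d-1)^{-1/2})^2$ to the numerator, and approximately $d(d-1)^j$ such edges exist by the tree-like structure, giving per-level contribution $(\sqrt{d-1}-1)^2 \cdot d/(d-1)$. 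Summing over $j = 0,\dots,k-1$, the total numerator is $\approx 2k(d - 2\sqrt{d-1}) \cdot d/(d-1)$, while the denominator telescopes to $\|f\|_2^2 \approx 2k \cdot d/(d-1)$. The ratio is therefore $d - 2\sqrt{d-1} + o(1)$, which for $d$ large is at most $d - \rho\sqrt{d}$ for any fixed $\rho < 2$. Boundary edges leaving depth $k$ add $O(d(d-1)^{-k}) = o(1)$ and are absorbed. Parallel edges cause no extra trouble: since each edge copy contributes $(f(u)-f(v))^2$ separately to the numerator and is already accounted for in $d_{\rm avg} = 2m/n$, the computation carries through with multiplicities intact.

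The main obstacle I anticipate is Step~1 --- producing the two well-behaved balls in a non-regular (multi-)graph. Without a uniform branching factor one must separately control (a) the influence of vertices of degree $\gg d$ on nearby balls and (b) the number of back-edges inside a ball, which shrink the denominator (they contribute nothing to $\|f\|_2$) without reducing the numerator proportionally. The cleanest route I see is a volume-counting argument: every back-edge lies in only $n^{o(1)}$ typical $k$-balls, so double counting forces most typical vertices to have negligible back-edge density in their ball, and discarding typical vertices whose ball touches a high-degree vertex costs only a lower-order fraction of the typical set. Once this structural step is in place the rest of the argument is essentially a careful bookkeeping of the classical regular-graph calculation, made uniform in the degree via the hypothesis that $d$ is large and $n \gg d^{O(1)}$.
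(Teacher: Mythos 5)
You should first note that the paper does not prove this statement at all: it is quoted verbatim as Theorem C.2 of [NST19] and used as a black box in Section~\ref{ss:integrality}, so there is no in-paper proof to compare against. Judged on its own, your proposal follows the classical Nilli/Alon--Boppana template (two disjoint balls, Kesten-style test vector, Rayleigh quotient $\approx d-2\sqrt{d-1}$), and that computation is correct in a regular, tree-like neighborhood. The difficulty is that the entire content of the generalization to average degree lies in your Step~1, and Step~1 fails as stated.

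Concretely: (i) the bound $|B_k(u)|\le (2d)^k$ for a ``typical'' center $u$ is false, since typicality of $u$ controls only $\deg(u)$ and says nothing about the degrees of the other vertices in the ball (a degree-$2$ vertex adjacent to a degree-$(n-1)$ vertex already has $|B_1(u)|=n$); (ii) the claim that most typical balls have excess $o(1)$ is false: join $n/(d+1)$ copies of $K_{d+1}$ in a cycle by single edges --- every vertex has degree about $d$, yet every radius-$1$ ball already has excess $\Theta(d^2)$, and the theorem holds for this graph only because it has a sparse cut, not because of any tree-like ball. Your double-counting rescue does not close this: there are $m-n+1=\Theta(nd)$ non-tree edges and a single one can lie in polynomially many typical balls, so the averaging only bounds the mean excess per ball by $\operatorname{poly}(n)$, not $o(1)$. (iii) Even granting a tree-like ball, the Rayleigh quotient is governed by the ratio $\sum_j E_{j,j+1}(d-1)^{-j}\big/\sum_j N_j(d-1)^{-j}$, where $E_{j,j+1}$ counts edges between consecutive levels and $N_j$ is the level size; this depends on the degrees of \emph{all} vertices in the ball, and your threshold $d_v\le 2d$ is far too weak --- a tree-like ball of uniform degree $2d$ gives quotient about $(2d-1)(1-(d-1)^{-1/2})^2\approx 2d$, which exceeds $d_{\rm avg}$. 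A correct proof must either locate balls whose entire local degree profile and branching are genuinely close to $d$ and $d-1$, or dispose of the locally dense case by a separate (cut-based or global averaging) argument; as written, the reduction from an arbitrary average-degree-$d$ multigraph to the regular-tree calculation --- which is exactly what distinguishes this theorem from the textbook Alon--Boppana bound --- is missing.
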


Using the above theorem, we can prove the following lemma.

\begin{lemma} \label{l:lower}
Let $\{v_{ij}\}, c, x$ be defined as in Example~\ref{ex:Mohit}. For any $z \in \mathbb{Z}_+^m$, if
$\sum_{1\leq i < j \leq n} z_{ij} v_{ij} v_{ij}^T \succcurlyeq I_{n-1}$,
then we have 
\[
\inner{c}{z} \geq k + \Omega(\sqrt{k n} + n). 
\]
\end{lemma}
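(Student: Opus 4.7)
The plan is to reinterpret the spectral constraint as a second-eigenvalue bound on a multigraph Laplacian, then to combine an Alon--Boppana type upper bound (Theorem~\ref{t:GeneralizedAB}) with an elementary connectivity argument.

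First, observe that $v_{ij} v_{ij}^T = \tfrac{n-1}{2k}\,\Pi(\chi_i-\chi_j)(\chi_i-\chi_j)^T\Pi^T$. So if we let $H$ denote the multigraph on $[n]$ in which edge $ij$ has multiplicity $z_{ij}$ and write $L_H := \sum_{i<j} z_{ij}(\chi_i-\chi_j)(\chi_i-\chi_j)^T$ for its Laplacian, then the hypothesis $\sum z_{ij} v_{ij} v_{ij}^T \succcurlyeq I_{n-1}$ becomes $\Pi L_H \Pi^T \succcurlyeq \tfrac{2k}{n-1} I_{n-1}$. Since $L_H \vec{1} = 0$, this is in turn equivalent to $\lambda_2(L_H) \geq \tfrac{2k}{n-1}$. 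Moreover $\inner{c}{z} = \sum_{i<j} z_{ij} = |E(H)|$ counted with multiplicity, which is exactly the quantity we want to lower bound.

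Next I would record two complementary bounds on $|E(H)|$. Since $\lambda_2(L_H) > 0$, $H$ is connected, so it must contain at least $n-1$ edges; this is the ``connectivity bound''. For the other bound, apply Theorem~\ref{t:GeneralizedAB} (which the statement says remains valid for multigraphs with parallel edges) to obtain $\lambda_2(L_H) \leq d_{\mathrm{avg}} - \rho\sqrt{d_{\mathrm{avg}}}$, where $d_{\mathrm{avg}} = 2|E(H)|/n$. Combining with $\lambda_2(L_H) \geq 2k/(n-1)$ and viewing $d_{\mathrm{avg}} - \rho\sqrt{d_{\mathrm{avg}}} \geq 2k/(n-1)$ as a constraint on $\sqrt{d_{\mathrm{avg}}}$, one sees $d_{\mathrm{avg}} \geq 2k/(n-1)$, hence $\sqrt{d_{\mathrm{avg}}} \geq \sqrt{2k/(n-1)}$, so
\[
d_{\mathrm{avg}} \;\geq\; \tfrac{2k}{n-1} \,+\, \rho\sqrt{\tfrac{2k}{n-1}}.
\]
Multiplying by $n/2$ gives $|E(H)| \geq \tfrac{nk}{n-1} + \tfrac{\rho n}{2}\sqrt{\tfrac{2k}{n-1}} \geq k + \Omega(\sqrt{kn})$.

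To conclude, I would do a case split depending on whether $k$ is smaller or larger than a constant multiple of $n$. When $k \geq n/4$ (say), the Alon--Boppana bound gives $|E(H)| \geq k + \Omega(\sqrt{kn})$, and since $n \leq O(\sqrt{kn})$ in this regime, this is $k + \Omega(\sqrt{kn}+n)$. When $k \leq n/4$, the connectivity bound yields $|E(H)| - k \geq (n-1) - k = \Omega(n)$, and since $\sqrt{kn} = O(n)$ in this regime, this is again $k + \Omega(\sqrt{kn}+n)$.

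The main thing to be careful about is the hypothesis of Theorem~\ref{t:GeneralizedAB}, which requires $d_{\mathrm{avg}}$ sufficiently large and $n$ large relative to $d_{\mathrm{avg}}$. This is only invoked in the regime $k \gtrsim n$, where $d_{\mathrm{avg}} \geq 2k/(n-1) \gtrsim 1$ is indeed bounded below by a constant, and the upper bound $d_{\mathrm{avg}} = 2|E(H)|/n = O(n)$ is fine for $n$ large enough; the low-$k$ regime is handled entirely by the elementary connectivity argument, bypassing the theorem.
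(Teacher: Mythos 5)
Your proof is correct, and it follows the same skeleton as the paper's: identify $z$ with a multigraph $H$, translate the hypothesis into $\lambda_2(L_H) \geq 2k/(n-1)$, and invoke Theorem~\ref{t:GeneralizedAB} to force the average degree up. The only real divergence is in the endgame. The paper treats $d_{\rm avg} - \rho\sqrt{d_{\rm avg}} \geq 2k/(n-1)$ as a quadratic in $y=\sqrt{\norm{z}_1}$ and solves it exactly, which yields $\norm{z}_1 \geq k + \rho\sqrt{2kn}/2 + \rho^2 n/4$ in one shot --- both lower-order terms $\Omega(\sqrt{kn})$ and $\Omega(n)$ come out of the same computation, with no case analysis. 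You instead bootstrap the inequality ($d_{\rm avg} \geq 2k/(n-1)$ first, then feed $\sqrt{d_{\rm avg}} \geq \sqrt{2k/(n-1)}$ back in), which cleanly extracts the $\Omega(\sqrt{kn})$ term but not the $\Omega(n)$ term, and you recover the latter from connectivity ($\lambda_2>0 \Rightarrow |E(H)| \geq n-1$) in the regime $k \leq n/4$. Your bootstrapping is arguably tidier than the quadratic formula, and your explicit attention to when the hypotheses of Theorem~\ref{t:GeneralizedAB} hold is more careful than the paper's ``with properly chosen $n$'' (note that in the only place the lemma is actually used, $k = qn$ with $q>16$, so $d_{\rm avg} > 32$ and the theorem applies comfortably); the price is that your argument needs the two regimes to cover $[0,\infty)$, whereas the paper's single computation does not.
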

\begin{proof}
Given any $z \in \mathbb{Z}_+^m$, let $G_z$ be the multi-graph corresponding to $z$ with Laplacian matrix 
\[
L_z = \sum_{1\leq i < j \leq n} z_{ij} (\chi_i-\chi_j)(\chi_i-\chi_j)^T = \frac{2k}{n-1} \Pi \left(\sum_{1 \leq i < j \leq n} z_{ij} v_{ij}^T v_{ij}^T \right) \Pi \succeq \frac{2k}{n-1} \left(I_n - \frac{1}{n} J_n \right),
\]
where the last inequality holds by the assumption on $z$. Therefore, $\lambda_2(L_z) \geq 2k/(n-1)$.

On the other hand, since the average degree of $G_z$ is $d_{\rm avg} = 2\norm{z}_1/n$, we apply Theorem~\ref{t:GeneralizedAB} with properly chosen $n$, for some constant $\rho$ we have
\[
\lambda_2(L_z) \leq d_{\rm avg} - \rho \sqrt{d_{\rm avg}} \qquad \Longrightarrow \qquad \lambda_2(L_z) \leq \frac{2\|z\|_1}{n} - \rho \sqrt{\frac{2\|z\|_1}{n}}.
\]
Combining with $\lambda_2(L_z) \geq 2k/(n-1)$, we have
\[
\frac{2k}{n-1} \leq \frac{2\|z\|_1}{n} - \rho \sqrt{\frac{2\|z\|_1}{n}} \quad \Longrightarrow \quad 2k \leq 2\|z\|_1 - \rho \sqrt{2 n \|z\|_1}.
\]
For the quadratic inequality $2y^2 - \rho \sqrt{2n} y - 2k \geq 0$, we know that the nonnegative solution for $y$ should satisfy
\[
y \geq \frac{\rho \sqrt{2n} + \sqrt{2\rho^2 n +16k}}{4}.
\]
Therefore, letting $y = \sqrt{\norm{z}_1}$, we have
\begin{eqnarray*}
\inner{c}{z} = \|z\|_1 & \geq & (\rho \sqrt{2n} + \sqrt{2\rho^2n + 16k})^2/16
\\
& = & \rho^2 n/4 + k + \rho \sqrt{4\rho^2n + 32kn}/8 
\\
& \geq & k + \rho \sqrt{2kn}/2 + \rho^2n/4
\\
& \geq & k + \Omega(\sqrt{kn} + n).
\end{eqnarray*}
\end{proof}

Suppose we set the parameter $k = qn$ for $q > 16$ in Example~\ref{ex:Mohit}.
If we apply Theorem~\ref{t:zero-one} to the vectors $v_1, ..., v_m \in \R^{n-1}$ and $x \in [0,1]^m$ defined in Example~\ref{ex:Mohit} with $\eps = \sqrt{n/k} < 1/4$, then there exists a $z \in \{0,1\}^m$ such that
\[
\sum_{1\leq i < j \leq n} z_{ij} v_{ij} v_{ij}^T \succcurlyeq I_{n-1} \qquad \text{and} \qquad \inner{c}{z} \leq (1+6\eps) \inner{c}{x} + \frac{15nc_{\infty}}{\eps} = k + O(\sqrt{kn}),
\]
where the last equality uses $\inner{c}{x}=k$.
Note that if the additive error term $O(\eps \inner{c}{x} + nc_{\infty}/\eps)$ has a better dependency on $\eps$, then we can set $\eps$ accordingly such that the cost upper bound will contradict with the lower bound in Lemma~\ref{l:lower}.
For example, if Theorem~\ref{t:zero-one} were improved to $\inner{c}{z} \leq (1+6\eps)\inner{c}{x} + 15nc_{\infty}/\sqrt{\eps}$, then we could set $\eps = (n/k)^{2/3}$ which would imply that $\inner{c}{z} \leq k + O(k^{1/3} n^{2/3})$, contradicting with the lower bound $\inner{c}{z} \geq k + \Omega(\sqrt{kn})$ when $k$ is large enough.
This shows Theorem~\ref{t:zero-one} is tight up to a constant factor in the additive error term $nc_{\infty}/\eps$.

\section{Applications} \label{s:applications}

In this section, we will show that the spectral rounding results in Section~\ref{s:rounding} have many applications including survivable network design (Section~\ref{s:network}), experimental design (Section~\ref{s:experimental}), network design with spectral properties (Section~\ref{s:spectral}) and unweighted spectral sparsification (Section~\ref{s:additive}).

\subsection{General Survivable Network Design} \label{s:network}

We will show that the spectral rounding results provide a new approach to design algorithms for the survivable network design problem.
The main advantage of this approach is that it significantly extends the scope of useful properties that can be incorporated into survivable network design.

The organization of this subsection is as follows.
We begin by writing a large convex program that incorporates many useful constraints into survivable network design in Section~\ref{ss:convex}, and explain how the spectral rounding results can be used to find a solution for this general survivable network design problem in Section~\ref{ss:rounding}.
Then we will see the implications of Theorem~\ref{t:zero-one} to network design in Section~\ref{ss:one-sided} and of Theorem~\ref{t:two-sided} to network design in Section~\ref{ss:two-sided}.
Finally, we discuss how these new results make some progress towards Bansal's question~\cite{Ban19} of designing an approximation algorithm for survivable network design with concentration property in Section~\ref{ss:Bansal}.

\subsubsection{Convex Programming Relaxation} \label{ss:convex}

We can write a convex programming relaxation for the general network design problem incorporating all these constraints as discussed in Section~\ref{s:applications-intro}.
In the following, the input graph is $G=(V,E)$ with $|V|=n$ and $|E|=m$.
The fractional solution is $x \in \R^m$ where the intended solution is to set $x_e=1$ if we choose edge $e$ and $x_e=0$ otherwise.
We first present the convex program and then explain the constraints below.
\begin{alignat}{2} \label{eq:P}
    & \min_x & &~ \inner{c}{x} \nonumber \\
    &  & & \begin{aligned}
          & x(\delta(S)) \geq f(S)  & & \quad \forall S \subseteq V &  & \text{\quad (connectivity constraints)} \\
          & x(\delta(v)) \leq d_v  & & \quad \forall v \in V & & \text{\quad (degree constraints)}\\
          & A x \leq a & & \quad A \in \R_+^{p \times m}, a \in \R_+^p & & \text{\quad (linear packing constraints)}\\
          & B x \geq b & & \quad B \in \R_+^{q \times m}, b \in \R_+^q & & \text{\quad (linear covering constraints)}\\
        & \Reff_{x}(u,v) \leq r_{uv} & & \quad \forall u,v \in V& & \text{\quad (effective resistance constraints)}\\
        & L_{x} \succcurlyeq M  & & \quad M \succcurlyeq 0 & & \text{\quad (spectral constraints)}\\
        & \lambda_2(L_x) \geq \lambda & & & & \text{\quad (algebraic connectivity constraint)} \\
        & 0 \leq x_e \leq 1 & & \quad \forall e \in E & & \text{\quad (capacity constraints)}
    \end{aligned} \tag{CP}
\end{alignat}
Let us explain the constraints one by one.
For the connectivity constraints, we have a connectivity requirement $f_{u,v}$ that there are at least $f_{u,v}$ edge-disjoint paths between every pair $u,v$ of vertices. 
For each subset $S \subseteq V$, we let $f(S) := \max_{u,v: u \in S, v \notin S} f_{u,v}$ and write a constraint that at least $f(S)$ edges in $\delta(S)$ should be chosen, where $x(\delta(S))$ denotes $\sum_{e \in \delta(S)} x_e$.
By Menger's theorem, if an integral solution satisfies all these constraints, then all the connectivity requirements are satisfied.
For the degree constraints, each vertex has a degree upper bound $d_v$ and we write a constraint that at most $d_v$ edges in $\delta(v)$ can be chosen, where $x(\delta(v)) := \sum_{e \in \delta(v)} x_e$.
For the linear packing and covering constraints, all the entries in $A,B,a,b$ are nonnegative, and we assume that $A,B$ have at most a polynomial number of rows in $n,m$.
For effective resistance constraints, we have an upper bound $r_{u,v}$ on the effective resistance between every pair $u,v \in V$.
As in Section~\ref{s:graphs}, we write $\Reff_{x}(u,v) = b_{st}^T L_x^{\dagger} b_{st}$ as the effective resistance between $u$ and $v$ in the fractional solution $x$ where each edge $e$ has conductance $x_e$.
In the spectral and the algebraic connectivity constraints, we write $L_x := \sum_{e \in E} x_e L_e$ as the Laplacian matrix of the fractional solution $x$ where $L_e$ is the Laplacian matrix of an edge as defined in Section~\ref{s:graphs}.
In the spectral constraint, we require that $L_x \succeq M$ for a positive semidefinite matrix $M$.
One could have polynomially many constraints of this form (just as linear packing and covering constraints), but we only write one for simplicity.
In the algebraic connectivity constraint, we require the second smallest eigenvalue of the Laplacian matrix of the solution is at least $\lambda$, which is related to the graph expansion of the fractional solution as described in Section~\ref{s:graphs}.

This convex program can be solved by the ellipsoid method in polynomial time in $n$ and $m$.
There are exponentially many connectivity constraints but we can use a max-flow min-cut algorithm as a polynomial time separation oracle for these constraints~(see e.g.~\cite{Jai01}).
Other linear constraints can easily be checked efficiently, as we assume there are only polynomially many of them.
Next we consider the non-linear constraints.
For the effective resistance constraints, it is known~\cite{GBS08} that $\Reff_x(u,v)$ is a convex function in $x$. 
For the algebraic connectivity constraint, it is known~\cite{GB06} that $\lambda_2$ is a concave function in $x$.
For the spectral constraint, the feasible set is a positive semidefinite cone and is convex in $x$.
So the feasible set for these non-linear constraints form a convex set.
Also, these non-linear constraints can all be checked in polynomial time using standard numerical computations.
Therefore, we can use the ellipsoid algorithm to find an $\eps$-approximate solution to this convex program in polynomial time in $n$ and $m$ with dependency on $\eps$ being $\log(1/\eps)$.

\subsubsection{Spectral Rounding} \label{ss:rounding}

Suppose we are given an optimal solution $x$ to the convex programming relaxation \eqref{eq:P}.
To design approximation algorithms,
the task is to round this fractional solution $x$ into an integral solution $z$ so that $z$ satisfies all the constraints and $\inner{c}{z}$ is close to $\inner{c}{x}$.
There are many different types of constraints and it seems difficult to handle them simultaneously.
In the spectral approach, the main observation is that if we can find an integral solution $z$ such that $\sum_{e \in E} z_e L_e \approx \sum_{e \in E} x_e L_e$ and $\inner{c}{x} \approx \inner{c}{z}$, then all the constraints can be (approximately) satisfied simultaneously.
We state this observation in the following lemma. 

\begin{lemma} \label{l:spectral-constraints}
Let $x \in \R_+^m$ be a feasible solution $x$ to \eqref{eq:P}.
For $\eps \in [0,\frac12]$, any $z \in \Z_+^m$ satsifies
\begin{align*}
\sum_{e \in E} z_e L_e \succeq (1-\eps) \sum_{e \in E} x_e L_e
\quad \implies \quad 
	\left\{ \begin{aligned}
         & z(\delta(S)) \geq (1-\eps)f(S) {\rm~for~all~} S \subseteq V  \\
         & \Reff_z(u,v) \leq (1+2\eps)r_{u,v} {\rm~for~all~} u,v \in V \\
        & L_z \succeq (1-\eps)M, \\
        & \lambda_2(L_z) \geq (1-\eps)\lambda.
    \end{aligned} \right.
\end{align*}

For $\eps \in [0,1]$, any $z \in \Z_+^m$ satifies
\begin{align*}
\sum_{e \in E} z_e L_e \preceq (1+\eps) \sum_{e \in E} x_e L_e
\quad \implies \quad 
	\begin{aligned}
         & z(\delta(v)) \leq (1+\eps)d_v {\rm~for~all~} v \in V.
    \end{aligned}
\end{align*}
\end{lemma}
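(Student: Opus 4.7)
The plan is to prove each implication by testing the PSD inequality $L_z \succeq (1-\eps) L_x$ (or its reverse for the upper bound) against a carefully chosen vector or subspace. The key observation throughout is that every quantity on the right hand side of each implication can be written as a quadratic form, an eigenvalue, or a pseudoinverse quadratic form of the Laplacian, each of which is monotone with respect to the Loewner order.

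First I would handle the connectivity and degree constraints together, since they follow from the same identity. For any subset $S \subseteq V$ and an edge $e = uv$, one checks that $\chi_S^T b_e = \chi_S(u) - \chi_S(v)$, which equals $\pm 1$ if $e \in \delta(S)$ and $0$ otherwise, so $\chi_S^T L_z \chi_S = z(\delta(S))$ and analogously for $x$. Plugging $\chi_S$ into $L_z \succeq (1-\eps)L_x$ yields $z(\delta(S)) \geq (1-\eps)x(\delta(S)) \geq (1-\eps)f(S)$, using feasibility of $x$. Specializing to $S = \{v\}$ and using the reverse inequality $L_z \preceq (1+\eps)L_x$ gives $z(\delta(v)) \leq (1+\eps)d_v$.

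Next I would treat the spectral constraint and the algebraic connectivity constraint, which are essentially immediate. The spectral constraint is just transitivity: $L_z \succeq (1-\eps)L_x \succeq (1-\eps)M$. For the algebraic connectivity, note that both $L_z$ and $L_x$ are positive semidefinite with the all-ones vector in their kernel (assuming, as we may after removing disconnected components, that we restrict to the orthogonal complement of $\vec{1}$); on this $(n-1)$-dimensional subspace the Loewner inequality $L_z \succeq (1-\eps)L_x$ gives $\lambda_{\min}$ on that subspace of $L_z$ at least $(1-\eps)$ times that of $L_x$, which is exactly $\lambda_2(L_z) \geq (1-\eps)\lambda_2(L_x) \geq (1-\eps)\lambda$.

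Finally, for the effective resistance constraints, the plan is to pass to pseudoinverses. Since $L_x$ and $L_z$ have the same kernel (again assuming connectedness on the supports that matter; otherwise $\Reff$ is infinite and the $x$ inequality already rules that out), the inequality $L_z \succeq (1-\eps)L_x$ inverts to $L_z^\dagger \preceq \tfrac{1}{1-\eps} L_x^\dagger$ on the orthogonal complement of $\vec{1}$. Testing this against $b_{uv}$ (which lies in that subspace) gives
\[
\Reff_z(u,v) \;=\; b_{uv}^T L_z^\dagger b_{uv} \;\leq\; \frac{1}{1-\eps}\, b_{uv}^T L_x^\dagger b_{uv} \;\leq\; \frac{r_{uv}}{1-\eps} \;\leq\; (1+2\eps)\, r_{uv},
\]
where the last step uses $1/(1-\eps) \leq 1+2\eps$ for $\eps \in [0,\tfrac12]$. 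None of these steps is a real obstacle; the only subtle point is being careful that both Laplacians share the same kernel so that pseudoinverse inversion of the PSD order is legitimate, which is why the regime $\eps \leq \tfrac12$ (ensuring $L_z$ inherits connectivity from $L_x$ on the relevant subspace) is imposed.
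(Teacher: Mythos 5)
Your proof is correct and follows essentially the same route as the paper's: testing the Loewner inequality against cut/vertex indicator vectors for the connectivity and degree constraints, transitivity for the spectral constraint, eigenvalue monotonicity on the complement of $\vec{1}$ for algebraic connectivity, and inverting the PSD order of the pseudoinverses with $1/(1-\eps)\leq 1+2\eps$ for the effective resistances. Your extra care about the two Laplacians sharing a kernel before inverting is a welcome (and correct) refinement of a step the paper states without comment.
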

\begin{proof}
Let $L_x := \sum_{e \in E} x_e L_e$ and $L_z := \sum_{e \in E} z_e L_e$.
We start with the connectivity constraints.
For any $S \subseteq V$, let $\chi_S \in \R^n$ be the characteristic vector of $S$ with $\chi_S(i)=1$ if $i \in S$ and zero otherwise.
It is well-known that 
\[\chi_S^T L_z \chi_S = \chi_S^T \left(\sum_{e \in E} z_e L_e\right) \chi_S = \sum_{e \in E} z_e \chi_S^T L_e \chi_S = \sum_{e \in \delta(S)} z_e = z(\delta(S))\]
and similarly $\chi_S^T L_x \chi_S = x(\delta(S))$.
So, if $L_z \succeq (1-\eps) L_x$, then for all $S \subseteq V$ we have 
\[z(\delta(S)) = \chi_S^T L_z \chi_S \geq (1-\eps)\chi_S^T L_x \chi_S = (1-\eps) x(\delta(S)) \geq (1-\eps) f(S).
\]
For the effective resistance constraints, 
since $L_z \succeq (1-\eps)L_x$, 
it implies that $L_z^\dagger \preceq (1-\eps)^{-1} L_x \preceq (1+2\eps)L_x$ for $\eps \in [0,\frac12]$,
and thus
\[
\Reff_z(u,v) = b_{uv}^T L_z^\dagger b_{uv} \leq (1+2\eps)b_{uv}^T L_x^\dagger b_{uv} = (1+2\eps)\Reff_x(u,v) \leq (1+2\eps)r_{u,v}.
\]
The statements about the spectral lower bound and the algebraic connectivity constraint follows directly from the assumption that $L_z \succeq (1-\eps)L_x$.
Finally, for the degree constraints, suppose we are given $L_z \preceq (1+\eps)L_x$, then it follows that
\[
z(\delta(v)) = \chi_{v}^T L_z \chi_{v} 
\leq (1+\eps)\chi_{v}^T L_x \chi_{v} = (1+\eps) x(\delta(v)) \leq (1+\eps)d_v.
\]
\end{proof}

Lemma~\ref{l:spectral-constraints} says that if $z$ satisfies the spectral lower bound $L_z \succeq L_x$, then the solution $z$ will simultaneously satisfy all connectivity constraints, effective resistance constraints, spectral constraints, and the algebraic connectivity constraint exactly.
Moreover, if $z$ also satisfies the spectral upper bound approximately, then the solution $z$ will approximately satisfy all degree constraints as well.

\subsubsection{Applications of One-Sided Spectral Rounding} \label{ss:one-sided}

We apply Theorem~\ref{t:zero-one} to design approximation algorithms for network design problems that significantly extend the scope of existing techniques.
\begin{alignat}{2} \label{eq:one-sided}
    & {\sf cp} := \min_x & &~ \inner{c}{x} \nonumber \\
    &  & & \begin{aligned}
          & x(\delta(S)) \geq f(S)  & & \quad \forall S \subseteq V &  & \text{\quad (connectivity constraints)} \\
          & A x \leq a & & \quad A \in \R_+^{p \times m}, a \in \R_+^p & & \text{\quad (linear packing constraints)}\\
          & B x \geq b & & \quad B \in \R_+^{q \times m}, b \in \R_+^q & & \text{\quad (linear covering constraints)}\\
        & \Reff_{x}(u,v) \leq r_{uv} & & \quad \forall u,v \in V& & \text{\quad (effective resistance constraints)}\\
        & L_{x} \succcurlyeq M  & & \quad M \succcurlyeq 0 & & \text{\quad (spectral constraint)}\\
        & \lambda_2(L_x) \geq \lambda & & & & \text{\quad (algebraic connectivity constraint)} \\
        & 0 \leq x_e \leq 1 & & \quad \forall e \in E & & \text{\quad (capacity constraints)}
    \end{aligned} \tag{CP1}
\end{alignat}

In network design, a zero-one solution corresponds to a subset of edges where each edge is used at most once (satisfying the capacity constraints).
The following theorem is a consequence of Theorem~\ref{t:zero-one}.

\begin{theorem} \label{t:network-zero-one}
Suppose we are given an optimal solution $x$ to the convex program~\eqref{eq:one-sided}.
For any $\eps \in (0,\frac14)$, there is a polynomial time randomized algorithm to return a zero-one solution $z \in \{0,1\}^m$ to \eqref{eq:one-sided} satisfying all the constraints exactly with probability at least $1-\exp(-\Omega(n))$ except for the linear constraints. The solution $z$ has objective value
\[
\inner{c}{z} \leq (1+6\eps) {\sf cp} +\frac{15 n c_{\infty}}{\eps}
\]
with probability at least $1-\exp(-\Omega(n))$, and satisfies
\[
\inner{A_i}{z} \leq (1+6\eps) a_i + \frac{15 n \norm{A_i}_{\infty}}{\eps}
\]
where $A_i$ is the $i$-th row of $A$, with probability at least $1-\exp(-\Omega(n))$ for each linear packing constraint, and satisfies
\[
\inner{B_j}{z} \geq b_j - \delta n \norm{B_j}_{\infty},
\]
where $B_j$ is the $j$-th row of $B$, with probability at least $1- \exp(-\min\{\eps \delta, \eps \delta^2 \} \cdot \Omega(n))$ for any $\delta > 0$ for each linear covering constraint.
\end{theorem}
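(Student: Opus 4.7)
The plan is to reduce Theorem~\ref{t:network-zero-one} to the one-sided spectral rounding result Theorem~\ref{t:zero-one} via the standard Laplacian-to-rank-one transformation, and then invoke Lemma~\ref{l:spectral-constraints} to translate the spectral guarantees back into the network design constraints.

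First, starting from a feasible fractional solution $x$ to \eqref{eq:one-sided}, form the Laplacian $L_x := \sum_e x_e b_e b_e^T$. After restricting to the $(n-1)$-dimensional subspace $\vec{1}_n^\perp$ that is the image of $L_x$ (handled component-by-component if $L_x$ is disconnected), define the rescaled vectors $v_e := (L_x^\dagger)^{1/2} b_e$. A direct check using $L_x^{1/2} L_x^\dagger L_x^{1/2} = \Pi$ gives $\sum_e x_e v_e v_e^T = I_{n-1}$, so the hypothesis of Theorem~\ref{t:zero-one} is satisfied (and $x \in [0,1]^m$ by the capacity constraints).

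Second, apply Theorem~\ref{t:zero-one} with the vectors $\{v_e\}_{e\in E}$ and the parameter $\eps \in (0,\tfrac14)$ to obtain $z \in \{0,1\}^m$ with $\sum_e z_e v_e v_e^T \succcurlyeq I_{n-1}$ with probability at least $1-\exp(-\Omega(n))$. Conjugating by $L_x^{1/2}$ shows this is equivalent to $L_z \succcurlyeq L_x$ on $\vec{1}_n^\perp$, which is the same statement on $\R^n$. Now invoke Lemma~\ref{l:spectral-constraints} with slack $0$: the connectivity, effective resistance, spectral, and algebraic connectivity constraints are all satisfied exactly. The capacity constraints $z_e \in \{0,1\}$ hold by construction.

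Third, apply the cost guarantees of Theorem~\ref{t:zero-one} separately to each relevant non-negative linear functional. Using the cost vector $c$ gives the upper bound $\inner{c}{z} \leq (1+6\eps)\inner{c}{x} + 15 n c_\infty/\eps = (1+6\eps){\sf cp} + 15n c_\infty/\eps$ with failure probability $\exp(-\Omega(n))$. For each linear packing row $A_i \in \R_+^m$ (viewed as a cost vector), the same upper tail bound with $c$ replaced by $A_i$ gives $\inner{A_i}{z} \leq (1+6\eps)\inner{A_i}{x} + 15 n \norm{A_i}_\infty/\eps \leq (1+6\eps)a_i + 15n\norm{A_i}_\infty/\eps$. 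For each linear covering row $B_j \in \R_+^m$, applying the lower tail bound from Theorem~\ref{t:zero-one} with $\delta > 0$ gives $\inner{B_j}{z} \geq \inner{B_j}{x} - \delta n \norm{B_j}_\infty \geq b_j - \delta n \norm{B_j}_\infty$ with failure probability $\exp(-\Omega(\min\{\eps\delta,\eps\delta^2\}\cdot n))$. Each of these per-constraint bounds is exactly what is claimed.

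The main obstacle is conceptually minor but technically non-trivial: one must verify that the reduction through $v_e = (L_x^\dagger)^{1/2} b_e$ really produces a PSD identity on the correct subspace, and that the PSD inequality $\sum_e z_e v_e v_e^T \succcurlyeq I_{n-1}$ on $\vec{1}_n^\perp$ pulls back to $L_z \succcurlyeq L_x$ on all of $\R^n$ (since both matrices annihilate $\vec{1}_n$). A secondary, minor point is that although Theorem~\ref{t:zero-one} is stated for a fixed cost vector, the failure probability $\exp(-\Omega(n))$ is sufficiently small to absorb a union bound over the polynomially many linear constraints assumed by \eqref{eq:one-sided} without weakening the stated per-constraint guarantees.
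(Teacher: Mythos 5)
Your proposal is correct and follows essentially the same route as the paper's proof: the reduction $v_e := L_x^{\dagger/2}\Pi b_e$ giving $\sum_e x_e v_e v_e^T = I_{n-1}$, an application of Theorem~\ref{t:zero-one}, the pull-back $L_z \succcurlyeq L_x$ combined with Lemma~\ref{l:spectral-constraints}, and separate applications of the upper/lower cost tails to $c$, each $A_i$, and each $B_j$. Nothing further is needed.
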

\begin{proof}
We apply the following standard transformation to reduce to the one-sided spectral rounding problem.
We assume without loss of generality that the graph $G_x$ formed by the support of the fractional solution $x$ is connected, 
and so $L_{x}$ has rank $n-1$.
Let $\Pi = I_n- \frac{1}{n} J_n$ be the orthogonal projection onto the $n-1$ dimensional subspace orthogonal to the all-one vector, where $J_n$ is the $n \times n$ all-one matrix. 
For each edge $e \in E$, we define a vector $v_e := L_{G_x}^{\dagger/2} \Pi b_e$ which is contained in the $n-1$ dimensional subspace orthogonal to the all-one vector. 
Then
\[
\sum_{e \in E} x_e v_e v_e^T 
= L_{x}^{\dagger/2} \Pi \left( \sum_{e \in E} x_e b_e b_e^T \right) \Pi L_{x}^{\dagger/2} 
= L_{x}^{\dagger/2} \Pi L_x \Pi L_{x}^{\dagger/2}
= I_{n-1}.
\]
For any $\eps \in (0,\frac14)$, we apply Theorem~\ref{t:zero-one} to $x \in [0,1]^m$, $\{v_e\}_{e \in E}$ to find a zero-one solution $z \in \{0,1\}^m$ such that $\sum_{e \in E} z_e v_e v_e^T \succcurlyeq I_{n-1}$ with probability at least $1-\exp(-\Omega(n))$, which implies $\sum_{e \in E} z_e b_e b_e^T \succcurlyeq L_{G_x}$, thus the zero-one solution $z$ satisfies all the constraints in \eqref{eq:one-sided} except for the linear constraints by Lemma~\ref{l:spectral-constraints}.

Theorem~\ref{t:zero-one} also guarantees that with probability at least $1-\exp(-\Omega(n))$ the objective value of $z$ is at most
\[
\inner{c}{z} \leq (1+6\eps) \inner{c}{x} + \frac{15n c_{\infty}}{\eps}.
\]

The guarantees for the linear packing constraints follow the same way as for the objective function, and the guarantees for the linear covering constraints follow from the lower bound part of Theorem~\ref{t:zero-one}.
\end{proof}

We demonstrate the use of Theorem~\ref{t:network-zero-one} in some concrete settings.
The first example shows that Theorem~\ref{t:network-zero-one} provides a spectral alternative to Jain's iterative rounding algorithm to achieve $O(1)$-approximation for a fairly general subclass of the survivable network design problem.

\begin{example}
Theorem~\ref{t:network-zero-one} is a constant factor approximation algorithm as long as $nc_{\infty} = O({\sf cp})$.
Suppose that in our network design problem the average degree is at least $d_{\rm avg}$ and the costs on edges are positive integers with $c_{\infty} = O(d_{\rm avg})$ (e.g. in the minimum $k$-edge-connected subgraph problem every vertex has degree at least $k$ and $1 \leq c_e \leq O(k)$ for $e \in E$, or the solution requires a connected subgraph and $1 \leq c_{e} \leq O(1)$ for $e \in E$, etc).
Then ${\sf cp} \geq \Omega(d_{\rm avg} n) \geq \Omega(c_{\infty} n)$ and Theorem~\ref{t:network-zero-one} provides a constant factor approximation algorithm.
\end{example}

The additive error term $nc_{\infty}$ is the reason that we could not achieve constant factor approximation in general, but this term is unavoidable in the one-sided spectral rounding setting when we need to satisfy the spectral lower bound exactly.
See Section~\ref{ss:integrality} for examples showing the limitations.
Heuristically, we can compute ${\sf cp}$ and if $n c_{\infty} = O({\sf cp})$ then we know Theorem~\ref{t:network-zero-one} will provide good approximate solutions.

The second example shows that Theorem~\ref{t:network-zero-one} returns good approximate solution to survivable network design while incorporating many other constraints simultaneously.

\begin{example}
Suppose the connectivity requirement is to find a $k$-edge-connected subgraph, or more generally $f_{u,v} \geq k$ for all $u,v \in V$.
Assume the cost $c_e$ of each edge $e$ is at least one.
Then ${\sf cp} \geq \Omega(kn)$.

When the cost function satisfies $c_{\infty} = O(k)$, 
then Theorem~\ref{t:network-zero-one} implies that there is a polynomial time randomized algorithm to return a simple $k$-edge-connected subgraph satisfying all the constraints in~\eqref{eq:one-sided} except for the linear constraints (with some non-trivial guarantees), and the cost of the subgraph is at most a constant factor of the optimal value.

When the cost function satisfies $c_{\infty} = O(1)$,
then Theorem~\ref{t:network-zero-one} implies that there is a polynomial time randomized algorithm to return a $k$-edge-connected subgraph satisfying all the constraints in~\eqref{eq:one-sided} except for the linear constraints, and the cost of the subgraph is at most $1+O(1/\sqrt{k})$ factor of the optimal value by setting $\eps = \Theta(1/\sqrt{k})$.
\end{example}

The third example shows when the linear packing and covering constraints can be satisfied up to a multiplicative constant factor.  See also Section~\ref{ss:Bansal} for a related question asked by Bansal~\cite{Ban19}.

\begin{example} \label{ex:covering}
For linear covering constraints, suppose they are of the form $\sum_{e \in F} x_e \geq b_j$ for some subset $F \subseteq E$ where $b_j \geq n$, 
then the returned solution $z$ will almost satisfy this constrint as $\sum_{e \in F} z_e \geq b_j - \delta n \norm{B_j}_{\infty} \geq (1-\delta)b_j$ for some $\delta > 0$.
So, these unweighted covering constraints with large right hand side can be incorporated into survivable network design, even though they can be unstructured.
By a similar argument, any unweighted packing constraints with large right hand side will be only violated by at most a multiplicative constant factor with high probability.
It was not known that Jain's iterative rounding can be adapted to incorporate these linear covering and packing constraints.
\end{example}

We will present more applications of Theorem~\ref{t:network-zero-one} in Section~\ref{s:spectral}, where they can be used to design approximation algorithms for network design problems with spectral requirements.
These problems were studied in the literature before but not much is known about approximation algorithms with performance guarantees.

\subsubsection{Applications of Two-Sided Spectral Rounding} \label{ss:two-sided}

If we can achieve two-sided spectral rounding in network design, 
then we can also approximately satisfy the degree constraints by Lemma~\ref{l:spectral-constraints}.
However, to apply Theorem~\ref{t:two-sided}, we need to satisfy the assumption that the vector lengths are small.
It is known that the vector lengths in the spectral rounding setting corresponds to the effective resistance of the edges in the fractional solution $x$.
In the following, we describe when two-sided spectral rounding can be applied, and discuss what are the implications for network design.
\begin{alignat}{2} \label{eq:two-sided}
    & {\sf cp} := \min_x & &~ \inner{c}{x} \nonumber \\
    &  & & \begin{aligned}
          & x(\delta(S)) \geq f(S)  & & \quad \forall S \subseteq V &  & \text{\quad (connectivity constraints)} \\
          & x(\delta(v)) \leq d_v  & & \quad \forall v \in V & & \text{\quad (degree constraints)}\\
        & \Reff_{x}(u,v) \leq r_{uv} & & \quad \forall u,v \in V& & \text{\quad (effective resistance constraints)}\\
        & L_{x} \succcurlyeq M  & & \quad M \succcurlyeq 0 & & \text{\quad (spectral lower bound)}\\
        & \lambda_2(L_x) \geq \lambda & & & & \text{\quad (algebraic connectivity constraint)} \\
        & 0 \leq x_e \leq 1 & & \quad \forall e \in E & & \text{\quad (capacity constraints)}
    \end{aligned} \tag{CP2}
\end{alignat}

\begin{theorem} \label{t:network-two-sided}
Suppose we are given an optimal solution $x$ to the convex program~\eqref{eq:two-sided}.
For any $\eps \in [0,1]$, if $\Reff_x(u,v) \leq \eps^2$ for every $uv \in E$ and $c_{\infty} \leq \eps^2 \inner{c}{x}$,
then there exists a zero-one solution $z \in \{0,1\}^m$ 
\[(1-O(\eps))L_x \preceq L_z \preceq (1+O(\eps))L_x
\quad {\rm and} \quad 
(1-O(\eps))\inner{c}{x} \leq \inner{c}{z} \leq (1+O(\eps))\inner{c}{x}
\]
This implies that all the constraints of~\eqref{eq:two-sided} will be approximately satisfied by $z$ (e.g. $z(\delta(S)) \geq (1-O(\eps))f(S)$ for all $S \subseteq V$ and $z(\delta(v)) \leq (1+O(\eps))d_v$ for all $v \in V$) and the objective value of $z$ is at most $(1+O(\eps)) {\sf cp}$.
\end{theorem}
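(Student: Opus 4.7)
\begin{proofof}{Theorem~\ref{t:network-two-sided} (Proposal)}
The plan is to reduce the statement to a direct application of Theorem~\ref{t:two-sided}, using exactly the same linear-algebraic transformation that was used in the proof of Theorem~\ref{t:network-zero-one}, and then to translate the two-sided spectral approximation back into approximations of the individual constraints via Lemma~\ref{l:spectral-constraints}.

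First, I would assume without loss of generality that the support of $x$ spans a connected graph $G_x$, so that $L_x$ has rank $n-1$ with $\vec{1}$ as the unique null vector. For each edge $e = uv \in E$, define
\[
v_e \,:=\, L_x^{\dagger/2}\, b_e \;\in\; \R^n,
\]
which lies in the $(n-1)$-dimensional subspace orthogonal to $\vec{1}$. A direct computation using $\sum_e x_e b_e b_e^T = L_x$ gives $\sum_{e\in E} x_e v_e v_e^T = \Pi$, the orthogonal projector onto that subspace, which plays the role of the identity $I_{n-1}$ when we restrict to this subspace. Moreover, by definition of effective resistance,
\[
\|v_e\|^2 \;=\; b_e^T L_x^{\dagger} b_e \;=\; \Reff_x(u,v) \;\leq\; \eps^2,
\]
where the inequality uses the hypothesis together with the fact that every edge $uv \in E$ appears as a constraint. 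Combined with the hypothesis $c_{\infty} \leq \eps^2 \inner{c}{x}$, all the assumptions of Theorem~\ref{t:two-sided} are satisfied (working in the $(n-1)$-dimensional space).

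Next, applying Theorem~\ref{t:two-sided} with inputs $\{v_e\}_{e \in E}$, $x$, and cost vector $c$ yields $z \in \{0,1\}^m$ such that
\[
(1-8\eps)\,\Pi \;\preceq\; \sum_{e\in E} z_e v_e v_e^T \;\preceq\; (1+8\eps)\,\Pi
\]
and $(1-8\eps)\inner{c}{x} \leq \inner{c}{z} \leq (1+8\eps)\inner{c}{x}$. Conjugating the spectral inequality by $L_x^{1/2}$ on both sides and using $L_x^{1/2} \Pi L_x^{1/2} = L_x$ and $L_x^{1/2} v_e v_e^T L_x^{1/2} = b_e b_e^T$ (which is valid because $b_e \perp \vec{1}$), I obtain
\[
(1-8\eps)\,L_x \;\preceq\; L_z \;\preceq\; (1+8\eps)\,L_x,
\]
which is the two-sided spectral approximation claimed.

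Finally, I would invoke Lemma~\ref{l:spectral-constraints} on $z$ in both directions: the lower bound $L_z \succeq (1-8\eps)L_x$ yields the connectivity, effective resistance, spectral lower bound, and algebraic connectivity guarantees (each degraded by a $1\pm O(\eps)$ factor), while the upper bound $L_z \preceq (1+8\eps)L_x$ yields $z(\delta(v)) \leq (1+O(\eps))d_v$ for all $v$. The cost bound comes directly from Theorem~\ref{t:two-sided}. I do not expect any serious obstacle beyond bookkeeping: the whole argument is a clean reduction, the only nontrivial ingredients being the identification $\|v_e\|^2 = \Reff_x(e)$ that matches the ``short vectors'' hypothesis of Theorem~\ref{t:two-sided} with the effective resistance hypothesis of the theorem, and the verification that the pseudoinverse manipulation correctly converts approximation of $\Pi$ into PSD approximation of $L_x$ (which is valid because both $L_x$ and $L_z$ have $\vec{1}$ in their kernel on the connected support, and capacity constraints $z_e \in \{0,1\}$ are automatic from Theorem~\ref{t:two-sided}).
\end{proofof}
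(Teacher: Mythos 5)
Your proposal is correct and follows essentially the same route as the paper: the same reduction via $v_e := L_x^{\dagger/2} b_e$ (the paper writes $L_x^{\dagger/2}\Pi b_e$, which is identical since $b_e \perp \vec{1}$), the same identification $\|v_e\|^2 = \Reff_x(u,v)$ to match the short-vector hypothesis of Theorem~\ref{t:two-sided}, and the same invocation of Lemma~\ref{l:spectral-constraints} at the end. Your treatment of the projector $\Pi$ versus $I_{n-1}$ is if anything slightly more careful than the paper's.
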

\begin{proof}
We apply the same standard transformation as in Theorem~\ref{t:network-zero-one} to reduce to the two-sided spectral rounding problem.
Let $\Pi = I_n - \frac{1}{n}J_n$ as defined in Theorem~\ref{t:network-zero-one}.
For each edge $e$, we define a vector $v_e := L_{x}^{\dagger/2} \Pi b_e$ which is contained in the $n-1$ dimensional subspace orthogonal to the all-one vector. 
Then $\sum_{e \in E} x_e v_e v_e^T = I_{n-1}$ as in Theorem~\ref{t:network-zero-one}. 
Using the assumption that $\Reff_x(i,j) \leq \eps^2$ for every edge $ij \in E$, it follows that 
\[
\|v_{ij}\|^2 = b_{ij}^T L_{x}^{\dagger} b_{ij} = \Reff_x(i,j) \leq \eps^2 {\rm~for~all~} ij \in E,
\]
and thus the assumption in Theorem~\ref{t:two-sided} is satisfied.
We can then apply Theorem~\ref{t:two-sided} on $\{v_e\}_e$ and $c$ to conclude that there exists $z \in \{0,1\}^m$ such that
\[
(1-O(\eps)) I_{n-1} \preccurlyeq \sum_{e \in E} z_e v_e v_e^T \preccurlyeq (1+O(\eps)) I_{n-1} \qquad \text{and} \qquad \langle c, z \rangle \leq (1+O(\eps)) \langle c, x \rangle.
\]
By the definition of $v_e = L_{x}^{\dagger/2} \Pi b_e$, this implies that
\[
(1-O(\eps))L_{x} \preccurlyeq L_{z} = \sum_{e \in E} z_e b_e b_e^T \preccurlyeq (1+O(\eps)) L_{x}. 
\]
By Lemma~\ref{l:spectral-constraints}, the zero-one solution $z$ satisfies all the constraints of \eqref{eq:two-sided} approximately.
\end{proof}

In the following, we compare Theorem~\ref{t:network-two-sided} to Theorem~\ref{t:network-zero-one}.

\begin{enumerate}
\item
{\bf Approximation guarantees:}
When Theorem~\ref{t:network-two-sided} applies,
it can handle degree constraints as well and basically preserves all properties of the fractional solution (e.g. upper bound and lower bound on every cut).
It also gives strong approximation guarantee for the objective value, getting arbitrarily close to the optimal value.
However, the constraints are only approximately satisfied, while in Theorem~\ref{t:network-zero-one} they are exactly satisfied.
Theorem~\ref{t:network-two-sided} can only handle one linear constraint, which is used for the objective function, while Theorem~\ref{t:network-zero-one} can handle many linear constraints simultaneously with an additive error term.

\item
{\bf Assumptions:}
Theorem~\ref{t:network-zero-one} apply without any assumptions, but Theorem~\ref{t:network-two-sided} only applies when $\Reff_x(u,v) \leq \eps^2$ for all $uv \in E$ and $c_{\infty} \leq \eps^2 \inner{c}{x}$.
The assumption about the cost is moderate, as it only requires the maximum cost of an edge is at most $\eps^2$ fraction of the total cost of the solution, which should be satisfied in many applications with small $\eps$.
The main restriction is the first assumption about effective resistances, which may not be satisfied in network design applications,
and we would like to provide some combinatorial characterizations under which the assumption will hold.
Let $\Reff_{\rm diam} := \max_{u,v} \Reff(u,v)$ be the effective resistance diameter of a graph; note that the maximum is taken over all pairs (not just for edges as required in Theorem~\ref{t:network-two-sided}).
For example, it is known that~\cite{CRR+96} a $d$-regular graph with constant expansion has  $\Reff_{\rm diam} \leq O(1/d)$. 
So, if the fractional solution $x$ is close to a $d$-regular expander graph, then Theorem~\ref{t:network-two-sided} can be applied with $\eps \geq 1/\sqrt{d}$.
It is proved in~\cite{AALO18} that a much milder expansion condition guarantees small effective resistance diameter.
For example, in a $d$-regular graph $G$, as long as for some $0<\delta \leq 1/2$,
\[|\delta(S)| \geq \Omega\left((d|S|)^{\frac12+\delta}\right) {\rm~for~all~} S \subseteq V
\quad \implies \quad 
\Reff_{\rm diam} \leq O\left(\frac{1}{d^{2\delta}}\right).
\]
Note that a $d$-regular graph with constant expansion satisfies the much stronger assumption that $|\delta(S)| \geq \Omega(d|S|)$.
Informally, the above result only requires $|\delta(S)|$ to be roughly the square root of $d|S|$ to show that the graph has a small effective resistance diameter (e.g. $3$-dimensional mesh).
So, as long as the fractional solution $x$ is a mild expander as defined in~\cite{AALO18}, the assumption in Theorem~\ref{t:network-two-sided} will be satisfied with small $\eps$.
As another example, if the algebraic connectivity $\lambda_2(L_x)$ of the fractional solution is at least say $1/2\eps^2$, then we have $\Reff_{\rm diam} \leq \eps^2$ so that Theorem~\ref{t:network-two-sided} can be applied.
Heuristically, if one could add the constraints that $\Reff_x(u,v) \leq \eps^2$ for $uv \in E$ so that the convex program~\eqref{eq:two-sided} is still feasible without increasing the objective value too much, then one could then apply Theorem~\ref{t:network-two-sided} to bound the integrality gap of the convex program.
\item
{\bf Algorithms:}
There are polynomial time algorithms to return the solutions guaranteed in Theorem~\ref{t:network-zero-one}, while the proof of Theorem~\ref{t:network-two-sided} is non-constructive.
In network design, Theorem~\ref{t:network-zero-one} give us approximation algorithms, while Theorem~\ref{t:network-two-sided} only gives us integrality gap results for the convex programming relaxation (that there exists a zero-one solution almost satisfying all the constraints with objective value close to the optimal value).
\end{enumerate}

\subsubsection{Concentration Property in Survivable Network Design} \label{ss:Bansal}

Recently, Bansal~\cite{Ban19} designed a rounding technique that achieves the guarantees by iterative rounding and randomized rounding simultaneously.
Suppose there is an iterative rounding algorithm for a problem satisfying some technical assumptions.
Bansal's algorithm will satisfy essentially the same guarantees of the iterative rounding algorithm, and simultaneously the following concentration property with $\beta = O(1)$ with respect to linear constraints as if the algorithm does independent randomized rounding.

\begin{definition}[$\beta$-concentration]
Let $\beta \geq 1$. For a vector valued random variable $X=(X_1, ..., X_m)$, where $X_i$ are possible dependent $0$-$1$ random variables, we say $X$ is $\beta$-concentrated around the mean $x=(x_1, ..., x_m)$ where $x_i = \E[X_i]$, if for every $a \in \R^n$ with $M:= \max_i |a_i|$, $\langle a, X \rangle$ is well-concentrated and satisfies Bernstein's inequality up to a factor of $\beta$ in the exponent, i.e.
\[
\Pr\left( \langle a, X \rangle - \inner{a}{x} \geq t \right) \leq \exp\left( - \frac{t^2/\beta}{2(\sum_{i=1}^m a_i^2 x_i (1-x_i) + Mt/3)} \right).
\]
\end{definition}

Bansal showed various interesting application of his techniques, with $x$ being the fractional solution to the linear programming relaxation and $X$ being the zero-one solution output by the approximation algorithm.
However, he left it as an open question whether there is an $O(1)$-approximation algorithm for survivable network design (the guarantee achieved by Jain's iterative rounding algorithm) with $O(1)$-concentration property.

Our iterative randomized swapping algorithms satisfy similar but weaker concentration properties.
Let $x \in [0,1]^m$ be the fractional solution to the one-sided spectral rounding problem.
The algorithm in Theorem~\ref{t:zero-one} will output a vector-valued random variable $X \in \{0,1\}^m$ such that for any $a \in \R_+^n$ with $M:=\max_i a_i$,
\[
\E[\inner{a}{X}] \leq (1+O(\eps))\inner{a}{x} + O\Big(\frac{n M}{\eps}\Big)
\quad {\rm and} \quad
\Pr( \inner{a}{X} - \E[\inner{a}{X}] \geq \eta ) \leq \exp\left[-\Omega\left(\frac{\eta^2}{\sigma^2 + M \eta} \right) \right],
\]
where $n$ is the dimension of the problem (i.e.~the dimension of the vectors) and $\sigma^2=O(M(\inner{a}{x} + n M/\eps))$ is a term related to the variance of the randomized swapping process. 
In other words, the random variable $\inner{a}{X}$ is concentrated around the expected value $\E[\inner{a}{X}]$, but the expected value $\E[\inner{a}{X}]$ could derivate from $\inner{a}{x}$ by $O(\eps \inner{a}{x} + nM/\eps)$ and the concentration property is weaker than the one required in $\beta$-concentration, as the upper bound of $\sigma^2$ we can obtain is larger than the term $\sum_{i=1}^m a_i^2 x_i(1-x_i)$ in the $\beta$-concentration definition.
We note that both Bansal's proof and our proof use Freedman's concentration inequality or its variant.
Using Theorem~\ref{t:network-zero-one}, we made some progress towards Bansal's question.

\begin{corollary} \label{c:network-concentration}
Let $x \in [0,1]^m$ be an optimal fractional solution to the survivable network design problem (i.e. \eqref{eq:one-sided} with only connectivity and capacity constraints).
Suppose $nc_{\infty} = O(\inner{c}{x})$.
Then there is a randomized polynomial time algorithm to return a solution $z \in \{0,1\}^m$ to the survivable network design problem so that $\inner{c}{z} \leq O(\inner{c}{x})$ with probability at least $1 - \exp(-\Omega(n))$.
Furthermore, for any $a \in \R_+^n$ and $\delta \in (0,1)$ it holds that $\inner{a}{x} - \delta n\norm{a}_\infty \leq \inner{a}{z} \leq O(\inner{a}{x} + n \norm{a}_{\infty})$ with probability at least $1-O(\exp(-\Omega(\delta^2 n)))$.
\end{corollary}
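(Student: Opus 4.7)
The plan is to derive Corollary~\ref{c:network-concentration} as a direct instantiation of Theorem~\ref{t:network-zero-one} applied to the standard LP relaxation of survivable network design (taking only the connectivity and capacity constraints active, with no packing, covering, effective resistance, spectral, degree or algebraic connectivity constraints present). All three probabilistic claims will come from the three tail bounds already stated in Theorem~\ref{t:network-zero-one}, combined via a union bound over a constant number of events.

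First, I would fix a small absolute constant, say $\eps = \tfrac{1}{8}$, so that $\eps \in (0,\tfrac14)$ and Theorem~\ref{t:network-zero-one} applies to $x$. This yields a randomized polynomial time algorithm producing $z \in \{0,1\}^m$ that satisfies every cut constraint $z(\delta(S)) \geq f(S)$ exactly with probability at least $1-\exp(-\Omega(n))$ (this is the spectral part of the guarantee, and it is independent of the cost vector $c$ or the auxiliary vector $a$).

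Next, I would bound the cost of $z$. Theorem~\ref{t:network-zero-one} gives
\[
\inner{c}{z} \leq (1 + 6\eps)\, \inner{c}{x} + \frac{15\, n\, c_{\infty}}{\eps}
\]
with probability at least $1-\exp(-\Omega(n))$. Using $\eps = \tfrac{1}{8}$ and the hypothesis $n c_\infty = O(\inner{c}{x})$, both terms are $O(\inner{c}{x})$, which gives $\inner{c}{z} \leq O(\inner{c}{x})$ as required. For a fixed but arbitrary $a \in \R_+^m$ with $\|a\|_\infty = a_\infty$, I would then apply the packing-type upper bound of Theorem~\ref{t:network-zero-one} treating $a$ as the ``cost'' vector, which gives
\[
\inner{a}{z} \leq (1+6\eps)\inner{a}{x} + \frac{15\, n\, a_\infty}{\eps} = O\bigl(\inner{a}{x} + n\, a_\infty\bigr)
\]
with probability at least $1-\exp(-\Omega(n))$, and symmetrically apply the covering-type lower bound with the same $a$ and parameter $\delta$, which gives $\inner{a}{z} \geq \inner{a}{x} - \delta n a_\infty$ with probability at least $1-\exp(-\Omega(\min\{\eps\delta,\eps\delta^2\}\cdot n)) = 1-\exp(-\Omega(\delta^2 n))$ since $\eps$ is a constant and $\delta \in (0,1)$. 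The corollary follows by a union bound over the (at most four) events above; the dominating failure probability is the covering event $\exp(-\Omega(\delta^2 n))$, while the other three events contribute only $\exp(-\Omega(n))$ each.

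There is no real obstacle here, as the statement is essentially a repackaging of Theorem~\ref{t:network-zero-one} specialized to the survivable network design LP. The only mild subtlety worth flagging is that the probabilistic guarantee for the auxiliary vector $a$ must hold even though $a$ is not known to the algorithm; this is exactly the point of Theorem~\ref{t:network-zero-one}, whose tail bounds on $\inner{a}{z}$ are quantified over all $a \in \R^m_+$ with the algorithm's output fixed, yielding the Bansal-type concentration property discussed in Section~\ref{ss:Bansal}.
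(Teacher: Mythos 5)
Your proposal is correct and matches the paper's intended derivation: the corollary is stated as a direct consequence of Theorem~\ref{t:network-zero-one} (the paper gives no separate proof), obtained exactly as you do by fixing a constant $\eps$, using the hypothesis $nc_\infty = O(\inner{c}{x})$ to absorb the additive term, and invoking the packing/covering tail bounds with the auxiliary vector $a$ in the role of a linear constraint, noting that $\min\{\eps\delta,\eps\delta^2\}=\eps\delta^2$ for $\delta\in(0,1)$. Your remark that the guarantee for $a$ holds without the algorithm knowing $a$ is precisely the point emphasized in Section~\ref{ss:Bansal}.
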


We remark that one can add linear constraints $a$ to the convex program in our framework before we apply the rounding, so that we have some control over $\inner{a}{x}$ of the fractional solution $x$ and hence some control over $\inner{a}{z}$ of the zero-one solution $z$.
But it may not be possible to add linear constraints to the relaxation in Bansal's setting, as adding constraints may make the underlying iterative rounding algorithm stops working (e.g. we do not know of an iterative rounding algorithm for the survivable network design problem with additional linear packing or covering constraints).
See Example~\ref{ex:covering} for a related discussion.
Our results suggest that the spectral approach is perhaps more suitable for achieving concentration property for survivable network design.

\subsection{Experimental Design} \label{s:experimental}

In this subsection, we will apply the one-sided spectral rounding results to design approximation algorithms for weighted experimental design problems, extending the work in~\cite{AZLSW17c,AZLSW20} for (unweighted) experimental design problems.
The presentations will mostly follow those in~\cite{AZLSW17c,AZLSW20}.

\subsubsection{Previous Work}

Experimental design is classical in statistics and has found new applications in machine learning~\cite{Puk06,Ang88,AZLSW20}.
In the general problem, we would like to select $k$ points from a large design pool $\{v_1, \ldots, v_m\} \in \R^n$ to maximize the statistical efficiency regressed on the selected $k$ design points.
This can be formulated as a discrete optimization problem of choosing a subset $S \subseteq [m]$ of at most $k$ vectors, so that its covariance matrix $\Sigma_S := \sum_{i \in S} v_i v_i^T$ has the smallest function value $f(\Sigma_S)$ for some objective function $f$.
Some popular choices of $f$ include
\begin{itemize}
    \item A(verage)-optimality: $f_A(\Sigma) = \tr(\Sigma^{-1})/n$,
    \item D(eterminant)-optimality: $f_D(\Sigma) = (\det \Sigma)^{-1/n}$,
    \item E(igen)-optimality: $f_E(\Sigma) = \lambda_{\max}(\Sigma^{-1})$,
    \item V(ariance)-optimality: $f_V(\Sigma) = \tr(V \Sigma^{-1} V^T)$,
    \item G-optimality: $f_G(\Sigma) = \max \diag(V \Sigma^{-1} V^T)$.
\end{itemize}
Many of these optimization problems are known to be NP-hard~\cite{CM09,NST19},
and we are interested in designing approximation algorithms for these problems.
There are two settings in experimental design.
\begin{enumerate}
\item With Repetition: A vector can be chosen multiple times.
This is equivalent to finding a vector $z \in \Z_+^m$ to minimize $f(\sum_{i=1}^m z_i v_i v_i^T)$ subject to the constraint that $\sum_{i=1}^m z_i \leq k$.
This is common in statistic literature, where multiple measurements with respect to the same design point lead to different values with statistically independent noise.
\item Without Repetition: A vector can be chosen at most once.
This is equivalent to finding a vector $z \in \{0,1\}^m$ to minimize $f(\sum_{i=1}^m z_i v_i v_i^T)$ subject to the constraint that $\sum_{i=1}^m z_i \leq k$.
This is more relevant in machine learning applications, as same data points often give the same result.
\end{enumerate}
To design approximation algorithms for these discrete optimization problems, the approach in~\cite{AZLSW17c,AZLSW20} is to first solve a convex programming relaxation to obtain $x \in [0,1]^m$ that minimizes $f(\sum_{i=1}^m x_i v_i v_i^T)$ subject to the constraint that $\sum_{i=1}^m x_i \leq k$, and then round it to $z \in \{0,1\}^m$ with $\sum_{i=1}^m z_i \leq k$ and $f(\sum_{i=1}^m z_i v_i v_i^T) \leq \gamma f(\sum_{i=1}^m x_i v_i v_i^T)$ for some small constant $\gamma \geq 1$.
Under some mild assumptions on the objective function $f$ (which are satisfied for all the popular choices above), Allen-Zhu, Li, Singh and Wang~\cite{AZLSW17c,AZLSW20} showed that designing a polynomial time $\gamma$-approximation algorithm for the experimental design problem can be reduced to the following one-sided spectral rounding problem.
\begin{problem}
Given $x \in [0,1]^m$ with $\sum_{i=1}^m x_i \leq k$ and $\sum_{i=1}^m x_i v_i v_i^T = I_n$, find $z \in \{0,1\}^m$ with $\sum_{i=1}^m z_i \leq k$ and $\sum_{i=1}^m x_i v_i v_i^T \succeq \frac{1}{\gamma} I_n$ in polynomial time.
\end{problem}
Theorem~\ref{t:swap} proves that this one-sided spectral rounding problem is always solvable with $\gamma=1+\eps$ as long as $k \geq \Omega(n/\eps^2)$, using the regret minimization framework.
This implies a $(1+\eps)$-approximation algorithm for a large class of experimental design problem as long as $k \geq \Omega(n/\eps^2)$, in both the with repetition and without repetition settings.
The assumption that $k \geq \Omega(n/\eps^2)$ is shown to be necessary in achieving a $(1+\eps)$-approximation for E-optimal design~\cite{NST19}.
For some other objective functions, it is possible to relax the assumption $k = \Omega(n/\eps^2)$:
Singh and Xie~\cite{SX18} and Madan, Singh, Tantipongpipat and Xie~\cite{MSTX19} gave $(1+\eps)$-approximation algorithms for D-optimal design when $k = \Omega(n/\eps)$, and Nikolov, Singh and Tantipongpipat~\cite{NST19} gave a $(1+\eps)$-approximation algorithm for A-optimal design when $k = \Omega(n/\eps)$.

\subsubsection{Weighted Experimental Design}

We consider the generalization of the experimental design problem where different design points may have different costs.
In this problem, we are given design points $\{v_1, \ldots, v_m\} \in \R^n$ and a cost vector $c \in \R_+^m$ and a cost budget $C$, the objective is to choose a subset $S \subseteq [m]$ that minimizes $f(\sum_{i \in S} v_i v_i^T)$ subject to the constraint that $\sum_{i \in S} c_i \leq C$.
The problem in the previous subsection is the special case when $c$ is the all-one vector and $C=k$.
We believe that this more general problem will be useful in applications, as it is natural that different experiments have different operation costs.

The approximate spectral rounding Theorem~\ref{t:approx} imply the following one-sided spectral rounding results that satisfy the more general cost constraint (which includes the cardinality constraint as a special case).

\begin{theorem} \label{t:budget}
Let $v_1, \ldots, v_m \in \R^n$ and $x \in [0,1]^m$.
Let $c \in \R_+^m$ and $C = \inner{c}{x}$.
Suppose $\sum_{i=1}^m x_i v_i v_i^T = I_n$ and $C \geq 15nc_{\infty}/\eps^2$.
For any $\eps \in (0,\frac12]$, there is a randomized polynomial time algorithm that returns an integral solution $z \in \{0,1\}^m$ such that $\inner{c}{z} \leq C$ and $\sum_{i=1}^m z_i v_i v_i^T \succeq (1-4\eps)I_n$ with probability at least $1-\exp(-\Omega(n))$.
\end{theorem}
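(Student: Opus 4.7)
The plan is to invoke the iterative randomized swapping algorithm (Theorem~\ref{t:approx}) after a rescaling that absorbs the $15nc_\infty/\eps$ additive cost error into the budget $C$. I will first reduce to the regime $\eps \leq 1/4$: for $\eps \in (1/4, 1/2]$ the target spectral bound $(1-4\eps) I_n$ is $\preceq 0$, so the trivial choice $z = \vec{0}$ already satisfies both conclusions.

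To create slack against the additive error, I scale the fractional solution down and scale the vectors up by the compensating factor. Set $y_i := x_i/(1+2\eps)$ and $w_i := \sqrt{1+2\eps}\, v_i$ for each $i$; then $y \in [0,1]^m$, $\sum_i y_i w_i w_i^T = \sum_i x_i v_i v_i^T = I_n$, and $\inner{c}{y} = C/(1+2\eps)$. I feed $(y, \{w_i\}, \eps)$ into the iterative randomized swapping algorithm, running it for $k\sqrt{n}/\eps$ iterations so that by Theorem~\ref{t:eigenvalue} it terminates with an output set $S$ satisfying $\sum_{i\in S} w_i w_i^T \succeq (1-2\eps) I_n$ with probability at least $1-\exp(-\Omega(n))$.

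I then verify both guarantees on $S$. Unwinding the rescaling gives $\sum_{i\in S} v_i v_i^T \succeq \tfrac{1-2\eps}{1+2\eps} I_n \succeq (1-4\eps) I_n$, where the last step uses $\tfrac{1-2\eps}{1+2\eps}-(1-4\eps) = \tfrac{8\eps^2}{1+2\eps} \geq 0$. For the cost, the upper-bound part of Theorem~\ref{t:approx} applied with $\delta_1 = \eps/2$ yields $c(S) \leq (1+\eps/2)\cdot C/(1+2\eps) + 15nc_\infty/\eps$ with probability at least $1-\exp(-\Omega(\delta_1 n/\eps)) = 1-\exp(-\Omega(n))$. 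The hypothesis $C \geq 15nc_\infty/\eps^2$ turns the additive term into $C\eps$, and the elementary check $(1+\eps/2)/(1+2\eps) + \eps \leq 1 \iff 2\eps^2 \leq \eps/2 \iff \eps \leq 1/4$ gives $c(S) \leq C$. A union bound over the two failure events yields the full claim with probability $1-\exp(-\Omega(n))$.

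The only delicate point I anticipate is the choice of scaling parameter: the factor $1+2\eps$ is tuned so that the spectral loss from rescaling (which turns $(1-2\eps)$ into $(1-4\eps)$) and the cost loss from the additive error (which absorbs the extra $C\eps$ budget) balance exactly when $\eps = 1/4$. Any larger scaling would spoil the spectral guarantee, and any smaller scaling would leave insufficient room for the additive cost term; this forces the two regime boundaries to coincide and matches the nontriviality threshold of the statement.
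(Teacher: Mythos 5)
Your proof is correct and takes essentially the same approach as the paper: rescale the fractional solution to create slack in the budget, run the iterative randomized swapping algorithm of Theorem~\ref{t:approx}, and trade the resulting loss in the spectral bound against absorbing the additive cost term $15nc_\infty/\eps \leq \eps C$. The only difference is cosmetic — the paper scales by $y=(1-2\eps)x$, $u_i = v_i/\sqrt{1-2\eps}$ (giving $(1+\eps)(1-2\eps)+\eps = 1-2\eps^2 < 1$ for all $\eps < \tfrac12$, so no case split is needed), whereas your factor $1/(1+2\eps)$ forces the split at $\eps = \tfrac14$, which you handle correctly with the trivial solution.
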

\begin{proof}
The idea is to scale down $x$ then apply Theorem~\ref{t:approx}.
We let $\alpha = 1-2\eps$ and set 
\[y := \alpha x \quad {\rm and} \quad u_i := \frac{v_i}{\sqrt{\alpha}}
\quad \implies \quad
\inner{c}{y} = \alpha \inner{c}{x} = \alpha C 
\quad {\rm and} \quad 
\sum_{i=1}^m y_i u_i u_i^T = \sum_{i=1}^m x_i v_i v_i^T = I_n.
\]
We apply Theorem~\ref{t:approx} on $u_1, \ldots, u_m$ and $y,c$ with $\delta_1 = \eps, q = \sqrt{n}$ to obtain $z \in \{0,1\}^m$ so that
\[
\sum_{i=1}^m z_i u_i u_i^T \succeq (1-2\eps) I_n \quad \implies \quad
\sum_{i=1}^m z_i v_i v_i^T \succeq \alpha(1-2\eps) I_n \succeq (1-4\eps) I_n
\]
and
\[
\inner{c}{z} \leq (1+\eps) \inner{c}{y} + 15nc_{\infty}/\eps \leq 
(1+\eps)(1-2\eps) C + \eps C < C,
\]
where we use the assumptions that $15nc_{\infty}/\eps^2 \leq C$.
The failure probability is at most $\exp(-\Omega(n))$.
\end{proof}

Using the same reduction in~\cite{AZLSW17c,AZLSW20}, Theorem~\ref{t:budget} implies the following approximation algorithms for weighted experimental design, including the weighted version of A/D/E/V/G-design.

\begin{theorem} \label{t:exact-design}
Suppose we are given $m$ design points that are represented by $n$-dimensional vectors $v_1, ..., v_m \in \R^n$, a cost vector $c \in \R^m_+$ and a cost budget $C \in \R_+$.
Assuming that the objective function $f$ satisfies the monotonicity, reciprocal sub-linearity and the polynomial time approximability conditions as described in~\cite{AZLSW17c,AZLSW20} (which hold for A/D/E/V/G-design), we have the following approximation results for weighted experimental design. 

For any fixed $\eps \leq \frac15$, if $C \geq 15 n c_{\infty}/\eps^2$, then there exists a polynomial time randomized algorithm that returns an integral vector $z \in \{0,1\}^m$ such that with probability at least $1-\exp(-\Omega(n))$ it holds that 
    \[
        f\left( \sum_{i=1}^m z_i v_i v_i^T \right) \leq (1+O(\eps)) \cdot \min_{y \in [0,1]^m: \inner{c}{y} \leq C} f\left( \sum_{i=1}^m y_i v_i v_i^T \right) \qquad \text{and} \qquad \inner{c}{z} \leq C.
    \]
\end{theorem}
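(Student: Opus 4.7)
The plan is to reduce Theorem~\ref{t:exact-design} to Theorem~\ref{t:budget} via the standard transformation from~\cite{AZLSW17c,AZLSW20}, using the three structural assumptions on $f$ to translate the additive spectral lower bound guaranteed by the rounding into a multiplicative guarantee on the objective.

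First I would compute (using polynomial time approximability) an almost-optimal fractional solution $y^* \in [0,1]^m$ to the convex relaxation
\[
\min_{y \in [0,1]^m,~\inner{c}{y} \leq C} f\Big( \sum_{i=1}^m y_i v_i v_i^T \Big),
\]
whose value I denote $\opt_{\rm frac}$. Let $\Sigma^* := \sum_{i=1}^m y_i^* v_i v_i^T$. After restricting to the range of $\Sigma^*$ and padding coordinates (this is where, if $\Sigma^*$ is singular, we project onto its column span; monotonicity and the definition of $f$ make this reduction harmless), I define the rescaled design points $u_i := (\Sigma^*)^{-1/2} v_i \in \R^n$, so that
\[
\sum_{i=1}^m y_i^* u_i u_i^T = (\Sigma^*)^{-1/2} \Sigma^* (\Sigma^*)^{-1/2} = I_n.
\]
This puts us in the setting of Theorem~\ref{t:budget} with the same cost vector $c$ and budget $C$, and the assumption $C \geq 15 n c_\infty / \eps^2$ is preserved verbatim.

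Next I apply Theorem~\ref{t:budget} to $(u_1, \dots, u_m)$, $y^*$, $c$, and $C$. This produces, in randomized polynomial time and with probability at least $1 - \exp(-\Omega(n))$, a vector $z \in \{0,1\}^m$ satisfying $\inner{c}{z} \leq C$ and
\[
\sum_{i=1}^m z_i u_i u_i^T \succeq (1 - 4\eps) I_n.
\]
Conjugating by $(\Sigma^*)^{1/2}$ converts this into the matrix inequality $\sum_{i=1}^m z_i v_i v_i^T \succeq (1 - 4\eps)\, \Sigma^*$ in the original coordinates. This is the point of the entire reduction: spectral rounding delivers a covariance matrix that dominates a scaled copy of the fractional optimum.

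Finally I translate this spectral inequality into an objective guarantee using the two structural properties of $f$. By monotonicity (that $A \succeq B \succ 0 \Rightarrow f(A) \leq f(B)$),
\[
f\Big( \sum_{i=1}^m z_i v_i v_i^T \Big) \leq f\big( (1 - 4\eps)\, \Sigma^* \big),
\]
and by reciprocal sub-linearity (that $f(\alpha \Sigma) \leq \alpha^{-1} f(\Sigma)$ for $\alpha \in (0, 1]$, which holds for each of A/D/E/V/G-design),
\[
f\big( (1 - 4\eps)\, \Sigma^* \big) \leq (1 - 4\eps)^{-1} f(\Sigma^*) = (1 + O(\eps))\, \opt_{\rm frac},
\]
for $\eps \leq \tfrac{1}{5}$. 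Since $\opt_{\rm frac}$ is at most the optimum of the displayed convex program (up to a $(1+\eps)$ slack absorbed into the $O(\eps)$), this yields the desired bound $f(\sum_{i} z_i v_i v_i^T) \leq (1 + O(\eps)) \min_{y} f(\sum_i y_i v_i v_i^T)$. The one cosmetic subtlety is that the rounding in Theorem~\ref{t:budget} is stated for $\sum_i x_i v_i v_i^T = I_n$ rather than a general PSD covariance, so the preprocessing step $u_i = (\Sigma^*)^{-1/2} v_i$ (together with handling of a possibly singular $\Sigma^*$ by projecting away its kernel and invoking monotonicity there) is the only step that requires a little care; everything else is direct composition of Theorem~\ref{t:budget} with the abstract $f$-axioms from~\cite{AZLSW17c,AZLSW20}.
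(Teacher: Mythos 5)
Your proposal is correct and is exactly the route the paper takes: the paper proves Theorem~\ref{t:budget} and then simply invokes "the same reduction in~\cite{AZLSW17c,AZLSW20}" — i.e., isotropic rescaling of an (approximately) optimal fractional solution, one-sided spectral rounding under the cost budget, and the monotonicity/reciprocal-sublinearity axioms to convert $\sum_i z_i v_iv_i^T \succeq (1-4\eps)\Sigma^*$ into a $(1+O(\eps))$ objective guarantee. The only point worth noting is that Theorem~\ref{t:budget} is stated with $C=\inner{c}{x}$, so one should observe (as the standard reduction does, via monotonicity of $f$) that the fractional optimum can be assumed to saturate the budget so that the hypothesis $\inner{c}{y^*}\geq 15nc_\infty/\eps^2$ is inherited from $C\geq 15nc_\infty/\eps^2$.
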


We note that the algorithms in Theorem~\ref{t:exact-design} can incorporate some additional linear packing and covering constraints, with the same guarantees as in Theorem~\ref{t:network-zero-one}.

Finally, we mention that the two-sided spectral rounding result can also be applied to weighted experimental design.
Assuming all the vectors have length at most $\eps$, it shows that there is a zero-one solution which achieves $(1+O(\eps))$-approximation in weighted experimental design, but it does not provide a polynomial time algorithm to find such a zero-one solution.

\subsection{Spectral Network Design} \label{s:spectral}

There are several previous work on network design problems with spectral requirements.
In this section, we will see that these problems are special cases of the general network design problem and the weighted experimental design problem in Section~\ref{s:network} and Section~\ref{s:experimental}, and our results provide improved approximation algorithms for these problems and also generalize these problems to incorporate many additional constraints.

\subsubsection{Maximizing Algebraic Connectivity}

Ghosh and Boyd~\cite{GB06} study the problem of choosing a subgraph that maximizes the algebraic connectivity (the second smallest eigenvalue of its Laplacian matrix) subject to a cost constraint.
The problem is formulated as follows:
\begin{equation} \label{e:max_lambda}
    \begin{aligned}
            & \lambda_{\rm opt} := \max_{x \in \R^{|E|}} & & \lambda_2\left( \sum_{e \in E} x_e b_e b_e^T \right) \\
            & \text{subject to} & & \sum_{e \in E} c_e x_e \leq C, \\
            & & & x_e \in \{0,1\}, \forall e \in E,
    \end{aligned}
\end{equation}
where $c_e$ is the cost of edge $e$ for $e \in E$ and $C$ is the given cost budget.
As mentioned in~\cite{GB06}, the algebraic connectivity is a good measure on the well-connectedness of a graph, as 
\[\lambda_2(L_G) \leq \min_{S \subseteq V} \frac{n |\delta(S)|}{|S| |\bar{S}|} \leq 2 \min_{0 \leq |S| \leq \frac{n}{2}} \frac{|\delta(S)|}{|S|}
\]
where the first inequality is proved in \cite{FKP03}.
Thus, any graph with large $\lambda_{\opt}$ has no sparse cuts, which also implies that the mixing time of random walks is small.

Ghosh and Boyd show that if the constraint $x_e \in \{0,1\}$ is relaxed to $x_e \in [0,1]$, then the relaxation is convex and can be written as a semidefinite program.
They proposed a greedy heuristic based on the Fiedler vector for the zero-one cost setting (where $c_e \in \{0,1\}$ for all $e$), but they do not provide any approximation guarantee of their heuristic algorithm.

Kolla, Makarychev, Saberi and Teng~\cite{KMS+10} provide the first algorithm with non-trivial approximation guarantee in the zero-one cost setting.
Using subgraph sparsification techniques, they give an algorithm that returns a solution which violates the cost constraint by a factor of at most $8$ and having algebraic connectivity at least $\Omega(\lambda^2_{\opt} / \Delta)$ where $\Delta$ is the maximum degree of the graph.

We observe that if we project the vectors $b_e$ onto the rank $n-1$ subspace orthogonal to the all-one vector, then the objective function of \eqref{e:max_lambda} is simply the reciprocal of the objective function of the E-optimal design problem described in Section~\ref{s:experimental}.
This immediately implies that the result of Allen-Zhu, Li, Singh and Wang~\cite{AZLSW20} can be applied to give a $(1+\eps)$-approximation algorithm for the unweighted problem as long as $C \geq 5n/\eps^2$, although this connection was not made before.
Theorem~\ref{t:exact-design} implies the following approximation result for general non-negative cost function.

\begin{theorem} \label{t:lambda1}
Suppose $C \geq 15nc_{\infty}/\eps^2$ for some $\eps \in (0,\frac12]$.
There is a polynomial time randomized algorithm which returns a zero-one solution $z \in \{0,1\}^m$ for \eqref{e:max_lambda} with with probability at least $1-\exp(-\Omega(n))$ such that 
\[\lambda_2\left( \sum_{e \in E} z_e b_e b_e^T \right) \geq (1-O(\eps)) \lambda_{\opt}
\quad {\rm and} \quad 
\sum_{e \in E} c_e z_e \leq C.
\]
\end{theorem}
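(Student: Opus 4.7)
The plan is to observe that problem~\eqref{e:max_lambda} is essentially the E-optimal design problem in disguise once we work in the $(n{-}1)$-dimensional subspace $\vec{1}^\perp$, and then apply Theorem~\ref{t:exact-design} as a black box.

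First I would solve the natural convex relaxation of~\eqref{e:max_lambda} obtained by replacing $x_e \in \{0,1\}$ with $x_e \in [0,1]$. By the fact cited in Section~\ref{s:graphs} that $\lambda_2(L_x)$ is concave in $x \geq 0$, this relaxation is a concave-maximization problem with one linear cost constraint and box constraints, so it is solvable in polynomial time (via the SDP formulation of~\cite{GB06}). Let $x^* \in [0,1]^m$ be an optimal fractional solution and set $\lambda^*_{\opt} := \lambda_2(L_{x^*})$; since this is a relaxation of~\eqref{e:max_lambda}, $\lambda^*_{\opt} \geq \lambda_{\opt}$.

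Next I would carry out the reduction to E-design. Every edge vector $b_e = \chi_u - \chi_v$ is orthogonal to the all-ones vector, so every Laplacian $L_y = \sum_e y_e b_e b_e^T$ vanishes on $\spn(\vec{1})$ and acts as a PSD operator on the $(n{-}1)$-dimensional subspace $\vec{1}^\perp$. By definition $\lambda_2(L_y)$ equals the smallest eigenvalue of this restriction, which equals $1/\lambda_{\max}\bigl((L_y|_{\vec{1}^\perp})^{-1}\bigr) = 1/f_E(L_y)$, where $f_E$ is the E-design objective. Hence, viewing each $b_e$ as a vector in $\R^{n-1} \cong \vec{1}^\perp$, maximizing $\lambda_2$ subject to a cost constraint is literally minimizing the E-design objective subject to that cost constraint. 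Since $f_E$ satisfies the monotonicity, reciprocal sub-linearity, and polynomial-time approximability conditions required by Theorem~\ref{t:exact-design} (as verified in~\cite{AZLSW20}), I would invoke Theorem~\ref{t:exact-design} on the vectors $\{b_e\}_{e \in E}$, cost vector $c$, and budget $C \geq 15 n c_{\infty}/\eps^2$, to obtain in randomized polynomial time a $z \in \{0,1\}^m$ with $\inner{c}{z} \leq C$ and
\[
f_E\Bigl(\textstyle\sum_{e} z_e b_e b_e^T\Bigr) \;\leq\; (1+O(\eps)) \cdot \min_{y \in [0,1]^m,\, \inner{c}{y} \leq C} f_E\Bigl(\textstyle\sum_e y_e b_e b_e^T\Bigr)
\]
with probability at least $1 - \exp(-\Omega(n))$.

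To finish, I would unpack this guarantee in terms of $\lambda_2$. The fractional minimum on the right equals $1/\lambda^*_{\opt}$ by the identification above, while the left-hand side equals $1/\lambda_2(L_z)$; rearranging gives $\lambda_2(L_z) \geq \lambda^*_{\opt}/(1+O(\eps)) \geq (1-O(\eps))\lambda_{\opt}$, which is the advertised bound. The only real content of the argument is the reciprocal identification between $\lambda_2$-maximization and E-design-minimization in the subspace $\vec{1}^\perp$; once this is in place, the statement reduces to a direct invocation of Theorem~\ref{t:exact-design}, and the hard work sits inside that theorem (the iterative randomized swapping algorithm of Section~\ref{s:swap}). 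There is no serious obstacle, but one minor point to be careful about is that the fractional relaxation must be solved to additive accuracy $\poly(\eps/n)$ so that the $(1{+}O(\eps))$ loss from the rounding step dominates the loss from the continuous solver.
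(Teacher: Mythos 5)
Your proposal is correct and follows essentially the same route as the paper: the paper likewise observes that, after restricting to the subspace orthogonal to the all-ones vector, the $\lambda_2$ objective of \eqref{e:max_lambda} is the reciprocal of the E-optimal design objective, and then derives Theorem~\ref{t:lambda1} as a direct corollary of Theorem~\ref{t:exact-design}. Your additional remarks on the concavity of $\lambda_2$ and the solver accuracy are consistent with the paper's discussion and raise no issues.
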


As shown in Section~\ref{s:network}, the constraint $\lambda_2(\sum_{e \in E} x_e b_e b_e^T) \geq \lambda_{\opt}$ can be incorporated into network design, and so Theorem~\ref{t:network-zero-one} implies the following result.

\begin{theorem} \label{t:lambda2}
There is a polynomial time randomized algorithm which returns a zero-one solution $z \in \{0,1\}^m$ with probability at least $1-\exp(-\Omega(n))$ such that 
\[
\lambda_2\left( \sum_{e \in E} z_e b_e b_e^T \right) \geq \lambda_{\opt}
\quad {\rm and} \quad 
\sum_{e \in E} c_e z_e \leq (1+O(\eps)) C + O\Big(\frac{nc_{\infty}}{\eps}\Big).
\]
Furthermore, this can be done while incorporating other constraints (e.g. connectivity constraints) as described in Theorem~\ref{t:network-zero-one}.
\end{theorem}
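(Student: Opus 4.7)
The plan is to solve a convex relaxation that builds in the algebraic connectivity constraint and then apply the one-sided spectral rounding machinery developed in Theorem~\ref{t:network-zero-one}. Specifically, I would first write the relaxation of \eqref{e:max_lambda} in which $x_e \in \{0,1\}$ is replaced by $x_e \in [0,1]$; since this is a convex program (by concavity of $\lambda_2$ in $x$) with the same objective, its optimum $x^*$ satisfies $\lambda_2(L_{x^*}) \geq \lambda_{\opt}$ and $\sum_e c_e x^*_e \leq C$. This fractional solution will serve as our input to the rounding step.

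Next I would feed $x^*$ into the convex program \eqref{eq:one-sided} with the additional spectral/algebraic-connectivity constraint $\lambda_2(L_x) \geq \lambda_2(L_{x^*})$ together with any other constraints (connectivity, effective resistance, etc.) that one wishes to impose, and apply Theorem~\ref{t:network-zero-one}. That theorem uses the standard projection $v_e := L_{x^*}^{\dagger/2}\Pi b_e$ (where $\Pi = I_n - \tfrac{1}{n}J_n$) so that $\sum_e x^*_e v_e v_e^T = I_{n-1}$, and then invokes Theorem~\ref{t:zero-one} to produce $z \in \{0,1\}^m$ with $\sum_e z_e v_e v_e^T \succeq I_{n-1}$ with probability at least $1-\exp(-\Omega(n))$. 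Translating back, this gives $L_z \succeq L_{x^*}$, and Lemma~\ref{l:spectral-constraints} then yields $\lambda_2(L_z) \geq \lambda_2(L_{x^*}) \geq \lambda_{\opt}$, together with the preservation of all the other spectral-type constraints.

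For the cost bound, Theorem~\ref{t:zero-one} applied with the cost vector $c$ guarantees that
\[
\sum_{e \in E} c_e z_e \;\leq\; (1+6\eps)\inner{c}{x^*} + \frac{15 n c_\infty}{\eps} \;\leq\; (1+O(\eps))\, C + O\!\left(\frac{n c_\infty}{\eps}\right)
\]
with probability at least $1-\exp(-\Omega(n))$, using $\inner{c}{x^*} \leq C$. A union bound over the spectral-lower-bound event and the cost event still leaves overall success probability at least $1-\exp(-\Omega(n))$, yielding the claimed statement. Any additional linear packing/covering or connectivity constraints come along automatically from the stronger form of Theorem~\ref{t:network-zero-one}.

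The main thing to verify carefully, rather than a genuine obstacle, is that feeding $x^*$ into the rounding step really does preserve $\lambda_2$: this needs the implication $L_z \succeq L_{x^*} \Rightarrow \lambda_2(L_z) \geq \lambda_2(L_{x^*})$, which is just monotonicity of eigenvalues under the PSD order applied on the subspace orthogonal to $\mathbf{1}$ (and this is exactly the content already encapsulated in Lemma~\ref{l:spectral-constraints}). The only mildly delicate point is that $L_{x^*}$ must have rank $n-1$ for the projection trick; if the support of $x^*$ is disconnected then $\lambda_2(L_{x^*}) = 0$ and the statement is vacuous, so we may assume connectivity without loss of generality.
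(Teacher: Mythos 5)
Your proposal is correct and follows essentially the same route the paper intends: solve the convex relaxation (feasible and with value at least $\lambda_{\opt}$ by concavity of $\lambda_2$), incorporate the resulting algebraic connectivity bound as a constraint in \eqref{eq:one-sided}, and invoke Theorem~\ref{t:network-zero-one} (i.e.\ the projection $v_e = L_{x}^{\dagger/2}\Pi b_e$ plus Theorem~\ref{t:zero-one} and Lemma~\ref{l:spectral-constraints}) to get $L_z \succeq L_{x^*}$ and the stated cost bound. Your closing remarks on eigenvalue monotonicity under the PSD order and on the connectivity of the support of $x^*$ correctly address the only delicate points.
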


\subsubsection{Minimizing Total Effective Resistance} \label{ssec:totReff}

Ghosh, Boyd and Saberi~\cite{GBS08} study the problem of designing a network that minimizes the total effective resistance.
The problem is formulated as follows.
\begin{equation} \label{e:ERMP}
    \begin{aligned}
            & R_{\opt} := \min_{x \in \R^{|E|}} & & \frac{1}{2} \sum_{u, v \in V} \Reff_x(u,v) \\
            & \text{subject to} & & \sum_{e \in E} x_e \leq k, \\
            & & & x_e \in \{0,1\}, \forall e \in E.
    \end{aligned}
\end{equation}
They showed that if the constraint $x_e \in \{0,1\}$ is relaxed to $x_e \in [0,1]$, then the relaxation is convex and can be written as a semidefinite program.
They did not provide any result for the discrete optimization version in \eqref{e:ERMP}. 

Ghosh, Boyd and Saberi~\cite{GBS08} also show that the total effective resistance is a useful measure in different problems, e.g. average commute time, power dissipation in a resistor network, elmore delay in a RC Circuit, total time constant of an averaging network, and euclidean variance.
Furthermore, they established a connection between \eqref{e:ERMP} and the A-design problem described in Section~\ref{s:experimental}. 
To see this, note that the objective of \eqref{e:ERMP} can be written as
\begin{align*}
    \frac12 \sum_{u,v \in V} \Reff_x(u,v) & = \frac12 \sum_{u \neq v \in V} b_{uv}^T L_{x}^{\dagger} b_{uv} = \left\langle L_{x}^{\dagger},  \frac12 \sum_{u \neq v \in V} b_{uv} b_{uv}^T\right\rangle 
    =  \left\langle L_{x}^{\dagger},  n I_n - 1_n 1_n^T \right\rangle = n \tr\left( L_{x}^{\dagger} \right),
\end{align*}
where the last equality follows as $L_{x}^{\dagger}$ is orthogonal to $1_n$.
Hence, minimizing total effective resistance is equivalent to minimizing $\tr(L_{G_x}^{\dagger}) = \tr( \sum_{e \in E} x_e b_e b_e^T)^{\dagger}$, which is the same as the A-design objective function after we project the vectors onto the subspace orthogonal to the all-one vector.

With this connection, all the recent algorithms for the A-optimal design can be applied to solve \eqref{e:ERMP}. 
For instances, the regret minimization algorithm in \cite{AZLSW20} gives a $(1+\eps)$-approximation algorithm when $k \geq \Omega(n/\eps^2)$, and the proportional volume sampling in~\cite{NST19} achieves $(1+\eps)$-approximation with weaker assumption $k \geq \Omega(n/\eps)$.

Theorem~\ref{t:exact-design} implies the following approximation result for the more general weighted setting, where every edge has a cost $c_e$ and we are given a cost budget $C$ as in \eqref{e:max_lambda}.

\begin{theorem} \label{t:A1}
Suppose $C \geq 15nc_{\infty}/\eps^2$.
There is a polynomial time randomized $(1+O(\eps))$-approximation algorithm for the weighted version of \eqref{e:ERMP}.
\end{theorem}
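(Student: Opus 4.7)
The plan is to reduce the weighted total effective resistance problem directly to weighted A-optimal design and then apply Theorem~\ref{t:exact-design}. As observed in the discussion preceding Theorem~\ref{t:A1}, the objective function admits the identity
\[
\frac{1}{2} \sum_{u,v \in V} \Reff_x(u,v) = n \cdot \tr\bigl( L_x^{\dagger} \bigr),
\]
so minimizing the total effective resistance is (up to the factor $n$) exactly minimizing $\tr((\sum_{e} x_e b_e b_e^T)^{\dagger})$. Letting $\Pi = I_n - \tfrac{1}{n} J_n$ denote the orthogonal projection onto the $(n-1)$-dimensional subspace perpendicular to the all-ones vector, and defining $v_e := \Pi b_e \in \R^{n-1}$ for each $e \in E$, the pseudo-inverse trace becomes $\tr\bigl( (\sum_e x_e v_e v_e^T)^{-1} \bigr)$ on the reduced subspace. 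Up to the constant $n$, this is precisely the A-design objective $f_A(\Sigma) = \tr(\Sigma^{-1})/(n-1)$ applied to the covariance matrix built from the vectors $\{v_e\}$.

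Given this reformulation, I would first solve the convex relaxation of the weighted version of \eqref{e:ERMP}, in which $x_e \in \{0,1\}$ is relaxed to $x_e \in [0,1]$ and the cost constraint $\sum_e c_e x_e \leq C$ is retained. As noted in~\cite{GBS08}, this relaxation is a convex program (in fact representable as an SDP) and can be solved in polynomial time to obtain an optimal fractional $x^* \in [0,1]^m$ with $\inner{c}{x^*} \leq C$ and total effective resistance at most $R_{\opt}$. I would then feed the vectors $\{v_e\}_{e \in E}$ (viewed as the design points), the cost vector $c$, and the cost budget $C$ into the weighted experimental design framework with objective $f = f_A$.

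The next step is to invoke Theorem~\ref{t:exact-design} with the A-optimality objective, which is one of the objective functions listed there as satisfying the monotonicity, reciprocal sub-linearity, and polynomial-time approximability conditions required in~\cite{AZLSW17c,AZLSW20}. Since the hypothesis $C \geq 15 n c_{\infty}/\eps^2$ of Theorem~\ref{t:A1} matches the hypothesis of Theorem~\ref{t:exact-design} (modulo the harmless shift $n \mapsto n-1$ coming from the projection), we obtain in polynomial randomized time a $z \in \{0,1\}^m$ with $\inner{c}{z} \leq C$ and
\[
f_A\Bigl( \sum_{e} z_e v_e v_e^T \Bigr) \leq (1+O(\eps)) \cdot f_A\Bigl( \sum_{e} x^*_e v_e v_e^T \Bigr),
\]
with probability $1 - \exp(-\Omega(n))$. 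Translating back via the identity $\tr(L^{\dagger}) = (n-1) f_A(\Pi L \Pi)$, this yields $\sum_{u,v} \Reff_z(u,v) \leq (1+O(\eps)) \sum_{u,v} \Reff_{x^*}(u,v) \leq (1+O(\eps)) \cdot 2 R_{\opt}/1$, i.e.\ a $(1+O(\eps))$-approximation.

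The only mildly delicate point, and the place I would spend the most care, is verifying that the reduction from the effective-resistance objective to the A-design objective preserves the hypotheses cleanly: in particular, that the fractional A-design optimum over vectors $\{v_e\}$ with budget $\inner{c}{x} \leq C$ equals (up to the factor $n$) the optimum of the convex relaxation of \eqref{e:ERMP}, so that the $(1+O(\eps))$ multiplicative guarantee transfers without loss. This follows from the identity displayed above holding pointwise in $x$, together with the fact that every feasible $x$ for one relaxation is feasible for the other with the same cost. Once this correspondence is established, Theorem~\ref{t:exact-design} does all the heavy lifting, and no further spectral-rounding argument is needed.
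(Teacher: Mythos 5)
Your proposal is correct and follows essentially the same route as the paper: the identity $\frac{1}{2}\sum_{u,v}\Reff_x(u,v) = n\,\tr(L_x^{\dagger})$ reduces the weighted version of \eqref{e:ERMP} to weighted A-optimal design on the projected vectors $\Pi b_e$, and Theorem~\ref{t:exact-design} then supplies the $(1+O(\eps))$-approximation under the budget hypothesis $C \geq 15nc_{\infty}/\eps^2$. This is exactly the argument the paper intends.
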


As shown in Section~\ref{s:network}, the effective resistance constraints can be incorporated into network design, and so Theorem~\ref{t:network-zero-one} implies the following result.

\begin{theorem} \label{t:A2}
There is a polynomial time randomized algorithm which returns a zero-one solution $z \in \{0,1\}^m$ with probability at least $1-\exp(-\Omega(n))$ such that 
\[
\frac{1}{2} \sum_{u,v \in V} \Reff_z(u,v) \leq R_{\opt}
\quad {\rm and} \quad 
\sum_{e \in E} c_e x_e \leq (1+O(\eps)) C + O\Big(\frac{nc_{\infty}}{\eps}\Big).
\]
Furthermore, this can be done while incorporating other constraints (e.g. connectivity constraints) as described in Theorem~\ref{t:network-zero-one}.
\end{theorem}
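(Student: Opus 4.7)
The plan is to combine the equivalence between total effective resistance and $\tr(L_x^{\dagger})$ derived in Section~\ref{ssec:totReff} with the general network design framework of Theorem~\ref{t:network-zero-one}, using the fact that the one-sided spectral rounding delivers a zero-one solution that spectrally dominates the starting fractional point. First I would set up a convex programming relaxation that minimizes $\inner{c}{x}$ subject to the capacity constraints $0 \le x_e \le 1$, the desired ``other'' constraints from \eqref{eq:one-sided} (connectivity, pairwise effective resistance, spectral lower bound, algebraic connectivity, linear packing/covering), together with the extra convex constraint
\[
\tfrac{1}{2}\sum_{u,v \in V} \Reff_x(u,v) \;=\; n\,\tr(L_x^{\dagger}) \;\leq\; R_{\opt}.
\]
Convexity follows from the fact (noted in Section~\ref{s:graphs}) that $\Reff_x(u,v)$ is convex in $x$, so the total effective resistance is too. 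Since the integral optimum $z^{*}$ achieving $R_{\opt}$ is feasible for this relaxation, the optimal fractional $x^{*}$ satisfies $\inner{c}{x^{*}} \leq C$. The value $R_{\opt}$ is unknown, but this can be handled either by binary search over its range or by first solving the ``swapped'' program (minimize $n\tr(L_x^{\dagger})$ subject to $\inner{c}{x} \leq C$ and the other constraints) to obtain an upper bound $R^{*} \leq R_{\opt}$ to use as the right-hand side.

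Given $x^{*}$, I would round exactly as in the proof of Theorem~\ref{t:network-zero-one}: set $v_e := L_{x^{*}}^{\dagger/2} \Pi b_e$ with $\Pi = I_n - \tfrac{1}{n}J_n$, observe that $\sum_{e} x^{*}_e v_e v_e^T = I_{n-1}$, and apply Theorem~\ref{t:zero-one} to these vectors with the cost vector $c$ and parameter $\eps$. With probability $1 - \exp(-\Omega(n))$ this produces a zero-one vector $z \in \{0,1\}^m$ with $\sum_{e} z_e v_e v_e^T \succcurlyeq I_{n-1}$, equivalently $L_z \succcurlyeq L_{x^{*}}$, and cost
\[
\inner{c}{z} \;\leq\; (1+6\eps)\inner{c}{x^{*}} + \frac{15 n c_{\infty}}{\eps} \;\leq\; (1+O(\eps))\, C + O\!\left(\tfrac{n c_{\infty}}{\eps}\right).
\]

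To conclude, I would use the key observation that $L_z \succcurlyeq L_{x^{*}}$ implies $L_z^{\dagger} \preccurlyeq L_{x^{*}}^{\dagger}$ by monotonicity of the pseudo-inverse on the subspace orthogonal to $\vec{1}$. Taking traces and invoking the identity $\tfrac{1}{2}\sum_{u,v} \Reff_H(u,v) = n\,\tr(L_H^{\dagger})$ derived in Section~\ref{ssec:totReff},
\[
\tfrac{1}{2}\sum_{u,v \in V} \Reff_z(u,v) \;=\; n\,\tr(L_z^{\dagger}) \;\leq\; n\,\tr(L_{x^{*}}^{\dagger}) \;=\; \tfrac{1}{2}\sum_{u,v} \Reff_{x^{*}}(u,v) \;\leq\; R_{\opt},
\]
giving the claimed exact bound. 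The remaining ``other'' constraints listed in \eqref{eq:one-sided} are preserved by $L_z \succcurlyeq L_{x^{*}}$ via Lemma~\ref{l:spectral-constraints}, and the linear packing/covering constraints inherit the approximation guarantees stated in Theorem~\ref{t:network-zero-one}. The only non-routine point in the whole argument is noticing that the total-effective-resistance objective, although not one of the six explicit constraint categories in \eqref{eq:one-sided}, is automatically preserved under any rounding that satisfies the uniform spectral dominance $L_z \succcurlyeq L_{x^{*}}$; this is exactly what makes the rounding spectrally faithful to the fractional solution, and it mirrors the structure of the proofs of Theorem~\ref{t:lambda2} and Theorem~\ref{t:A1}.
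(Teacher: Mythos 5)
Your proposal is correct and follows essentially the route the paper intends (the paper leaves the proof implicit, deferring to Theorem~\ref{t:network-zero-one}): add the convex total-effective-resistance constraint $n\,\tr(L_x^{\dagger})\leq R_{\opt}$ to the relaxation, round via the standard transformation and Theorem~\ref{t:zero-one} to get exact spectral dominance $L_z \succcurlyeq L_{x^*}$, and conclude by operator monotonicity of the pseudo-inverse together with the identity from Section~\ref{ssec:totReff}. Your handling of the unknown $R_{\opt}$ (solving the swapped program first) and your observation that exact, rather than approximate, spectral dominance is what yields the clean bound $\leq R_{\opt}$ are both exactly the details the paper glosses over.
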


\subsubsection{Network Design for Effective Resistances}

In~\cite{CLSWZ19}, together with Chan, Schild and Wong, we consider the following new problem about network design for $s$-$t$ effective resistance.
Given a graph $G=(V,E)$ and two vertices $s,t \in V$, find a subgraph $H$ with at most $k$ edges to minimize the effective resistance between $s$ and $t$.
The main result in~\cite{CLSWZ19} is a constant factor approximation algorithm for the problem.
This result motivates the current paper.

Using the results in Section~\ref{s:experimental}, we can generalize the problem by allowing the edges to have costs and considering the sum of effective resistance of multiple pairs.
Using the results in Section~\ref{s:network}, we can add the effective resistance constraints for multiple pairs with the objective of minimizing the cost of the solution subgraph, while incorporating other constraints as described in Theorem~\ref{t:network-zero-one}.

\subsection{Unweighted Spectral Sparsification} \label{s:additive}

We show that the spectral rounding results can also be applied to the study of unweighted spectral sparsification.

\subsubsection{Previous Work}

Batson, Spielman, and Srivastava~\cite{BSS12} proved that any graph has a $(1\pm \eps)$-spectral sparsifier with only $O(n/\eps^2)$ edges, by carefully reweighting the edges of the original graph where different edges may have different weights.
If we require all the edges to have the same weight, then there are simple examples (e.g. barbell graphs) showing that linear-sized spectral sparsification is not always possible.
In a recent paper~\cite{BST19}, Bansal, Svensson and Trevisan ask whether there is a non-trivial notion of unweighted spectral sparsification with which linear-sized spectral sparsification is always possible.
They study a notion suggested by Oveis Gharan.

\begin{definition}[Additive Unweighted Spectral Sparsifier]
Given a graph $G=(V,E)$ with $n$ vertices, $m$ edges and maximum degree $d$, a subgraph $\tilde{G}=(V,F)$ with $\tilde{m}$ edges is an additive spectral sparsifier with error $\eps \in [0,1]$ if
\[
-\eps d I \preceq \frac{m}{\tilde{m}} L_{\tilde{G}} - L_G \preceq \eps d I.
\]
\end{definition}

Bansal, Svensson and Trevisan~\cite{BST19} prove that sparse additive unweighted spectral sparsification is always possible, and they provide both deterministic and randomized algorithms for constructing these sparsifiers.

\begin{theorem}[Randomized Construction~\cite{BST19}] \label{t:BST-randomized}
Given a graph $G=(V,E)$ with $n$ vertices, $m$ edges, maximum degree $d$, and $\eps \in (0,1)$, there is a polynomial time randomized algorithm that finds a subset of edges $F \subseteq E$ with size $\tilde{m} = |F| = O(n \log(1/\eps)^3 /\eps^2)$ such that $\tilde{G} = (V,F)$ satisfies
\[
- \eps d I \preceq \frac{m}{\tilde{m}} L_{\tilde{G}} - L_G \preceq \eps d I.
\]
\end{theorem}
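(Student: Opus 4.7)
The plan is to use uniform random sampling combined with matrix concentration. Specifically, include each edge $e \in E$ in $F$ independently with probability $p := \tilde{m}/m$, so that $|F|$ concentrates around $\tilde{m}$ by a scalar Chernoff bound and $\E[(m/\tilde{m})L_{\tilde{G}}] = L_G$. Writing
\[
\frac{m}{\tilde{m}}L_{\tilde{G}} - L_G = \sum_{e \in E} X_e, \qquad X_e := \Big(\tfrac{\xi_e}{p}-1\Big) b_e b_e^T,\quad \xi_e\sim\mathrm{Bernoulli}(p),
\]
the $X_e$ are mean-zero, independent, symmetric matrices. Bounding per-summand $\|X_e\|_{\rm op}\leq 2/p$ and, using $(b_e b_e^T)^2 = 2 b_e b_e^T$ together with $\lambda_{\max}(L_G)\leq 2d$, the matrix variance $\|\sum_e \E[X_e^2]\|_{\rm op} \leq (2/p)\lambda_{\max}(L_G) \leq 4d/p$, matrix Bernstein gives
\[
\Pr\!\left[\Big\|\tfrac{m}{\tilde{m}}L_{\tilde{G}} - L_G\Big\|_{\rm op}\geq \eps d\right] \leq 2n\exp\!\Big(-\Omega\big(\tfrac{\tilde{m}\eps^2 d}{m}\big)\Big).
\]
Using $m \leq nd/2$, this is $o(1)$ as soon as $\tilde{m} = \Omega(n\log n/\eps^2)$, so straightforward sampling already produces a sparsifier of the correct flavor, but with $\log n$ rather than $\log(1/\eps)^3$ in its size.

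The main obstacle is shaving the $\log n$ factor down to $\log(1/\eps)^3$; this cannot be accomplished by a sharper one-shot concentration bound, since the $\log n$ arises intrinsically from the union bound over the $n$ spectral directions inside matrix Bernstein. My plan is to iterate halving: starting from the $O(n\log n/\eps^2)$-edge sparsifier produced above, I would repeatedly partition the current edge set $F$ by independent $\pm 1$ signs and retain one half, re-scaled by $2$. Each halving step introduces additive spectral error bounded by $\|\sum_{e\in F} s_e b_e b_e^T\|_{\rm op}$, which by matrix Bernstein (or alternatively by a constructive partial-coloring argument in the style of Bansal or Lovett--Meka) is at most roughly $\widetilde{O}(d/\sqrt{|F|/n})$. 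Choosing a per-step error budget $\eps' = \Theta(\eps/\log(1/\eps))$ and accumulating $O(\log\log n)$ halving rounds telescopes the total error to $O(\eps d)$, and yields a final sparsifier of size $O(n/(\eps')^2) = O(n\log(1/\eps)^2/\eps^2)$.

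The chief technical difficulty is to perform each halving step with a $\mathrm{poly}\log(1/\eps)$ rather than $\log n$ dependence on the ambient dimension, since a naive application of Bernstein inside each round would simply reintroduce the $\log n$ factor we are trying to eliminate. I would approach this via the intrinsic-dimension variant of matrix Bernstein applied to the residual matrix $(m/\tilde{m})L_{\tilde{G}} - L_G$, exploiting that the effective dimension $\tr/\lambda_{\max}$ of this residual contracts as the halving proceeds, combined with a dyadic-scale chaining argument across the halving rounds; this is where I expect the third $\log(1/\eps)$ factor in the claimed $O(n\log(1/\eps)^3/\eps^2)$ bound to originate.
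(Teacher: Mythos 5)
Your first paragraph is correct but proves only a weaker statement: independent edge sampling plus matrix Bernstein yields a sparsifier with $O(n\log n/\eps^2)$ edges, since the dimensional prefactor $2n$ forces $\tilde m = \Omega(n\log n/\eps^2)$. The entire content of the theorem is the removal of that $\log n$, and your proposed bootstrap does not accomplish it. The error of one random halving step is controlled by $\norm{\sum_{e\in F} s_e b_e b_e^T}_{\rm op}$ (suitably rescaled), and the $\sqrt{\log n}$ in its matrix-concentration bound is not an artifact that the intrinsic-dimension variant of Bernstein can remove: the variance proxy is $\sum_{e\in F}(b_e b_e^T)^2 = 2L_F$, whose intrinsic dimension $\tr(L_F)/\lambda_{\max}(L_F)$ is $\Theta(n)$ for a near-regular $F$ and does not contract as the halving proceeds. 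Worse, in the regime where the theorem is strongest (constant degree $d$ and constant $\eps$, so $|F|=O(n)$ and additive error $O(d)=O(1)$), a uniformly random signing genuinely fails: on a disjoint union of cliques $K_{d+1}$, with high probability some block receives the all-positive signing, so $\norm{\sum_e s_e b_e b_e^T}_{\rm op} \geq d+1 > \eps d$. No re-analysis of the same uniform random halving can fix this; the signs or subsets must be chosen non-uniformly.

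The missing ingredient --- which the paper names explicitly right after the theorem statement --- is that the argument of Bansal, Svensson and Trevisan does not control the operator norm by matrix concentration at all. They control a family of \emph{local} combinatorial discrepancies (edge counts across cuts between small vertex sets), satisfy all of these bad events simultaneously via the Lov\'asz local lemma (this is what replaces the global union bound and eliminates the $\log n$), and then invoke the converse of the expander mixing lemma of Bilu and Linial to convert the local discrepancy bounds into the spectral bound $-\eps d I \preceq \frac{m}{\tilde m}L_{\tilde G} - L_G \preceq \eps d I$, which is where the $\log(1/\eps)^3$ factors arise. Without the Bilu--Linial step there is no passage from local to spectral control, and without the local lemma there is no way to avoid paying $\log n$; your sketch contains neither, so the gap between your $O(n\log n/\eps^2)$ and the claimed $O(n\log(1/\eps)^3/\eps^2)$ remains open in your write-up.
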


\begin{theorem}[Deterministic Construction \cite{BST19}] \label{t:BST-deterministic}
Given a graph $G=(V,E)$ with $n$ vertices, $m$ edges, maximum degree $d$, and $\eps \in (0,1)$, there is a polynomial time deterministic algorithm that finds a multi-set $F$ of edges with size $\tilde{m} = |F| = O(n/\eps^2)$ such that $\tilde{G} = (V,F)$ satisfies
\[
2\frac{m}{\tilde{m}} D_{\tilde{G}} - 2 D_G -\eps d I \preceq \frac{m}{\tilde{m}} L_{\tilde{G}} - L_G \preceq \eps d I,
\]
where $D_G$ is the diagonal degree matrix of $G$.
\end{theorem}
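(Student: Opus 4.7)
The plan is to use the regret-minimization framework of Section~\ref{s:regret-min} with the $\ell_{1/2}$-regularizer of~\cite{AZLO15}, running two parallel dual (upper-barrier) processes — one for the Laplacian $L$ and one for the signless Laplacian $L^+$ — and greedily adding one edge of $E$ per iteration, with re-selection allowed, which is exactly what makes the output a multi-set.

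First I would rewrite the stated two-sided inequality as two one-sided upper bounds: using $L^+ = 2D - L$, the lower bound is equivalent to $\frac{m}{\tilde{m}} L^+_{\tilde{G}} - L^+_G \preceq \eps d I$, matching the form of the already-one-sided upper bound $\frac{m}{\tilde{m}} L_{\tilde{G}} - L_G \preceq \eps d I$. Setting $P := \tilde{m}/m$, the goal becomes to produce a multi-set $F$ of $\tilde{m}$ edges satisfying $\lambda_{\max}(L_F - P L_G) \leq P\eps d$ and $\lambda_{\max}(L^+_F - P L^+_G) \leq P\eps d$.

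Next, initialize $F_0 = \emptyset$ and two running error matrices $Y_0 := -P L_G$, $Y^+_0 := -P L^+_G$. At iteration $t \geq 1$, compute two action matrices $A_t$ and $A^+_t$ of the dual $\ell_{1/2}$-regularizer form in~\eqref{e:closed-form}, aligned with $Y_{t-1}$ and $Y^+_{t-1}$ respectively (the dual form uses the standard negation trick to convert the $\lambda_{\min}$ lower bound of Theorem~\ref{t:regret-rank-two} into a $\lambda_{\max}$ upper bound). Then greedily pick $e_t \in E$ minimizing the combined weighted feedback $\inner{L_{e_t}}{A_t}/(1 - 2\alpha\inner{L_{e_t}}{A_t^{1/2}}) + \inner{L^+_{e_t}}{A^+_t}/(1 - 2\alpha\inner{L^+_{e_t}}{(A^+_t)^{1/2}})$ among the edges satisfying the safety conditions that both denominators are at least $\tfrac12$, in analogy with the restriction defining the set $S'_{t-1}$ in our iterative randomized swapping algorithm.

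Existence of such a good edge at each step is established by an averaging argument in the spirit of Lemma~\ref{l:spec-exp-t}: uniform averaging over $e \in E$ uses $\sum_e L_e = L_G$, $\sum_e L^+_e = L^+_G$, and $\tr(A_t) = \tr(A^+_t) = 1$ to control the numerators, while Lemma~\ref{l:cospectral} (applicable because the $\ell_{1/2}$-regularizer forces $A_t$ and $Y_{t-1}$ to be cospectral, and likewise for $A^+_t, Y^+_{t-1}$) controls the denominators and the fraction of edges violating the safety condition. Setting $\alpha = \Theta(\sqrt{n}/(P\eps d))$ and $\tilde{m} = O(n/\eps^2)$ then balances the per-step progress against the regret term $O(\sqrt{n}/\alpha)$ from Theorem~\ref{t:regret-rank-two}, yielding both target spectral bounds. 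The main obstacle will be the two-potential coupling, since a choice of $e_t$ favorable for $A_t$ may be unfavorable for $A^+_t$; I would resolve this by applying the averaging argument directly to the sum of the two weighted feedback quantities and invoking the identity $L_e + L^+_e = 2(\chi_u\chi_u^T + \chi_v\chi_v^T)$ for $e = uv$, which gives the uniform bound $\sum_e (L_e + L^+_e) = 2 D_G \preceq 2d I$ needed to push the summed averaging through.
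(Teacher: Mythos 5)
Your overall architecture (the $\ell_{1/2}$-regularizer regret framework, treating $L$ and $L^+=2D-L$ in tandem, greedy edge selection by averaging) is the right one; note the paper only cites this statement from~\cite{BST19} and the relevant in-paper comparison is its proof of the strengthened Theorem~\ref{t:additive-deterministic}. However, your specific setup has a quantitative gap. You front-load the entire target $-P L_G$ into the initial error matrix $Y_0$ and then feed the process rank-one increments $L_{e_t}$, bounding the per-step loss $\Delta_t^- = \inner{L_{e_t}}{A_t}/(1-2\alpha\inner{L_{e_t}}{A_t^{1/2}})$ by uniform averaging via $\frac{1}{m}\sum_{e}\inner{L_e}{A_t} = \frac{1}{m}\inner{L_G}{A_t}\le 2d/m$. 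Summed over $\tau=\tilde{m}$ steps this only gives $\sum_t\Delta_t^-\lesssim 2d\tilde{m}/m = 2Pd$, whereas the target is $\lambda_{\max}(L_{\tilde{G}}-PL_G)\le P\eps d$; you are off by a factor of $2/\eps$, and the resulting bound $\frac{m}{\tilde{m}}L_{\tilde{G}}-L_G\preceq O(d)I$ is essentially vacuous. There is no slack elsewhere: $\inner{L_G}{A_t}$ really is $\Theta(d)$ in general (already at $t=1$ on a regular expander), and unlike the swap algorithm there is no $\Delta_t^+$ term to cancel against. The fix, which is what~\cite{BST19} and Theorem~\ref{t:additive-deterministic} do, is to make the feedback at every step the zero-mean matrix $L_G - mL_e$, so that $\sum_{e\in E}(L_G-mL_e)=0$ and the greedy edge achieves $\inner{A_t}{L_G-mL_e}\ge 0$ over all of $E$ (the $-O(\eps)d$ slack of Lemma~\ref{l:select} is only needed for the no-parallel-edge restriction to $E\setminus S_{t-1}$), shift by $-2dI$ so the feedback is negative semidefinite, and invoke the regret bound for NSD feedback (Theorem~\ref{t:regret-psd-nsd}). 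Theorem~\ref{t:regret-rank-two}, which you invoke, requires $F_t = v_{j_t}v_{j_t}^T - v_{i_t}v_{i_t}^T$ and does not apply once $L_G$ is spread across the steps.

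Separately, controlling only the sum of the two per-step quantities does not by itself yield the two spectral bounds individually: a lower bound on $\lambda_{\min}(\sum_t F_t)+\lambda_{\min}(\sum_t F^+_t)$ is weaker than one on each term. This is fixable in your two-process framework, since both running error matrices $\sum_t L_{e_t}-PL_G$ and $\sum_t L^+_{e_t}-PL^+_G$ have trace zero and hence nonpositive minimum eigenvalue, so a bound on the sum transfers to each summand, but you should make that step explicit. The paper sidesteps the issue with a single $2n\times 2n$ block-diagonal feedback $\mathrm{diag}(L_G-mL_e,\,L^+_G-mL^+_e)-2dI_{2n}$ and one action matrix: the minimum eigenvalue of a block-diagonal matrix is the minimum over the blocks, so one regret bound controls both conclusions, and the averaging computation (your identity $\sum_e(L_e+L^+_e)=2D_G$ is the same calculation) goes through unchanged.
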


The proof of Theorem~\ref{t:BST-randomized} is by Lov\'asz local lemma and the converse of expander mixing lemma by Bilu and Linial.
The proof of Theorem~\ref{t:BST-deterministic} is by the regret minimization framework of Allen-Zhu, Liao and Orecchia~\cite{AZLO15}.

Note that Theorem~\ref{t:BST-deterministic} has a slightly weaker spectral lower bound guarantee than Theorem~\ref{t:BST-randomized}.
Also, Theorem~\ref{t:BST-deterministic} can only return a multi-set solution where some edges can be used more than once, and so the sparsifier is integer weighted rather than unweighted where every edge has the same weight.

\subsubsection{Nonconstructive Spectral Rounding and Unweighted Spectral Sparsification}

We show that the existence of a linear-sized additive unweighted spectral sparsifier follows from the two-sided rounding result in Theorem~\ref{t:two-sided-cost}.
The idea is to view the original graph as a fractional solution where every edge $e$ has $x_e = \tilde{m}/m$, and then use Theorem~\ref{t:two-sided-cost} to round this fractional solution to a zero-one solution while preserving the spectral properties of the original graph. 
The additional linear constraint in Theorem~\ref{t:two-sided-cost} allows us to bound the number of edges in the sparsifier.

\begin{theorem} \label{t:additive-cost}
Suppose we are given a graph $G=(V,E)$ with $n$ vertices, $m$ edges, and maximum degree $d$.
Let $\tilde{m} = n/\eps^2$.
For any $\eps \in (0,1]$, there exists a subset of edges $F \subseteq E$ with $|F| \leq 8n/\eps^2 $ such that
\[
-8\sqrt{2} \eps d I_n \preceq L_G - \frac{m}{\tilde{m}} \sum_{e \in F} b_e b_e^T \preceq 8\sqrt{2} \eps d I_n.
\]
\end{theorem}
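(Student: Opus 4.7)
The plan is to apply the two-sided rounding of Theorem~\ref{t:two-sided-cost} with a natural encoding that turns the sparsifier problem into spectral rounding on the edge vectors. For each $e \in E$, set $v_e := b_e \in \R^n$, define the fractional solution by $x_e := \tilde{m}/m \in [0,1]$ (in the only nontrivial regime $m \geq \tilde{m}$; otherwise take $F=E$), and choose $c := \vec{1}_m$ as the linear constraint so that $\inner{c}{x} = \tilde{m}$ and the resulting $\inner{c}{z}$ is literally $|F|$ for the output indicator $z$. Then $\sum_e x_e v_e v_e^T = (\tilde{m}/m) L_G$, and the elementary bound $\norm{L_G}_{\rm op} \leq 2d$ lets me take $\lambda := 2d\tilde{m}/m$; each vector has length $\norm{b_e} = \sqrt{2}$, so $l := \sqrt{2}$.

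I then verify the two structural hypotheses. The condition $l \leq \sqrt{\lambda}$ reads $m \leq d\tilde{m} = dn/\eps^2$, which follows from $m \leq dn/2$. The condition $c_\infty \leq l^2 \inner{c}{x}/\lambda$ reduces to $m \geq d$, which holds because $G$ contains a vertex of degree $d$. Invoking Theorem~\ref{t:two-sided-cost} produces $z \in \{0,1\}^m$, and letting $F := \{e : z_e = 1\}$ the spectral conclusion gives
\[
\norm{\frac{\tilde{m}}{m} L_G - \sum_{e \in F} b_e b_e^T}_{\rm op} \leq 8 l \sqrt{\lambda} = 16\sqrt{d\tilde{m}/m}.
\]
Rescaling by $m/\tilde{m}$ turns the right-hand side into $16\sqrt{dm/\tilde{m}} = 16\eps \sqrt{dm/n} \leq 8\sqrt{2}\,\eps d$, where the last inequality uses $m \leq dn/2$; this matches the desired bound exactly.

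For the edge count, the cost guarantee of Theorem~\ref{t:two-sided-cost} yields
\[
|F| \leq \tilde{m}\left(1 + \frac{8l}{\sqrt{\lambda}}\right) = \tilde{m} + 8\sqrt{m\tilde{m}/d} \leq \frac{n}{\eps^2} + \frac{4\sqrt{2}\,n}{\eps} \leq \frac{8n}{\eps^2}
\]
for $\eps \leq 1$, once again invoking $m \leq dn/2$. The only creative step is the encoding: taking $x_e = \tilde{m}/m$ uniformly forces $\sum x_e v_e v_e^T$ to be a scalar multiple of $L_G$ so that spectral discrepancy in the $v_e v_e^T$ system automatically controls the additive sparsifier error, while $c = \vec{1}_m$ makes the linear constraint measure precisely $|F|$. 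After that, the whole argument is parameter chasing — the main obstacle is simply tuning $\lambda$ and $l$ so that the discrepancy $8l\sqrt{\lambda}$, rescaled by $m/\tilde{m}$, lands on $8\sqrt{2}\,\eps d$ and the cost slack $8\sqrt{m\tilde{m}/d}$ stays below $7n/\eps^2$.
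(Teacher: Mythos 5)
Your proposal is correct and is essentially the paper's own proof: the same encoding ($v_e = b_e$, $x_e = \tilde{m}/m$, $c = \vec{1}_m$), the same parameters $\lambda = 2d\tilde{m}/m$ and $l = \sqrt{2}$, the same hypothesis checks via $m \leq dn/2$ and $m \geq d$, and the same final arithmetic for both the spectral error and the edge count. One small caveat: your fallback ``take $F=E$ when $m < \tilde{m}$'' does not actually give the stated error bound (since $\|(1-m/\tilde{m})L_G\|_{\rm op}$ can exceed $8\sqrt{2}\eps d$ when $\lambda_{\max}(L_G) \geq d+1$), but that regime is implicitly excluded here just as in the paper, since $x_e = \tilde{m}/m \in [0,1]$ already requires $m \geq \tilde{m}$.
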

\begin{proof}
The plan is to apply Theorem~\ref{t:two-sided-cost} with $v_e := b_e$, $x_e := \tilde{m}/m$ and $c := \vec{1}_m$. 
We will first define the parameters $\lambda$ and $l$ and check that the assumptions $l \leq \sqrt{\lambda}$ and $c_{\infty} \leq l^2 \inner{c}{x} / \lambda$ in Theorem~\ref{t:two-sided-cost} are satisfied.
Note that
\[
\norm{\sum_{e \in E} x_e v_e v_e^T}_{\rm op} = \frac{\tilde{m}}{m} \norm{L_G}_{\rm op} \leq \frac{2d \tilde{m}}{m} 
\quad \text{and} \quad 
\norm{v_e} = \sqrt{2} \text{ for all } e \in E.
\]
So we define $\lambda := 2d\tilde{m}/m$ and $l:=\sqrt{2}$.
We check that $\lambda = 2d\tilde{m}/m = 2dn/(\eps^2 m) \geq 2/\eps^2 \geq 2 = l^2$, and $l^2 \inner{c}{x} / \lambda = 2 \tilde{m} / (2d\tilde{m}/m) = m/d \geq 1 = c_{\infty}$.
Therefore, we can apply Theorem~\ref{t:two-sided-cost} to conclude that there exists a subset of edges $F \subseteq E$ (corresponding to the zero-one solution $z$) such that
\[
\norm{\sum_{e \in E} x_e v_e v_e^T - \sum_{e \in F} v_e v_e^T}_{\rm op} \leq 16\sqrt{\frac{d \tilde{m}}{m}} \qquad \text{and} \qquad \left| \sum_{e \in E} x_e c_e - \sum_{e \in F} c_e \right| \leq 8\sqrt{\frac{m}{d \tilde{m}}} \cdot \inner{c}{x}.
\]
Plugging in $x_e = \tilde{m}/m$ and $c = \vec{1}$ and $\tilde{m}=n/\eps^2$, the first statement implies that
\[
\norm{L_G - \frac{m}{\tilde{m}}\sum_{e \in F} v_e v_e^T}_{\rm op} =
\norm{\sum_{e \in E} v_e v_e^T - \frac{m}{\tilde{m}}\sum_{e \in F} v_e v_e^T}_{\rm op} \leq  16\sqrt{\frac{d m}{\tilde{m}}}  
= 16\sqrt{\frac{\eps^2 d m}{n}} \leq 8\sqrt{2} \eps d,
\]
where the last inequality uses $m \leq dn/2$ as the maximum degree is $d$.
Finally, the second statement implies that
\[
\left| \tilde{m} - |F| \right| 
\leq 8 \sqrt{\frac{\eps^2 m}{dn}} \cdot \tilde{m} \leq 4\sqrt{2} \eps \tilde{m}
\quad \implies \quad
|F| \leq (1+4\sqrt{2}\eps) \tilde{m} \leq \frac{8n}{\eps^2}.
\]
\end{proof}

Note that Theorem~\ref{t:additive-cost} improves Theorem~\ref{t:BST-randomized} slightly by removing a factor of $\log^3(1/\eps)$ in the number of edges of the sparsifier.
This confirms the existence of unweighted additive spectral sparsifiers with $O(n/\eps^2)$ edges, which was not known before.
More generally, we can use the same proof with a cost function $c$ with $c_{\infty} \leq \norm{c}_1 / d$ to obtain a sparsifier with $\tilde{m}=n/\eps^2$ and
\[
\norm{L_G - \frac{m}{\tilde{m}}\sum_{e \in F} v_e v_e^T}_{\rm op} \leq 8\sqrt{2}\eps d
\quad {\rm and} \quad
(1-4\sqrt{2} \eps) \sum_{e \in E} c_e \leq \frac{m}{\tilde{m}} \sum_{e \in F} c_e \leq (1+4\sqrt{2}\eps) \sum_{e \in E} c_e.
\]
We remark that the same reduction in~\cite{BST19} can be used to replace $dI$ by $(D_G + d_{\rm avg})I$ where $D_G$ is the diagonal degree matrix of $G$ and $d_{\rm avg}$ is the average degree in $G$.

The main disadvantage of Theorem~\ref{t:additive-cost} is that it does not provide a polynomial time algorithm to find such a sparsifier.
It is a major open problem to make the method of interlacing polynomials used in~\cite{MSS15a,MSS15b,KLS19} constructive.

\subsubsection{Constructive Spectral Rounding and Unweighted Spectral Sparsification}

For the determinstic algorithm, using similar techniques in~\cite{AZLSW17c,AZLSW20} which proves Lemma~\ref{l:cospectral}, we can strengthen Theorem~\ref{t:BST-deterministic} by returning a subgraph with no parallel edges.

\begin{theorem} \label{t:additive-deterministic}
Given a graph $G=(V,E)$ with $n$ vertices, $m$ edges, maximum degree $d$, and $\eps \in (0,1)$, there is a polynomial time deterministic algorithm that finds a {\em subset} $F$ of edges with size $\tilde{m} = |F| = O(n/\eps^2)$ such that $\tilde{G} = (V,F)$ satisfies
\[
2\frac{m}{\tilde{m}} D_{\tilde{G}} - 2 D_G - O(\eps) d I_n \preceq \frac{m}{\tilde{m}} L_{\tilde{G}} - L_G \preceq O(\eps) d I_n.
\]
\end{theorem}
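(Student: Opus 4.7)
The plan is to follow the deterministic regret-minimization algorithm of \cite{BST19,AZLO15} that proves Theorem~\ref{t:BST-deterministic}, but at each iteration restrict the pool of selectable edges to those not already chosen, and then use Lemma~\ref{l:cospectral} to show that this restriction costs us almost nothing.

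First, I would set the target sparsifier size $\tilde{m} = O(n/\eps^2)$ and run for $\tilde{m}$ iterations. At iteration $t$, maintain the partial solution $Z_{t-1} = (m/\tilde{m})\sum_{l<t} b_{e_l} b_{e_l}^T$ and the signless counterpart $Z_{t-1}^+ = (m/\tilde{m})\sum_{l<t} b_{e_l}^+ (b_{e_l}^+)^T$, where $b_e^+ = \chi_u + \chi_v$ for $e=uv$. Combining them into block-diagonal matrices over $\R^{2n}$ lets a single regret-minimization trajectory simultaneously control the Laplacian and the signless Laplacian, whose two ``upper-bound'' guarantees together are equivalent to the two-sided statement of the theorem (after rewriting $2D - L = L^+$). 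Apply $\ell_{1/2}$-regularization with action matrix $A_t = (\alpha Z_{t-1} - l_t I_{2n})^{-2}$ as in~\eqref{e:closed-form}, tuning $\alpha = \Theta(\sqrt{n}/(\eps d))$ so that the additive regret term $2\sqrt{2n}/\alpha$ in Theorem~\ref{t:regret-psd-nsd} ends up as $O(\eps d)$.

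The key new step is to select $e_t \in E \setminus F_{t-1}$ that maximizes $\inner{v_e v_e^T}{A_t}/(1+\alpha\inner{v_e v_e^T}{A_t^{1/2}})$, where $v_e = (b_e; b_e^+)$. The unrestricted averaging gives $\sum_{e\in E}\inner{v_e v_e^T}{A_t} = \inner{L_G \oplus L_G^+}{A_t} \leq 4d$ and $\sum_{e\in E}\inner{v_e v_e^T}{A_t^{1/2}} \leq 4d\sqrt{2n}$, so the unrestricted best-edge progress is $\Omega(d/m)$. To argue that restricting to $E \setminus F_{t-1}$ preserves this, I would apply Lemma~\ref{l:cospectral} to the concatenated partial solution $Z_{t-1} \oplus Z_{t-1}^+$, which by construction~\eqref{e:closed-form} shares an eigenbasis with $A_t$. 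This yields $\inner{Z_{t-1}\oplus Z_{t-1}^+}{A_t} \leq \sqrt{2n}/\alpha + \lambda_{\min}(Z_{t-1}\oplus Z_{t-1}^+)$ and $\alpha\inner{Z_{t-1}\oplus Z_{t-1}^+}{A_t^{1/2}} \leq 2n + \alpha\sqrt{2n}\,\lambda_{\min}(Z_{t-1}\oplus Z_{t-1}^+)$. Since the algorithm stops once the target spectral lower bound is reached, $\lambda_{\min}(Z_{t-1}\oplus Z_{t-1}^+)$ stays $O(d)$, so the ``mass'' taken up by $F_{t-1}$ in both the $A_t$ and $A_t^{1/2}$ sums is a lower-order fraction of the totals. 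Hence the best edge in $E \setminus F_{t-1}$ still satisfies the progress inequality needed by the regret bound, losing only a constant factor absorbed into the $O(\eps)$ in the statement.

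Putting the per-iteration progress together with the regret minimization bound of Theorem~\ref{t:regret-psd-nsd} for PSD feedback matrices (applied separately on the $L$-block and the $L^+$-block after reading off the two block-diagonal components of the final $\sum_t v_{e_t} v_{e_t}^T$) gives $\lambda_{\min}((m/\tilde m)L_{\tilde G}) \geq \lambda_{\min}(L_G) - O(\eps d)$ and the analogous bound for $L^+_{\tilde G}$. Translating the signless Laplacian inequality back via $2D - L = L^+$ produces exactly the lower-bound side of the theorem, while the upper bound $(m/\tilde m)L_{\tilde G} - L_G \preceq O(\eps d)I$ follows because, apart from the no-parallel-edges restriction, the construction is the original algorithm of \cite{BST19}. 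The main obstacle I anticipate is the careful bookkeeping around the block-diagonal reduction: one must verify that Lemma~\ref{l:cospectral} still applies after concatenation (the eigenbasis alignment is inherited, but the dimension doubles from $n$ to $2n$ and the $\lambda$ values must be read off the correct block), and that the ``excluded set'' argument does not deteriorate once both blocks are simultaneously contributing to $A_t^{1/2}$. A secondary subtlety is that the greedy choice must make progress on \emph{both} blocks; handling this requires either a weighted combination or a two-potential argument, chosen so that a single edge selection advances both directions up to the claimed constants.
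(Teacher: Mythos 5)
Your plan correctly identifies the right target (two \emph{upper} bounds, $\frac{m}{\tilde m}L_{\tilde G}\preceq L_G+O(\eps)dI$ and $\frac{m}{\tilde m}L^+_{\tilde G}\preceq L^+_G+O(\eps)dI$, combined via $L^+=2D-L$), and your idea for handling the no-parallel-edges restriction --- use the shared eigenbasis of the action matrix and the partial sum, \`a la Lemma~\ref{l:cospectral}, to show that the already-chosen edges account for only an $O(\sqrt{n}/\alpha)$ portion of the relevant trace, so an averaging argument over $E\setminus S_{t-1}$ still succeeds --- is exactly the new ingredient in the paper's proof (its Lemma~\ref{l:select}). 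However, the machinery you then deploy proves the \emph{opposite} inequalities. You take rank-one PSD feedbacks $v_ev_e^T$ with $v_e=(b_e;b_e^+)$, select the edge maximizing $\inner{v_ev_e^T}{A_t}/(1+\alpha\inner{v_ev_e^T}{A_t^{1/2}})$, and invoke the bound \eqref{e:regret-rank-one}, which lower-bounds $\lambda_{\min}(\sum_t v_{e_t}v_{e_t}^T)$. That yields spectral \emph{lower} bounds on $L_{\tilde G}$ and $L^+_{\tilde G}$, not the required upper bounds; your own concluding line ``$\lambda_{\min}((m/\tilde m)L_{\tilde G})\ge\lambda_{\min}(L_G)-O(\eps d)$'' is (a) only a scalar eigenvalue statement and (b) in the wrong direction, and translating a lower bound on $L^+_{\tilde G}$ through $L^+=2D-L$ gives $\frac{m}{\tilde m}L_{\tilde G}-L_G\preceq 2\frac{m}{\tilde m}D_{\tilde G}-2D_G+O(\eps)dI$, which is not the lower-bound side of the theorem. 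The paper avoids this by using \emph{negative} semidefinite feedbacks $F_t=\mathrm{diag}(L_G-mL_{e_t},\,L^+_G-mL^+_{e_t})-2dI_{2n}$, so that lower-bounding $\lambda_{\min}(\sum_tF_t)$ is an upper bound on $\frac{m}{\tilde m}L_{\tilde G}$ and $\frac{m}{\tilde m}L^+_{\tilde G}$; accordingly its greedy rule picks an edge whose inner product $\langle A_t, mL_{e}\oplus mL^+_{e}\rangle$ is \emph{small}, essentially the reverse of your rule. Your fallback that ``the upper bound follows because the construction is the original algorithm of \cite{BST19}'' does not apply: your construction is not that algorithm, and your selection rule actively concentrates mass in directions where $\langle A_t,b_eb_e^T\rangle$ is large, which works against the upper bound.

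Two secondary problems would surface even within your framework. First, the averaging step ``$\sum_{e\in E}\inner{v_ev_e^T}{A_t}\le 4d$ so the best-edge progress is $\Omega(d/m)$'' gives only an upper bound on the total, not a lower bound; since $\lambda_{\min}(L_G)=0$ (the all-ones vector), $\inner{L_G\oplus L^+_G}{A_t}$ can be arbitrarily close to $0$ and no per-step progress of order $d/m$ is guaranteed, so the claimed accumulation over $\tilde m$ iterations fails. Second, the learning rate: the correct analysis needs $\alpha=\Theta(\eps/\sqrt{dm})$ to keep the width term $\alpha\|A_t^{1/4}F_tA_t^{1/4}\|_{\rm op}$ bounded by $\min\{1/4,\eps\}$ (the feedbacks contain $mL_e$, whose norm is $\Theta(m)$); your choice $\alpha=\Theta(\sqrt{n}/(\eps d))$ is tuned only to make the additive regret term small and would violate the width condition. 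I recommend restructuring around the negative semidefinite feedback matrices of Theorem~\ref{t:BST-deterministic} and transplanting your (correct) excluded-set argument there: show that $-\langle A_t,\sum_{e\in S_{t-1}}\mathrm{diag}(L_G-mL_e,L^+_G-mL^+_e)\rangle\ge-\sqrt{n}/\alpha$ using the common eigenbasis, and then average over the $\ge m/2$ remaining edges.
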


The algorithm is a slight modification of the algorithm in~\cite{BST19},
which is a greedy algorithm based on the regret minimization framework. 
The feedback matrices are of the following form
\[
F_0 = 0 \qquad \text{and} \qquad F_t = \begin{pmatrix}
        L_G - m L_e& \\ & L^+_G - m L^+_e
\end{pmatrix} - 2d I_{2n} \quad \text{for some $e \in E$ and $t \geq 1$},
\]
where $L_G$ is the Laplacian matrix of the original graph, $L^+_G := D_G + A_G$ is the signless-Laplacian of the original graph, and $L_e$ and $L^+_e$ are the Laplacian and signless-Laplacian matrix of a single edge $e$. 
Note that we always have $F_t \preccurlyeq 0$, as $L_G \preccurlyeq 2d I_n$ and $L^+_G \preccurlyeq 2d I_n$ for a graph $G$ of maximum degree $d$. 
\begin{framed}
{\bf Greedy Additive Spectral Sparsification}

Input: An error parameter $\eps \in (0,1)$, and a graph $G = (V,E)$ with $n$ vertices, $m \geq 2n/\eps^2$ edges and maximum degree $d$.

Output: A subgraph $\tilde{G}$ of $G$ with $\tilde{m} = O(n/\eps^2)$ edges satisfying
\[
2\frac{m}{\tilde{m}} D_{\tilde{G}} - 2 D_G -\eps d I_n \preceq \frac{m}{\tilde{m}} L_{\tilde{G}} - L_G \preceq \eps d I_n.
\]
\begin{enumerate}
\item Initialization: Set $S_0 := \emptyset$, $F_0 := 0$, $\tau := n/\eps^2$, and $\alpha = q\eps/\sqrt{dm}$ for some small $q > 0$.

\item For $t = 1$ to $\tau$ do
\begin{enumerate}
	\item Compute the action matrix $A_t = (\alpha\sum_{j=0}^{t-1} F_j + l_t I)^{-2}$, where $l_t \in \R$ is the unique value such that $A_t \succ 0$ and $\tr(A_t)=1$.
	\item Select an edge $e_t \in E \backslash S_{t-1}$ such that
	\[
	    \left\langle A_t, \begin{pmatrix}
	            L_G - m L_{e_t}& \\ & L^+_G - mL^+_{e_t} 
	    \end{pmatrix} \right\rangle 
	    \geq - \frac{2\sqrt{n}}{\alpha m} = -O(\eps) d.
	\]
	
	\item Set \[F_t := \begin{pmatrix} L_G - m L_{e_t} & \\ & L^+_G - m L^+_{e_t} \end{pmatrix} - 2d I_{2n} \qquad \text{and} \qquad S_t := S_{t-1} \cup \{e_t\}.\]
\end{enumerate}
\item Return $\tilde{G} = (V, S_\tau)$ as the solution.
\end{enumerate}
\end{framed}

Note that we can assume $m \geq 2n/\eps^2 = 2\tau$, as otherwise we can simply return $\tilde{G} = G$ as our solution.
The only difference with the algorithm in~\cite{BST19} is in Step 2(b), where we insist on choosing an edge $e_t \in E \setminus S_{t-1}$ to guarantee that the returned solution is a simple subgraph.
If there is no such restriction, then a simple averaging argument in~\cite{BST19} shows that there is an edge $e \in E$ with the inner product in Step 2(b) being non-negative.
With this restriction, we will use the closed-form of the action matrix and Lemma~\ref{l:cospectral} to show that there is still an edge with the inner product in Step 2(b) being not too small.
The following lemma is the new ingredient for the proof of Theorem~\ref{t:additive-deterministic}. 

\begin{lemma} \label{l:select}
For each $1 \leq t \leq \tau$, there always exists an edge $e \in E \backslash S_{t-1}$ such that
	\[
	    \left\langle A_t, \begin{pmatrix}
	            L_G - m L_e & \\ & L^+_G - m L^+_e
	    \end{pmatrix} \right\rangle 
            \geq - \frac{2\sqrt{n}}{\alpha m} \geq -O(\eps)d.
	\]
\end{lemma}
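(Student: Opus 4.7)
The plan is to combine a simple averaging argument with a cospectral lower bound on $\langle A_t, Z\rangle$, where $Z$ collects the past feedback. Let $M_e := \diag(L_G - mL_e,\; L_G^+ - mL_e^+)$, so the task is to find $e \in E \setminus S_{t-1}$ with $\langle A_t, M_e\rangle \geq -2\sqrt{n}/(\alpha m)$. Since $\sum_{e \in E} L_e = L_G$ and $\sum_{e \in E} L_e^+ = L_G^+$, we have $\sum_{e \in E} M_e = 0$, so averaging over $E \setminus S_{t-1}$ gives
\[
\max_{e \in E \setminus S_{t-1}} \langle A_t, M_e\rangle \;\geq\; \frac{-\sum_{e \in S_{t-1}} \langle A_t, M_e\rangle}{|E \setminus S_{t-1}|}.
\]
Since $m \geq 2\tau$ and $|S_{t-1}| \leq t-1 < \tau$, the denominator is at least $m/2$, so it suffices to show $\sum_{e \in S_{t-1}}\langle A_t, M_e\rangle \leq O(\sqrt{n}/\alpha)$.

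To upper bound this sum, note that $F_j = M_{e_j} - 2dI_{2n}$ for $j \geq 1$ and $F_0 = 0$, so
\[
\sum_{e \in S_{t-1}} M_e = \sum_{j=0}^{t-1} F_j + 2d(t-1) I_{2n} = -Z + 2d(t-1) I_{2n},
\]
where $Z := -\sum_{j=0}^{t-1} F_j \succeq 0$ because each $F_j \preceq 0$. Taking inner product with $A_t$ reduces the task to lower bounding $\langle A_t, Z\rangle$ by approximately $2d(t-1)$.

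Because $A_t = (l_t I - \alpha Z)^{-2}$, the matrices $A_t$ and $Z$ are simultaneously diagonalizable. Letting $\mu_i$ denote the eigenvalues of $Z$ and $\nu_i := l_t - \alpha \mu_i$, the normalization $\tr(A_t) = \sum \nu_i^{-2} = 1$ and the identity
\[
\langle A_t, Z\rangle \;=\; \sum_i \frac{\mu_i}{\nu_i^2} \;=\; \frac{l_t}{\alpha} - \frac{1}{\alpha}\sum_i \frac{1}{\nu_i}
\]
combined with Cauchy--Schwarz $\sum_i 1/\nu_i \leq \sqrt{2n \sum 1/\nu_i^2} = \sqrt{2n}$ yield $\langle A_t, Z\rangle \geq l_t/\alpha - \sqrt{2n}/\alpha$. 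Furthermore, $\nu_i^{-2} \leq 1$ forces $\nu_{\min} \geq 1$, and since the smallest $\nu_i$ corresponds to $\lambda_{\max}(Z)$, this gives $l_t \geq \alpha\lambda_{\max}(Z) + 1$, hence $\langle A_t, Z \rangle \geq \lambda_{\max}(Z) - \sqrt{2n}/\alpha$.

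It remains to show $\lambda_{\max}(Z) \geq 2d(t-1)$, and this is done by testing with $\vec{1}_{2n}/\sqrt{2n}$. The Laplacian block of $\sum_{e \in S_{t-1}} M_e$ vanishes on $\vec{1}_n$ since $L_G\vec{1}_n = L_{G[S_{t-1}]}\vec{1}_n = 0$; the signless Laplacian block contributes $(t-1)\vec{1}_n^T L_G^+\vec{1}_n - m\vec{1}_n^T L_{G[S_{t-1}]}^+\vec{1}_n = 4m(t-1) - 4m(t-1) = 0$. Hence $\vec{1}_{2n}^T Z \vec{1}_{2n} = 4d(t-1)n$ and $\lambda_{\max}(Z) \geq 2d(t-1)$. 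Chaining the estimates gives $\sum_{e \in S_{t-1}}\langle A_t, M_e\rangle \leq \sqrt{2n}/\alpha$, and dividing by $|E \setminus S_{t-1}| \geq m/2$ produces the claim up to an absolute constant factor. The main obstacle is precisely the cospectral lower bound: Lemma~\ref{l:cospectral} as stated only bounds $\langle Z,A\rangle$ from above, so one has to redo the Cauchy--Schwarz calculation through $l_t$ and exploit that here $Z = -\sum_j F_j$ is PSD despite the feedback matrices being NSD; the cancellation witnessed by the all-one vector is what makes the two $2d(t-1)$ terms annihilate, preventing the bound from blowing up with $d$ or $t$.
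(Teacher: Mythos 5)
Your proof is correct and follows essentially the same route as the paper's: the same averaging over $E\setminus S_{t-1}$, the same exploitation of the fact that $A_t$ and the accumulated feedback are simultaneously diagonalizable, the same Cauchy--Schwarz bound $\sum_i \nu_i^{-1}\le\sqrt{2n}$ (which is exactly $\tr(A_t^{1/2})\le\sqrt{2n}$), and the same use of the positivity of $l_tI-\alpha Z$; your choices of working with the PSD matrix $Z=-\sum_j F_j$ rather than the trace-zero matrix $\sum_{e\in S_{t-1}}M_e$, and of certifying $\lambda_{\max}(Z)\ge 2d(t-1)$ by testing with $\vec 1_{2n}$ rather than observing $\tr(\sum_e M_e)=0$, are equivalent reformulations. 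The $\sqrt{2}$ slack in the constant is shared with the paper's own derivation (the block matrix is $2n$-dimensional) and is harmless for the final $-O(\eps)d$ bound, which you should still verify explicitly from $\alpha=q\eps/\sqrt{dm}$ and $2n/\eps^2\le m\le dn$ as the paper does.
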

\begin{proof}
The sum of the inner product over all edges in $E \backslash S_{t-1}$ is
\begin{eqnarray*}
& & \sum_{e \in E \backslash S_{t-1}} \left\langle A_t, \begin{pmatrix}
	            L_G - m L_e & \\ & L^+_G - m L^+_e
	    \end{pmatrix} \right\rangle \\
& = & \sum_{e \in E}  \left\langle A_t, \begin{pmatrix}
	            L_G - m L_e & \\ & L^+_G - m L^+_e
	    \end{pmatrix} \right\rangle - \sum_{e \in S_{t-1}}  \left\langle A_t, \begin{pmatrix}
	            L_G - m L_e & \\ & L^+_G - m L^+_e
	    \end{pmatrix} \right\rangle \\
& = & \left\langle A_t, \begin{pmatrix}
	            m L_G - m \sum_{e \in E} L_e & \\ & m L^+_G - m \sum_{e \in E} L^+_e
	    \end{pmatrix} \right\rangle - \left\langle A_t, \sum_{e \in S_{t-1}} \begin{pmatrix}
	            L_G - m L_e & \\ & L^+_G - m L^+_e
	    \end{pmatrix} \right\rangle \\
& = & - \left\langle A_t, \sum_{e \in S_{t-1}} \begin{pmatrix}
	            L_G - m L_e & \\ & L^+_G - m L^+_e
	    \end{pmatrix} \right\rangle,
\end{eqnarray*}
where the last equality follows from $\sum_{e \in E} L_e = L_G$ and $\sum_{e \in E} L^+_e = L^+_G$.
Let 
\[
Z_{t-1} := \sum_{e \in S_{t-1}} \begin{pmatrix}
	            L_G - m L_e & \\ & L^+_G - m L^+_e
	    \end{pmatrix},
\]
and let the eigenvalues of $Z_{t-1}$ be $\lambda_1, ..., \lambda_{2n}$. 
Note that $\lambda_{\min}(Z_{t-1}) \leq 0$ as $\tr(L_G) = \tr(L^+_G) = 2m$ which implies that $\tr(Z_{t-1}) = 0$.

Observe that $A_t = (l_t I_{2n} + \alpha Z_{t-1} - 2 \alpha (t-1) d I_{2n})^{-2}$ and so $A_t$ and $Z_{t-1}$ have the same eigenbasis, and the $i$-th eigenvalue of $A_t$ is $(l_t + \alpha \lambda_i - 2\alpha(t-1)d)^{-2}$.
It follows that
\begin{eqnarray*}
- \langle A_t, Z_{t-1} \rangle 
& = & \sum_{i=1}^{2n} \frac{-\lambda_i}{(l_t + \alpha \lambda_i - 2\alpha (t-1)d)^2}
\\
& = & \sum_{i=1}^{2n} \frac{l_t/\alpha - 2 (t-1)d}{(l_t + \alpha \lambda_i - 2\alpha (t-1)d)^2} - \frac{1}{\alpha} \sum_{i=1}^{2n} \frac{l_t + \alpha \lambda_i - 2\alpha (t-1)d }{(l_t + \alpha \lambda_i - 2\alpha (t-1)d)^2} 
\\
& = & \frac{l_t}{\alpha} - 2(t-1)d - \frac{\tr(A_t^{1/2})}{\alpha}
\\
& \geq & -\lambda_{\min}(Z_{t-1}) - \frac{\tr(A_t^{1/2})}{\alpha}
\\
& \geq & -\frac{\sqrt{n}}{\alpha},
\end{eqnarray*}
where the third equality is because $\tr(A_t)=1$ and $(l_t+\alpha \lambda_i - 2\alpha(t-1)d)^{-1}$ is the $i$-th eigenvalue of $A_t^{1/2}$,
the first inequality is by $A_t \succ 0$ which implies that $l_t > 2\alpha(t-1)d - \lambda_{\min}(Z_{t-1})$, and the last inequality is by $\lambda_{\min}(Z_{t-1}) \leq 0$ and $\tr(A_t^{1/2}) \leq \sqrt{n}$ from \eqref{e:trace}.

Since $|E \setminus S_{t-1}| = m-t+1$, an averaging argument shows that there exists an edge $e \in E \backslash S_{t-1}$ such that 
\[
\left\langle A_t, \begin{pmatrix}
	    L_G - m L_e & \\ & L^+_G - m L^+_e
	    \end{pmatrix} \right\rangle 
            \geq - \frac{\sqrt{n}}{\alpha(m-t+1)} 
            \geq - \frac{2\sqrt{n}}{\alpha m},
\]
where the last inequality is because $m-t+1 \geq m- \tau + 1 \geq m/2$ by our assumption $\tau = n/\eps^2 \leq m/2$.
Finally, when $\alpha = q\eps/\sqrt{dm}$ for some constant $q > 0$, 
\[
\frac{2\sqrt{n}}{\alpha m} 
= O\left( \frac{1}{\eps} \sqrt{\frac{dn}{m}}\right) 
\leq O\left(\sqrt{d}\right) 
\leq O(\eps d),
\]
where the first inequality is by our assumption $\tau = n/\eps^2 \leq m/2$,
and the second inequality follows from $dn \geq m \geq 2n/\eps^2$ which implies $\eps \geq \sqrt{2/d}$.
\end{proof}

Given Lemma~\ref{l:select}, the rest of the proof is almost the same as that in~\cite{BST19}, but we include the proofs for completeness.
The following lemma bounds the width term, which is essentially the same as Claim~$1$ in~\cite{BST19} with minor modification.

\begin{lemma} \label{l:width}
If $\alpha = q\eps/ \sqrt{dm}$ for a sufficiently small constant $q>0$, then
\[
\alpha \norm{A_t^{\frac{1}{4}} F_t A_t^{\frac{1}{4}}}_{\rm op} \leq \min \left\{ \frac{1}{4}, \eps \right\}.
\]
\end{lemma}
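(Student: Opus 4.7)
The plan is to bound $\alpha \|A_t^{1/4} F_t A_t^{1/4}\|_{\rm op}$ by combining an operator-norm estimate of $F_t$ with an inductive bound on $\|A_t^{1/2}\|_{\rm op}$, where I use $A_t = B_t^{-2}$ with $B_t := l_t I_{2n} + \alpha \sum_{j=0}^{t-1} F_j$.

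First, decompose the feedback matrix as $F_t = -M_0 - m N_{e_t}$, where $M_0 := 2dI_{2n} - \diag(L_G, L^+_G)$ and $N_{e_t} := \diag(L_{e_t}, L^+_{e_t})$. The maximum-degree bound gives $L_G, L^+_G \preccurlyeq 2dI_n$, so $M_0 \succcurlyeq 0$ with $\|M_0\|_{\rm op} \leq 2d$; the rank-one structure of $L_{e_t}, L^+_{e_t}$ gives $N_{e_t} \succcurlyeq 0$ with $\|N_{e_t}\|_{\rm op} \leq 2$. Therefore $\|F_t\|_{\rm op} \leq 2(m+d)$.

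Second, I bound $\|A_t^{1/2}\|_{\rm op} = 1/\lambda_{\min}(B_t)$ by controlling the cumulative feedback. Since $\sum_{j<t} F_j = -(t-1)M_0 - m\,\diag(L_{H_{t-1}}, L^+_{H_{t-1}})$ where $H_{t-1}$ is the subgraph on edges in $S_{t-1}$, and since both $L_{H_{t-1}} \preccurlyeq L_G$ and $L^+_{H_{t-1}} \preccurlyeq L^+_G$ (as each signless Laplacian is monotone under edge addition), and both right-hand sides are $\preccurlyeq 2dI$, we obtain $\|\sum_{j<t}F_j\|_{\rm op} \leq 2d(t-1) + 2dm = O(dm)$ using $t-1 \leq \tau \leq m/2$. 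The constraint $\tr(A_t) = 1$ together with $B_t \preccurlyeq l_t I$ forces $l_t \geq \sqrt{2n}$, so $\lambda_{\min}(B_t) \geq l_t - \alpha \|\sum_{j<t}F_j\|_{\rm op} \geq \sqrt{2n} - O(q\eps\sqrt{dm})$. Taking $q$ sufficiently small makes this $\geq \Omega(\sqrt{n})$, giving $\|A_t^{1/2}\|_{\rm op} \leq O(1/\sqrt{n})$.

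Combining via $\|A_t^{1/4} F_t A_t^{1/4}\|_{\rm op} \leq \|A_t^{1/2}\|_{\rm op} \|F_t\|_{\rm op}$ yields $\|A_t^{1/4} F_t A_t^{1/4}\|_{\rm op} \leq O((m+d)/\sqrt{n})$. Multiplying by $\alpha = q\eps/\sqrt{dm}$, and invoking the regime assumptions $m \geq 2n/\eps^2$ and $m \leq dn/2$ (the latter from the maximum-degree assumption, which also forces $d \geq 4/\eps^2$), the product simplifies to $O(q\eps)$, which is at most $\min\{1/4,\eps\}$ for a sufficiently small constant $q$.

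The main obstacle is the inductive control of $\lambda_{\min}(B_t)$: one has to verify that $\|\sum_{j<t} F_j\|_{\rm op}$ grows only like $O(dm)$ rather than like $t(m+d)$ (the latter would blow up since $t$ can be as large as $\tau = n/\eps^2$). This is where the subgraph monotonicity $L_{H_{t-1}} \preccurlyeq L_G \preccurlyeq 2dI_n$ and its signless analogue are essential, and it is precisely this structural cancellation of the $mL_{e_j}$ terms across iterations that makes the choice of feedback matrix in~\cite{BST19} work.
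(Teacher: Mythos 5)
Your proposal has a genuine gap, and it occurs at the step that carries all the weight: the claim that $\lambda_{\min}(B_t)\geq \sqrt{2n}-O(q\eps\sqrt{dm})=\Omega(\sqrt{n})$, i.e.\ $\norm{A_t^{1/2}}_{\rm op}\leq O(1/\sqrt{n})$. In the regime of the algorithm one has $m\geq 2n/\eps^2$ and (from the degree bound) $d\geq 2m/n$, so $\eps\sqrt{dm}\geq \sqrt{2dn}=\sqrt{d}\cdot\sqrt{2n}$. Hence the subtracted term $O(q\eps\sqrt{dm})$ exceeds $\sqrt{2n}$ by a factor of $\Omega(\sqrt{d})$ for \emph{any} fixed constant $q$ (take, say, the complete graph: $d=n-1$, $m=\binom{n}{2}$, so $q\eps\sqrt{dm}=\Theta(q\eps n^{3/2})\gg\sqrt{n}$). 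The inequality you need is therefore false as an arithmetic matter, and the conclusion it was meant to support is false as a structural matter: $\lambda_{\max}(A_t)=O(1/n)$ would force the density matrix $A_t$ to be within a constant factor of $I_{2n}/(2n)$, whereas the whole point of the $\ell_{1/2}$-regularized action matrix is that it concentrates mass on the bottom eigenspaces of the partial sum $\sum_{j<t}F_j$ and can have $\lambda_{\max}(A_t)=\Theta(1)$.

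Even if some bound on $\norm{A_t^{1/2}}_{\rm op}$ were available, the submultiplicative estimate $\norm{A_t^{1/4}F_tA_t^{1/4}}_{\rm op}\leq\norm{A_t^{1/2}}_{\rm op}\norm{F_t}_{\rm op}$ with $\norm{F_t}_{\rm op}=\Theta(m)$ cannot yield the lemma: after multiplying by $\alpha=q\eps/\sqrt{dm}$ you would need $\norm{A_t^{1/2}}_{\rm op}\lesssim\sqrt{d/m}$, which when $m=\Theta(dn)$ again means $A_t$ is essentially uniform. The missing idea is that the width bound is \emph{not} uniform over edges; it holds only for the edge $e_t$ selected in Step 2(b). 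The paper's proof uses three ingredients your argument omits: (i) the rank-one structure $mL_{e_t}=ww^T$ with $\norm{w}^2=2m$, so that $m\norm{B_t^{1/4}L_{e_t}B_t^{1/4}}_{\rm op}=w^TB_t^{1/2}w$; (ii) the Cauchy--Schwarz interpolation $w^TB_t^{1/2}w\leq\norm{w}\sqrt{w^TB_tw}$; and (iii) the guarantee from Lemma~\ref{l:select} that the greedily chosen edge has small leverage, $m\inner{L_{e_t}}{B_t}\leq 2d+2\sqrt{n}/(\alpha m)$. Combining these gives $m\norm{B_t^{1/4}L_{e_t}B_t^{1/4}}_{\rm op}\leq 2\sqrt{dm+\sqrt{n}/\alpha}$, which after multiplying by $\alpha$ is $O(\sqrt{q}\,\eps)$; without (iii) an adversarial edge can make $w^TB_t^{1/2}w$ as large as $\Theta(m)$ and the lemma fails. (Your preliminary computations --- the decomposition of $F_t$, the bound $l_t\geq\sqrt{2n}$, and the telescoping bound $\norm{\sum_{j<t}F_j}_{\rm op}=O(dm)$ --- are all correct, but they do not suffice.)
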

\begin{proof}
Since the feedback matrices $F_t$ have a block diagonal structure, by the closed-form solution of the action matrix in \eqref{e:closed-form}, 
$A_t$ also has the same block diagonal structure
\[
A_t = \begin{pmatrix}
        B_t & \\ & C_t
\end{pmatrix}, \quad \text{where} \quad 0 \preccurlyeq B_t, C_t \preccurlyeq I_n.
\]
Therefore, 
\[
\norm{A_t^{\frac{1}{4}} F_t A_t^{\frac{1}{4}}}_{\rm op}  
= \max\left\{ \norm{B^{\frac{1}{4}}_t \big(L_G - m L_{e_t} - 2dI_n\big) B^{\frac{1}{4}}_t}_{\rm op}, 
\norm{C^{\frac{1}{4}}_t \big(L^+_G - m L^+_{e_t} -  2dI_n\big) C^{\frac{1}{4}}_t}_{\rm op} \right\}.
\]
We will just bound the first term, as the second term can be bounded the same way.
By triangle inequality and the facts that $0 \preccurlyeq B_t \preceq I_n$ and $0 \preccurlyeq L_G \preccurlyeq 2d I$, it follows that
\[
\norm{B^{\frac{1}{4}}_t \big(L_G - m L_{e_t} -  2dI_n\big) B^{\frac{1}{4}}_t}_{\rm op} 
\leq m \norm{B^{\frac{1}{4}}_t L_{e_t} B^{\frac{1}{4}}_t}_{\rm op} + 2d,
\]
By the choice of edge $e_t$ as guaranteed by Lemma~\ref{l:select}, 
\[
- \frac{2\sqrt{n}}{\alpha m}
\leq  \left\langle A_t, \begin{pmatrix}
        L_G - m L_{e_t} & \\ & L^+_G - m L^+_{e_t}
\end{pmatrix} \right\rangle
=      
\left\langle A_t, \begin{pmatrix}
  L_G & \\ & L^+_G
\end{pmatrix} \right\rangle 
- m \langle L_{e_t}, B_t \rangle - m \langle L^+_{e_t}, C_t \rangle.
\]
Since $\tr(A_t)=1$, $L_G, L^+_G \preccurlyeq 2d I_n$ and $\inner{L_{e_t}^+}{C_t} \geq 0$, the above inequality implies that
\[
m \langle L_{e_t}, B_t \rangle \leq 2d + \frac{2\sqrt{n}}{\alpha m}.
\]
Let $B_t = \sum_{i=1}^n \lambda_i y_i y_i^T$ be the eigendecomposition of $B_t$, and let $w = \sqrt{m} \cdot b_{e_t}$ so that $w w^T = m L_{e_t}$ and $\norm{w}_2 = \sqrt{2m}$. 
Then
\begin{eqnarray*}
m\norm{B_t^{\frac{1}{4}} L_{e_t} B^{\frac{1}{4}}_t}_{\rm op} 
= w^T B^{\frac{1}{2}}_t w
& = & \sum_{i=1}^n \sqrt{\lambda_i} \cdot \inner{w}{y_i}^2 
\\
& \leq & \sqrt{\sum_{i=1}^n \langle w, y_i \rangle^2} \cdot \sqrt{\sum_{i=1}^n \lambda_i \cdot \langle w, y_i \rangle^2} 
\\
& = & \norm{w}_2 \cdot \sqrt{w^T B_t w} 
\\
& \leq & \sqrt{2m} \cdot \sqrt{2d + \frac{2\sqrt{n}}{\alpha m}} 
\\
& = & 2\sqrt{dm + \frac{\sqrt{n}}{\alpha}},
\end{eqnarray*}
where the first inequality is by Cauchy-Schwartz, 
and the second inequality follows from $w^T B_t w = m \inner{B_t}{L_{e_t}}$ and the above upper bound on $m\inner{B_t}{L_{e_t}}$.
The same arguments gives the same upper bound on $m\|C_t^{1/4} L^+_{e_t} C^{1/4}_t\|$.
Therefore, for $\alpha = q\eps/\sqrt{dm}$, 
\begin{eqnarray*}
\alpha \norm{A^{\frac{1}{4}}_t F_t A^{\frac{1}{4}}_t}_{\rm op} 
& \leq & 2\alpha d + 2\sqrt{\alpha^2 dm + \alpha \sqrt{n}} 
\\
& = & 2q\eps \sqrt{\frac{d}{m}} + 2 \sqrt{q^2 \eps^2 + q \eps \sqrt{ \frac{n}{dm} }}
\\
& \leq & O(\sqrt{q}\eps), 
\end{eqnarray*}
where the last inequality follows from $d/m \leq n/m \leq \eps^2/2$ by the assumption that $m \geq 2n/\eps^2$.
Then the lemma follows when $q$ is sufficiently small.
\end{proof}

We are ready to prove Theorem~\ref{t:additive-deterministic},
whose proof is essentially the same as in~\cite{BST19}.

\begin{proofof}{Theorem~\ref{t:additive-deterministic}}
By Lemma~\ref{l:width}, when $\alpha = q\eps/\sqrt{dm}$ for a small enough constant $q>0$, then $\alpha\|A^{1/4}_t F_t A^{1/4}_t\| \leq \frac{1}{4}$ for any $t$ which also implies that $\alpha A_t^{1/4} F_t A_t^{1/4} \succcurlyeq - \frac14 I$ for any $t$. 
Therefore, we can apply Theorem~\ref{t:regret-psd-nsd} and get
\begin{align} \label{e:additive-regret}
R_\tau = \sum_{t=1}^\tau \inner{A_t}{F_t} - \lambda_{\min}\left( \sum_{t=1}^\tau F_t \right) & \leq O(\alpha) \sum_{t=1}^\tau \inner{A_t}{|F_t|} \cdot \norm{A^{\frac{1}{4}}_t F_t A^{\frac{1}{4}}_t}_{\rm op} + \frac{2\sqrt{n}}{\alpha}.
\end{align}
By Lemma~\ref{l:select} and Lemma~\ref{l:width} and the fact that $F_t \preceq 0$, it holds that
\[
\inner{A_t}{F_t} \geq -(2+O(\eps))d, \qquad  \inner{A_t}{|F_t|} \leq (2+O(\eps))d \qquad \text{and} \qquad \alpha \|A_t^{\frac{1}{4}} F_t A_t^{\frac{1}{4}}\| \leq \min\left\{ \frac{1}{4}, \eps \right\}.
\]

Together with $\alpha = \Theta(\eps/\sqrt{dm}) \geq \Omega(\eps/d \sqrt{n}) = \Omega(\sqrt{n}/(\eps d \tau))$, the regret minimization bound \eqref{e:additive-regret} implies that
\[
-(2+O(\eps))\tau d - \lambda_{\min} \left( \begin{pmatrix}
\tau L_G - m \sum_{t=1}^\tau L_{e_t} &  \\ & \tau L^+_G - m \sum_{t=1}^\tau L^+_{e_t} 
\end{pmatrix} - 2\tau d I \right) \leq O(\eps) \tau d.
\]
Let $\tilde{m} := \tau$ and $L_{\tilde{G}} = \sum_{t=1}^\tau L_{e_t}$.
From the first block, we have
\[
\tau L_G - \sum_{t=1}^\tau m L_{e_t} \succcurlyeq - O(\eps) \tau dI_n 
\quad \implies \quad
\frac{m}{\tilde{m}} L_{\tilde{G}} - L_G \preceq O(\eps) dI_n.
\]
From the second block, we have
\[
\tau L^+_G - \sum_{t=1}^\tau m L^+_{e_t} \succcurlyeq - O(\eps) \tau dI_n
\quad \implies \quad
\frac{m}{\tilde{m}} L_{\tilde G} - L_G \succeq 2\frac{m}{\tilde{m}} D_{\tilde{G}} - 2D_G - O(\eps)dI_n,
\]
where we used that $L^+_G = 2D_G - L_G$.
\end{proofof}

\section*{Concluding Remarks}

We propose a spectral approach to design approximation algorithms for network design problems.
We show that the techniques developed in spectral graph theory and discrepancy theory can be used to significantly extend the scope of network design problems that can be solved.
We believe that this connection will bring new techniques and stronger results for network design, and will also introduce new formulations and interesting questions to spectral graph theory and discrepancy theory.
It also gives extra motivation to design a constructive algorithm for the method of interlacing polynomials, as this will lead to very strong approximation algorithms for network design.
We leave it as an open question to improve the spectral approach to fully recover Jain's result.  

\subsection*{Acknowledgement} 

We thank Akshay Ramachandran for many useful discussions, for bringing~\cite{KLS19} to our attention, and for his suggestion of the signing trick in Lemma~\ref{l:Akshay} to proving Theorem~\ref{t:two-sided}.
We also thank Shayan Oveis Gharan for comments that improve the presentation of the paper.


\bibliographystyle{plain}

\begin{thebibliography}{60}

\bibitem{AKR95}
Ajit Agrawal, Philip Klein, and R.~Ravi.
\newblock {\em When trees collide: An approximation algorithm for the generalized steiner problem on networks}.
\newblock SIAM Journal on Computing, 24(3):440--456, 1995.

\bibitem{AALO18}
Vedat~Levi Alev, Nima Anari, Lap~Chi Lau, and Shayan  Oveis~Gharan.
\newblock {\em Graph Clustering using Effective Resistance}.
\newblock In Proceedings of the 9th Innovations in Theoretical Computer Science Conference (ITCS), 94:41, 2018.

\bibitem{AZLO15}
Zeyuan Allen-Zhu, Zhenyu Liao, Lorenzo Orecchia.
\newblock {\em Spectral sparsification and regret minimization beyond matrix multiplicative updates}.
\newblock In Proceedings of the forty-seventh annual ACM Symposium on Theory of Computing (STOC), 237--245, 2015.

\bibitem{AZLSW17c}
Zeyuan Allen-Zhu, Yuanzhi Li, Aarti Singh, and Yining Wang.
\newblock {\em Near-optimal design of experiments via regret minimization}.
\newblock In Proceedings of the 34th International Conference on Machine Learning (ICML), 70:126--135, 2017.


\bibitem{AZLSW20}
Zeyuan Allen-Zhu, Yuanzhi Li, Aarti Singh, and Yining Wang.
\newblock {\em Near-optimal discrete optimization for experimental design: A regret minimization approach}.
\newblock Mathematical Programming, Jan 10:1--40. Springer, 2020.

\bibitem{AO15}
Nima Anari and Shayan~Oveis Gharan.
\newblock {\em Effective-resistance-reducing flows, spectrally thin trees, and asymmetric TSP}.
\newblock In Proceedings of the 2015 IEEE 56th Annual Symposium on Foundations of Computer Science (FOCS), pages 20--39. IEEE, 2015.

\bibitem{AGM14}
David~G. Anderson, Ming Gu, and Christopher Melgaard.
\newblock {\em An efficient algorithm for unweighted spectral graph sparsification}.
\newblock arXiv preprint arXiv:1410.4273, 2014.

\bibitem{Ang88}
Dana Angluin.
\newblock {\em Queries and concept learning}.
\newblock Machine learning, 2(4):319--342, 1988.

\bibitem{AGM+14}
Arash Asadpour, Michel~X. Goemans, Aleksander Madry, Shayan Oveis~Gharan, and Amin Saberi.
\newblock {\em An O(log n/log log n)-approximation algorithm for the asymmetric traveling salesman problem}.
\newblock In Proceedings of the twenty-first annual ACM-SIAM symposium on Discrete Algorithms (SODA), pages 379--389. SIAM, 2010.

\bibitem{Ban19}
Nikhil Bansal.
\newblock {\em On a generalization of iterated and randomized rounding}.
\newblock In Proceedings of the 51st Annual ACM Symposium on Theory of Computing (STOC), pages 1125--1135, 2019.

\bibitem{BDGL19}
Nikhil Bansal, Daniel Dadush, Shashwat Garg, and Shachar Lovett.
\newblock {\em The Gram--Schmidt Walk: A Cure for the Banaszczyk Blues}.
\newblock Theory OF Computing, 15(21):1--27, 2019.

\bibitem{BG17}
Nikhil Bansal and Shashwat Garg.
\newblock {\em Algorithmic discrepancy beyond partial coloring}.
\newblock In Proceedings of the 49th Annual ACM SIGACT Symposium on Theory of Computing (STOC), pages 914--926, 2017.

\bibitem{BKK+13}
Nikhil Bansal, Rohit Khandekar, Jochen K{\"o}nemann, Viswanath Nagarajan, and Britta Peis.
\newblock {\em On generalizations of network design problems with degree bounds}.
\newblock Mathematical Programming, 141(1-2):479--506. Springer, 2013.

\bibitem{BST19}
Nikhil Bansal, Ola Svensson, and Luca Trevisan.
\newblock {\em New Notions and Constructions of Sparsification for Graphs and Hypergraphs}.
\newblock In Proceedings of the 60th Annual Symposium on Foundations of Computer Science
 (FOCS), pages 910--928, 2019.

\bibitem{BSS12}
Joshua Batson, Daniel~A. Spielman, and Nikhil Srivastava.
\newblock {\em Twice-ramanujan sparsifiers}.
\newblock SIAM Journal on Computing, 41(6):1704--1721, 2012.

\bibitem{BK96}
Andr{\'a}s~A. Bencz{\'u}r and David~R. Karger.
\newblock {\em Approximating st Minimum Cuts in {\~O} (n2) Time}.
\newblock In Proceedings of the twenty-eighth annual ACM symposium on Theory of computing (STOC), pages 47--55. ACM, 1996.

\bibitem{BGRS04}
Vittorio Bilo, Vineet Goyal, Ramamoorthi Ravi, and Mohit Singh.
\newblock {\em On the crossing spanning tree problem}.
\newblock In Proceedings of 7th International Workshop on Approximation Algorithms for Combinatorial Optimization Problems (APPROX), pages 51-60, 2004.

\bibitem{BDX04}
Stephen Boyd, Persi Diaconis, and Lin Xiao.
\newblock {\em Fastest mixing markov chain on a graph}.
\newblock SIAM review, 46(4):667--689, 2004.

\bibitem{BGRS13}
Jaroslaw Byrka, Fabrizio Grandoni, Thomas Rothvoss, and Laura Sanita.
\newblock {\em Steiner tree approximation via iterative randomized rounding}.
\newblock Journal of the ACM, 60(1):6, 2013.



\bibitem{CLSWZ19}
Pak Hay Chan, Lap Chi Lau, Aaron Schild, Sam Chiu-wai Wong, and Hong Zhou.
\newblock {\em Network design for s-t effective resistance}.
\newblock arXiv preprint arXiv:1904.03219, 2019.

\bibitem{CRR+96}
Ashok~K. Chandra, Prabhakar Raghavan, Walter~L. Ruzzo, Roman Smolensky, and Prasoon Tiwari.
\newblock {\em The electrical resistance of a graph captures its commute and cover times}.
\newblock Computational Complexity, 6(4):312--340, 1996.

\bibitem{CV14}
Joseph Cheriyan and L{\'a}szl{\'o}~A. V{\'e}gh.
\newblock {\em Approximating minimum-cost k-node connected subgraphs via independence-free graphs}.
\newblock SIAM Journal on Computing, 43(4):1342--1362. SIAM, 2014.

\bibitem{CVV06}
Joseph Cheriyan, Santosh Vempala, and Adrian Vetta.
\newblock {\em Network design via iterative rounding of setpair relaxations}.
\newblock Combinatorica, 26(3):255--275, 2006.

\bibitem{CKM+11}
Paul Christiano, Jonathan~A. Kelner, Aleksander Madry, Daniel~A. Spielman, and Shang-Hua Teng.
\newblock {\em Electrical flows, laplacian systems, and faster approximation of maximum flow in undirected graphs}.
\newblock In Proceedings of the forty-third annual ACM symposium on Theory of computing (STOC), pages 273--282. ACM, 2011.

\bibitem{CM09}
Ali {\c{C}}ivril and Malik Magdon-Ismail.
\newblock {\em On selecting a maximum volume sub-matrix of a matrix and related problems}.
\newblock Theoretical Computer Science, 410(47-49):4801--4811, Elsevier, 2009.

\bibitem{DLP11}
Jian Ding, James~R. Lee, and Yuval Peres.
\newblock {\em Cover times, blanket times, and majorizing measures}.
\newblock In Proceedings of the forty-third annual ACM symposium on Theory of computing (STOC), pages 61--70. ACM, 2011.

\bibitem{DK99}
Yevgeniy Dodis and Sanjeev Khanna.
\newblock {\em Design networks with bounded pairwise distance}.
\newblock In Proceedings of the thirty-first annual ACM symposium on Theory of computing (STOC), pages 750--759. ACM, 1999.

\bibitem{EV14}
Alina Ene and Ali Vakilian.
\newblock {\em Improved approximation algorithms for degree-bounded network design problems with node connectivity requirements}.
\newblock In Proceedings of the forty-sixth annual ACM symposium on Theory of computing (STOC), pages 754--763. ACM, 2014.

\bibitem{FKP03}
Shaun~M. Fallat, Steve Kirkland, Sukanta Pati.
\newblock {\em On graphs with algebraic connectivity equal to minimum edge density}.
\newblock Linear algebra and its applications, 373:31-50, 2003.

\bibitem{FJW01}
Lisa Fleischer, Kamal Jain, and David~P. Williamson.
\newblock {\em An iterative rounding 2-approximation algorithm for the element connectivity problem}.
\newblock In Proceedings of 42nd IEEE Symposium on Foundations of Computer Science (FOCS), pages 339--347. IEEE, 2001.

\bibitem{Fre75}
David~A. Freedman.
\newblock {\em On tail probabilities for martingales}.
\newblock Annals of Probability, 3(1):100--118, 1975.

\bibitem{FNR15}
Takuro Fukunaga, Zeev Nutov, and R.~Ravi.
\newblock {\em Iterative rounding approximation algorithms for degree-bounded node-connectivity network design}.
\newblock SIAM Journal on Computing, 44(5):1202--1229, 2015.

\bibitem{FR94}
Martin Furer and Balaji Raghavachari.
\newblock {\em Approximating the minimum-degree Steiner tree to within one of optimal}.
\newblock Journal of Algorithms, 17(3):409--423, 1994.

\bibitem{Gab05}
Harold~N. Gabow.
\newblock {\em On the $\ell_{\infty}$-norm of extreme points for crossing supermodular directed network lps}.
\newblock In International Conference on Integer Programming and Combinatorial Optimization (IPCO), pages 392--406. Springer, 2005.

\bibitem{GGT+09}
Harold~N. Gabow, Michel~X. Goemans, {\'E}va Tardos, and David~P. Williamson.
\newblock {\em Approximating the smallest k-edge connected spanning subgraph by lp-rounding}.
\newblock Networks, 53(4):345--357, 2009.

\bibitem{GKR00}
Naveen Garg, Goran Konjevod, and R. Ravi.
\newblock {\em A polylogarithmic approximation algorithm for the group Steiner tree problem}.
\newblock Journal of Algorithms, 37(1):66--84, 2000.

\bibitem{GB06}
Arpita Ghosh and Stephen Boyd.
\newblock {\em Growing well-connected graphs}.
\newblock In Proceedings of the 45th IEEE Conference on Decision and Control (CDC), pages 6605--6611. IEEE, 2006.

\bibitem{GBS08}
Arpita Ghosh, Stephen Boyd, and Amin Saberi.
\newblock {\em Minimizing effective resistance of a graph}.
\newblock SIAM review, 50(1):37--66, 2008.

\bibitem{Goe06}
Michel~X. Goemans.
\newblock {\em Minimum bounded degree spanning trees}.
\newblock In Proceedings of the 47th annual IEEE symposium on Foundations of computer science (FOCS), pages 273--282, 2006.

\bibitem{GGP+94}
Michel~X. Goemans, Andrew~V. Goldberg, Serge~A. Plotkin, David~B. Shmoys, Eva Tardos, and David~P. Williamson.
\newblock {\em Improved Approximation Algorithms for Network Design Problems}.
\newblock In Proceedings of the 50th annual ACM-SIAM symposium on Discrete algorithms (SODA), pages 223--232. SIAM, 1994.

\bibitem{GW95}
Michel~X. Goemans and David~P. Williamson.
\newblock {\em A general approximation technique for constrained forest problems}.
\newblock SIAM Journal on Computing, 24(2):296--317, 1995.

\bibitem{GLL19}
Fabrizio Grandoni, Bundit Laekhanukit, and Shi Li.
\newblock {\em $O(\log^2 k/\log \log k)$-approximation algorithm for directed Steiner tree: a tight quasi-polynomial-time algorithm}.
\newblock In Proceedings of the 51st Annual ACM Symposium on Theory of Computing (STOC), 253--264. ACM, 2019.

\bibitem{Jai01}
Kamal Jain.
\newblock {\em A factor 2 approximation algorithm for the generalized steiner network problem}.
\newblock Combinatorica, 21(1):39--60, 2001.

\bibitem{KMS+10}
Alexandra Kolla, Yury Makarychev, Amin Saberi, and Shang-Hua Teng.
\newblock {\em Subgraph sparsification and nearly optimal ultrasparsifiers}.
\newblock In Proceedings of the 42nd ACM symposium on Theory of computing (STOC), pages 57--66. ACM, 2010.

\bibitem{KLS19}
Rasmus Kyng, Kyle Luh, Zhao Song.
\newblock  {\em Four deviations suffice for rank 1 matrices}.
\newblock arXiv preprint arXiv:1901.06731, 2019.

\bibitem{LNSS09}
Lap~Chi Lau, Joseph Naor, Mohammad~R. Salavatipour, and Mohit Singh.
\newblock {\em Survivable network design with degree or order constraints}.
\newblock SIAM Journal on Computing, 39(3):1062--1087, 2009.

\bibitem{LRS11}
Lap~Chi Lau, Ramamoorthi Ravi, and Mohit Singh.
\newblock {\em Iterative methods in combinatorial optimization}. Cambridge University Press, 2011.

\bibitem{LS13}
Lap~Chi Lau and Mohit Singh.
\newblock {\em Additive approximation for bounded degree survivable network design}.
\newblock SIAM Journal on Computing, 42(6):2217--2242. SIAM, 2013.

\bibitem{LZ15}
Lap~Chi Lau and Hong Zhou.
\newblock {\em A unified algorithm for degree bounded survivable network design}.
\newblock Mathematical Programming, 154(1-2):515--532, 2015.

\bibitem{LS17}
Yin~Tat Lee and He Sun.
\newblock {\em An SDP-based algorithm for linear-sized spectral sparsification}.
\newblock In Proceedings of the 49th Annual ACM Symposium on Theory of Computing (STOC), pages 678--687. ACM, 2017.

\bibitem{LS18}
Yin~Tat Lee and He Sun.
\newblock {\em Constructing linear-sized spectral sparsification in almost-linear time}.
\newblock SIAM Journal on Computing, 47(6):2315--2336, 2018.

\bibitem{LSwa18}
Andr{\'e} Linhares and Chaitanya Swamy.
\newblock {\em Approximating min-cost chain-constrained spanning trees: a reduction from weighted to unweighted problems}.
\newblock Mathematical Programming, 172(1-2):17--34. Springer, 2018

\bibitem{LV10}
Anand Louis and Nisheeth~K. Vishnoi.
\newblock {\em Improved algorithm for degree bounded survivable network design problem}. 
\newblock In Proceedings of Scandinavian Workshop on Algorithm Theory (SWAT), pages 408--419, 2010. 

\bibitem{MSTX19}
Vivek Madan, Mohit Singh, Uthaipon Tantipongpipat, and Weijun Xie.
\newblock {\em Combinatorial Algorithms for Optimal Design}.
\newblock In Proceedings of Conference on Learning Theory (COLT), pages 2210--2258, 2019.

\bibitem{MST15}
Aleksander Madry, Damian Straszak, and Jakub Tarnawski.
\newblock {\em Fast generation of random spanning trees and the effective resistance metric}.
\newblock In Proceedings of the 26th annual ACM-SIAM symposium on Discrete algorithms (SODA), pages 2019--2036. SIAM, 2015.

\bibitem{MSS15a}
Adam~W. Marcus, Daniel~A. Spielman, Nikhil Srivastava.
\newblock {\em Interlacing families I: Bipartite Ramanujan graphs of all degrees}.
\newblock Annals of Mathematics, 307--325, 2015.

\bibitem{MSS15b}
Adam~W. Marcus, Daniel~A. Spielman, Nikhil Srivastava.
\newblock {\em Interlacing families II: Mixed characteristic polynomials and the Kadison Singer problem}.
\newblock Annals of Mathematics, 327--350, 2015.

\bibitem{Mat88}
Peter Matthews.
\newblock {\em Covering problems for brownian motion on spheres}.
\newblock The Annals of Probability, pages 189--199, 1988.


\bibitem{NST19}
Aleksandar Nikolov, Mohit Singh, and Uthaipon~Tao Tantipongpipat.
\newblock {\em Proportional volume sampling and approximation algorithms for A-optimal design}.
\newblock In Proceedings of the Thirtieth Annual ACM-SIAM Symposium on Discrete Algorithms (SODA), pages 1369--1386. SIAM, 2019.

\bibitem{OZ18}
Neil Olver and Rico Zenklusen.
\newblock {\em Chain-constrained spanning trees}.
\newblock Mathematical Programming, 167(2):293--314. Springer, 2018.

\bibitem{Puk06}
Friedrich Pukelsheim.
\newblock {\em Optimal design of experiments}. SIAM, 2016.

\bibitem{Rag96}
Balaji Raghavachari.
\newblock {\em Algorithms for finding low degree structures}.
\newblock Approximation algorithms for NP-hard problems, pages 266--295, 1996.

\bibitem{RMR+01}
Ramamoorthi Ravi, Madhav~V. Marathe, S.~S. Ravi, Daniel~J. Rosenkrantz, and Harry~B. Hunt~III.
\newblock {\em Approximation Algorithms for Degree-Constrained Minimum-Cost Network-Design Problems}.
\newblock Algorithmica, 31(1):58--78, 2001.

\bibitem{Sch18}
Aaron Schild.
\newblock {\em An almost-linear time algorithm for uniform random spanning tree generation}.
\newblock In Proceedings of the 50th annual ACM symposium on Theory of computing (STOC), pages 214--227. ACM, 2018.

\bibitem{SHS16}
Marcel~K. Silva, Nicholas~J.~A. Harvey, and Cristiane~M. Sato.
\newblock {\em Sparse Sums of Positive Semidefinite Matrices}.
\newblock ACM Transactions on Algorithms, 12(1):9, 2016

\bibitem{SL15}
Mohit Singh and Lap~Chi Lau.
\newblock {\em Approximating minimum bounded degree spanning tress to within one of optimal}.
\newblock Journal of the ACM, 62(1):1, 2015.

\bibitem{SX18}
Mohit Singh and Weijun Xie.
\newblock {\em Approximate Positively Correlated Distributions and Approximation Algorithms for D-optimal Design}.
\newblock In Proceedings of Twenty-Ninth Annual ACM-SIAM Symposium on Discrete Algorithms (SODA), pages 2240-2255. ACM-SIAM, 2018.

\bibitem{SS11}
Daniel~A. Spielman and Nikhil Srivastava.
\newblock {\em Graph sparsification by effective resistances}.
\newblock SIAM Journal on Computing, 40(6):1913--1926, 2011.

\bibitem{ST11}
Daniel~A. Spielman and Shang-Hua Teng.
\newblock {\em Spectral sparsification of graphs}.
\newblock SIAM Journal on Computing, 40(4):981--1025. SIAM, 2011.

\bibitem{Sri01}
Aravind Srinivasan.
\newblock {\em Distributions on level-sets with applications to approximation algorithms}.
\newblock In Proceedings 42nd IEEE Symposium on Foundations of Computer Science (FOCS), pages 588--597. IEEE, 2001.


\bibitem{Tro11}
Joel~A. Tropp.
\newblock {\em Freedman's inequality for matrix martingales}.
\newblock Electronic Communications in Probability, 16:262--270, 2011.


\end{thebibliography}

\end{document}